\newcommand{\Z}{\mathbb{Z}}
\newcommand{\Zp}{\Z_{+}}
\newcommand{\Real}{\mathbb{R}}
\newcommand{\Comp}{\mathbb{C}}
\newcommand{\Nset}{\mathbb{N}}
\newcommand{\pow}{\mathop{\rm pow}}
\newcommand{\lift}{\mathbf{L}_h}
\newcommand{\dlift}[1]{\mathbb{L}_{#1}}
\newcommand{\idlift}[1]{\mathbb{L}_{#1}^{-1}}
\newcommand{\dliftsys}[2]{\mathcal{L}_{#1}(#2)}
\newcommand{\hold}[1]{\mathcal{H}_{#1}}
\newcommand{\ghold}[1]{\widetilde{\mathcal{H}}_{#1}}
\newcommand{\holdfunc}{\bf{H}}
\newcommand{\samp}[1]{\mathcal{S}_{#1}}
\newcommand{\gsamp}[1]{\widetilde{\mathcal{S}}_{#1}}
\newcommand{\norm}[1]{\|#1\|}
\newcommand{\twonorm}[1]{\|#1\|}
\newcommand{\us}[1]{\uparrow\!#1}
\newcommand{\ds}[1]{\downarrow\!#1}
\newcommand{\lft}{{\mathcal F}_{l}}
\newcommand{\sinc}{\mathrm{sinc}}
\newcommand{\sspc}[4]{\left[ \begin{array}{c|c}#1&#2\\ \hline #3&#4\end{array}\right]}
\newcommand{\gss}[9]{
  \left[ \begin{array}{c|cc}
      #1&#2&#3\\ 
      \hline
      #4&#6&#7\\
      #5&#8&#9\\
  \end{array}\right]
}
\newcommand{\B}{\mathbf{B}}
\newcommand{\C}{\mathbf{C}}
\newcommand{\D}{\mathbf{D}}
\newcommand{\T}{{\mathcal T}}
\newcommand{\W}{{\mathcal W}}
\newcommand{\G}{{\mathbf G}}
\newcommand{\dd}{\|d\|_\infty\leq\frac{\Delta}{2}}
\newcommand{\vect}[1]{\boldsymbol{#1}}
\newtheorem{theorem}{Theorem}[chapter]
\newtheorem{deff}{Definition}[chapter]
\newtheorem{problem}{Problem}[chapter]
\newtheorem{lemma}{Lemma}[chapter]
\newtheorem{prop}{Proposition}[chapter]
\begin{document}
\pagenumbering{roman}
\begin{titlepage}
  \begin{center}
    \begin{huge}
      \begin{bf}
         Multirate Digital Signal Processing
	 via Sampled-Data $H^\infty$ Optimization \vspace{2.7cm}\\
      \end{bf} 
    \end{huge}
    \begin{Large}
      \begin{bf}
        Dissertation \vspace{0.7cm}\\ 
        Submitted in partial fulfillment of\\ 
        the requirements for the degree of\\ 
        Doctor of Informatics \vspace{1.5cm}\\
        Masaaki Nagahara \vspace{1.4cm}\\
	Graduate School of Informatics \vspace{0.7cm}\\
        Kyoto University \vspace{1.4cm}\\
        February 2003
      \end{bf}
    \end{Large}
  \end{center}
\end{titlepage}
\pagestyle{empty}
\newpage
\section*{Abstract}
In this thesis, we present a new method for designing multirate signal processing
and digital communication systems via sampled-data $H^\infty$ control theory.
The difference between our method and conventional ones is in the signal spaces.
Conventional designs are executed in the discrete-time domain,
while our design takes account of both the discrete-time and the continuous-time signals.
Namely, our method can take account of the characteristic of the original analog signal
and the influence of the A/D and D/A conversion.
While the conventional method often indicates that an ideal digital low-pass filter is preferred,
we show that the optimal solution need not be an ideal low-pass when the original analog signal
is not completely band-limited.
This fact can not be recognized only in the discrete-time domain.
Moreover, we consider quantization effects.
We discuss the stability and the performance of quantized sampled-data control systems.
We justify $H^\infty$ control to reduce distortion caused by the quantizer.
Then we apply it to differential pulse code modulation.
While the conventional $\Delta$ modulator is not optimal and besides not stable,
our modulator is stable and optimal with respect to the $H^\infty$-norm.
We also give an LMI (Linear Matrix Inequality) solution to the optimal $H^{\infty}$ 
approximation of IIR (Infinite Impulse Response) filters via FIR (Finite Impulse Response) filters.  
A comparison with the Nehari shuffle is made with
a numerical example, and it is observed that 
the LMI solution generally performs better.  
Another numerical study also indicates that 
there is a trade-off between the pass-band and
stop-band approximation characteristics.
\newpage
\section*{Acknowledgments}
I would like to express my sincere gratitude to everyone who
helped me and contributed in various ways toward completion of 
this work.

First of all, I am most grateful to my supervisor Professor
Yutaka Yamamoto.  Since I entered the Graduate School of Kyoto
University, he has constantly guided me in every respect of
science and technology.  Without this knowledge he has endowed
me, it is unthinkable that I could proceed this far toward 
the completion of this thesis.  He also taught me fundamental
knowledge on sampled-data control and signal processing on
which this thesis is based.  His strong leadership and ideas
have been a constant source of encouragement and have had
a definite influence on this work.  Without his supervision,
this thesis would not exist today.  

I would also like to thank Dr.~Hisaya Fujioka
for his helpful discussions and support in my research,
in particular, those on sampled-data control.
His warm encouragement was also great help to completing 
this thesis.

My gratitude should be extended to Dr.~Yuji Wakasa
for his valuable advice.  In particular, his advice on convex optimization
was very valuable to my research.

I would thank Mr.~Kenji Kashima, 
Mr.~Shinichiro Ashida,
and current and past members of the intelligent and control systems laboratory
for their academic stimulation and dairy friendship.

Special thanks are addressed toward my parents for their deep understanding
and encouragement.

This research was partially supported by Japan Society for the Promotion 
of Science (JSPS).
\newpage
\pagestyle{myheadings}
\markboth{\sl Contents}{\sl Contents}
\tableofcontents
\newpage
\pagestyle{myheadings}
\markboth{\sl Notation}{\sl Notation}
\section*{Notation}

\begin{description}
  \item [$\Zp$]\hspace{-2mm}: non-negative integers. 
  \item [$\Nset$]\hspace{-2mm}: natural numbers.
  \item [$\Real$]\hspace{-2mm}: real numbers.
  \item [$\Real_+$]\hspace{-2mm}: non-negative real numbers.
  \item [$\Real^n$]\hspace{-2mm}: $n$-dimensional vector space over $\Real$.
  \item [$\Comp$]\hspace{-2mm}: complex numbers.
  \item [$l^2$]\hspace{-2mm}: real-valued square summable sequences.
  \item [$l^\infty$]\hspace{-2mm}: real-valued bounded sequences.
  \item [$L^2[0,\infty)$ and $L^2[0,h)$]\hspace{-2mm}:
        Lebesgue spaces consisting of square integrable real 
        functions on $[0,\infty)$ and $[0,h)$, respectively.
	$L^2[0,\infty)$ may be abbreviated to $L^2$.
  \item [$l^2_{L^2[0,h)}$]\hspace{-2mm}:
        square summable sequences whose values are in $L^2[0,h)$.
  \item [$\sspc{A}{B}{C}{D}$]\hspace{-2mm}: transfer function whose realization is $\{A,B,C,D\}$,
	that is, $\sspc{A}{B}{C}{D}(\lambda):=C(\lambda I-A)^{-1}B+D$, where $\lambda:=s$ in continuous-time
	and $\lambda:=z$ in discrete-time.
  \item [$\lft(P,K)$]\hspace{-2mm}: linear fractional transformation of $P$ and $K$, that is,
	if $P=\begin{bmatrix}P_{11}&P_{12}\\P_{21}&P_{22}\end{bmatrix}$ then $\lft(P,K):=P_{11}+P_{12}K(I-P_{22}K)^{-1}P_{21}$.
  \item [$\samp{h}$]\hspace{-2mm}: ideal sampler with sampling period $h$.
  \item [$\hold{h}$]\hspace{-2mm}: zero order hold with sampling period $h$.
  \item [$\lift$]\hspace{-2mm}: lifting for continuous-time signals with sampling period $h$.
  \item [$\dlift{N}$]\hspace{-2mm}: lifting for discrete-time signals by factor $N$ (discrete-time lifting).
  \item [$\dliftsys{N}{P}$]\hspace{-2mm}: fast-discretizing and discrete-time lifting of continuous system $P$, that is,
		$\dliftsys{N}{P}:=\dlift{N}\samp{h/N}P\hold{h/N}\idlift{N}$.
  \item [$\us{M}$]\hspace{-2mm}: upsampler with upsampling factor $M$.
  \item [$\ds{M}$]\hspace{-2mm}: downsampler with downsampling factor $M$.
  \item [$r(A)$]\hspace{-2mm}: maximum absolute value of eigenvalues of matrix $A$.
  \item [$A^T$]\hspace{-2mm}: transpose of a matrix (or a vector) $A$.
\end{description}
\newpage
\setcounter{page}{1}
\pagenumbering{arabic}
\pagestyle{headings}

\chapter{Introduction}
\label{ch:introduction}
This thesis presents a new method in
multirate digital signal processing and digital communication systems.

When we execute these design procedures,
we must first discretize
the original analog signal (e.g., speech, audio or visual image),
then the discretized signal is processed in the discrete-time domain
(e.g., filtering, compressing, transmitting etc.),
and finally we reconstruct an analog signal from the discrete signal.

Conventionally, the design is performed mostly in the discrete-time domain
by assuming that the original analog signal is fully band-limited up to the Nyquist frequency.
Under this assumption, the sampling theorem gives a method for reconstructing
an analog signal from a sampled signal \cite{Jer77,Zay}.
\begin{theorem}[Shannon et al. ]
\label{th:shannon}
Let $f(t)$ be a continuous-time signal ideally band-limited to the range 
$(-\pi/h, \pi/h)$, that is, its Fourier transform $\hat{f}(\omega)$ is zero outside this interval.
Then $f(t)$ can uniquely be recovered from its sampled values $f(nh),\; n=0, \pm 1, \pm 2, \ldots$ 
via the formula
\begin{equation}
f(t) = \sum_{n=-\infty}^{\infty} f(nh) \frac{\sin \pi(t/h-n)}{\pi(t/h-n)} = \sum_{n=-\infty}^{\infty} f(nh) \sinc (t/h-n).
\label{eq:shannon}
\end{equation}
\end{theorem}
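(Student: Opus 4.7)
The plan is to recover $f$ from its samples by exploiting the finite support of $\hat{f}$ via a Fourier series expansion on the band interval $(-\pi/h,\pi/h)$. First I would write the inverse Fourier transform, observing that because $\hat{f}(\omega)=0$ outside the band, the inversion reduces to a bounded integral:
\begin{equation*}
f(t) = \frac{1}{2\pi}\int_{-\pi/h}^{\pi/h}\hat{f}(\omega)e^{i\omega t}\,d\omega.
\end{equation*}
This already shows that $f$ extends to an entire function, so evaluating it at $t=nh$ is unambiguous.

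Next, I would view the restriction of $\hat{f}$ to $(-\pi/h,\pi/h)$ as a function on an interval of length $2\pi/h$ and expand it in the standard Fourier series with respect to the orthogonal system $\{e^{-inh\omega}\}_{n\in\Z}$. The key observation is that the Fourier coefficients are precisely the samples of $f$: setting $t=nh$ in the inversion formula above gives
\begin{equation*}
c_n = \frac{h}{2\pi}\int_{-\pi/h}^{\pi/h}\hat{f}(\omega)e^{inh\omega}\,d\omega = h\,f(nh).
\end{equation*}
Thus, at least in the $L^2(-\pi/h,\pi/h)$ sense, $\hat{f}(\omega)=\sum_n h\,f(nh)e^{-inh\omega}$.

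Substituting this series back into the inversion formula and interchanging summation and integration termwise yields
\begin{equation*}
f(t)=\sum_{n=-\infty}^{\infty}f(nh)\cdot\frac{h}{2\pi}\int_{-\pi/h}^{\pi/h}e^{i\omega(t-nh)}\,d\omega
=\sum_{n=-\infty}^{\infty}f(nh)\,\sinc(t/h-n),
\end{equation*}
after computing the elementary integral and simplifying. This is exactly the claimed reconstruction formula.

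The main obstacle is justifying the interchange of sum and integral, since the Fourier series for $\hat{f}$ converges a priori only in $L^2$, not pointwise. I would handle this either by noting that the finite partial sums converge to $\hat f$ in $L^2(-\pi/h,\pi/h)$ and that the exponentials $e^{i\omega t}$ are bounded there, so integrating against them is continuous on $L^2$, yielding convergence of the sinc series to $f(t)$ for each fixed $t$; or by appealing to the Paley--Wiener characterization of band-limited $L^2$ functions, from which the $L^2$ (and uniform, on compact sets) convergence of the sinc series is standard. Either route cleanly completes the proof.
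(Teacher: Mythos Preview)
Your argument is the classical Fourier-series proof of the sampling theorem and is correct; the computation of the Fourier coefficients as $h\,f(nh)$ and the evaluation of the resulting integral to $\sinc(t/h-n)$ are both right, and your justification of the sum--integral interchange via $L^2$ convergence on the bounded band is the standard way to make the formal manipulation rigorous.

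There is nothing to compare against, however: the paper does not supply its own proof of this theorem. It is stated in the introduction as background (attributed to ``Shannon et al.'') with citations to \cite{Jer77,Zay}, and the thesis immediately moves on to discuss its practical limitations rather than prove it. So your proposal fills a gap that the paper deliberately leaves to the literature.
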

Although most of the conventional studies of digital signal processing are based on Theorem \ref{th:shannon}
\cite{Fli,Pro,ProSal,Vai,Zay,Zel}, 
we encounter two questions in the implementation: the question of D/A and A/D conversions.

We consider the first question.
The process (\ref{eq:shannon}) is a kind of D/A conversion;
we convert sampled values $\{f(nh)\}$ to modulated impulses $\{f(nh)\delta(nh)\}$,
then filter them by the ideal low-pass filter\footnote{
The frequency response of this filter is equal to 1 up to the Nyquist frequency $\omega_N$ 
and zero beyond $\omega_N$.} \cite{Zel}.
In practice, this conversion is physically impossible,
and in reality a zero-order hold followed by a sharp low-pass filter is often used.
We should notice that the effect of such a real situation of
D/A conversion can never be taken into account only in the discrete-time domain.

The question that we must consider next is the assumption of full band-limitation.
The assumption that Theorem \ref{th:shannon} requires is in reality impossible
because no real analog signal is fully band-limited.
In order to achieve the assumption, a sharp low-pass filter is attached before sampling.
However, such a sharp low-pass characteristic will deteriorate the quality of the analog signal,
and moreover even such a filter does not satisfy the assumption.
Therefore we have to consider the effect of sampling (i.e., {\em aliasing}),
which can never be captured in the discrete-time domain.

To answer these questions, we must take account of not only discrete-time signals but also continuous-time signals.
Therefore, we propose to consider an analog performance to design digital systems
by using the modern {\em sampled-data control theory}.
Sampled-data control theory deals with control systems that consist
of continuous-time plants to be controlled, discrete-time controllers controlling them,
and ideal A/D and D/A converters.
The modern theory can take the intersample behavior of 
sampled-data control systems into account.
The key idea is {\em lifting} \cite{Yam90,BamPeaFraTan91,Yam94}.
Although sampled-data systems are not time-invariant,
the lifting method leads to time-invariant discrete-time models.
These models are infinite-dimensional, 
which can be reduced to equivalent discrete-time finite-dimensional ones
\cite{BamPea92,CheFra,KabHar93}.

There is another method for sampled-data systems: {\em fast-sampling/fast-hold} (FSFH) method \cite{KelAnd92,YamMadAnd99}.
The idea is to approximate the continuous-time inputs by step functions of sufficiently small step size
and also approximate the continuous-time outputs by taking their samples by sufficiently fast ideal sampler.
The approximated system will be a periodically time-varying discrete-time system (finite-dimensional),
which can be transformed a time-invariant discrete-time system by using the {\em discrete-time lifting}
\cite{Mey90,CheFra}

Based on these studies, the $H^\infty$ optimal sampled-data control
has been studied \cite{Toi92,BamPea92,KabHar93,HayHarYam94}.
The $H^\infty$ optimality criterion is suitable for the frequency characteristic,
which is intuitive for one who designs the system.

In the last few years, several articles have studied digital signal processing
via sampled-data control theory.
Chen and Francis \cite{CheFra95} studied a multirate filter bank design problem
with an $H^\infty$ criterion.
Although this design was done in the discrete-time domain, they brought the modern $H^\infty$ control theory
to the digital signal processing, and thus they threw new light on the subject.
The first study of digital signal processing in the sampled-data setting was made by Khargonekar and Yamamoto \cite{KhaYam96}.
They formulated a single-rate signal reconstruction problem by using sampled-data theory.
After their study, this method was developed to multirate signal processing \cite{YamFujKha97,IsiYam98,IsiYamFra99,YamNagFuj00,NagYam00},
digital communications \cite{NagYam01b, NagYam03}
and quantizer design \cite{NagAshYam02}.

The purpose of this thesis is to answer the questions 
mentioned above by sampled-data $H^\infty$ optimal control theory,
and to show that this method is effective in designing digital systems.

The organization of this thesis is as follows.
\begin{itemize}
\item Chapter \ref{ch:SDC} surveys sampled-data control theory.
We describe the fundamental difficulty in sampled-data systems, 
and introduce the lifting method, 
which resolved this difficulty:
by using it 
we can define the frequency response and the $H^\infty$ optimal control.
The fast-sampling/fast-hold method for multirate sampled-data control systems is
discussed in detail, because we mainly use this method in this thesis.
\item Chapter \ref{ch:multirate} deals with multirate signal processing,
in particular, interpolation, decimation and sampling rate conversion.
We present a new method for designing these systems via the sampled-data $H^\infty$
optimization.
Design examples shows that our method is superior to the conventional one.
\item Chapter \ref{ch:comm} presents a new design of digital communication systems.
Under signal compression and channel distortion, we design an optimal transmitting/receiving
filter by using the sampled-data $H^\infty$ optimization. 
We show also design examples to indicate that our method is effective in
digital communication.
\item Chapter \ref{ch:qqq} investigates issues of quantization.
Since quantization is a nonlinear operation, we introduce a linearized model
(i.e., additive noise model).
By using this model, we discuss stability and performance of a quantized sampled-data
control system.
Then we apply it to differential pulse code modulation (DPCM) systems.
Design examples are shown and our design is superior to the conventional
$\Delta$ modulation.
\item Chapter \ref{ch:firapp} presents a new method to approximate an IIR
filter by an FIR filter, which directly yields an
optimal approximation with respect to the $H^{\infty}$ error norm.
We show that this design
problem can be reduced to an LMI (Linear Matrix Inequality).
We will also make a comparison via a numerical example 
with an existing method, known as the Nehari shuffle.
\item Chapter \ref{ch:conclusion} concludes this thesis with a summary of the results presented
and future perspectives.
\end{itemize}
\chapter{Sampled-Data Control Theory}
\label{ch:SDC}
\section{Sampled-data control systems}
A sampled-data control system is a system 
in which a continuous-time plant is to be controlled by
a discrete-time controller.
Consider the unity-feedback sampled-data control system shown in Figure \ref{fig:SDC_sdfeedback}.
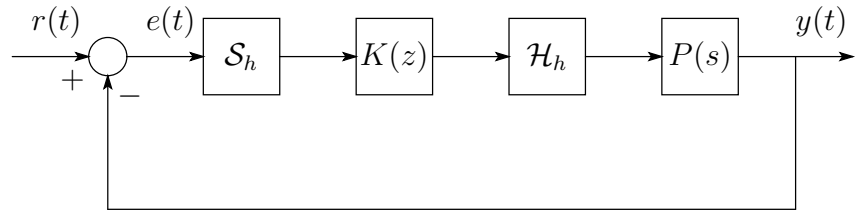
\begin{figure}[t]
\begin{center}
\unitlength 0.1in
\begin{picture}(44.00,10.70)(4.00,-12.00)
%
\special{pn 8}%
\special{pa 400 400}%
\special{pa 800 400}%
\special{fp}%
\special{sh 1}%
\special{pa 800 400}%
\special{pa 733 380}%
\special{pa 747 400}%
\special{pa 733 420}%
\special{pa 800 400}%
\special{fp}%
%
\special{pn 8}%
\special{ar 900 400 100 100  0.0000000 6.2831853}%
%
\special{pn 8}%
\special{pa 1000 400}%
\special{pa 1400 400}%
\special{fp}%
\special{sh 1}%
\special{pa 1400 400}%
\special{pa 1333 380}%
\special{pa 1347 400}%
\special{pa 1333 420}%
\special{pa 1400 400}%
\special{fp}%
%
\special{pn 8}%
\special{pa 1400 200}%
\special{pa 1800 200}%
\special{pa 1800 600}%
\special{pa 1400 600}%
\special{pa 1400 200}%
\special{fp}%
%
\special{pn 8}%
\special{pa 1800 400}%
\special{pa 2200 400}%
\special{fp}%
\special{sh 1}%
\special{pa 2200 400}%
\special{pa 2133 380}%
\special{pa 2147 400}%
\special{pa 2133 420}%
\special{pa 2200 400}%
\special{fp}%
%
\special{pn 8}%
\special{pa 2200 200}%
\special{pa 2600 200}%
\special{pa 2600 600}%
\special{pa 2200 600}%
\special{pa 2200 200}%
\special{fp}%
%
\special{pn 8}%
\special{pa 2600 400}%
\special{pa 3000 400}%
\special{fp}%
\special{sh 1}%
\special{pa 3000 400}%
\special{pa 2933 380}%
\special{pa 2947 400}%
\special{pa 2933 420}%
\special{pa 3000 400}%
\special{fp}%
%
\special{pn 8}%
\special{pa 3000 200}%
\special{pa 3400 200}%
\special{pa 3400 600}%
\special{pa 3000 600}%
\special{pa 3000 200}%
\special{fp}%
%
\special{pn 8}%
\special{pa 3400 400}%
\special{pa 3800 400}%
\special{fp}%
\special{sh 1}%
\special{pa 3800 400}%
\special{pa 3733 380}%
\special{pa 3747 400}%
\special{pa 3733 420}%
\special{pa 3800 400}%
\special{fp}%
%
\special{pn 8}%
\special{pa 3800 200}%
\special{pa 4200 200}%
\special{pa 4200 600}%
\special{pa 3800 600}%
\special{pa 3800 200}%
\special{fp}%
%
\special{pn 8}%
\special{pa 4200 400}%
\special{pa 4800 400}%
\special{fp}%
\special{sh 1}%
\special{pa 4800 400}%
\special{pa 4733 380}%
\special{pa 4747 400}%
\special{pa 4733 420}%
\special{pa 4800 400}%
\special{fp}%
%
\special{pn 8}%
\special{pa 4500 400}%
\special{pa 4500 1200}%
\special{fp}%
%
\special{pn 8}%
\special{pa 900 1200}%
\special{pa 900 500}%
\special{fp}%
\special{sh 1}%
\special{pa 900 500}%
\special{pa 880 567}%
\special{pa 900 553}%
\special{pa 920 567}%
\special{pa 900 500}%
\special{fp}%
\put(5.0000,-3.0000){\makebox(0,0)[lb]{$r(t)$}}%
\put(11.0000,-3.0000){\makebox(0,0)[lb]{$e(t)$}}%
\put(45.0000,-3.0000){\makebox(0,0)[lb]{$y(t)$}}%
\put(7.8000,-4.7000){\makebox(0,0)[rt]{$+$}}%
\put(9.5000,-5.5000){\makebox(0,0)[lt]{$-$}}%
\put(16.0000,-4.0000){\makebox(0,0){$\samp{h}$}}%
\put(24.0000,-4.0000){\makebox(0,0){$K(z)$}}%
\put(32.0000,-4.0000){\makebox(0,0){$\hold{h}$}}%
\put(40.0000,-4.0000){\makebox(0,0){$P(s)$}}%
%
\special{pn 8}%
\special{pa 4500 1200}%
\special{pa 900 1200}%
\special{fp}%
\end{picture}%
\end{center}
\caption{Sampled-data control system}
\label{fig:SDC_sdfeedback}
\end{figure}
In this figure, $P(s)$ is a continuous-time plant and $K(z)$ is a discrete-time controller.
In order to include $K(z)$ in this control system,
we need an interface.
Therefore we introduce
the ideal sampler $\samp{h}$ and the zero-order hold $\hold{h}$ with sampling time $h$.
\begin{deff}
    The ideal sampler $\samp{h}$ and the zero-order hold $\hold{h}$ are defined
    as follows:
    \begin{align}
	\samp{h}
	&:  L^2[0,\infty) \ni u \longmapsto v \in l^2,\quad
	v[k] := u(kh), \nonumber\\
	\hold{h} 
	&: l^2 \ni v \longmapsto u \in L^2[0,\infty),\quad
	u(kh+\theta) := H(\theta)v[k], \nonumber\\
	&\qquad k=0,1,2,\ldots, \nonumber\\
	\intertext{where $H(\cdot)$ is the hold function defined as follows:}
	H(\theta) &:= \begin{cases}
			 1, &\theta \in [0,h),\\
			 0, &\text{\rm otherwise}.
		     \end{cases}
	\label{eq:SDC_hold_func}
    \end{align}
\end{deff}
In practice, a quantization error occurs in the A/D conversion.
We however omit this quantization error here. 
The quantization effects are discussed in Chapter \ref{ch:qqq}.

The system contains both continuous-time and discrete-time signals and is theoretically
regarded as a periodically time-varying system \cite{CheFra}.
Since this system is not time-invariant, it is difficult to analyze or design by using
such conventional machinery as transfer functions or frequency responses.

\section{Lifting}
Conventionally there are two ways for designing sampled-data systems.
One is the following: first design a continuous-time controller in the continuous-time domain,
and then discretize the controller.
A typical discretization method is the Tustin (or bilinear) transformation \cite{CheFra,YamEnc,DCDS}.
If the sampling period is sufficiently small, the designed system may perform well.
However, if the sampling period is not small enough,
the performance not only deteriorates, but also the closed-loop system may become unstable.

The other is to approximate the continuous-time signal by a discrete-time one
by considering only the signals at sampling points.
As a result the sampled-data system becomes a discrete-time time-invariant system.
This method preserves the closed-loop stability.
However, the output of the system may sometimes induce very large intersample ripples
despite a very small sampling period.
The reason is that the method ignores the intersample behavior.

Recently, Yamamoto \cite{Yam90, Yam94} studied the problem of sampled-data control systems.
He developed what is now called the {\em lifting} method, which takes the intersample behavior into account
and gives an exact, not approximated, time-invariant discrete-time system for a sampled-data control system.

We begin by defining the lifting operator.
\begin{deff}
    Define the lifting operator $\lift$ by
    \begin{gather*}
	\lift : L^2[0,\infty) \longrightarrow l^2_{L^2[0,h)}:\quad
	 \{f(t)\}_{t\in\mathbb{R}_+} \longmapsto \{\widetilde{f}[k](\theta)\}_{k=0}^\infty,
	\quad \theta\in [0,h),\\
	\widetilde{f}[k](\theta) := f(kh+\theta) \in L^2[0,h).
    \end{gather*}
\end{deff}
By lifting, continuous-time signals in $L^2[0,\infty)$ will become
discrete-time signals whose values are in $L^2[0,h)$,
hence the sampled-data system can be rewritten as a time-invariant discrete-time system
with infinite-dimensional signal spaces.
As a result, we can introduce the concept of transfer functions or the frequency response
for sampled-data systems, and hence we can treat sampled-data systems
as time-invariant systems without approximation.

Consider the standard sampled-data control system in Figure \ref{fig:SDC_gplant},
\begin{figure}[t]
    \begin{center}
\setlength{\unitlength}{0.4mm}
\begin{picture}( 169, 126)
\put( 151.8,  25.3){\line(-1, 0){   8.4}}
\put( 151.8,  67.5){\line( 0,-1){  42.2}}
\put(  16.9,  67.5){\line( 1, 0){  42.2}}
\put(  16.9,  25.3){\line( 0, 1){  42.2}}
\put(  16.9,  25.3){\vector( 1, 0){   8.4}}
\put( 151.8,  67.5){\vector(-1, 0){  42.2}}
\put( 101.2,  25.3){\vector( 1, 0){  25.3}}
\put(  42.2,  25.3){\vector( 1, 0){  25.3}}
\put( 160.2, 101.2){\vector(-1, 0){  50.6}}
\put(  59.0, 101.2){\vector(-1, 0){  50.6}}
\put( 126.5,  16.9){\framebox(  16.9,  16.9){$\hold{h}$}}
\put(  67.5,   8.4){\framebox(  33.7,  33.7){$K_d$}}
\put(  25.3,  16.9){\framebox(  16.9,  16.9){$\samp{h}$}}
\put(  59.0,  59.0){\framebox(  50.6,  50.6){$G$}}
\put(  17.9, 110.1){$z$}
\put( 144.3, 110.1){$w$}
\put( 144.3,  76.4){$u$}
\put(  17.9,  76.4){$y$}
\put(  51.6,  34.2){$y_d$}
\put( 110.6,  34.2){$u_d$}
\end{picture}
    \end{center}
    \caption{Sampled-data control system}
    \label{fig:SDC_gplant}
\end{figure}
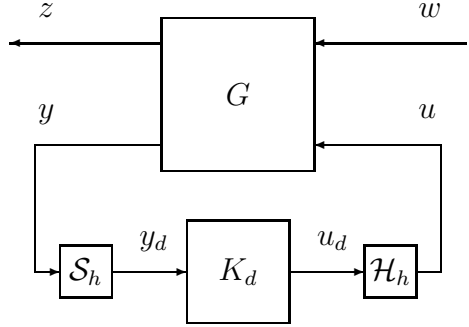
where $G$ is a continuous-time generalized plant, whose
state-space equation is given as follows:
\begin{align*}
       \dot{x} &= Ax + \left[\begin{array}{cc}B_1&B_2\end{array}\right]
	  \left[\begin{array}{c}w\\u\end{array}\right],\\
    \left[
     \begin{array}{c}
	 z\\
	 y
     \end{array}
    \right]
    &=
    \left[
     \begin{array}{ccc}
	 C_1\\
	 C_2\\
     \end{array}
    \right]x
    +
    \left[
     \begin{array}{ccc}
	 D_{11}&D_{12}\\
	 0&0
     \end{array}
    \right]
    \left[
     \begin{array}{c}
	 w\\
	 u
     \end{array}
    \right].
\end{align*}

The signal $w$ is the exogenous input consisting of reference commands,
disturbance or sensor noise, 
while $z$ is the signal to be controlled to have 
a desirable performance. 
Note that both of these signals are continuous-time.
The system $K_d$ is a digital controller.

The design problem of the sampled-data control system in Figure \ref{fig:SDC_gplant}
is to obtain the controller $K_d$ 
which stabilizes the closed-loop system and
makes the performance from $w$ to $z$ desirable.

This system is a sampled-data control system, 
and hence it is a periodically time varying system.
However, by lifting the continuous-time signals $z$ and $w$, and taking
$\widetilde{z}:=\lift z$ and $\widetilde{w}:=\lift w$,
the sampled-data control system in Figure \ref{fig:SDC_gplant} can be 
converted to a time-invariant discrete-time system shown in Figure \ref{fig:SDC_gplant_lift}.
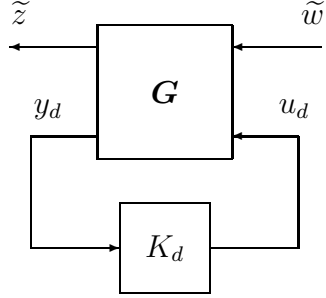
\begin{figure}[t]
   \begin{center}
\begin{picture}( 143, 126)
\put(  42.2,  59.0){\framebox(  50.6,  50.6){$\boldsymbol{G}$}}
\put(  42.2, 101.2){\vector(-1, 0){  33.7}}
\put( 126.5, 101.2){\vector(-1, 0){  33.7}}
\put(  42.2,  67.5){\line(-1, 0){  25.3}}
\put(  16.9,  67.5){\line( 0,-1){  42.2}}
\put( 118.0,  67.5){\vector(-1, 0){  25.3}}
\put(  84.3,  25.3){\line( 1, 0){  33.7}}
\put( 118.0,  25.3){\line( 0, 1){  42.2}}
\put(  16.9,  25.3){\vector( 1, 0){  33.7}}
\put(  50.6,   8.4){\framebox(  33.7,  33.7){$K_d$}}
\put(   9.4, 110.1){$\widetilde{z}$}
\put( 119.0, 110.1){$\widetilde{w}$}
\put( 110.6,  76.4){$u_d$}
\put(  17.9,  76.4){$y_d$}
\end{picture}
   \end{center}
   \caption{Lifted sampled-data control system}
   \label{fig:SDC_gplant_lift}
\end{figure}
Namely, the state-space equation of the lifted system $\boldsymbol{G}$ is obtained
as follows:
\begin{gather*}
\label{eq:gplant1}
\left[
 \begin{array}{c}
     x[k+1]\\
     \widetilde{z}[k]\\
     y_d[k]
 \end{array}
\right]
    =
    \left[
    \begin{array}{ccc}
     A_d&\B_1&B_{d2}\\
	     \C_1&\D_{11}&\D_{12}\\
	     C_{d2}&0&0
	 \end{array}
    \right]
	\left[
 \begin{array}{c}
     x[k]\\
     \widetilde{w}[k]\\
     u_d[k]
 \end{array}
\right],\\
k=0,1,\ldots,
\end{gather*}
\begin{equation*}
    \label{eq:gplant2}
    \begin{split}
	A_d     &:= e^{Ah},\qquad B_{d2}  := \int_0^h e^{A(h-\tau)}B_2d\tau,\qquad C_{d2}:= C_2,\\
	\B_1 &: L^2[0,h)\longrightarrow \Real^n : w \mapsto \int_0^h e^{A(h-\tau)}B_1w(\tau)d\tau,\\
	\C_1 &: \Real^n \longrightarrow L^2[0,h): x \mapsto C_1e^{A\theta}x,\\
	\D_{11} &: L^2[0,h) \longrightarrow L^2[0,h)\\
	  &: w  \mapsto \int_0^\theta C_1e^{A(\theta-\tau)}B_1w(\tau)d\tau
	    + D_{11}w(\theta),\\
	\D_{12}  &:\Real^m \longrightarrow L^2[0,h)\\
	  &: u_d \mapsto \int_0^\theta C_1e^{A(\theta-\tau)}B_2H(\tau)d\tau u_d
	    + D_{12}H(\theta)u_d,\\
	&\theta \in [0,h).
    \end{split}
\end{equation*}
In this equation, $n$ and $m$ are the dimensions of $x$ and $u_d$, respectively, and 
$H(\cdot)$ is the hold function defined by (\ref{eq:SDC_hold_func}).
Note that $\B_1$, $\C_1$, $\D_{11}$ and $\D_{12}$ are operators in infinite-dimensional spaces,
while $A_d$, $B_{d2}$ and $C_{d2}$ are matrices.
Therefore the lifted system becomes a discrete-time time-invariant system with
infinite-dimensional operators.

\section{Frequency response and $H^\infty$ optimization}
In the previous section, we have shown that sampled-data systems can be represented as time-invariant
discrete-time systems. We can then define their transfer function or frequency response.
The concept of frequency response for sampled-data systems is introduced by Yamamoto and Khargonekar
\cite{YamKha96} and its computation is developed, for example, in \cite{HFKY95}.

Let $\T$ denote the system from $w$ to $z$ in Figure \ref{fig:SDC_gplant}, and
$\widetilde{\T}$ the lifted system of $\T$.
Define the state-space equation for $\widetilde{\T}$ as follows:
\begin{equation}
    \label{eq:SDC_SD_sys}
    \begin{split}
	x_s[k+1] &= Ax_s[k] + \B\widetilde{w}[k],\\
	\widetilde{z}[k] &= \C x_s[k] + \D\widetilde{w}[k],
	\qquad k=0,1,2,\ldots.
    \end{split}
\end{equation}
Note that $A$ is a matrix, while $\B$,
$\C$ and $\D$ are infinite-dimensional operators.
We assume that $A$ is a power stable matrix,
that is, $A^n \rightarrow 0$ as $n\rightarrow\infty$.

For the lifted signal
$\{\widetilde{f}[k]\}_{k=0}^{\infty},$
define its $z$ {\em transform} as
\begin{equation*}
    \mathcal{Z}[\widetilde{f}](z) := \sum_k^\infty
	\widetilde{f}[k]z^{-k}.
\end{equation*}
It follows that the {\em transfer function} $\widetilde{\T}(z)$ of the sampled-data system
can be defined as follows:
    \begin{equation*}
	\widetilde{\T}(z) :=
	    \D+\C (zI-A)^{-1}\B,
	\qquad z\in \Comp.
    \end{equation*}
The $z$ transform $\widetilde{\T}(z)$ is a linear operator on $L^2[0,h)$ with
a complex variable $z$. 
By substituting $e^{j \omega h}$ for $z$ in $\widetilde{\T}(z)$, we can define
the {\em frequency response operator}.
\begin{deff} 
Define the frequency response operator of sampled-data system $\T$ by
\begin{equation*}
	\widetilde{\T}(e^{j\omega h})=\D+\C(e^{j\omega h}I-A)^{-1}\B:
	L^2_{[0,h)}\longrightarrow L^2[0,h), \quad \omega \in \Real.
\end{equation*}
\end{deff}
The norm of the frequency response operator $\widetilde{\T}(e^{j \omega h})$
\begin{equation*}
    \|\widetilde{\T}(e^{j\omega h})\| 
	:= \sup_{\begin{subarray}{c}v\in L^2[0,h)\\ v\neq 0\end{subarray}}
	\frac{\|\widetilde{T}(e^{j\omega h})v\|_{L^2[0,h)}}{\|v\|_{L^2[0,h)}},
\end{equation*}
is called the {\em gain} at $\omega$.
The $H^\infty$-{\em norm} of the sampled-data system is then given by
\begin{equation*}
    \|\widetilde{\T}\|_\infty := \sup_{\omega \in [0,2\pi/h)}
	\|\widetilde{T}(e^{j\omega h})\|.
\end{equation*}
The $H^\infty$-norm  $\|\widetilde{\T}\|_\infty$ is equivalent to the $L^2$ induced norm of 
the sampled-data system (\ref{eq:SDC_SD_sys}), that is, 
\begin{equation*}
\|\widetilde{\T}\|_\infty 
	= \|\T\| 
	:= \sup_{\begin{subarray}{c}w\in L^2[0,\infty)\\ w\neq 0\end{subarray}}
	\frac{\|\T w\|_{L^2[0,\infty)}}{\|w\|_{L^2[0,\infty)}}.
\end{equation*}

Sampled-data $H^\infty$ control problem is to find a discrete-time controller $K_d$
which stabilizes the closed-loop system and makes the $H^\infty$-norm of the system small.
The $H^\infty$ control has the following advantages:
\begin{itemize}
\item Since the $H^\infty$-norm is the $L^2$ induced norm of the sampled-data control system,
we can formulate the worst case design.
\item Many robustness requirements for the design against the system uncertainty can be 
described by $H^\infty$-norm constraint.
\item We can shape the frequency characteristic with frequency weights,
which is intuitive to designers.
\end{itemize}

Although $\T$ is infinite-dimensional, the $H^\infty$ optimization can be
equivalently transformed to that for a finite-dimensional discrete-time system \cite{Toi92,BamPea92,KabHar93,HayHarYam94,CheFra}.
Note that the obtained finite-dimensional discrete-time optimization problem takes intersample behavior
into account.

On the other hand, there is another method for obtaining a finite-dimensional discrete-time system:
{\em fast-sampling/fast-hold approximation} \cite{KelAnd92,CheFra}.
This method provides not an equivalent but an approximated model,
however its computation is easier than that of the method giving equivalent models.

Moreover, with regard to signal reconstruction problem (main theme of this thesis),
the method giving equivalent models yields some conservative results.
The reason is as follows:
we often allow signal reconstruction to have time delays since the system does not have any feedback loop.
In other words, a sampled value at a certain time can be reconstructed by using future sampled values,
that is, the filter is allowed to be non-causal.
However, the method giving equivalent method does not readily apply to this situation,
while the fast-sampling/fast-hold method does.

For the reasons mentioned above, we adopt the approximation method in this thesis to consider the problem.
In the following section, we discuss this method in detail.

\section{Fast discretization of multirate sampled-data systems}
\label{sec:SDC_fsfh}
In this section, we deal with multirate sampled-data control systems
by using the fast-sampling/fast-hold method.
The fast-sampling/fast-hold technique is a method for approximating
the performance of sampled-data systems.
The procedure is as follows:
\begin{itemize}
\item discretize the continuous-time input by a hold with sampling period $h/N$,
\item discretize the continuous-time output by a sampler with sampling period $h/N$.
\end{itemize}
With large $N\in\Nset$, the discretized signals may be  good approximation of the continuous signals.

Figure \ref{fig:SDC_fsfh} illustrates the procedure.
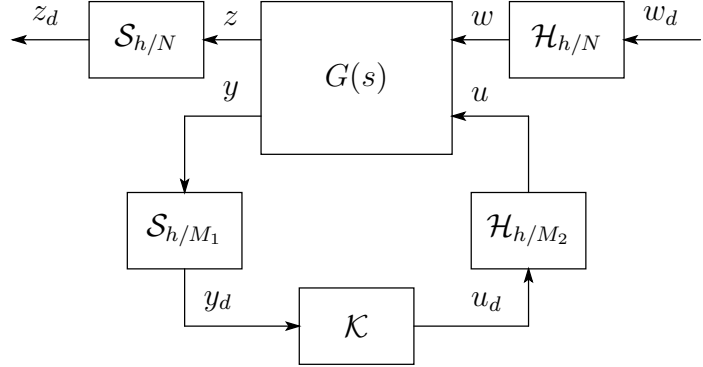
\begin{figure}[t]
    \begin{center}
\unitlength 0.1in
\begin{picture}(36.00,19.50)(1.00,-21.00)
%
\special{pn 8}%
\special{pa 500 200}%
\special{pa 1100 200}%
\special{pa 1100 600}%
\special{pa 500 600}%
\special{pa 500 200}%
\special{fp}%
%
\special{pn 8}%
\special{pa 1400 200}%
\special{pa 2400 200}%
\special{pa 2400 1000}%
\special{pa 1400 1000}%
\special{pa 1400 200}%
\special{fp}%
%
\special{pn 8}%
\special{pa 2700 200}%
\special{pa 3300 200}%
\special{pa 3300 600}%
\special{pa 2700 600}%
\special{pa 2700 200}%
\special{fp}%
%
\special{pn 8}%
\special{pa 2500 1200}%
\special{pa 3100 1200}%
\special{pa 3100 1600}%
\special{pa 2500 1600}%
\special{pa 2500 1200}%
\special{fp}%
%
\special{pn 8}%
\special{pa 2200 1700}%
\special{pa 1600 1700}%
\special{pa 1600 2100}%
\special{pa 2200 2100}%
\special{pa 2200 1700}%
\special{fp}%
%
\special{pn 8}%
\special{pa 1300 1200}%
\special{pa 700 1200}%
\special{pa 700 1600}%
\special{pa 1300 1600}%
\special{pa 1300 1200}%
\special{fp}%
\put(10.0000,-14.0000){\makebox(0,0){$\samp{h/M_1}$}}%
\put(28.0000,-14.0000){\makebox(0,0){$\hold{h/M_2}$}}%
\put(19.0000,-19.0000){\makebox(0,0){$\mathcal{K}$}}%
\put(19.0000,-6.0000){\makebox(0,0){$G(s)$}}%
\put(30.0000,-4.0000){\makebox(0,0){$\hold{h/N}$}}%
\put(8.0000,-4.0000){\makebox(0,0){$\samp{h/N}$}}%
\put(12.0000,-3.2000){\makebox(0,0)[lb]{$z$}}%
\put(25.0000,-3.2000){\makebox(0,0)[lb]{$w$}}%
\put(25.0000,-7.2000){\makebox(0,0)[lb]{$u$}}%
\put(12.0000,-7.2000){\makebox(0,0)[lb]{$y$}}%
\put(34.0000,-3.2000){\makebox(0,0)[lb]{$w_d$}}%
\put(2.0000,-3.2000){\makebox(0,0)[lb]{$z_d$}}%
\put(11.0000,-18.2000){\makebox(0,0)[lb]{$y_d$}}%
\put(25.0000,-18.2000){\makebox(0,0)[lb]{$u_d$}}%
%
\special{pn 8}%
\special{pa 3700 400}%
\special{pa 3300 400}%
\special{fp}%
\special{sh 1}%
\special{pa 3300 400}%
\special{pa 3367 420}%
\special{pa 3353 400}%
\special{pa 3367 380}%
\special{pa 3300 400}%
\special{fp}%
%
\special{pn 8}%
\special{pa 2700 400}%
\special{pa 2400 400}%
\special{fp}%
\special{sh 1}%
\special{pa 2400 400}%
\special{pa 2467 420}%
\special{pa 2453 400}%
\special{pa 2467 380}%
\special{pa 2400 400}%
\special{fp}%
%
\special{pn 8}%
\special{pa 1400 400}%
\special{pa 1100 400}%
\special{fp}%
\special{sh 1}%
\special{pa 1100 400}%
\special{pa 1167 420}%
\special{pa 1153 400}%
\special{pa 1167 380}%
\special{pa 1100 400}%
\special{fp}%
%
\special{pn 8}%
\special{pa 500 400}%
\special{pa 100 400}%
\special{fp}%
\special{sh 1}%
\special{pa 100 400}%
\special{pa 167 420}%
\special{pa 153 400}%
\special{pa 167 380}%
\special{pa 100 400}%
\special{fp}%
%
\special{pn 8}%
\special{pa 1400 800}%
\special{pa 1000 800}%
\special{fp}%
%
\special{pn 8}%
\special{pa 1000 800}%
\special{pa 1000 1200}%
\special{fp}%
\special{sh 1}%
\special{pa 1000 1200}%
\special{pa 1020 1133}%
\special{pa 1000 1147}%
\special{pa 980 1133}%
\special{pa 1000 1200}%
\special{fp}%
%
\special{pn 8}%
\special{pa 1000 1600}%
\special{pa 1000 1900}%
\special{fp}%
%
\special{pn 8}%
\special{pa 1000 1900}%
\special{pa 1600 1900}%
\special{fp}%
\special{sh 1}%
\special{pa 1600 1900}%
\special{pa 1533 1880}%
\special{pa 1547 1900}%
\special{pa 1533 1920}%
\special{pa 1600 1900}%
\special{fp}%
%
\special{pn 8}%
\special{pa 2200 1900}%
\special{pa 2800 1900}%
\special{fp}%
%
\special{pn 8}%
\special{pa 2800 1900}%
\special{pa 2800 1600}%
\special{fp}%
\special{sh 1}%
\special{pa 2800 1600}%
\special{pa 2780 1667}%
\special{pa 2800 1653}%
\special{pa 2820 1667}%
\special{pa 2800 1600}%
\special{fp}%
%
\special{pn 8}%
\special{pa 2800 1200}%
\special{pa 2800 800}%
\special{fp}%
%
\special{pn 8}%
\special{pa 2800 800}%
\special{pa 2400 800}%
\special{fp}%
\special{sh 1}%
\special{pa 2400 800}%
\special{pa 2467 820}%
\special{pa 2453 800}%
\special{pa 2467 780}%
\special{pa 2400 800}%
\special{fp}%
\end{picture}%
    \end{center}
    \caption{Fast-sampling/fast-hold discretization}
    \label{fig:SDC_fsfh}
\end{figure}
Assume that the controller $\mathcal{K}$ is $(M_1,M_2)$-periodic ($M_1, M_2 \in \Nset$) \cite{Mey90}, 
that is, $z^{M_2}\mathcal{K}z^{-M_1}=\mathcal{K}$
where $z$ is the unit advance and $z^{-1}$ is the unit time delay.

\subsection{Discrete-time lifting}
By attaching a fast-sampler and a fast-hold as shown in Figure \ref{fig:SDC_fsfh},
the multirate sampled-data system will be converted to a finite-dimensional discrete-time system
(we will show this in the next section).
However, this system has three sampling periods: $h/M_1$, $h/M_2$ and $h/N$,
and hence the system will be time-varying (to put it precisely, periodically time-varying).
In order to equivalently convert a multirate system to a single-rate one,
the {\em discrete-time lifting} \cite{KhaPooTan85,Mey90,CheFra} is useful.
The definition is as follows:
\begin{deff}
Define the discrete-time lifting $\dlift{N}$ and its inverse $\idlift{N}$ by
\begin{align*}
    \dlift{N}&: l^2\longrightarrow l^2:
\left\{v[0],v[1],\ldots \right\}\mapsto
       \left\{
          \left[
            \begin{array}{c}
		v[0]\\
		v[1]\\
		\vdots\\
		v[N-1]
	    \end{array}
          \right],
          \left[
            \begin{array}{c}
		v[N]\\
		v[N+1]\\
		\vdots\\
		v[2N-1]
            \end{array}
          \right],\ldots
    \right\},\\
    \idlift{N}&: l^2\longrightarrow l^2:\\
    &    \left\{
          \left[
            \begin{array}{c}
		v_0[0]\\
		v_1[0]\\
		\vdots\\
		v_{N-1}[0]
	    \end{array}
          \right],
          \left[
            \begin{array}{c}
		v_0[1]\\
		v_1[1]\\
		\vdots\\
		v_{N-1}[1]
            \end{array}
          \right],\ldots
    \right\}\mapsto
     \left\{v_0[0],v_1[0],\ldots,v_{N-1}[0],v_0[1],v_1[1],\ldots \right\}.
\end{align*}
\end{deff}
The discrete-time lifting $\dlift{N}$ converts a 1-dimensional signal
into an $N$-dimensional signal
and the sampling rate becomes $N$ times slower.
This operation makes it possible to equivalently convert multirate systems into
single-rate systems, and hence its analysis and design become easier.
Note that the discrete-time lifting $\dlift{N}$ is norm-preserving,
namely, $\|\dlift{N}v\|=\|v\|$, $v\in l^2$ and so is $\idlift{N}$.

\subsection{Approximating multirate sampled-data systems}
By using the discrete-time lifting, the multirate system shown in Figure \ref{fig:SDC_fsfh}
is converted to a single-rate discrete-time system.
Take
\begin{equation*}
G(s) = \left[ \begin{array}{cc} G_{11}(s) & G_{12}(s)\\ G_{21}(s) & G_{22}(s) \end{array}\right],
\end{equation*}
where $G_{11}$, $G_{12}$ and $G_{22}$ are strictly proper and $G_{21}$ is proper.
Let their state-space realization be
\begin{equation*}
G_{ij}(s) = \sspc{A}{B_{j}}{C_{i}}{D_{ij}}(s),\quad i, j=1,2.
\end{equation*}

Let the system from $w_d$ to $z_d$ in Figure \ref{fig:SDC_fsfh} be $T_{dN}$.
Then $T_{dN}$ can be rewritten as follows:
\begin{equation*}
\begin{split}
T_{dN}&=\lft(G_{dN},\mathcal{K}),\\
T_{dN}&:=
	\left[\begin{array}{cc}\samp{h/N} &0 \\ 0 &I\end{array}\right]
	\left[ \begin{array}{cc} G_{11} & G_{12}\\ G_{21} & G_{22} \end{array}\right]
	\left[\begin{array}{cc}\hold{h/N} &0 \\ 0 &I\end{array}\right].
\end{split}
\end{equation*}
Then we apply the discrete-time lifting to $w_d$ and $z_d$ as shown in Figure \ref{fig:SDC_fsfh_lifted}.
\begin{figure}[t]
\begin{center}
\unitlength 0.1in
\begin{picture}(50.00,19.50)(0.00,-21.00)
%
\special{pn 8}%
\special{pa 1100 200}%
\special{pa 1700 200}%
\special{pa 1700 600}%
\special{pa 1100 600}%
\special{pa 1100 200}%
\special{fp}%
%
\special{pn 8}%
\special{pa 2000 200}%
\special{pa 3000 200}%
\special{pa 3000 1000}%
\special{pa 2000 1000}%
\special{pa 2000 200}%
\special{fp}%
%
\special{pn 8}%
\special{pa 3300 200}%
\special{pa 3900 200}%
\special{pa 3900 600}%
\special{pa 3300 600}%
\special{pa 3300 200}%
\special{fp}%
%
\special{pn 8}%
\special{pa 2800 1700}%
\special{pa 2200 1700}%
\special{pa 2200 2100}%
\special{pa 2800 2100}%
\special{pa 2800 1700}%
\special{fp}%
%
\special{pn 8}%
\special{pa 1900 1200}%
\special{pa 1300 1200}%
\special{pa 1300 1600}%
\special{pa 1900 1600}%
\special{pa 1900 1200}%
\special{fp}%
\put(16.0000,-14.0000){\makebox(0,0){$\samp{h/M_1}$}}%
\put(34.0000,-14.0000){\makebox(0,0){$\hold{h/M_2}$}}%
\put(25.0000,-19.0000){\makebox(0,0){$\mathcal{K}$}}%
\put(25.0000,-6.0000){\makebox(0,0){$G(s)$}}%
\put(36.0000,-4.0000){\makebox(0,0){$\hold{h/N}$}}%
\put(14.0000,-4.0000){\makebox(0,0){$\samp{h/N}$}}%
\put(18.0000,-3.2000){\makebox(0,0)[lb]{$z$}}%
\put(31.0000,-3.2000){\makebox(0,0)[lb]{$w$}}%
\put(31.0000,-7.2000){\makebox(0,0)[lb]{$u$}}%
\put(18.0000,-7.2000){\makebox(0,0)[lb]{$y$}}%
\put(40.0000,-3.2000){\makebox(0,0)[lb]{$w_d$}}%
\put(8.0000,-3.2000){\makebox(0,0)[lb]{$z_d$}}%
\put(17.0000,-18.2000){\makebox(0,0)[lb]{$y_d$}}%
\put(31.0000,-18.2000){\makebox(0,0)[lb]{$u_d$}}%
%
\special{pn 8}%
\special{pa 800 600}%
\special{pa 400 600}%
\special{pa 400 200}%
\special{pa 800 200}%
\special{pa 800 600}%
\special{fp}%
%
\special{pn 8}%
\special{pa 4200 200}%
\special{pa 4600 200}%
\special{pa 4600 600}%
\special{pa 4200 600}%
\special{pa 4200 200}%
\special{fp}%
\put(6.0000,-4.0000){\makebox(0,0){$\dlift{N}$}}%
\put(44.0000,-4.0000){\makebox(0,0){$\idlift{N}$}}%
\put(1.2000,-3.2000){\makebox(0,0)[lb]{$\widetilde{z}_d$}}%
\put(47.2000,-3.2000){\makebox(0,0)[lb]{$\widetilde{w}_d$}}%
%
\special{pn 8}%
\special{pa 5000 400}%
\special{pa 4600 400}%
\special{fp}%
\special{sh 1}%
\special{pa 4600 400}%
\special{pa 4667 420}%
\special{pa 4653 400}%
\special{pa 4667 380}%
\special{pa 4600 400}%
\special{fp}%
%
\special{pn 8}%
\special{pa 4200 400}%
\special{pa 3900 400}%
\special{fp}%
\special{sh 1}%
\special{pa 3900 400}%
\special{pa 3967 420}%
\special{pa 3953 400}%
\special{pa 3967 380}%
\special{pa 3900 400}%
\special{fp}%
%
\special{pn 8}%
\special{pa 3300 400}%
\special{pa 3000 400}%
\special{fp}%
\special{sh 1}%
\special{pa 3000 400}%
\special{pa 3067 420}%
\special{pa 3053 400}%
\special{pa 3067 380}%
\special{pa 3000 400}%
\special{fp}%
%
\special{pn 8}%
\special{pa 2000 400}%
\special{pa 1700 400}%
\special{fp}%
\special{sh 1}%
\special{pa 1700 400}%
\special{pa 1767 420}%
\special{pa 1753 400}%
\special{pa 1767 380}%
\special{pa 1700 400}%
\special{fp}%
%
\special{pn 8}%
\special{pa 1100 400}%
\special{pa 800 400}%
\special{fp}%
\special{sh 1}%
\special{pa 800 400}%
\special{pa 867 420}%
\special{pa 853 400}%
\special{pa 867 380}%
\special{pa 800 400}%
\special{fp}%
%
\special{pn 8}%
\special{pa 400 400}%
\special{pa 0 400}%
\special{fp}%
\special{sh 1}%
\special{pa 0 400}%
\special{pa 67 420}%
\special{pa 53 400}%
\special{pa 67 380}%
\special{pa 0 400}%
\special{fp}%
%
\special{pn 8}%
\special{pa 2000 800}%
\special{pa 1600 800}%
\special{fp}%
%
\special{pn 8}%
\special{pa 1600 800}%
\special{pa 1600 1200}%
\special{fp}%
\special{sh 1}%
\special{pa 1600 1200}%
\special{pa 1620 1133}%
\special{pa 1600 1147}%
\special{pa 1580 1133}%
\special{pa 1600 1200}%
\special{fp}%
%
\special{pn 8}%
\special{pa 1600 1600}%
\special{pa 1600 1900}%
\special{fp}%
%
\special{pn 8}%
\special{pa 1600 1900}%
\special{pa 2200 1900}%
\special{fp}%
\special{sh 1}%
\special{pa 2200 1900}%
\special{pa 2133 1880}%
\special{pa 2147 1900}%
\special{pa 2133 1920}%
\special{pa 2200 1900}%
\special{fp}%
%
\special{pn 8}%
\special{pa 2800 1900}%
\special{pa 3400 1900}%
\special{fp}%
%
\special{pn 8}%
\special{pa 3400 1900}%
\special{pa 3400 1600}%
\special{fp}%
\special{sh 1}%
\special{pa 3400 1600}%
\special{pa 3380 1667}%
\special{pa 3400 1653}%
\special{pa 3420 1667}%
\special{pa 3400 1600}%
\special{fp}%
%
\special{pn 8}%
\special{pa 3100 1600}%
\special{pa 3700 1600}%
\special{pa 3700 1200}%
\special{pa 3100 1200}%
\special{pa 3100 1600}%
\special{fp}%
%
\special{pn 8}%
\special{pa 3400 1200}%
\special{pa 3400 800}%
\special{fp}%
%
\special{pn 8}%
\special{pa 3400 800}%
\special{pa 3000 800}%
\special{fp}%
\special{sh 1}%
\special{pa 3000 800}%
\special{pa 3067 820}%
\special{pa 3053 800}%
\special{pa 3067 780}%
\special{pa 3000 800}%
\special{fp}%
\end{picture}%
\end{center}
\caption{Lifted and fast-discretized sampled-data system}
\label{fig:SDC_fsfh_lifted}
\end{figure}
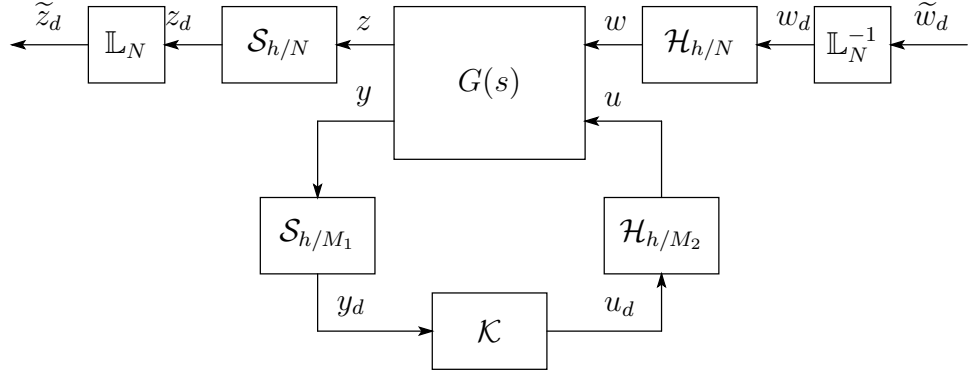
Let the system from $\widetilde{w}_d$ to $\widetilde{z}_d$ in Figure \ref{fig:SDC_fsfh_lifted}
be $\widetilde{T}_{dN}$. 
Note that since $\dlift{N}$ and $\idlift{N}$ are norm-preserving mappings,
we have $\|T_{dN}\|=\|\widetilde{T}_{dN}\|$.
Then we rewrite $\widetilde{T}_{dN}$ as follows:
\begin{equation*}
\begin{split}
\widetilde{T}_{dN} &= \lft(\widehat{G}_{dN}, \mathcal{K}),\\
\widehat{G}_{dN} &:=
	\left[\begin{array}{cc}\dlift{N} &0 \\ 0 &I\end{array}\right]
	\left[\begin{array}{cc}\samp{h/N} &0 \\ 0 &I\end{array}\right]
	\left[ \begin{array}{cc} G_{11} & G_{12}\\ G_{21} & G_{22} \end{array}\right]
	\left[\begin{array}{cc}\hold{h/N} &0 \\ 0 &I\end{array}\right]
	\left[\begin{array}{cc}\idlift{N} &0 \\ 0 &I\end{array}\right].
\end{split}
\end{equation*}

Let us turn to the controller $\mathcal{K}$.
By the assumption that $\mathcal{K}$ is $(M_1, M_2)$-periodic,
$K_d:=\dlift{M_2}K\idlift{M_1}$ is time-invariant \cite{Mey90}.
By using this property, we rearrange $\widetilde{G}_{dN}$ as follows:
\begin{equation*}
\begin{split}
\widetilde{T}_{dN} &= \lft(\widetilde{G}_{dN}, K_d),\\
\widetilde{G}_{dN} &:=
	\left[\begin{array}{cc}I &0 \\ 0 &\dlift{M_1}\samp{h/M_1}\end{array}\right]
	\widehat{G}_{dN}
	\left[\begin{array}{cc}I &0 \\ 0 &\hold{h/M_2}\idlift{M_2}\end{array}\right]\\
&=
	\left[\begin{array}{cc}\dlift{N}\samp{h/N} &0 \\ 0 &\dlift{M_1}\samp{h/M1}\end{array}\right]
	\left[ \begin{array}{cc} G_{11} & G_{12}\\ G_{21} & G_{22} \end{array}\right]
	\left[\begin{array}{cc}\hold{h/N}\idlift{N} &0 \\ 0 &\hold{M_2}\idlift{M_2}\end{array}\right].
\end{split}
\end{equation*}

Finally, we convert $\widetilde{G}_{dN}$ to a simple time-invariant system
by the following proposition.
\begin{prop}
\label{prop:dlift}
Assume $N=kM_1M_2$, $k\in\Nset$. Then we have the following identities:
\begin{equation}
\dlift{M_1}\samp{h/M_1} = S\dlift{N}\samp{h/N},\quad
\hold{h/M_2}\idlift{M_2} = \hold{h/N}\idlift{N}H,
\end{equation}
where
\begin{equation}
\begin{split}
S &:= \left.\left[\begin{array}{ccc}p& & \\ &\ddots & \\ & &p
	\end{array}\right]\right\}M_1,\quad 
p := [1,\underbrace{0,\ldots,0}_{kM_2-1}],\\
H &:= \underbrace{\left[\begin{array}{ccc}q& & \\ &\ddots & \\ & &q
	\end{array}\right]}_{M_2},\quad
q := [\underbrace{1,\ldots,1}_{kM_1}]^T.
\end{split}
\end{equation}
\end{prop}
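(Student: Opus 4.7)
The plan is to verify both identities by tracking an arbitrary input through each operator and checking the two sides agree entry-by-entry (or value-by-value in continuous time). The crucial arithmetic fact underlying everything is that $N=kM_1M_2$ gives $h/M_1 = kM_2\cdot h/N$ and $h/M_2 = kM_1\cdot h/N$, so the fast grid at spacing $h/N$ is a common refinement of the $h/M_1$ and $h/M_2$ grids.

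For the first identity, I would take an arbitrary $u\in L^2[0,\infty)$ and compute the $k$-th block on each side. On the left, $\samp{h/M_1}u$ is the sequence $\{u(nh/M_1)\}_{n\ge0}$, and $\dlift{M_1}$ groups these into blocks of length $M_1$, so the $k$-th block has $j$-th entry $u(kh+jh/M_1)$ for $j=0,\ldots,M_1-1$. On the right, $\dlift{N}\samp{h/N}u$ produces, in block $k$, the vector with $i$-th entry $u(kh+ih/N)$ for $i=0,\ldots,N-1$. Now $S$ is the $M_1\times N$ matrix consisting of $M_1$ copies of the row $p=[1,0,\ldots,0]$ of length $kM_2$, placed block-diagonally. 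Thus $S$ selects entries at positions $0,kM_2,2kM_2,\ldots,(M_1-1)kM_2$. Since $jkM_2\cdot h/N = jh/M_1$, the extracted entries are exactly $u(kh+jh/M_1)$, matching the left side.

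For the second identity, I would take an arbitrary sequence $\{v[k]\}$ with $v[k]\in\Real^{M_2}$ and compute the continuous-time output at arbitrary time $t=kh+ih/M_2+\theta$, with $\theta\in[0,h/M_2)$. On the left, $\idlift{M_2}v$ is the 1-D sequence whose $(kM_2+i)$-th entry is $v_i[k]$, and $\hold{h/M_2}$ then outputs $v_i[k]$ on $[(kM_2+i)h/M_2,(kM_2+i+1)h/M_2)=[kh+ih/M_2,kh+(i+1)h/M_2)$, so the value at $t$ is $v_i[k]$. On the right, $H$ is $N\times M_2$ and acts on $v[k]$ by repeating each component $kM_1$ times, so $Hv[k]$ has $v_i[k]$ in positions $ikM_1,\ldots,(i+1)kM_1-1$. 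Applying $\idlift{N}$ and then $\hold{h/N}$ puts $v_i[k]$ on the time interval of length $kM_1\cdot h/N=h/M_2$ starting at $kh+ikM_1\cdot h/N = kh+ih/M_2$, which again gives $v_i[k]$ at $t$. The two outputs coincide, proving the identity.

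The main obstacle is purely bookkeeping: one must juggle three different time grids and two different lifting conventions (the discrete lifting's block indexing versus the continuous hold's interval indexing) without sign or off-by-one errors. The conceptual content is minimal once the identifications $S=$ ``pick every $kM_2$-th sample'' and $H=$ ``repeat each value $kM_1$ times'' are made explicit; the rest is verifying that these two operations exactly implement the change of sampling rate from $h/N$ back to $h/M_1$ and $h/M_2$ respectively.
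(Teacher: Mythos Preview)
Your argument is correct; the paper itself states this proposition without proof, so there is nothing to compare against --- your entry-by-entry verification is exactly the natural way to establish it. One small point: you use $k$ both for the factor in $N=kM_1M_2$ and as the block index of the lifted signal, which is a notational collision worth cleaning up (use, say, $m$ for the block index).
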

This proposition gives us the following equation:
\begin{equation*}
\begin{split}
\widetilde{G}_{dN} &= 
	\left[\begin{array}{cc}\dlift{N}\samp{h/N} &0 \\ 0 &S\dlift{N}\samp{h/N}\end{array}\right]
	\left[ \begin{array}{cc} G_{11} & G_{12}\\ G_{21} & G_{22} \end{array}\right]
	\left[\begin{array}{cc}\hold{h/N}\idlift{N} &0 \\ 0 &\hold{h/N}\idlift{N}H\end{array}\right]\\
&=
	\left[\begin{array}{cc}I&0 \\ 0 &S\end{array}\right]
	\left[ \begin{array}{cc} \dliftsys{N}{G_{11}} & \dliftsys{N}{G_{12}}\\ \dliftsys{N}{G_{21}} & \dliftsys{N}{G_{22}} \end{array}\right]
	\left[\begin{array}{cc}I &0 \\ 0 &H\end{array}\right],
\end{split}
\end{equation*}
where $\dliftsys{N}{G}:=\dlift{N}\samp{h/N}G\hold{h/N}\idlift{N}$.
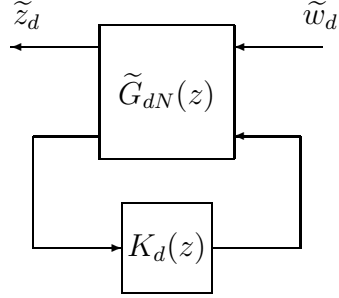
\begin{figure}[t]
    \begin{center}
\begin{picture}( 143, 126)
\put(  42.2,  59.0){\framebox(  50.6,  50.6){$\widetilde{G}_{dN}(z)$}}
\put(  42.2, 101.2){\vector(-1, 0){  33.7}}
\put( 126.5, 101.2){\vector(-1, 0){  33.7}}
\put(  42.2,  67.5){\line(-1, 0){  25.3}}
\put(  16.9,  67.5){\line( 0,-1){  42.2}}
\put( 118.0,  67.5){\vector(-1, 0){  25.3}}
\put(  84.3,  25.3){\line( 1, 0){  33.7}}
\put( 118.0,  25.3){\line( 0, 1){  42.2}}
\put(  16.9,  25.3){\vector( 1, 0){  33.7}}
\put(  50.6,   8.4){\framebox(  33.7,  33.7){$K_d(z)$}}
\put(   9.4, 110.1){$\widetilde{z}_d$}
\put( 119.0, 110.1){$\widetilde{w}_d$}
\end{picture}
    \end{center}
    \caption{FSFH discretized system}
    \label{fig:SDC_fsfh_d}
\end{figure}
Note that $\dliftsys{N}{G_{ij}}=:G_{d,ij}$ ($i,j=1,2$) is a time-invariant discrete-time system \cite{Mey90,CheFra}.
In fact, a state-space realization of $G_{d,ij}$ is given as follows:
\begin{equation*}
G_{d,ij}=
	\left[
	\begin{array}{c|cccc}
		A_d^N & A_d^{N-1}B_{d,j} & A_d^{N-2}B_{d,j} &\ldots &B_{d,j}\\\hline
		C_i & D_{ij} & 0 & \ldots & 0\\
		C_iA_d & C_iB_{d,j} & D_{ij} & \ldots & 0\\
		\vdots & \vdots & \vdots & &\vdots\\
		C_iA_d^{N-1} & C_iA_d^{N-2}B_{d,j} & C_iA_d^{N-3}B_{d,j} &\ldots & D_{ij}
	\end{array}
	\right], \quad i,j=1,2,
\end{equation*}
where
\begin{equation*}
A_d := e^{Ah/N},\quad B_{d,j}:= \int_0^{h/N}e^{At}B_jdt.
\end{equation*}
We show the obtained time-invariant discrete-time system in Figure \ref{fig:SDC_fsfh_d}.

When $N$ becomes larger,
the performance of the discrete-time system $T_{dN}:=\lft(\widetilde{G}_{dN},K_d)$ converges to that of
the original sampled-data system $\T:=\lft(G,\hold{h/M_2}\mathcal{K}\samp{h/M_1})$ \cite{YamMadAnd99}.
Therefore, if we take sufficiently large $N$, the error between $\T$ and $T_{dN}$ will small.
The error estimate for the fast-sampling factor $N$ is discussed in \cite{YamAndNag02}.

We summarize the above discussion as a theorem:
\begin{theorem}
For the multirate sampled-data control system $\T:=\lft(G,\hold{h/M_2}\mathcal{K}\samp{h/M_1})$,
there exists a time-invariant discrete-time system $T_{dN}:=\lft(\widetilde{G}_{dN},K_d)$ such that
\begin{equation*}
\lim_{N\rightarrow\infty}\|T_{dN}\|=\|\T\|.
\end{equation*}
\end{theorem}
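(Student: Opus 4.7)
The plan is to observe that the preceding discussion has already constructed $T_{dN}$ explicitly: the operator $\widetilde{G}_{dN}$ is built from $G$ by bracketing each block $G_{ij}$ with $\dlift{N}\samp{h/N}$ on the output side and $\hold{h/N}\idlift{N}$ on the input side, then dressing the controller channel with the matrices $S$ and $H$ of Proposition \ref{prop:dlift}, which is legitimate because $N=kM_1M_2$. Thus existence is constructive and automatic; the substantive content of the theorem is the norm convergence $\|T_{dN}\|\to\|\T\|$.

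To prove the convergence, I would compare the two closed loops as $L^2$-induced operators on $L^2[0,\infty)$, using the fact that $\dlift{N}$ and $\idlift{N}$ are norm preserving to identify $\|T_{dN}\|$ with $\|\hold{h/N}\idlift{N}T_{dN}\dlift{N}\samp{h/N}\|$. The difference between $\T$ and this embedding is driven entirely by the deviation of the fast-sample/fast-hold composition $\hold{h/N}\samp{h/N}$ from the identity, applied to signals produced by $G$ and closed back through $K_d$. The key step is to show that $\hold{h/N}\samp{h/N}\to I$ strongly on the range of the strictly proper blocks $G_{11}$, $G_{12}$, $G_{22}$: since those blocks map $L^2$ inputs into absolutely continuous functions whose derivatives are controlled by the finite-dimensional state dynamics of $G$, the pointwise-in-$t$ error of zero-order interpolation is of order $h/N$ times a uniform state bound.

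Having controlled the plant perturbation, one propagates it through the feedback by the standard argument that, for a fixed stabilizing $K_d$ and a small gap in the open-loop operator, the closed-loop operator $\lft(\cdot,K_d)$ varies continuously in operator norm. The final step is to pass from operator convergence to $H^\infty$-norm convergence: working with the lifted frequency-response operator $\widetilde{\T}(e^{j\omega h})$ on $\omega\in[0,2\pi/h)$, one obtains pointwise convergence $\|\widetilde{T}_{dN}(e^{j\omega h})\|\to\|\widetilde{\T}(e^{j\omega h})\|$, and then uses the equicontinuity in $\omega$ coming from the finite-dimensional realization of $G$ and the compactness of the Nyquist interval to upgrade this to uniform convergence, hence convergence of the suprema. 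Applying $\|\widetilde{\T}\|_\infty=\|\T\|$ and the same identity for $T_{dN}$ then yields the claim.

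The main obstacle is the non-strictly-proper block $G_{21}$, since its direct feedthrough means the sampled output $y_d$ depends instantaneously on the input $w$, so one cannot hope for pointwise convergence of that channel under fast sampling alone. The standard remedy, as used in \cite{YamMadAnd99} and quantified with explicit error bounds in \cite{YamAndNag02}, is to exploit that this direct feedthrough is itself captured exactly by the fast-sample/fast-hold discretization of $D_{21}$ through $H$, so the contribution of $G_{21}$ to the closed-loop error vanishes and only the smoothing blocks contribute the $O(1/N)$ residual, which suffices.
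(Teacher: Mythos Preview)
The paper does not give a self-contained proof of this theorem. The preceding text in Section~\ref{sec:SDC_fsfh} constructs $\widetilde{G}_{dN}$ and $T_{dN}=\lft(\widetilde{G}_{dN},K_d)$ explicitly, exactly along the lines of your first paragraph, and then for the convergence $\|T_{dN}\|\to\|\T\|$ simply appeals to \cite{YamMadAnd99} (with error bounds deferred to \cite{YamAndNag02}). Your recognition that existence is constructive and that only the norm convergence carries content matches the paper precisely; your sketch of the convergence argument already goes beyond what the paper itself provides.

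That said, there is a technical slip in your second paragraph. You write that norm preservation of $\dlift{N}$ and $\idlift{N}$ lets you identify $\|T_{dN}\|$ with $\|\hold{h/N}\idlift{N}T_{dN}\dlift{N}\samp{h/N}\|$. Norm preservation gives $\|T_{dN}\|=\|\idlift{N}T_{dN}\dlift{N}\|$, but the further wrapping by $\hold{h/N}$ and $\samp{h/N}$ does not preserve norm: $\hold{h/N}$ scales $l^2$ into $L^2$ by $\sqrt{h/N}$, and $\samp{h/N}$ is not even bounded on $L^2[0,\infty)$. The comparison between $T_{dN}$ and $\T$ therefore cannot be set up as an $L^2[0,\infty)$ operator identity involving the raw sampler. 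The standard route, which is essentially what \cite{YamMadAnd99} carries out, works instead at the level of the lifted frequency-response operators on $L^2[0,h)$: one shows that restricting inputs and outputs to the $N$-dimensional subspace of piecewise-constant functions yields a finite-rank approximation whose norm converges to $\|\widetilde{\T}(e^{j\omega h})\|$ uniformly in $\omega$. Your ingredients---smoothness of the range of the strictly proper blocks, continuity of $\lft(\cdot,K_d)$ in the plant, equicontinuity over the compact frequency interval, and the observation that the $D_{21}$ feedthrough is reproduced exactly under discretization---are the right ones and are essentially those of the cited reference; only the framing of the embedding step needs to be corrected.
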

%
%
\chapter{Multirate Signal Processing}
\label{ch:multirate}
\section{Introduction}
Multirate techniques are now very popular in digital signal
processing.  They are particularly effective in subband coding,
and various techniques for economical information saving have
been developed \cite{Fli,Vai,Zel}.  

One example is  signal decoding in audio/speech processing.  
For example, in the commercial CD format, the
sampling frequency is 44.1 kHz, but one hardly employs the same
sampling period in decoding.  A popular technique is 
{\em interpolation}.
The process is as follows:
first {\em upsample} the encoded digital signal
(i.e., inserting zeros between two consecutive samples), 
remove the
parasitic {\em imaging} spectra via a digital low-pass filter, and
then convert it back to an analog signal with a hold device and
an analog low-pass filter.  
Imaging is a phenomenon due to zeros inserted by upsampling,
and yields high frequency noise.

The chief advantage here is that
one can employ a fast hold device, and do not have to use a very sharp
analog filter (thereby avoiding much phase distortion induced by
a sharp analog filter).  

Another example is signal compression.
Due to the limitation of the bandwidth of communication channels
or of the size of storage devices,
one often needs to compress the signals.
In signal compression, multirate processing plays a major role.
The fundamental operation for signal compression is {\em decimation}.
Decimation is to reduce the sampling rate of a signal;
the process is to first filter out the {\em aliasing} components by using a digital 
low-pass filter, and then {\em downsample} the filtered signal.
Downsampling is an operation to keep every $M$-th sample ($M$ is a natural number) 
and remove in-between samples.
The aliasing caused by upsampling 
is comparative to aliasing in A/D conversion, that is,
removing samples causes frequency overlapping.


By combining interpolators and decimators, we can obtain a sampling rate converter.
In commercial applications, there are
many different sampling rates employed:  for example 
48kHz for DAT and 44.1kHz for audio CD.  
The conversion from one sampling rate to the other
becomes necessary.  In such a process, it is clearly
required that the information loss be as little as possible.

The conventional way of doing this is as follows:  Suppose
we want to convert a signal $v$ with sampling frequency $f_1$ Hz
to another signal $u$ with sampling frequency $f_2$ Hz.  
Suppose also that there exist (coprime) integers $L_1$ and $L_2$ such that 
$f_{1}L_1 = f_{2}L_2$.  We first upsample $v$ by factor $L_1$,
to make the sampling frequency $f_{1}L_1$.  Suppose that the original
signal is perfectly band-limited in the range  
$|\omega| < f_1/2$.  We then introduce a digital filter 
$H(z)$ to filter out the undesirable imaging component.
After this, the obtained signal is downsampled by factor $L_2$ to
become a signal with sampling frequency $f_{2}=f_1L_1/L_2$ Hz.

In the existing literature, it is a commonly accepted principle
that one inserts a very sharp digital low-pass filter after
the upsampler or before the downsampler to eliminate the
effect of imaging or aliasing components \cite{Fli,Vai,Zel}.
This is based on the following
reasoning: Suppose that the original signal is fully band-limited.
Then the imaging (aliasing) components induced by upsampling (downsampler) 
is not relevant to the original analog signal 
and hence they must be removed by a low-pass filter.  
If the original signal is band-limited, the closer is this filter to an ideal filter, the better.  

In practice, however, no signals are fully band-limited in
a practical range of a passband, and they obey only an approximate
frequency characteristic.  The argument above is thus valid only
in an approximate sense.  One may rephrase this as a problem of
robustness:  namely, when the original signals are not fully
band-limited but obey only a certain frequency characteristic,
how close should the digital filter be to the ideal low-pass filter?

This type of question has been seldom addressed in the signal processing
literature until very recently.  However, this can be properly
placed in the framework of sampled-data control theory, and there are now
several investigations that apply the sampled-data control methodology
to digital signal processing.


We formulate a multirate digital signal reconstruction problem
under the assumption that the original analog signal is subject to a certain frequency
characteristic, but not fully band-limited.
Under the assumption we will optimize the analog performance with
an $H^\infty$ optimality criterion.

This may also be regarded as an optimal D/A converter design.
We will show that performance improvement is possible 
over  conventional low-pass filters.  It is also seen that
the presented method can be used as a new design method for a
low-pass filter.

In this chapter, we first introduce the fundamentals of multirate digital signal processing.
The conventional idea of its design is also discussed and we point out that 
there are some drawbacks in its idea.
We then  present an alternative method of designing multirate systems,
that is, sampled-data $H^\infty$ design.
The last section provides design examples and we show
the advantages of the present method.

\section{Interpolators and decimators}
In this section, we introduce mathematical definitions of interpolators and decimators.
An interpolator or a decimator is implemented with upsamplers $\us{M}$ and downsamplers $\ds{M}$
respectively.
We begin by defining the upsampler and the downsampler.
\begin{deff}
For discrete-time signal $\{x[k]\}_{k=0}^\infty$, define the upsampler $\us{M}$ and 
the downsampler $\ds{M}$ by
    \begin{equation*}
	\begin{split}
	    \us{M} &: \{x[k]\}_{k=0}^\infty 
	    \longmapsto \{x[0],\underbrace{0,0,\ldots,0}_{\text{$M\!-\!1$}},x[1],0,\ldots \},\\
	    \ds{M} &: \{x[k]\}_{k=0}^\infty \longmapsto \{x[0],x[M],x[2M],\ldots\}.
	\end{split}
    \end{equation*}
\end{deff}

The upsampling operation is implemented by inserting $M-1$ equidistant zero-valued
samples between two consecutive samples
of $x[k]$ before the sampling rate is multiplied by the factor $M$.
Figure \ref{fig:multirate_intro_us} indicates the upsampling operation.
\begin{figure}[t]
   \begin{center}
      \includegraphics[width=6cm]{multirate_intro_us.eps}
   \end{center}
   \caption{Upsampling operation $y=(\us{2})x$}
   \label{fig:multirate_intro_us}
\end{figure}

On the other hand, the downsampling operation is implemented by keeping every $M$-th sample of $x[k]$ and
removing in-between samples to generate $y[k]$, then the sampling rate 
becomes multiplied by $1/M$. This procedure is illustrated in Figure \ref{fig:multirate_intro_ds}.
\begin{figure}[t]
   \begin{center}
      \includegraphics[width=6cm]{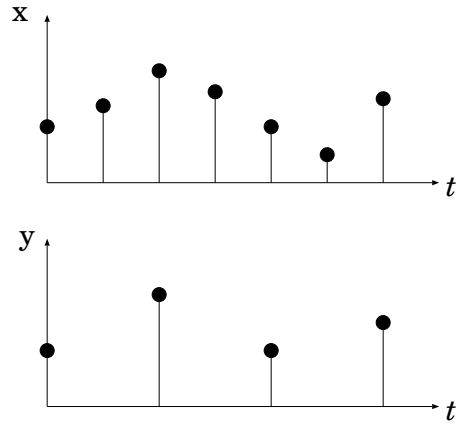}
   \end{center}
   \caption{Downsampling operation $y=(\ds{2})x$}
   \label{fig:multirate_intro_ds}
\end{figure}

Note that the upsampler is left-invertible (and the downsampler is right-invertible), 
that is, $(\ds{M})(\us{M})=I$,
however $(\us{M})(\ds{M})$ is not the identity. In fact, by $(\us{M})(\ds{M})$ the discrete-time signal
$x=\{x[0],x[1],\ldots\}$ is converted into
\begin{equation*}
(\us{M})(\ds{M})x = \{x[0], \underbrace{0,\ldots,0}_{M-1}, x[M], 0,\ldots\}\neq x.
\end{equation*}
To put it differently, downsampling is a lossy data compression, that is, 
the original signal cannot be perfectly reconstructed from the downsampled signal.
On the other hand, we have the duality relation:
$(\us{M})^* = (\ds{M}),\quad (\ds{M})^* = (\us{M})$,
that is, for any signal $x\in l^2$ and $y\in l^2$, we have
$\langle (\us{M})x, y \rangle = \langle x, (\ds{M})y \rangle$,
$\langle (\ds{M})x, y \rangle = \langle x, (\us{M})y \rangle$,
where $\langle \cdot, \cdot \rangle$ denotes the inner product on $l^2$, that is, 
$\langle x, y \rangle:=\sum_{k=0}^{\infty}y[k]x[k]$.

In addition, an upsampler and a decimator are
represented with the discrete-time lifting and its inverse (see Section \ref{sec:SDC_fsfh}):
\begin{equation}
\label{eq:dlift_prop2}
(\us{M})=\idlift{M}\left[\begin{array}{cccc}1&0&\ldots&0\end{array}\right]^T,\quad
(\ds{M})=\left[\begin{array}{cccc}1&0&\ldots&0\end{array}\right]\dlift{M},
\end{equation}
and vice versa:
\begin{equation}
\label{eq:dlift_prop1}
	\dlift{M} := (\ds{M})
	\left[
	\begin{array}{cccc}
	    1 & z & \ldots & z^{M-1}
	\end{array}
	\right]^T,\quad
	\idlift{M} :=
 	\left[
	\begin{array}{cccc}
	    1 & z^{-1} & \cdots & z^{-M+1}
	\end{array}
	\right](\us{M}).
\end{equation}
These relations are used to design interpolators or decimators
in Section \ref{sec:design_interp} and \ref{sec:design_decim}.

Having defined the upsampler and the downsampler, 
we can now explain the interpolator and the decimator.

An interpolator consists of two parts: an upsampler and a digital filter.
Figure \ref{fig:multirate_intro_interp} shows the block diagram of an interpolator.
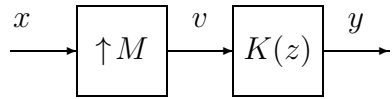
\begin{figure}[t]
    \begin{center}
\begin{picture}( 152,  51)
\put(   0.0,  16.9){\vector( 1, 0){  25.3}}
\put(  25.3,   0.0){\framebox(  33.7,  33.7){$\us{M}$}}
\put(  59.0,  16.9){\vector( 1, 0){  25.3}}
\put(  84.3,   0.0){\framebox(  33.7,  33.7){$K(z)$}}
\put( 118.0,  16.9){\vector( 1, 0){  25.3}}
\put(   1.0,  25.8){$x$}
\put(  68.5,  25.8){$v$}
\put( 127.5,  25.8){$y$}
\end{picture}
    \end{center}
    \caption{Interpolator}
    \label{fig:multirate_intro_interp}
\end{figure}
First, the upsampler inserts zeros and increases the sampling rate.
Then the filter $K(z)$, called {\em interpolation filter},
operates on the $M-1$ zero-valued samples inserted by the upsampler $\us{M}$
to yield nozero values between the original samples, as illustrated in 
Figure \ref{fig:multirate_intro_interp2}.
\begin{figure}[t]
    \begin{center}
	\includegraphics[width=7cm]{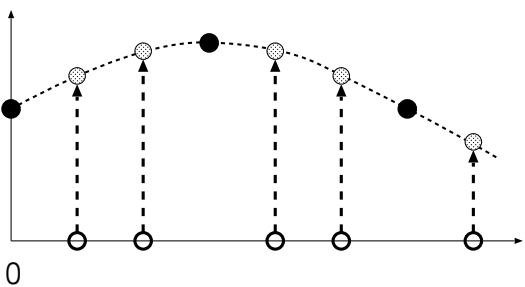}
    \end{center}
    \caption{Signal interpolation by interpolator}
    \label{fig:multirate_intro_interp2}
\end{figure}

Let us now consider the interpolation in the frequency domain.
Assume the sampling period of signal $x$ in Figure \ref{fig:multirate_intro_interp} is 1, that is, 
the Nyquist frequency is $\pi$, and its Fourier transform $X(\omega)$ has characteristic
as shown above in Figure \ref{fig:multirate_intro_interp_freq}.
\begin{figure}[t]
    \begin{center}
	\includegraphics[width=7cm]{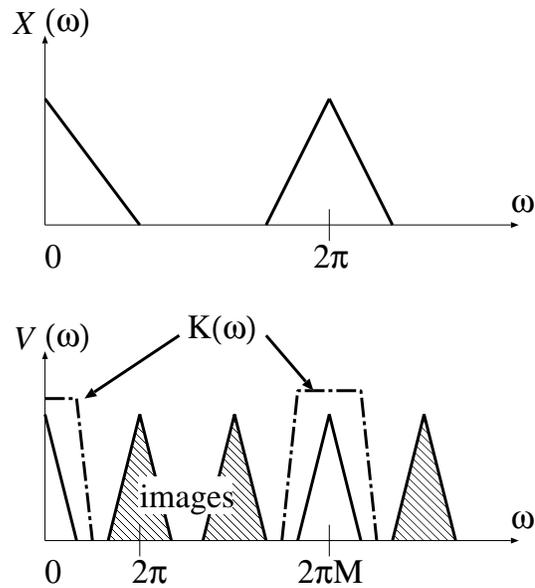}
    \end{center}
    \caption{Imaging components caused by upsampler}
    \label{fig:multirate_intro_interp_freq}
\end{figure}
By upsampling, the Nyquist frequency of the upsampled signal $v$ becomes
$2\pi M$, and there occur unwanted frequency imaging components 
(the shaded portions below in Figure \ref{fig:multirate_intro_interp_freq}).
In order to remove these imaging components, the frequency response $K(\omega)$ of the interpolation filter 
must be of low-pass characteristic with cut-off frequency $\pi$ as shown in Figure \ref{fig:multirate_intro_interp_freq}. 
Therefore, a very sharp low-pass filter close to the ideal one
is often used.

On the other hand, a decimator is  constructed by using a downsampler and a digital filter,
whose block diagram is shown in Figure \ref{fig:multirate_intro_decim}.
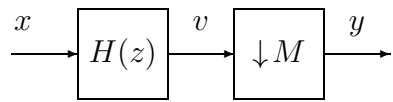
\begin{figure}[t]
    \begin{center}
\begin{picture}( 152,  51)
\put(   0.0,  16.9){\vector( 1, 0){  25.3}}
\put(  25.3,   0.0){\framebox(  33.7,  33.7){$H(z)$}}
\put(  59.0,  16.9){\vector( 1, 0){  25.3}}
\put(  84.3,   0.0){\framebox(  33.7,  33.7){$\ds{M}$}}
\put( 118.0,  16.9){\vector( 1, 0){  25.3}}
\put(   1.0,  25.8){$x$}
\put(  68.5,  25.8){$v$}
\put( 127.5,  25.8){$y$}
\end{picture}
    \end{center}
    \caption{Decimator}
    \label{fig:multirate_intro_decim}
\end{figure}
To see the role of the filter $H(z)$, 
let us look at the frequency domain.
Assume the sampling period of signal $v$ in Figure \ref{fig:multirate_intro_decim} is 1
(the Nyquist Frequency is $\pi$) and its Fourier transform $V(\omega)$ has characteristic
as indicated above in Figure \ref{fig:multirate_intro_decim_freq}.
\begin{figure}[t]
    \begin{center}
	\includegraphics[width=7cm]{multirate_intro_decim_freq.eps}
    \end{center}
    \caption{Aliasing}
    \label{fig:multirate_intro_decim_freq}
\end{figure}
If $V(\omega)$ is not band-limited to $|\omega|\leq \pi/M$,
the spectrum $Y(\omega)$ obtained after downsampling will overlap
as shown below in Figure \ref{fig:multirate_intro_decim_freq}.
This overlapping (the shaded portions below in Figure \ref{fig:multirate_intro_decim_freq})
is called {\em aliasing}.

The filter $H(z)$, called {\em decimation filter} or {\em anti-aliasing filter},
is connected before the downsampler to avoid aliases in advance.
To eliminate the aliasing completely, the filter $H(z)$ must be the ideal
low-pass  filter with cut-off frequency $\pi/M$.
Therefore, similarly to the case of interpolation,
a very sharp low-pass filter close to the ideal one
is often employed.

As mentioned above, the interpolation filter $K(z)$ or the decimation filter $H(z)$ 
ideally has the characteristic shown in Figure \ref{fig:multirate_intro_filter}.
\begin{figure}[t]
    \begin{center}
\begin{picture}( 295, 101)
\put( 178.1,  88.8){$|H(\omega)|$}
\put(   9.4,  88.8){$|K(\omega)|$}
\put( 279.2,  17.4){$\omega$}
\put( 110.6,  17.4){$\omega$}
\put( 224.7,   4.9){$\pi/M$}
\put(  56.0,   4.9){$\pi/M$}
\put( 186.5,  51.1){$1$}
\put(  7.9,  56.5){$M$}
\thicklines
\put( 227.7,  50.6){\line( 0,-1){  33.7}}
\put( 193.9,  50.6){\line( 1, 0){  33.7}}
\put(  59.0,  59.0){\line( 0,-1){  42.2}}
\put(  25.3,  59.0){\line( 1, 0){  33.7}}
\thinlines
\put(  25.3,  16.9){\vector( 0, 1){  67.5}}
\put(  25.3,  16.9){\vector( 1, 0){  84.3}}
\put(  25.3,  16.9){\vector( 0, 1){  67.5}}
\put(  25.3,  16.9){\vector( 1, 0){  84.3}}
\put( 193.9,  16.9){\vector( 0, 1){  67.5}}
\put( 193.9,  16.9){\vector( 1, 0){  84.3}}
\end{picture}
    \end{center}
    \caption{Ideal characteristic of interpolation filter $K(z)$ and decimation filter $H(z)$}
    \label{fig:multirate_intro_filter}
\end{figure}
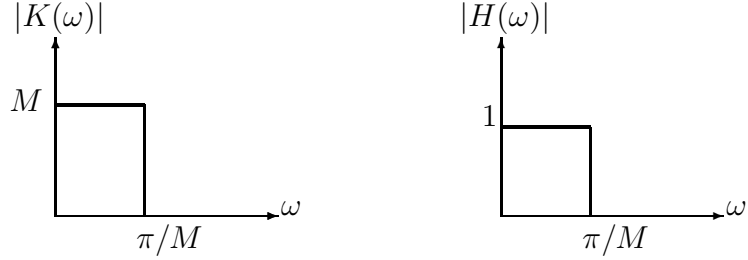
In practice, we cannot realize a filter with such a frequency characteristic.
Therefore the filter is conventionally designed such that the frequency response approximates
that of the ideal filter. 

As far as the discrete-time signals are concerned,
that is, the spectra of the original analog signals are completely band-limited 
by the Nyquist frequency, the design may be correct.
However, the real analog signals have the spectra beyond the Nyquist frequency,
and hence the ideal characteristic in Figure \ref{fig:multirate_intro_filter}
will not be necessarily optimal.
In the following sections,
we propose an alternative method that takes the analog performance,
in particular, the frequency component over the Nyquist frequency, into account.

\section{Design of interpolators}
\label{sec:design_interp}
\subsection{Problem formulation}
We start by formulating a design problem for (sub)optimal interpolators.
Consider the block diagram shown in Figure~\ref{fig:oversampleDAC}.
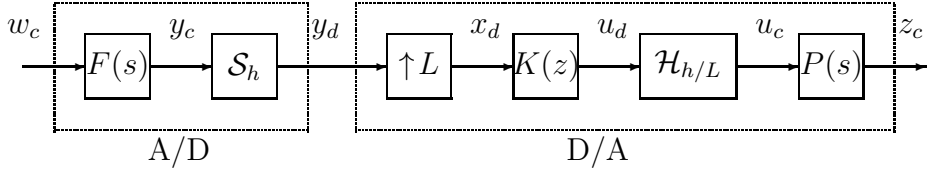
\begin{figure}[t]
\begin{center}
\setlength{\unitlength}{0.5mm}
\begin{picture}( 253,  55)
\put( 105.4,  21.1){\framebox(  16.9,  16.9){$\us{L}$}}
\put( 139.1,  21.1){\framebox(  16.9,  16.9){$K(z)$}}
\put( 172.8,  21.1){\framebox(  25.3,  16.9){$\hold{h/L}$}}
\put( 215.0,  21.1){\framebox(  16.9,  16.9){$P(s)$}}
\put(  59.0,  21.1){\framebox(  16.9,  16.9){$\samp{h}$}}
\put(  25.3,  21.1){\framebox(  16.9,  16.9){$F(s)$}}
\put(  16.9,  12.6){\dashbox{0.5}(  67.5,  33.7){}}
\put(  97.0,  12.6){\dashbox{0.5}( 143.3,  33.7){}}
\put( 231.9,  29.5){\vector( 1, 0){  16.9}}
\put(  42.2,  29.5){\vector( 1, 0){  16.9}}
\put(   8.4,  29.5){\vector( 1, 0){  16.9}}
\put( 122.3,  29.5){\vector( 1, 0){  16.9}}
\put( 156.0,  29.5){\vector( 1, 0){  16.9}}
\put( 198.1,  29.5){\vector( 1, 0){  16.9}}
\put(  75.9,  29.5){\vector( 1, 0){  29.5}}
\put(  41.7,   4.7){A/D}
\put( 153.1,   4.7){D/A}
\put(   4.4,  38.4){$w_c$}
\put(  47.4,  38.4){$y_c$}
\put(  85.3,  38.4){$y_d$}
\put( 127.5,  38.4){$x_d$}
\put( 161.2,  38.4){$u_d$}
\put( 203.4,  38.4){$u_c$}
\put( 241.3,  38.4){$z_c$}
\end{picture}
\end{center}
\caption{Signal reconstruction by interpolator}
\label{fig:oversampleDAC}
\end{figure}
The incoming signal $w_c$ first goes through an
anti-aliasing filter $F(s)$ and the filtered signal
$y_{c}$ becomes nearly (but not entirely) band-limited.  
The filter $F(s)$ governs the frequency-domain characteristic of the
analog signal $y_{c}$.  This signal is then 
sampled by the sampler $\samp{h}$ to become a discrete-time signal $y_d$ 
with sampling period $h$.  

To restore $y_{c}$ we usually let it pass through a digital
filter, a hold device and then an analog filter.  The present
setup however places yet one more step: 
The discrete-time signal $y_{d}$ is first upsampled by $\us{L}$
, and becomes another discrete-time signal $x_{d}$ with
sampling period $h/L$.  The discrete-time signal $x_d$ is then
processed by a digital filter $K(z)$, 
becomes a continuous-time signal $u_{c}$
by going through the zero-order hold $\hold{h/L}$ (that works
in sampling period $h/L$), and then becomes the final signal
by passing through an analog filter $P(s)$.  An advantage here
is that one can use a fast hold device $\hold{h/L}$ thereby making
more precise signal restoration possible.  
The objective here is to design the digital filter
$K(z)$ for given $F(s)$, $L$ and $P(s)$.

Figure~\ref{fig:multirate_design_interp_error} shows the block diagram of the error
system for the design.  The delay in the upper portion of the
diagram corresponds to the fact that we allow a certain amount
of time delay for signal reconstruction.
\begin{figure}[t]
\begin{center}
\setlength{\unitlength}{0.5mm}
\begin{picture}( 240,  55)
\footnotesize
\put( 224.4,  13.1){$-$}
\put( 211.8,  34.2){$+$}
\put( 232.9,  30.0){$e_c$}
\put(   5.2,  30.0){$w_c$}
\put(  43.2,  25.8){$y_c$}
\put( 223.4,  25.3){\vector( 1, 0){  12.6}}
\put( 219.2,   8.4){\vector( 0, 1){  12.6}}
\put( 219.2,  42.2){\vector( 0,-1){  12.6}}
\put( 219.2,  25.3){\circle{ 8}}
\put( 126.5,  42.2){\line( 1, 0){  92.7}}
\put( 210.8,   8.4){\line( 1, 0){   8.4}}
\put( 177.1,   8.4){\vector( 1, 0){  16.9}}
\put( 134.9,   8.4){\vector( 1, 0){  16.9}}
\put( 101.2,   8.4){\vector( 1, 0){  16.9}}
\put(  67.5,   8.4){\vector( 1, 0){  16.9}}
\put(  42.2,   8.4){\vector( 1, 0){   8.4}}
\put(  42.2,  42.2){\vector( 1, 0){  59.0}}
\put(  33.7,  25.3){\line( 1, 0){   8.4}}
\put(  42.2,  42.2){\line( 0,-1){  33.7}}
\put(   0.0,  25.3){\vector( 1, 0){  16.9}}
\put( 101.2,  33.7){\framebox(  25.3,  16.9){$e^{-mhs}$}}
\put(  16.9,  16.9){\framebox(  16.9,  16.9){$F(s)$}}
\put(  50.6,   0.0){\framebox(  16.9,  16.9){$\samp{h}$}}
\put(  84.3,   0.0){\framebox(  16.9,  16.9){$\us{L}$}}
\put( 118.0,   0.0){\framebox(  16.9,  16.9){$K(z)$}}
\put( 151.8,   0.0){\framebox(  25.3,  16.9){$\hold{h/L}$}}
\put( 193.9,   0.0){\framebox(  16.9,  16.9){$P(s)$}}
\end{picture}
\end{center}
\caption{Signal reconstruction error system}
\label{fig:multirate_design_interp_error}
\end{figure}
Let $\T_{I}$ denote the input/output operator
from $w_c$ to $e_c:=z_c(t)-u_c(t-mh)$.  
Our design problem is as follows:
\begin{problem}
\label{prob:Hinf_interp}
Given a stable, strictly proper $F(s)$, stable, proper $P(s)$, upsampling factor $L\in \Nset$,
delay step $m\in \Nset$, sampling period $h>0$ and  
an attenuation level
$\gamma>0$, find a digital filter $K(z)$
such that
\begin{equation}
	\norm{\T_{I}}:= \sup_{\substack{w_c \in L^2[0,\infty)\\w_c\neq 0}}
	    \frac{\twonorm{\T_{I}w_c}_{L^2[0,\infty)}}{\twonorm{w_c}_{L^2[0,\infty)}} < \gamma .
	    \label{eq:Hinf_interp}
\end{equation}
\end{problem}
The norm in (\ref{eq:Hinf_interp}) is an $L^2$ induced norm, 
which is equal to the $H^\infty$-norm of the system $\T_{I}$,
so Problem \ref{prob:Hinf_interp} is the $H^\infty$ optimization problem.

\subsection{Reduction to a finite-dimensional problem}
A difficulty in Problem \ref{prob:Hinf_interp}
is that it involves
a continuous-time delay, and hence it is an infinite-dimensional
problem.  Another difficulty is that it contains
the upsampler $\us{L}$
so that it makes the overall system time-varying (to be precise, periodically time-varying).

However we can reduce the problem to a finite-dimensional single-rate one. 
Let $\{A_F,B_F,C_F,0\}$ be a realization of $F(s)$, that is,
the state equation of $F(s)$:
    \begin{equation*}
	\dot{x}_F = A_Fx_F + B_Fw_c,\quad y_c = C_Fx_F.
    \end{equation*}
\begin{theorem}
For the error system $\T_{I}$, there exist (finite-dimensional) discrete-time systems
$\{T_{I,N}: N=L,2L,\ldots\}$ such that
\begin{equation} \label{eq:theorem_interp}
  \lim_{N\rightarrow\infty}\|T_{I,N}\| = \|\T_{I}\|.
\end{equation}
\end{theorem}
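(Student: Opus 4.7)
The plan is to recast the error system $\T_I$ into the standard multirate sampled-data form of Section~\ref{sec:SDC_fsfh} and then invoke the FSFH convergence theorem stated at the end of that section. First, I take the continuous-time generalized plant $G(s)$ to have blocks
\begin{equation*}
G_{11}(s) = e^{-mhs},\quad G_{12}(s) = -P(s),\quad G_{21}(s) = F(s),\quad G_{22}(s) = 0,
\end{equation*}
so that $\T_I = \lft(G,\hold{h/L}\,\mathcal{K}\,\samp{h})$ with internal controller $\mathcal{K} := K(z)(\us{L})$. This matches the framework of Section~\ref{sec:SDC_fsfh} with $M_1 = 1$ and $M_2 = L$. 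Because $K(z)$ is LTI at the fast rate $h/L$ and $\us{L}$ intertwines a unit shift at the slow rate with an $L$-shift at the fast rate, $\mathcal{K}$ is $(1,L)$-periodic and $K_d := \dlift{L}\mathcal{K}$ is LTI, so the machinery of Chapter~\ref{ch:SDC} applies.

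The second step is to handle the only infinite-dimensional ingredient of $G$, namely the pure delay $e^{-mhs}$ in $G_{11}$. The crucial observation is that $mh$ equals $mN$ fast sampling periods $h/N$ for every $N\in\{L,2L,3L,\ldots\}$. Consequently, after attaching the fast sampler $\samp{h/N}$ and fast hold $\hold{h/N}$ required by the FSFH procedure, the continuous-time delay becomes the finite-dimensional discrete-time operator $z^{-mN}$, realized by a shift register of length $mN$ appended to the state-space realization of the rational part of $G$. The resulting fast-discretized plant is finite-dimensional, and its blocks $\dliftsys{N}{G_{ij}}$ are given by the explicit formula of Section~\ref{sec:SDC_fsfh} applied to this augmented state-space model.

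With the finite-dimensional lifted plant $\widetilde{G}_{dN}$ in hand, I set $T_{I,N} := \lft(\widetilde{G}_{dN}, K_d)$ and invoke the FSFH convergence theorem of Section~\ref{sec:SDC_fsfh} with $N = kM_1M_2 = kL$, which yields $\lim_{N\to\infty}\norm{T_{I,N}} = \norm{\T_I}$ — precisely (\ref{eq:theorem_interp}). The main obstacle is the rigorous treatment of the delay: the FSFH theorem as stated in Chapter~\ref{ch:SDC} assumes a rational $G$, so one must verify that appending the discrete delay chain at the fast rate preserves both the norm-preserving identity $\norm{T_{dN}} = \norm{\widetilde{T}_{dN}}$ and the convergence estimate (citing the error analysis of Yamamoto--Anderson--Nagahara). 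An equivalent and perhaps more transparent route is to apply the continuous-time lifting $\lift$ at period $h$ first — on the lifted side, the delay $e^{-mhs}$ is simply an $m$-step backward shift of $L^2[0,h)$-valued sequences and hence trivially finite-dimensional — and then fast-discretize as in Section~\ref{sec:SDC_fsfh}.
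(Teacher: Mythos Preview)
Your proposal is essentially correct and follows the same overall strategy as the paper: cast $\T_I$ as a multirate sampled-data LFT, apply fast-sampling/fast-hold, and invoke the convergence result of \cite{YamMadAnd99}. The difference is purely presentational. The paper first converts the multirate system to a \emph{single-rate} one by rewriting $K(z)(\us{L})=\idlift{L}\widetilde K(z)$ and absorbing $\idlift{L}$ into the hold to obtain the generalized hold $\ghold{h}=\hold{h/L}\idlift{L}$; only then does it apply FSFH. You instead invoke the general multirate FSFH theorem of Section~\ref{sec:SDC_fsfh} directly with $(M_1,M_2)=(1,L)$. These are equivalent, since the Section~\ref{sec:SDC_fsfh} construction \emph{is} the generalized-hold reduction packaged abstractly; your $K_d=\dlift{L}\mathcal K$ coincides with the paper's $\widetilde K(z)$ via $(\us{L})=\idlift{L}[1,0,\ldots,0]^T$.

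One correction: your $G_{11}(s)$ should be $e^{-mhs}F(s)$, not $e^{-mhs}$. In Figure~\ref{fig:multirate_design_interp_error} the input $w_c$ passes through $F(s)$ \emph{before} the signal splits into the delayed reference path and the sampled path, so the $(1,1)$ block of the generalized plant carries the factor $F(s)$ (cf.\ the paper's plant $\bigl[\begin{smallmatrix}e^{-mhs}F & -P\\ F & 0\end{smallmatrix}\bigr]$ in Figure~\ref{fig:multirate_design_interp_gplant_SD}). Your handling of the delay is fine and matches the paper: after fast discretization at period $h/N$ and lifting, $e^{-mhs}$ becomes the finite-dimensional shift $z^{-m}$ at the slow rate, which is exactly the $z^{-m}$ prefactor appearing in the paper's $T_{I,N}$.
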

\begin{proof}
We first reduce the problem to a single-rate one.
Recall the property (\ref{eq:dlift_prop1}) of the discrete-time lifting $\dlift{L}$ and its inverse
$\idlift{L}$:
\begin{equation*}
    \dlift{L} := (\ds{L})
	\left[\begin{array}{cccc}
	  1 & z & \ldots & z^{L-1}
	\end{array}\right]^T,\quad
    \idlift{L} :=
 	\left[\begin{array}{cccc}
	  1 & z^{-1} & \cdots & z^{-L+1}
	\end{array}\right](\us{L}).
\end{equation*}
Then $K(z)(\us{L})$ can be rewritten as
\begin{equation*}
    K(z)(\us{L})
	    = \idlift{L}\widetilde{K}(z),\quad
    \widetilde{K}(z)
	    := \dlift{L}K(z)\idlift{L}
		  \left[\begin{array}{cccc}
			1&0&\ldots&0
		  \end{array}\right]^T.
\end{equation*}
The lifted system $\widetilde{K}(z)$ is an LTI, single-input/$L$-output system
that satisfies
\begin{displaymath}
    K(z) = \left[\begin{array}{cccc}
		 1&z^{-1}&\cdots &z^{-L+1}
	   \end{array}\right] \widetilde{K}(z^L).
\end{displaymath}
Using the generalized hold $\ghold{h}$ defined by
\begin{equation*}
    \ghold{h} : l^2 \ni
    \vect{v} \mapsto u\in L^2,\quad
    u(kh+\theta) = {\bf H}(\theta) \vect{v}[k],\quad
    \theta \in [0,h),\quad k=0,1,2,\ldots,
\end{equation*}
where $\holdfunc(\cdot)$ is the hold function:
\begin{equation*}
    \holdfunc(\theta)
	:= \begin{cases}
		      \left[\begin{array}{ccccc}1&0&0&\ldots&0\end{array}\right],
		      \theta \in [0,h/L),\\
		      \left[\begin{array}{ccccc}0&1&0&\ldots&0\end{array}\right],
		      \theta \in [h/L,2h/L),\\
		      \qquad\qquad\cdots\\
		      \left[\begin{array}{ccccc}0&0&\ldots&0&1\end{array}\right],
		      \theta \in [(L-1)h/L, h),
	    \end{cases}
\end{equation*}
we obtain the identity
\[
    \hold{h/L}\idlift{L}=\ghold{h}.
\]
This yields
\[
    \hold{h/L}K(z)(\us{L})\samp{h}=\ghold{h}\widetilde{K}(z)\samp{h}.
\]
Hence Figure~\ref{fig:multirate_design_interp_error} is equivalent to 
Figure~\ref{fig:multirate_design_interp_error2}.
\begin{figure}[t]
    \begin{center}
\setlength{\unitlength}{0.5mm}
\begin{picture}( 219,  55)
\put( 190.7,  13.1){$-$}
\put( 126.5,   0.0){\framebox(  16.9,  16.9){$\ghold{h}$}}
\put( 199.1,  30.0){$e_c$}
\put(   4.0,  30.0){$w_c$}
\put( 189.7,  25.3){\vector( 1, 0){  12.6}}
\put( 185.5,   8.4){\vector( 0, 1){  12.6}}
\put(  84.3,   0.0){\framebox(  25.3,  16.9){$\widetilde{K}(z)$}}
\put( 185.5,  25.3){\circle{ 8}}
\put(  50.6,   0.0){\framebox(  16.9,  16.9){$\samp{h}$}}
\put( 177.1,   8.4){\line( 1, 0){   8.4}}
\put( 143.3,   8.4){\vector( 1, 0){  16.9}}
\put(  42.2,  42.2){\line( 0,-1){  33.7}}
\put(  16.9,  16.9){\framebox(  16.9,  16.9){$F(s)$}}
\put(   0.0,  25.3){\vector( 1, 0){  16.9}}
\put(  42.2,   8.4){\vector( 1, 0){   8.4}}
\put(  42.2,  42.2){\vector( 1, 0){  59.0}}
\put(  33.7,  25.3){\line( 1, 0){   8.4}}
\put( 160.2,   0.0){\framebox(  16.9,  16.9){$P(s)$}}
\put( 101.2,  33.7){\framebox(  25.3,  16.9){$e^{-mhs}$}}
\put( 126.5,  42.2){\line( 1, 0){  59.0}}
\put( 185.5,  42.2){\vector( 0,-1){  12.6}}
\put(  67.5,   8.4){\vector( 1, 0){  16.9}}
\put( 109.6,   8.4){\vector( 1, 0){  16.9}}
\end{picture}
    \end{center}
    \caption{Reduced single-rate problem}
    \label{fig:multirate_design_interp_error2}
\end{figure}

We modify the diagram in Figure \ref{fig:multirate_design_interp_error2} into the
diagram in Figure \ref{fig:multirate_design_interp_gplant_SD},
\begin{figure}[t]
    \begin{center}
\setlength{\unitlength}{0.55mm}
\begin{picture}( 207, 126)
\put( 173.0,  68.0){$u_d$}
\put(   2.7,  68.0){$y_d$}
\put(   1.0, 101.7){$e_c$}
\put( 173.8, 101.7){$w_c$}
\put( 177.1,  12.6){\line( 0, 1){  50.6}}
\put( 139.1,  63.2){\vector(-1, 0){   8.4}}
\put(   8.4,  63.2){\line( 0,-1){  50.6}}
\put(  21.1,  63.2){\line(-1, 0){  12.6}}
\put(  54.8,  97.0){\vector(-1, 0){  54.8}}
\put( 181.3,  97.0){\vector(-1, 0){  50.6}}
\put( 177.1,  63.2){\vector(-1, 0){  12.6}}
\put(  54.8,  63.2){\vector(-1, 0){   8.4}}
\put(   8.4,  12.6){\vector( 1, 0){  67.5}}
\put( 109.6,  12.6){\line( 1, 0){  67.5}}
\put(  75.9,   0.0){\framebox(  33.7,  25.3){$\widetilde{K}(z)$}}
\put( 139.1,  50.6){\framebox(  25.3,  25.3){$\ghold{h}$}}
\put(  21.1,  50.6){\framebox(  25.3,  25.3){$\samp{h}$}}
\put(  54.8,  50.6){\framebox(  75.9,  59.0){$\left[\begin{array}{cc}e^{-mhs}F(s),&\!\!-P(s)\\F(s),&\!\!0\end{array}\right]$}}
\end{picture}
    \end{center}
    \caption{Sampled-data system $\T_{I}$}
    \label{fig:multirate_design_interp_gplant_SD}
\end{figure}
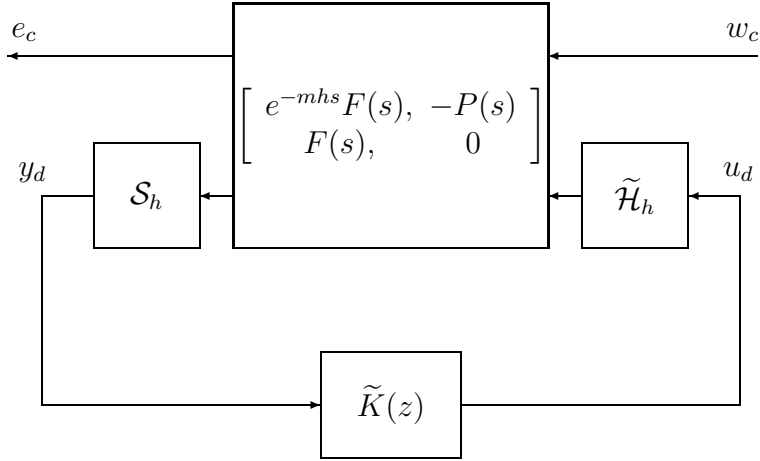
and we introduce the fast-sampling/fast-hold approximation \cite{KelAnd92,YamMadAnd99}
in order to obtain a finite-dimensional discrete-time system approximately.
Figure \ref{fig:multirate_design_interp_fsfh} illustrates the procedure.
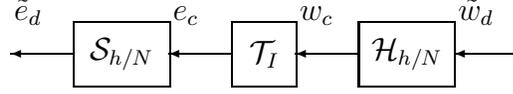
\begin{figure}[t]
    \begin{center}
\setlength{\unitlength}{0.5mm}
\begin{picture}( 143,  25)
\put(  16.9,   0.0){\framebox(  25.3,  16.9){$\samp{h/N}$}}
\put( 134.9,   8.4){\vector(-1, 0){  16.9}}
\put(  92.7,   8.4){\vector(-1, 0){  16.9}}
\put(  59.0,   8.4){\vector(-1, 0){  16.9}}
\put(  16.9,   8.4){\vector(-1, 0){  16.9}}
\put(  92.7,   0.0){\framebox(  25.3,  16.9){$\hold{h/N}$}}
\put(  59.0,   0.0){\framebox(  16.9,  16.9){$\T_{I}$}}
\put(  43.2,  17.4){$e_c$}
\put(   1.0,  17.4){$\tilde{e}_{d}$}
\put(  76.9,  17.4){$w_c$}
\put( 119.0,  17.4){$\tilde{w}_{d}$}
\end{picture}
    \end{center}
    \caption{Fast-sampling/fast-hold discretization}
    \label{fig:multirate_design_interp_fsfh}
\end{figure}
By the fast-sampling/fast-hold approximation,
we obtain the approximated discrete-time system $T_{I,N}$ ($N:=Ll,\; l\in \Nset$),
\begin{equation*}
    T_{I,N}(z) = z^{-m}G_{I11,N}(z)+ G_{I12,N}(z)\widetilde{K}(z)G_{I21,N}(z),
\end{equation*}
where
\begin{gather*}
    G_{I11,N}:=z^{-m}\dliftsys{N}{F},\quad
    G_{I12,N}:=-\dliftsys{N}{P}H,\quad
    G_{I21,N}:=S\dliftsys{N}{F},\\
    S := [1,\underbrace{0,\ldots,0}_{N-1}],\quad
    H := \underbrace{\left[\begin{array}{ccc}q& & \\ &\ddots & \\ & &q
	\end{array}\right]}_{L},\quad
    q := [\underbrace{1,\ldots,1}_{l}]^T.
\end{gather*}
Figure \ref{fig:multirate_interp_gplant_fsfh} shows the obtained discrete-time
system,
\begin{figure}[t]
    \begin{center}
\setlength{\unitlength}{0.4mm}
\begin{picture}( 200, 114)
\put(  25.3,  63.2){\line(-1, 0){  16.9}}
\put( 109.6,  12.6){\line( 1, 0){  33.7}}
\put( 173.0,  68.0){$u_d$}
\put(   2.7,  68.0){$y_d$}
\put(   1.0, 101.7){$\tilde{e}_d$}
\put( 173.8, 101.7){$\tilde{w}_d$}
\put( 177.1,  12.6){\line( 0, 1){  50.6}}
\put( 139.1,  12.6){\line( 1, 0){  37.9}}
\put( 185.5,  97.0){\vector(-1, 0){  25.3}}
\put(   8.4,  63.2){\line( 0,-1){  50.6}}
\put(  75.9,   0.0){\framebox(  33.7,  25.3){$\widetilde{K}(z)$}}
\put(   8.4,  12.6){\vector( 1, 0){  67.5}}
\put(  25.3,  97.0){\vector(-1, 0){  25.3}}
\put(  25.3,  50.6){\framebox( 134.9,  59.0){
    $\left[
    \begin{array}{cc}
    z^{-m}G_{I11,N} & G_{I12,N}\\
     G_{I21,N}       & 0
    \end{array}
    \right]
    $}}
\put( 177.1,  63.2){\vector(-1, 0){  16.9}}
\end{picture}
    \end{center}
    \caption{Discrete-time system $T_{I,N}$}
    \label{fig:multirate_interp_gplant_fsfh}
\end{figure}
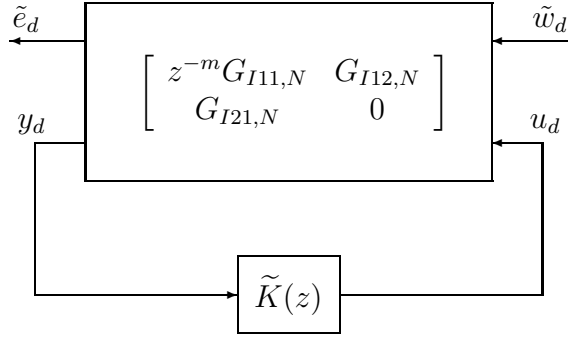
where $\widetilde{K}(z)$ is an LTI, single-input/$L$-output system
that satisfies
\begin{displaymath}
    K(z) = \left[\begin{array}{cccc}
		 1&z^{-1}&\cdots &z^{-L+1}
	   \end{array}\right] \widetilde{K}(z^L).
\end{displaymath}
The convergence of (\ref{eq:theorem_interp}) is guaranteed in \cite{YamMadAnd99}.
\end{proof}

\section{Design of decimators}
\label{sec:design_decim}
\subsection{Problem formulation}
We now formulate a design problem for optimal decimators.
While this can be considered dually with interpolators, it is
less studied in the literature.  Downsampling occurs usually in
the filter bank design, and its independent design has received
less attention.

Consider the block diagram Figure~\ref{fig:multirate_design_decim}.
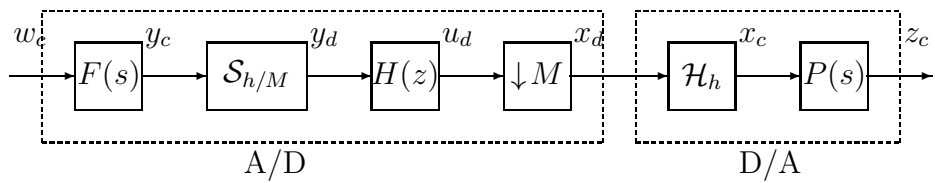
\begin{figure}[t]
\begin{center}
\setlength{\unitlength}{0.52mm}
\begin{picture}( 245,  51)
\put( 186.5,   0.5){D/A}
\put(  60.0,   0.5){A/D}
\put( 228.7,  34.2){$z_c$}
\put( 186.5,  34.2){$x_c$}
\put( 144.3,  34.2){$x_d$}
\put( 110.6,  34.2){$u_d$}
\put(  76.9,  34.2){$y_d$}
\put(  34.7,  34.2){$y_c$}
\put(   1.0,  34.2){$w_c$}
\put( 160.2,   8.4){\dashbox(  67.5,  33.7){}}
\put(   8.4,   8.4){\dashbox( 143.3,  33.7){}}
\put( 219.2,  25.3){\vector( 1, 0){  16.9}}
\put( 185.5,  25.3){\vector( 1, 0){  16.9}}
\put( 143.3,  25.3){\vector( 1, 0){  25.3}}
\put( 109.6,  25.3){\vector( 1, 0){  16.9}}
\put(  75.9,  25.3){\vector( 1, 0){  16.9}}
\put(  33.7,  25.3){\vector( 1, 0){  16.9}}
\put(   0.0,  25.3){\vector( 1, 0){  16.9}}
\put( 202.4,  16.9){\framebox(  16.9,  16.9){$P(s)$}}
\put( 168.6,  16.9){\framebox(  16.9,  16.9){$\hold{h}$}}
\put( 126.5,  16.9){\framebox(  16.9,  16.9){$\ds{M}$}}
\put(  92.7,  16.9){\framebox(  16.9,  16.9){$H(z)$}}
\put(  50.6,  16.9){\framebox(  25.3,  16.9){$\samp{h/M}$}}
\put(  16.9,  16.9){\framebox(  16.9,  16.9){$F(s)$}}
\end{picture}
\end{center}
\caption{Signal reconstruction with decimator}
\label{fig:multirate_design_decim}
\end{figure}
The incoming signal $w_c$ first goes through an
anti-aliasing filter $F(s)$ and the filtered signal
$y_{c}$ becomes nearly (but not entirely) band-limited.  
This signal is then 
sampled by $\samp{h/M}$ to become a discrete-time signal $y_d$ 
with sampling period $h/M$.  

The discrete-time signal $y_{d}$ is first processed by a digital filter
$H(z)$. Then the filtered signal $x_{d}$ is downsampled by $\ds{M}$,
and becomes another discrete-time signal $u_{d}$ with
sampling period $h$.  The discrete-time signal $u_d$ then
becomes a continuous-time signal $u_{c}$
by going through the zero-order hold $\hold{h}$,
and then becomes the final signal
by going through an analog filter $P(s)$.  
The objective here is to design the digital filter
$H(z)$ for given $F(s)$, $M$ and $P(s)$.

Figure~\ref{fig:multirate_design_decim_error} shows the block diagram of the error
system for the design.  The delay in the upper portion of the
diagram corresponds to the fact that we allow a certain amount
of time delay for signal reconstruction.
\begin{figure}[t]
\begin{center}
\setlength{\unitlength}{0.5mm}
\begin{picture}( 245,  59)
\footnotesize    
\put(   0.0,  25.3){\vector( 1, 0){  16.9}}
\put(  42.2,  42.2){\line( 0,-1){  33.7}}
\put(  33.7,  25.3){\line( 1, 0){   8.4}}
\put(  42.2,  42.2){\vector( 1, 0){  59.0}}
\put( 224.4,  13.1){$-$}
\put( 211.8,  34.2){$+$}
\put(  43.2,  25.8){$y_c$}
\put(   5.2,  30.0){$w_c$}
\put( 223.4,  25.3){\vector( 1, 0){  12.6}}
\put( 219.2,   8.4){\vector( 0, 1){  12.6}}
\put( 219.2,  42.2){\vector( 0,-1){  12.6}}
\put( 219.2,  25.3){\circle{ 8}}
\put( 126.5,  42.2){\line( 1, 0){  92.7}}
\put( 210.8,   8.4){\line( 1, 0){   8.4}}
\put( 101.2,  33.7){\framebox(  25.3,  16.9){$e^{-mhs}$}}
\put(  16.9,  16.9){\framebox(  16.9,  16.9){$F(s)$}}
\put(  42.2,   8.4){\vector( 1, 0){   8.4}}
\put(  50.6,   0.0){\framebox(  25.3,  16.9){$\samp{h/M}$}}
\put(  92.7,   0.0){\framebox(  16.9,  16.9){$H(z)$}}
\put( 126.5,   0.0){\framebox(  16.9,  16.9){$\ds{M}$}}
\put( 160.2,   0.0){\framebox(  16.9,  16.9){$\hold{h}$}}
\put( 193.9,   0.0){\framebox(  16.9,  16.9){$P(s)$}}
\put(  75.9,   8.4){\vector( 1, 0){  16.9}}
\put( 109.6,   8.4){\vector( 1, 0){  16.9}}
\put( 143.3,   8.4){\vector( 1, 0){  16.9}}
\put( 177.1,   8.4){\vector( 1, 0){  16.9}}
\put( 228.7,  34.2){$e_c$}
\end{picture}
\end{center}
\caption{Signal reconstruction error system}
\label{fig:multirate_design_decim_error}
\end{figure}
Let $\T_{D}$ denote the input/output operator from $w_c$ to $e_c$ in Figure \ref{fig:multirate_design_decim_error}.
Our design problem is then as follows:
\begin{problem}
    \label{prob:Hinf_decim}
Given a stable, strictly proper $F(s)$, stable, proper $P(s)$, 
downsampling factor $M\in \Nset$, delay step $m\in \Nset$, sampling period $h>0$ and
an attenuation level $\gamma>0$, find a digital filter $H(z)$
such that
    \begin{equation}
	\norm{\T_{D}}:= \sup_{\substack{w_c \in L^2[0,\infty)\\ w_c\neq 0}}
	    \frac{\twonorm{\T_{D}w_c}_{L^2[0,\infty)}}{\twonorm{w_c}_{L^2[0,\infty)}} < \gamma.
	    \label{eq:Hinf_decim}
    \end{equation}
\end{problem}

\subsection{Reduction to a finite-dimensional problem}
\begin{theorem}
For the error system $\T_{D}$, there exist (finite-dimensional) discrete-time systems
$\{T_{D,N}: N=M,2M,\ldots\}$ such that
\begin{equation} \label{eq:theorem_decim}
  \lim_{N\rightarrow\infty}\|T_{D,N}\| = \|\T_{D}\|.
\end{equation}

\end{theorem}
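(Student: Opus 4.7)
The plan is to mimic the interpolator proof in the previous section, replacing the role of the upsampler/lifting identity by the dual one for the downsampler. The overall strategy has two stages: first, use discrete-time lifting to convert the multirate error system $\T_D$ into an equivalent single-rate sampled-data system; second, apply the fast-sampling/fast-hold (FSFH) procedure to obtain a finite-dimensional discrete-time approximant $T_{D,N}$ and invoke the FSFH convergence result \cite{YamMadAnd99}.

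For the first stage, I would exploit the dual relations in (\ref{eq:dlift_prop2}) and (\ref{eq:dlift_prop1}). In particular, since $(\ds{M}) = [1,0,\ldots,0]\dlift{M}$, the composition $(\ds{M})H(z)$ can be rewritten as $\widetilde{H}(z)\dlift{M}$ for a suitable multi-input/single-output LTI system $\widetilde{H}(z):=[1,0,\ldots,0]\dlift{M}H(z)\idlift{M}$, and $\dlift{M}\samp{h/M}$ combines with the continuous-time signal to give a generalized sampler $\gsamp{h}$ with sampling period $h$ (dual to the generalized hold $\ghold{h}$ used in the interpolator proof). Under this substitution, the error system of Figure \ref{fig:multirate_design_decim_error} becomes a single-rate sampled-data system whose only sampling period is $h$, with continuous-time generalized plant
\begin{equation*}
\begin{bmatrix} e^{-mhs}F(s) & -P(s)\\ F(s) & 0\end{bmatrix},
\end{equation*}
sampled by $\gsamp{h}$, held by $\hold{h}$, and closed through $\widetilde{H}(z)$.

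For the second stage, I would insert a fast sampler $\samp{h/N}$ on the continuous-time error output and a fast hold $\hold{h/N}$ on the exogenous input, with $N=Ml$, $l\in\Nset$, exactly as in the interpolator case. Using the notation $\dliftsys{N}{G}:=\dlift{N}\samp{h/N}G\hold{h/N}\idlift{N}$ and Proposition \ref{prop:dlift} (with the roles of $S$ and $H$ interchanged to reflect the fact that here the downsampling occurs on the discrete-time side rather than the hold side), the discretized error system takes the form
\begin{equation*}
T_{D,N}(z) = z^{-m}G_{D11,N}(z) + G_{D12,N}(z)\widetilde{H}(z)G_{D21,N}(z),
\end{equation*}
where $G_{D11,N}$, $G_{D12,N}$, $G_{D21,N}$ are finite-dimensional discrete LTI systems obtained by discrete-time lifting of $F(s)$ and $P(s)$, pre- and post-multiplied by the appropriate constant matrices coming from Proposition \ref{prop:dlift}. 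This $T_{D,N}$ is a finite-dimensional discrete-time system, and convergence $\|T_{D,N}\|\to\|\T_D\|$ as $N\to\infty$ follows from the FSFH approximation theorem of \cite{YamMadAnd99}, provided $F(s)$ is strictly proper and $P(s)$ is proper (which is assumed in Problem \ref{prob:Hinf_decim}).

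The main obstacle, as in the interpolator proof, is the bookkeeping in the first stage: one has to verify carefully that the composed operator $\widetilde{H}(z)\gsamp{h}$ really equals $(\ds{M})H(z)\samp{h/M}$ as operators on $L^2[0,\infty)$, including the action on the pure delay $e^{-mhs}$ that links the upper and lower branches of the error system. The strict properness of $F(s)$ is used to guarantee that $F(s)$ followed by $\samp{h/M}$ is well-defined as a bounded operator into $l^2$, and hence that lifting produces a genuine infinite-dimensional LTI system whose FSFH approximants converge in norm; without this hypothesis the convergence in (\ref{eq:theorem_decim}) need not hold.
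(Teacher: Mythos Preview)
Your proposal is correct and follows essentially the same approach as the paper: rewrite $(\ds{M})H(z)$ as $\widetilde{H}(z)\dlift{M}$ with $\widetilde{H}(z)=[1,0,\ldots,0]\dlift{M}H(z)\idlift{M}$, use the identity $\dlift{M}\samp{h/M}=\gsamp{h}$ to reduce Figure~\ref{fig:multirate_design_decim_error} to a single-rate sampled-data system with generalized sampler $\gsamp{h}$ and ordinary hold $\hold{h}$, then apply FSFH with $N=Ml$ and Proposition~\ref{prop:dlift} to obtain $T_{D,N}(z)=z^{-m}G_{D11,N}+G_{D12,N}\widetilde{H}G_{D21,N}$, with convergence supplied by \cite{YamMadAnd99}. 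The paper also records the explicit constant matrices $S$ (block-diagonal with $M$ copies of $[1,0,\ldots,0]$) and $H=[1,\ldots,1]^T$, which confirms your remark that the roles of $S$ and $H$ are interchanged relative to the interpolator case.
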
	
\begin{proof}
Using the discrete-time lifting $\dlift{M}$ we rewrite $(\ds{M})H(z)$
as
\begin{equation*}
	(\ds{M})H(z)
	  = \widetilde{H}(z)\dlift{M},\quad
	\widetilde{H}(z)
	    := \left[\begin{array}{cccc}1&0&\cdots &0\end{array}\right]\dlift{M}H(z)\idlift{M},
\end{equation*}
where the system $\widetilde{H}(z)$ is an LTI, $M$-input/single-output system that 
satisfies
\begin{equation*}
    H(z) = \widetilde{H}(z^M)
	\left[
	 \begin{array}{cccc}
	     1&
	     z&
	     \ldots&
	     z^{M-1}
	 \end{array}
       \right]^T.
\end{equation*}
Using the generalized sampler $\gsamp{h}$ defined by
\begin{equation*}
	\gsamp{h}
	:  L^2 \ni u \longmapsto \vect{v} \in l^2,\quad
	\vect{v}[k] :=
	\left[
	\begin{array}{c}
	    u\bigl(kh\bigr)\\
	    u\bigl(kh+h/M\bigr)\\
	    \vdots\\
	    u\bigl(kh+(M\!-\!1)h/M\bigr)\\
	\end{array}
	\right],\quad
	k=0,1,2,\ldots,
\end{equation*}
we obtain the identity
\begin{equation*}
    \dlift{M}\samp{h/M} = \gsamp{h}.
\end{equation*}
Hence Figure~\ref{fig:multirate_design_decim_error} is equivalent to 
Figure~\ref{fig:multirate_design_decim_error2}.
\begin{figure}[t]
    \begin{center}
\setlength{\unitlength}{0.5mm}
\begin{picture}( 219,  55)
\put( 190.7,  13.1){$-$}
\put( 199.1,  30.0){$e_c$}
\put(   4.0,  30.0){$w_c$}
\put( 189.7,  25.3){\vector( 1, 0){  12.6}}
\put( 185.5,   8.4){\vector( 0, 1){  12.6}}
\put( 185.5,  25.3){\circle{ 8}}
\put( 177.1,   8.4){\line( 1, 0){   8.4}}
\put( 143.3,   8.4){\vector( 1, 0){  16.9}}
\put(  42.2,  42.2){\line( 0,-1){  33.7}}
\put(   0.0,  25.3){\vector( 1, 0){  16.9}}
\put(  42.2,   8.4){\vector( 1, 0){   8.4}}
\put(  42.2,  42.2){\vector( 1, 0){  59.0}}
\put(  33.7,  25.3){\line( 1, 0){   8.4}}
\put( 126.5,  42.2){\line( 1, 0){  59.0}}
\put( 185.5,  42.2){\vector( 0,-1){  12.6}}
\put(  67.5,   8.4){\vector( 1, 0){  16.9}}
\put( 109.6,   8.4){\vector( 1, 0){  16.9}}
\put(  16.9,  16.9){\framebox(  16.9,  16.9){$F(s)$}}
\put(  50.6,   0.0){\framebox(  16.9,  16.9){$\gsamp{h}$}}
\put( 126.5,   0.0){\framebox(  16.9,  16.9){$\hold{h}$}}
\put(  84.3,   0.0){\framebox(  25.3,  16.9){$\widetilde{H}(z)$}}
\put( 160.2,   0.0){\framebox(  16.9,  16.9){$P(s)$}}
\put( 101.2,  33.7){\framebox(  25.3,  16.9){$e^{-mhs}$}}
\end{picture}
    \end{center}
    \caption{Reduced single-rate problem}
    \label{fig:multirate_design_decim_error2}
\end{figure}
As has been mentioned above, this can be reduced to a finite-dimensional
discrete-time system.

We modify the block diagram in Figure \ref{fig:multirate_design_decim_error2} into the 
block diagram in Figure \ref{fig:multirate_design_decim_gplant_SD}.
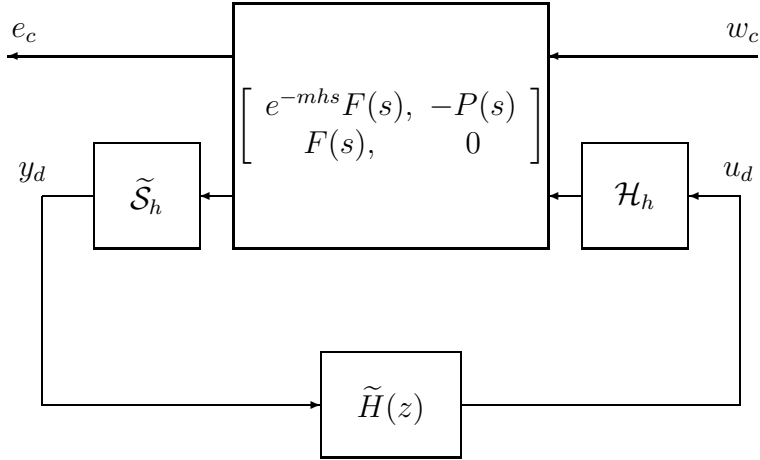
\begin{figure}[t]
    \begin{center}
\setlength{\unitlength}{0.55mm}
\begin{picture}( 207, 126)
\put( 173.0,  68.0){$u_d$}
\put(   2.7,  68.0){$y_d$}
\put(   1.0, 101.7){$e_c$}
\put( 173.8, 101.7){$w_c$}
\put( 177.1,  12.6){\line( 0, 1){  50.6}}
\put( 139.1,  63.2){\vector(-1, 0){   8.4}}
\put(   8.4,  63.2){\line( 0,-1){  50.6}}
\put(  21.1,  63.2){\line(-1, 0){  12.6}}
\put(  54.8,  97.0){\vector(-1, 0){  54.8}}
\put( 181.3,  97.0){\vector(-1, 0){  50.6}}
\put( 177.1,  63.2){\vector(-1, 0){  12.6}}
\put(  54.8,  63.2){\vector(-1, 0){   8.4}}
\put(   8.4,  12.6){\vector( 1, 0){  67.5}}
\put( 109.6,  12.6){\line( 1, 0){  67.5}}
\put(  75.9,   0.0){\framebox(  33.7,  25.3){$\widetilde{H}(z)$}}
\put( 139.1,  50.6){\framebox(  25.3,  25.3){$\hold{h}$}}
\put(  21.1,  50.6){\framebox(  25.3,  25.3){$\gsamp{h}$}}
\put(  54.8,  50.6){\framebox(  75.9,  59.0){$\left[\begin{array}{cc}e^{-mhs}F(s),&\!\!-P(s)\\F(s),&\!\!0\end{array}\right]$}}
\end{picture}
    \end{center}
    \caption{Sampled-data system $\T_{D}$}
    \label{fig:multirate_design_decim_gplant_SD}
\end{figure}
Then by using the fast-sampling/fast-hold method, the sampled-data system in Figure
\ref{fig:multirate_design_decim_gplant_SD} is approximated to the following discrete-time system
$T_{D,N}$ ($N:=Ml,\; l\in \Nset$),
\begin{equation*}
    T_{D,N}(z) = z^{-m}G_{D11,N}(z)+ G_{D12,N}(z)\widetilde{K}(z)G_{D21,N}(z),
\end{equation*}
where
\begin{gather*}
    G_{D11,N}:=z^{-m}\dliftsys{N}{F},\quad
    G_{D12,N}:=-\dliftsys{N}{P}H,\quad
    G_{D21,N}:=S\dliftsys{N}{F},\\
S := \left.\left[\begin{array}{ccc}p& & \\ &\ddots & \\ & &p
	\end{array}\right]\right\}M,\quad 
p := [1,\underbrace{0,\ldots,0}_{l-1}],\quad
H := [\underbrace{1,\ldots,1}_{N}]^T.
\end{gather*}
Figure \ref{fig:multirate_design_decim_gplant_fsfh} shows the obtained 
finite-dimensional discrete-time
system.
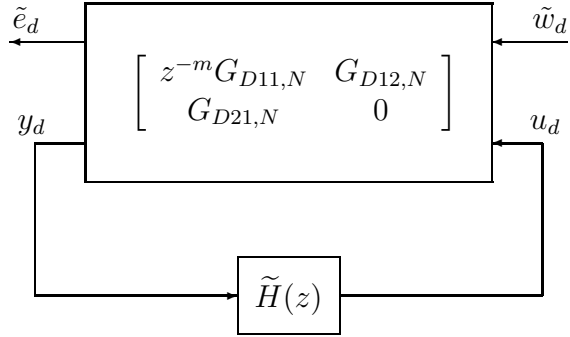
\begin{figure}[t]
    \begin{center}
\setlength{\unitlength}{0.4mm}
\begin{picture}( 200, 114)
\put(  25.3,  63.2){\line(-1, 0){  16.9}}
\put( 109.6,  12.6){\line( 1, 0){  33.7}}
\put( 173.0,  68.0){$u_d$}
\put(   2.7,  68.0){$y_d$}
\put(   1.0, 101.7){$\tilde{e}_d$}
\put( 173.8, 101.7){$\tilde{w}_d$}
\put( 177.1,  12.6){\line( 0, 1){  50.6}}
\put( 139.1,  12.6){\line( 1, 0){  37.9}}
\put( 185.5,  97.0){\vector(-1, 0){  25.3}}
\put(   8.4,  63.2){\line( 0,-1){  50.6}}
\put(  75.9,   0.0){\framebox(  33.7,  25.3){$\widetilde{H}(z)$}}
\put(   8.4,  12.6){\vector( 1, 0){  67.5}}
\put(  25.3,  97.0){\vector(-1, 0){  25.3}}
\put(  25.3,  50.6){\framebox( 134.9,  59.0){
    $\left[
    \begin{array}{cc}
    z^{-m}G_{D11,N} & G_{D12,N}\\
     G_{D21,N}       & 0
    \end{array}
    \right]
    $}}
\put( 177.1,  63.2){\vector(-1, 0){  16.9}}
\end{picture}
    \end{center}
    \caption{Discrete-time system $T_{D,N}$}
    \label{fig:multirate_design_decim_gplant_fsfh}
\end{figure}

The convergence of (\ref{eq:theorem_decim}) is guaranteed in \cite{YamMadAnd99}.
\end{proof}
Note that the decimation filter 
\begin{equation*}
    H(z) = \widetilde{H}(z^M)
	\left[
	 \begin{array}{cccc}
	     1&
	     z&
	     \ldots&
	     z^{M-1}
	 \end{array}
       \right]^T,
\end{equation*}
may not be causal, thus we adopt
the following filter:
\begin{equation*}
    H(z) = z^{-M}\widetilde{H}(z^M)
	\left[
	 \begin{array}{cccc}
	     1&
	     z&
	     \ldots&
	     z^{M-1}
	 \end{array}
       \right]^T.
\end{equation*}

\section{Design of sampling rate converters}
\label{sec:design_src}
By combining interpolators and decimators, we can construct
a sampling rate converter.
Figure \ref{fig:multirate_design_src} shows a sampling rate converter,
where an interpolation with factor $M_1$ is followed by a decimation with factor $M_2$. 
By this converter, the sampling rate of the input signal is changed by the factor $M_1/M_2$.
\begin{figure}[t]
    \begin{center}
\begin{picture}( 219,  34)
\put( 193.9,  16.9){\vector( 1, 0){  25.3}}
\put( 134.9,  16.9){\vector( 1, 0){  25.3}}
\put(  59.0,  16.9){\vector( 1, 0){  25.3}}
\put(   0.0,  16.9){\vector( 1, 0){  25.3}}
\put( 160.2,   0.0){\framebox(  33.7,  33.7){$\ds{M_2}$}}
\put(  84.3,   0.0){\framebox(  50.6,  33.7){$L(z)$}}
\put(  25.3,   0.0){\framebox(  33.7,  33.7){$\us{M_1}$}}
\end{picture}
    \end{center}
    \caption{Sampling rate converter ($M_1 : M_2$)}
    \label{fig:multirate_design_src}
\end{figure}
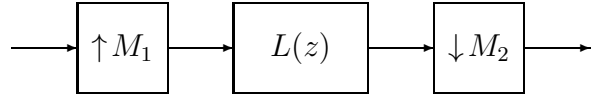
In the application of digital audio, the conversion from CD signals at 44.1kHz
to DAT signals at 48kHz is realized by a converter with the factors
$M_1 = 3\times 7^2$ and $M_2 = 2^5 \times 5$.
Conventionally, the digital filter
$L(z)$ is designed to be a low-pass filter with
the cut-off frequency $\omega = \pi/M,\quad M:=\max(M_1,M_2)$ \cite{Fli,Vai}.
In this section, we design the filter $L(z)$ that combines
the interpolation filter $H(z)$ designed by the method discussed in Section \ref{sec:design_interp}
and the decimation filter $H(z)$ in Section \ref{sec:design_decim}. 
The designed sampling rate converter will be in the form illustrated in Figure \ref{fig:multirate_design_src2}
\begin{figure}[t]
    \begin{center}
\begin{picture}( 320,  51)
\put(   8.4,  25.3){\vector( 1, 0){  33.7}}
\put(  42.2,   8.4){\framebox(  33.7,  33.7){$\us{M_1}$}}
\put(  75.9,  25.3){\vector( 1, 0){  33.7}}
\put( 109.6,   8.4){\framebox(  33.7,  33.7){$K(z)$}}
\put( 177.1,   8.4){\framebox(  33.7,  33.7){$H(z)$}}
\put( 244.5,   8.4){\framebox(  33.7,  33.7){$\ds{M_2}$}}
\put( 143.3,  25.3){\vector( 1, 0){  33.7}}
\put( 210.8,  25.3){\vector( 1, 0){  33.7}}
\put( 278.2,  25.3){\vector( 1, 0){  33.7}}
\end{picture}
    \end{center}
    \caption{Sampling rate converter with interpolator and decimator}
    \label{fig:multirate_design_src2}
\end{figure}
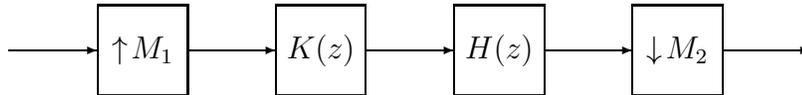
The block diagrams of the error system for sampled-data $H^\infty$ design of 
an interpolation filter $K(z)$ and a decimation filter $H(z)$ are
shown in Figure \ref{fig:multirate_design_src_interp} and in Figure \ref{fig:multirate_design_src_decim},
respectively.
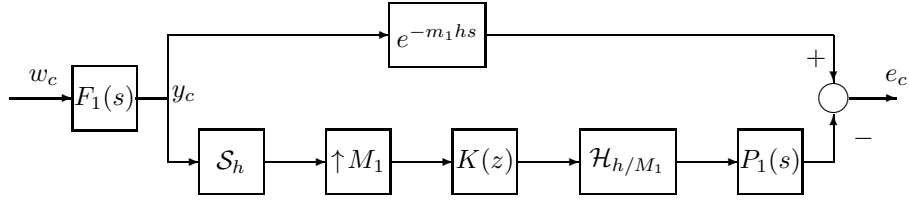
\begin{figure}[t]
    \begin{center}
\setlength{\unitlength}{0.5mm}
\begin{picture}( 240,  55)
\footnotesize
\put( 224.4,  13.1){$-$}
\put( 211.8,  34.2){$+$}
\put( 232.9,  30.0){$e_c$}
\put(   5.2,  30.0){$w_c$}
\put(  43.2,  25.8){$y_c$}
\put( 223.4,  25.3){\vector( 1, 0){  12.6}}
\put( 219.2,   8.4){\vector( 0, 1){  12.6}}
\put( 219.2,  42.2){\vector( 0,-1){  12.6}}
\put( 219.2,  25.3){\circle{ 8}}
\put( 126.5,  42.2){\line( 1, 0){  92.7}}
\put( 210.8,   8.4){\line( 1, 0){   8.4}}
\put( 177.1,   8.4){\vector( 1, 0){  16.9}}
\put( 134.9,   8.4){\vector( 1, 0){  16.9}}
\put( 101.2,   8.4){\vector( 1, 0){  16.9}}
\put(  67.5,   8.4){\vector( 1, 0){  16.9}}
\put(  42.2,   8.4){\vector( 1, 0){   8.4}}
\put(  42.2,  42.2){\vector( 1, 0){  59.0}}
\put(  33.7,  25.3){\line( 1, 0){   8.4}}
\put(  42.2,  42.2){\line( 0,-1){  33.7}}
\put(   0.0,  25.3){\vector( 1, 0){  16.9}}
\put( 101.2,  33.7){\framebox(  25.3,  16.9){$e^{-m_1hs}$}}
\put(  16.9,  16.9){\framebox(  16.9,  16.9){$F_1(s)$}}
\put(  50.6,   0.0){\framebox(  16.9,  16.9){$\samp{h}$}}
\put(  84.3,   0.0){\framebox(  16.9,  16.9){$\us{M_1}$}}
\put( 118.0,   0.0){\framebox(  16.9,  16.9){$K(z)$}}
\put( 151.8,   0.0){\framebox(  25.3,  16.9){$\hold{h/M_1}$}}
\put( 193.9,   0.0){\framebox(  16.9,  16.9){$P_1(s)$}}
\end{picture}
    \end{center}
    \caption{Error system for designing interpolator}
    \label{fig:multirate_design_src_interp}
\end{figure}
\begin{figure}[t]
    \begin{center}
\setlength{\unitlength}{0.5mm}
\begin{picture}( 245,  59)
\footnotesize    
\put(   0.0,  25.3){\vector( 1, 0){  16.9}}
\put(  42.2,  42.2){\line( 0,-1){  33.7}}
\put(  33.7,  25.3){\line( 1, 0){   8.4}}
\put(  42.2,  42.2){\vector( 1, 0){  59.0}}
\put( 224.4,  13.1){$-$}
\put( 211.8,  34.2){$+$}
\put(  43.2,  25.8){$y_c$}
\put(   5.2,  30.0){$w_c$}
\put( 223.4,  25.3){\vector( 1, 0){  12.6}}
\put( 219.2,   8.4){\vector( 0, 1){  12.6}}
\put( 219.2,  42.2){\vector( 0,-1){  12.6}}
\put( 219.2,  25.3){\circle{ 8}}
\put( 126.5,  42.2){\line( 1, 0){  92.7}}
\put( 210.8,   8.4){\line( 1, 0){   8.4}}
\put( 101.2,  33.7){\framebox(  25.3,  16.9){$e^{-m_2h_2s}$}}
\put(  16.9,  16.9){\framebox(  16.9,  16.9){$F_2(s)$}}
\put(  42.2,   8.4){\vector( 1, 0){   8.4}}
\put(  50.6,   0.0){\framebox(  25.3,  16.9){$\samp{h_2/{M_2}}$}}
\put(  92.7,   0.0){\framebox(  16.9,  16.9){$H(z)$}}
\put( 126.5,   0.0){\framebox(  16.9,  16.9){$\ds{M_2}$}}
\put( 160.2,   0.0){\framebox(  16.9,  16.9){$\hold{h_2}$}}
\put( 193.9,   0.0){\framebox(  16.9,  16.9){$P_2(s)$}}
\put(  75.9,   8.4){\vector( 1, 0){  16.9}}
\put( 109.6,   8.4){\vector( 1, 0){  16.9}}
\put( 143.3,   8.4){\vector( 1, 0){  16.9}}
\put( 177.1,   8.4){\vector( 1, 0){  16.9}}
\put( 228.7,  34.2){$e_c$}
    \end{picture}
    \end{center}
    \caption{Error system for designing decimator}
    \label{fig:multirate_design_src_decim}
\end{figure}
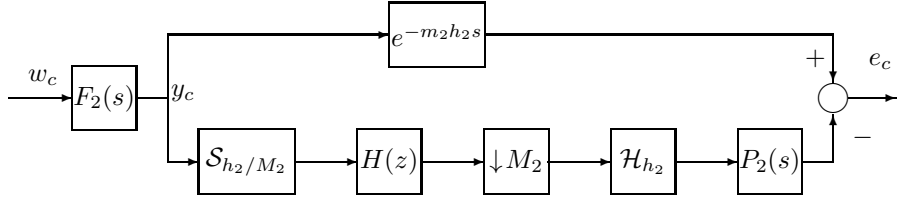
In these diagrams, $h_2 = h\cdot\frac{M_2}{M_1}$ and the analog low-pass filters
$F_1(s)$ and $F_2(s)$ have characteristics illustrated in Figure \ref{fig:multirate_design_src_filter}.
\begin{figure}[t]
    \begin{center}
        \includegraphics[width=0.7\linewidth]{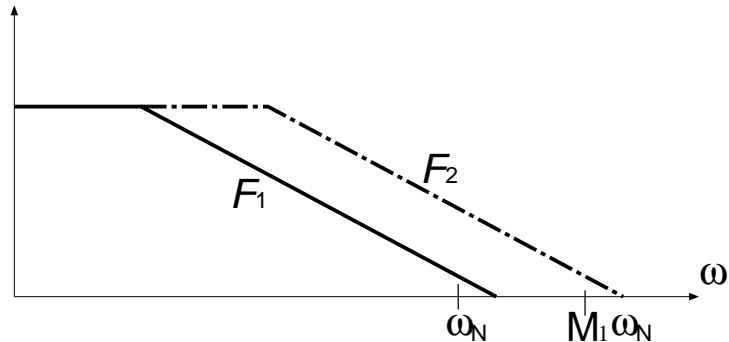}
    \end{center}
    \caption{Characteristic of $F_1$ and $F_2$}
    \label{fig:multirate_design_src_filter}
\end{figure}
The filters $F_1$ and $F_2$ take account of the characteristic
of the analog input signal, and hence we can design filters $K(z)$ and $H(z)$ 
that optimize the analog performance using the sampled-data system design method.

The advantage of the design mentioned above is that
we can design a converter with large $M_1$ or $M_2$.
For example, to design a converter for CD/DAT (i.e., $M_1=3\times 7^2$ and $M_2=2^5\times 5$), 
we first design interpolators with $M=3,7$ and decimators with $M=2,5$
as shown in Figure \ref{fig:multirate_design_src_ids},
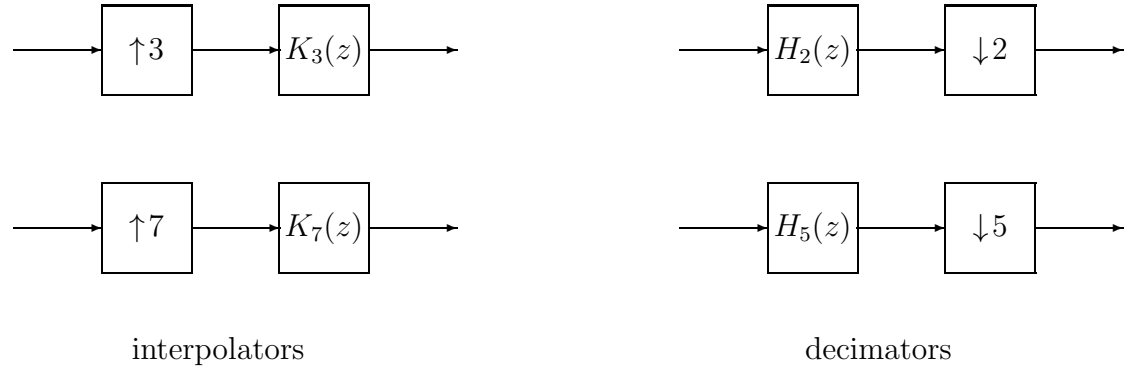
\begin{figure}[t]
    \begin{center}
\setlength{\unitlength}{0.7mm}
\begin{picture}( 228,  84)
\put( 158.8,   8.9){decimators}
\put(  30.9,   8.9){interpolators}
\put( 151.8,  59.0){\framebox(  16.9,  16.9){$H_2(z)$}}
\put( 151.8,  25.3){\framebox(  16.9,  16.9){$H_5(z)$}}
\put( 185.5,  59.0){\framebox(  16.9,  16.9){$\ds{2}$}}
\put( 185.5,  25.3){\framebox(  16.9,  16.9){$\ds{5}$}}
\put( 134.9,  67.5){\vector( 1, 0){  16.9}}
\put( 168.6,  67.5){\vector( 1, 0){  16.9}}
\put(  75.9,  33.7){\vector( 1, 0){  16.9}}
\put(  42.2,  33.7){\vector( 1, 0){  16.9}}
\put(   8.4,  33.7){\vector( 1, 0){  16.9}}
\put(  75.9,  67.5){\vector( 1, 0){  16.9}}
\put(  42.2,  67.5){\vector( 1, 0){  16.9}}
\put(   8.4,  67.5){\vector( 1, 0){  16.9}}
\put( 202.4,  67.5){\vector( 1, 0){  16.9}}
\put( 134.9,  33.7){\vector( 1, 0){  16.9}}
\put( 168.6,  33.7){\vector( 1, 0){  16.9}}
\put( 202.4,  33.7){\vector( 1, 0){  16.9}}
\put(  59.0,  25.3){\framebox(  16.9,  16.9){$K_7(z)$}}
\put(  25.3,  25.3){\framebox(  16.9,  16.9){$\us{7}$}}
\put(  59.0,  59.0){\framebox(  16.9,  16.9){$K_3(z)$}}
\put(  25.3,  59.0){\framebox(  16.9,  16.9){$\us{3}$}}
\end{picture}
    \end{center}
    \caption{Interpolators and decimators for CD/DAT sampling rate conversion}
    \label{fig:multirate_design_src_ids}
\end{figure}
and then by combining the interpolation filters $K_3(z)$, $K_7(z)$ and 
the decimation filters $H_2(z)$, $H_5(z)$,
we obtain the sampling rate converter as illustrated in Figure \ref{fig:multirate_design_src_ids_series}.
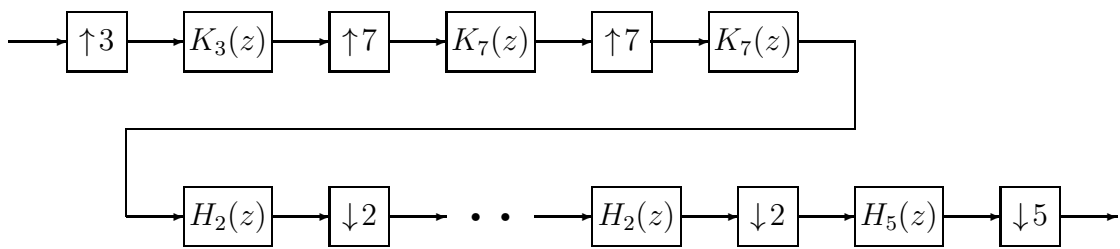
\begin{figure}[t]
    \begin{center}
\setlength{\unitlength}{0.46mm}
\begin{picture}( 337,  84)
\put( 151.8,  16.9){\circle*{ 2}}
\put( 143.3,  16.9){\circle*{ 2}}
\put( 252.9,  50.6){\line( 0,-1){   8.4}}
\put(  42.2,  42.2){\line( 0,-1){   8.4}}
\put( 118.0,  16.9){\vector( 1, 0){  16.9}}
\put( 295.1,   8.4){\framebox(  16.9,  16.9){$\ds{5}$}}
\put( 252.9,   8.4){\framebox(  25.3,  16.9){$H_5(z)$}}
\put( 160.2,  16.9){\vector( 1, 0){  16.9}}
\put( 219.2,   8.4){\framebox(  16.9,  16.9){$\ds{2}$}}
\put( 177.1,   8.4){\framebox(  25.3,  16.9){$H_2(z)$}}
\put( 101.2,   8.4){\framebox(  16.9,  16.9){$\ds{2}$}}
\put(  59.0,   8.4){\framebox(  25.3,  16.9){$H_2(z)$}}
\put( 278.2,  16.9){\vector( 1, 0){  16.9}}
\put( 312.0,  16.9){\vector( 1, 0){  16.9}}
\put(  84.3,  16.9){\vector( 1, 0){  16.9}}
\put( 252.9,  67.5){\line( 0,-1){  16.9}}
\put( 236.1,  67.5){\line( 1, 0){  16.9}}
\put( 193.9,  67.5){\vector( 1, 0){  16.9}}
\put( 160.2,  67.5){\vector( 1, 0){  16.9}}
\put( 118.0,  67.5){\vector( 1, 0){  16.9}}
\put(  84.3,  67.5){\vector( 1, 0){  16.9}}
\put(  42.2,  67.5){\vector( 1, 0){  16.9}}
\put(   8.4,  67.5){\vector( 1, 0){  16.9}}
\put( 202.4,  16.9){\vector( 1, 0){  16.9}}
\put( 236.1,  16.9){\vector( 1, 0){  16.9}}
\put(  42.2,  16.9){\vector( 1, 0){  16.9}}
\put(  42.2,  33.7){\line( 0,-1){  16.9}}
\put(  42.2,  42.2){\line( 1, 0){ 210.8}}
\put( 210.8,  59.0){\framebox(  25.3,  16.9){$K_7(z)$}}
\put( 177.1,  59.0){\framebox(  16.9,  16.9){$\us{7}$}}
\put( 134.9,  59.0){\framebox(  25.3,  16.9){$K_7(z)$}}
\put( 101.2,  59.0){\framebox(  16.9,  16.9){$\us{7}$}}
\put(  59.0,  59.0){\framebox(  25.3,  16.9){$K_3(z)$}}
\put(  25.3,  59.0){\framebox(  16.9,  16.9){$\us{3}$}}
\end{picture}
    \end{center}
    \caption{Sampling rate converter for CD/DAT}
    \label{fig:multirate_design_src_ids_series}
\end{figure}

On the other hand, there is a design method for sampling rate converters
by periodically time-varying systems \cite{IsiYamFra99}.
However, the order of the design in the method will be $M_1\times M_2$,
and hence the design of a converter with very large number of $M_1$ or $M_2$ 
such as CD/DAT ($M_1\times M_2=23520$) has much difficulty in numerical computation.

\section{Design examples}
\label{sec:multirate_examples}
\subsection{Design of interpolators}
\label{sec:multirate_examples_interp}
In this section, we compare the interpolator designed by the method discussed in Section \ref{sec:design_interp}
with that designed by Johnston's method \cite{Joh80}.
The parameters are as follows: interpolation ratio $M=2$, sampling period $h=1$ and
delay step $m=2$.
The analog filters $F(s)$ and $P(s)$ are
\begin{equation*}
    F(s) = \frac{1}{(Ts + 1)(0.1Ts + 1)},\quad T:=22.05/\pi\approx 7.0187,
	\quad P(s) = 1.
\end{equation*}
Note that the low-pass filter $F(s)$ has first order attenuation in the frequency range 
$\omega$ $\in$ $[0.14248, 1.4248]$ [rad/sec],
and second order attenuation in the range $\omega > 1.4248$ [rad/sec].
The filter simulates the frequency energy distribution
of a typical orchestral music.
The Johnston filter is taken to be of order $31$.

The frequency responses of the obtained filters are shown in Figure
\ref{fig:multirate_examples_interp_filter}.
The sampled-data design filter is of order $7$, 
which is lower than the Johnston filter.
\begin{figure}[t]
    \begin{center}
	\includegraphics[width=0.7\linewidth]{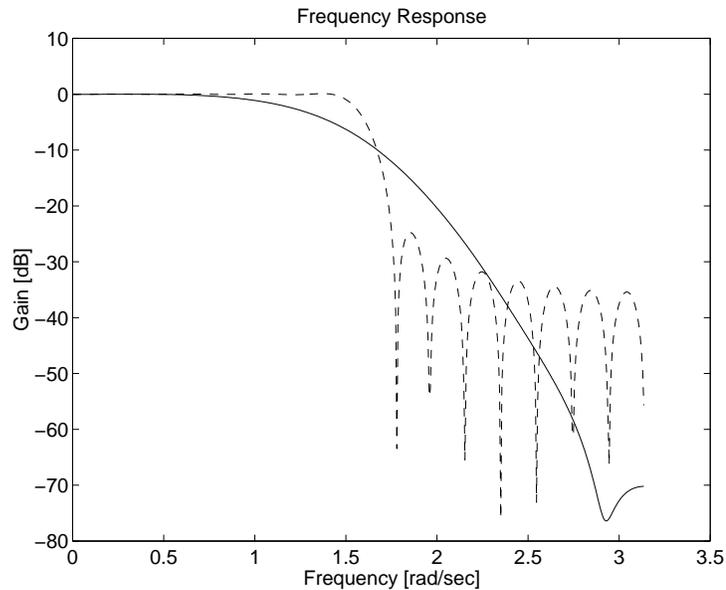}
    \end{center}
    \caption{Frequency responses of interpolation filters:
    sampled-data design (solid) and Johnston filter (dash)}
    \label{fig:multirate_examples_interp_filter}
\end{figure}
The Johnston filter shows the sharper decay beyond the 
cut-off frequency $\omega = \pi/2$,
while the 
filter obtained by the sampled-data design shows a rather slow decay.

On the other hand, the reconstruction error (see Figure \ref{fig:multirate_design_interp_error})
characteristic
in Figure \ref{fig:multirate_examples_interp_Tew} exhibits quite
an admirable performance in spite of the 
low-order of the sampled-data design filter.
It is almost comparable with 31st order Johnston filter.
\begin{figure}[t]
    \begin{center}
	\includegraphics[width=0.7\linewidth]{multirate_examples_interp_Tew.eps}
    \end{center}
    \caption{Frequency responses of error system: sampled-data design (solid) and
     Johnston filter design (dash)}
    \label{fig:multirate_examples_interp_Tew}
\end{figure}

While for those frequencies much below the cut-off frequency the gain characteristic of 
the sampled-data design is not as good as the Johnston filter, 
the sampled-data designed filter need not be inferior.
To see this, let us see the time responses against rectangular waves
in Figure \ref{fig:multirate_examples_interp_t_y} (sampled-data designed)
\begin{figure}[t]
    \begin{center}
	\includegraphics[width=0.7\linewidth]{multirate_examples_interp_time_resp_y.eps}
    \end{center}
    \caption{Time response (sampled-data design)}
    \label{fig:multirate_examples_interp_t_y}
\end{figure}
and in Figure \ref{fig:multirate_examples_interp_t_j} (Johnston filter).
\begin{figure}[t]
    \begin{center}
	\includegraphics[width=0.7\linewidth]{multirate_examples_interp_time_resp_j.eps}
    \end{center}
    \caption{Time response (Johnston filter)}
    \label{fig:multirate_examples_interp_t_j}
\end{figure}

The Johnston filter exhibits a very typical Gibbs phenomenon 
(i.e., we can see ringing caused by the sharp characteristic of the filter),
whereas the one by
the sampled-data design shows a much smaller peak near the edge.
We also note that the sampled-data designed filter has nearly linear phase as shown in 
Figure \ref{fig:multirate_example_interp_phase}.
\begin{figure}[t]
    \begin{center}
	\includegraphics[width=0.7\linewidth]{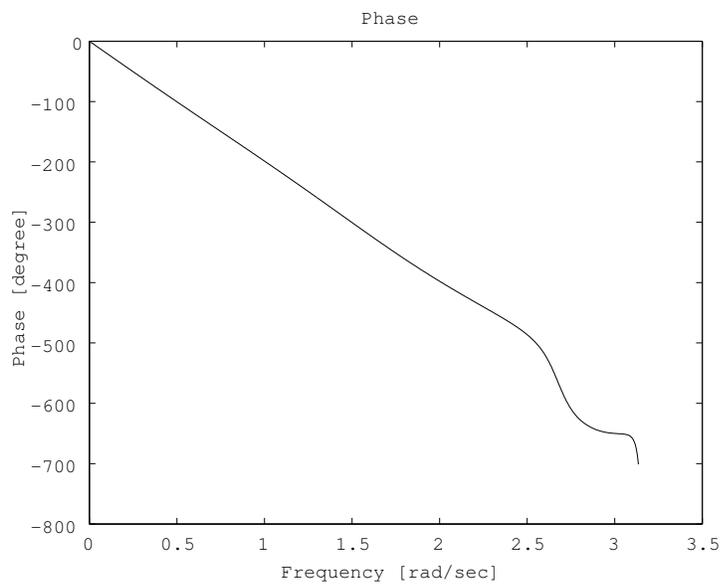}
    \end{center}
    \caption{Phase response (Sampled-data design)}
    \label{fig:multirate_example_interp_phase}
\end{figure}

\subsection{Design of decimators}
We now present an example of the $H^\infty$ design of decimators discussed in Section \ref{sec:design_decim}.
For comparison, we take the Johnston filter of order 31.

Let the decimation ratio $M=2$ and the other parameters is the same
as the interpolator design in the previous section.

Figure \ref{fig:multirate_examples_decim_filter} shows the frequency response
of the decimation filters.
Note that the filter designed by the sampled-data method is of order $6$.
\begin{figure}[t]
    \begin{center}
	\includegraphics[width=0.7\linewidth]{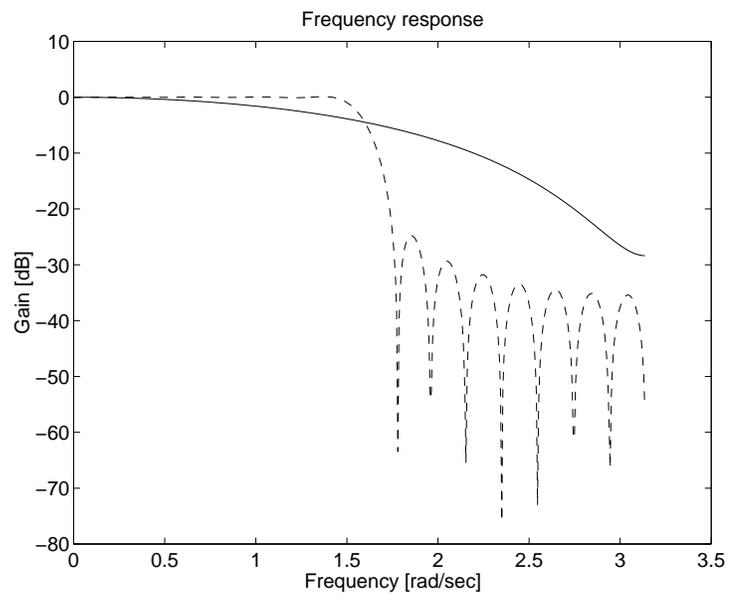}
    \end{center}
    \caption{Frequency responses of decimation filters: sampled-data design (solid) and Johnston filter (dash)}
    \label{fig:multirate_examples_decim_filter}
\end{figure}
The Johnston filter shows the sharper decay beyond
the cut-off frequency $\omega = \pi/2$,
while the filter by sampled-data design shows a rather slow decay.

Figure \ref{fig:multirate_examples_decim_Tew} shows the frequency response of the error system
(see Figure \ref{fig:multirate_design_decim_error}).
\begin{figure}[t]
    \begin{center}
	\includegraphics[width=0.7\linewidth]{multirate_examples_decim_Tew.eps}
    \end{center}
    \caption{Frequency responses of error system: sampled-data design (solid) and Johnston filter design (dash)}
    \label{fig:multirate_examples_decim_Tew}
\end{figure}
We can see from the frequency response of the error system that 
the decimator designed by the sampled-data method exhibits a clear advantage over all frequency range,
even though the sampled-data designed filter is of lower order than the Johnston filter.
In particular, around the frequency $\omega \approx$ 4 [rad/sec], the difference is about 20 dB.

Figure \ref{fig:multirate_examples_decim_t_y} and 
Figure \ref{fig:multirate_examples_decim_t_j} shows
the time responses against rectangular waves.
\begin{figure}[t]
    \begin{center}
	\includegraphics[width=0.7\linewidth]{multirate_examples_decim_time_resp_y.eps}
    \end{center}
    \caption{Time response (sampled-data design)}
    \label{fig:multirate_examples_decim_t_y}
\end{figure}
\begin{figure}[t]
    \begin{center}
	\includegraphics[width=0.7\linewidth]{multirate_examples_decim_time_resp_j.eps}
    \end{center}
    \caption{Time response (Johnston filter)}
    \label{fig:multirate_examples_decim_t_j}
\end{figure}
The sampled-data designed decimator reconstructs the rectangular wave well,
while the decimator with the Johnston filter exhibits
a large amount of ringing due to the Gibbs phenomenon.

\subsection{Design of sampling rate converters}
In this section, we present a design example for the case of changing the sampling period
from $h_1=1$ to $h_2=4/3$. Then we have the sampling rate converter with 
$M_1=3$ and $M_2=4$ that are coprime (see Figure \ref{fig:multirate_design_src}).
Let the filter for the interpolator design 
(see Figure \ref{fig:multirate_design_src_interp}) be
\begin{gather*}
    F_1(s) = \frac{1}{(Ts+1)(0.1Ts+1)},\quad P_1(s) = 1,
\end{gather*}
and that for the decimator design 
(see Figure \ref{fig:multirate_design_src_decim}) be
\begin{gather*}
    F_2(s) = \frac{1}{(T_2s+1)(0.1T_2s+1)},\quad P_2(s) = 1,
\end{gather*}
where $T:=22.05/\pi,\quad T_2:=T/M_1$.
The filters simulate 
the frequency energy distribution of a typical orchestral music,
which are observed by FFT analysis of analog records
of some orchestral musics.

An approximate design is executed here for $N=M_1\times
4=12$ (interpolator) and $N=M_2\times 4=16$ (decimator).
For comparison, we compare it with the equiripple
filter obtained by Parks-McClellan method \cite{Vai, Zel} of order 31.
Parks-McClellan method is widely used for designing FIR filters \cite{Zel}.
The delay stemps $m_1$ and $m_2$ are 2.

The obtained (sub) optimal interpolation filter $K(z)$ is
of order 11 and the decimation filter $H(z)$ of order 15.
The sampling rate conversion filter $L(z)=H(z)K(z)$ 
is of order 22\footnote{The order of $L(z)$ is reduced by the minimal realization method.}.

Figure~\ref{fig:multirate_examples_src_filter} shows the gain characteristics of
these filters.  The equiripple filter shows the sharper decay
beyond the cut-off frequency ($\pi/4$ [rad/sec]) while the
sampled-data design shows a rather mild cut-off characteristic.

\begin{figure}[t]
    \begin{center}
	\includegraphics[width=0.7\linewidth]{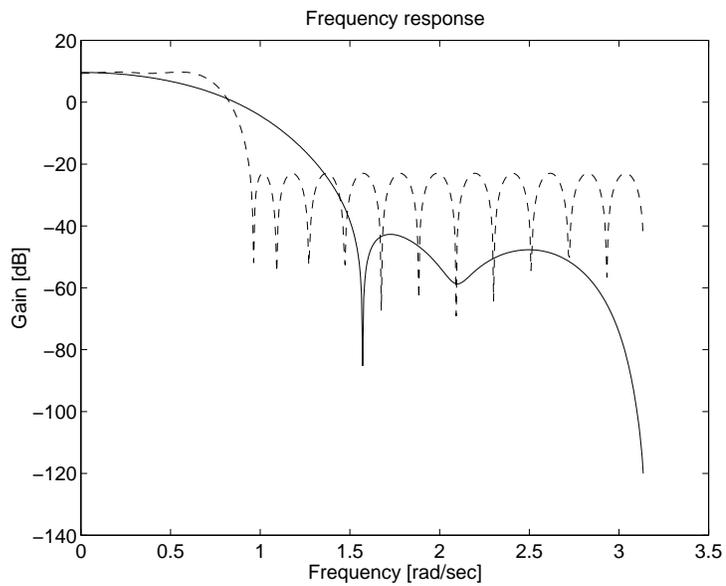}
    \end{center}
    \caption{Frequency responses of sampling rate conversion filters:
    sampled-data design $L(z)$ (solid) and quiripple filter $L_e(z)$ (dash)}
    \label{fig:multirate_examples_src_filter}
\end{figure}

In spite of these superficial differences, 
the frequency response of the error system 
(illustrated in Figure~\ref{fig:multirate_example_src_error}, where $m:=m_1+m_2$ and $P(s)=P_1(s)=P_2(s)=1$)
of sampling rate converter 
exhibits quite an admirable performance of the sampled-data design 
as shown in Figure~\ref{fig:multirate_examples_src_Tew}.  
\begin{figure}[t]
    \begin{center}
\unitlength 0.1in
\begin{picture}(44.00,16.50)(4.00,-18.00)
%
\special{pn 8}%
\special{pa 400 400}%
\special{pa 800 400}%
\special{fp}%
\special{sh 1}%
\special{pa 800 400}%
\special{pa 733 380}%
\special{pa 747 400}%
\special{pa 733 420}%
\special{pa 800 400}%
\special{fp}%
%
\special{pn 8}%
\special{pa 800 200}%
\special{pa 1200 200}%
\special{pa 1200 600}%
\special{pa 800 600}%
\special{pa 800 200}%
\special{fp}%
%
\special{pn 8}%
\special{pa 1800 400}%
\special{pa 2800 400}%
\special{fp}%
\special{sh 1}%
\special{pa 2800 400}%
\special{pa 2733 380}%
\special{pa 2747 400}%
\special{pa 2733 420}%
\special{pa 2800 400}%
\special{fp}%
%
\special{pn 8}%
\special{pa 2800 200}%
\special{pa 3400 200}%
\special{pa 3400 600}%
\special{pa 2800 600}%
\special{pa 2800 200}%
\special{fp}%
%
\special{pn 8}%
\special{pa 3400 400}%
\special{pa 3800 400}%
\special{fp}%
\special{sh 1}%
\special{pa 3800 400}%
\special{pa 3733 380}%
\special{pa 3747 400}%
\special{pa 3733 420}%
\special{pa 3800 400}%
\special{fp}%
%
\special{pn 8}%
\special{pa 3800 200}%
\special{pa 4200 200}%
\special{pa 4200 600}%
\special{pa 3800 600}%
\special{pa 3800 200}%
\special{fp}%
%
\special{pn 8}%
\special{pa 4200 400}%
\special{pa 4600 400}%
\special{fp}%
%
\special{pn 8}%
\special{pa 4600 400}%
\special{pa 4600 800}%
\special{fp}%
\special{sh 1}%
\special{pa 4600 800}%
\special{pa 4620 733}%
\special{pa 4600 747}%
\special{pa 4580 733}%
\special{pa 4600 800}%
\special{fp}%
%
\special{pn 8}%
\special{pa 4400 800}%
\special{pa 4800 800}%
\special{pa 4800 1200}%
\special{pa 4400 1200}%
\special{pa 4400 800}%
\special{fp}%
%
\special{pn 8}%
\special{pa 4600 1200}%
\special{pa 4600 1600}%
\special{fp}%
%
\special{pn 8}%
\special{pa 4600 1600}%
\special{pa 4200 1600}%
\special{fp}%
\special{sh 1}%
\special{pa 4200 1600}%
\special{pa 4267 1620}%
\special{pa 4253 1600}%
\special{pa 4267 1580}%
\special{pa 4200 1600}%
\special{fp}%
%
\special{pn 8}%
\special{pa 4200 1400}%
\special{pa 3800 1400}%
\special{pa 3800 1800}%
\special{pa 4200 1800}%
\special{pa 4200 1400}%
\special{fp}%
%
\special{pn 8}%
\special{pa 3800 1600}%
\special{pa 3400 1600}%
\special{fp}%
\special{sh 1}%
\special{pa 3400 1600}%
\special{pa 3467 1620}%
\special{pa 3453 1600}%
\special{pa 3467 1580}%
\special{pa 3400 1600}%
\special{fp}%
%
\special{pn 8}%
\special{pa 3400 1400}%
\special{pa 2800 1400}%
\special{pa 2800 1800}%
\special{pa 3400 1800}%
\special{pa 3400 1400}%
\special{fp}%
%
\special{pn 8}%
\special{pa 2800 1600}%
\special{pa 2400 1600}%
\special{fp}%
\special{sh 1}%
\special{pa 2400 1600}%
\special{pa 2467 1620}%
\special{pa 2453 1600}%
\special{pa 2467 1580}%
\special{pa 2400 1600}%
\special{fp}%
%
\special{pn 8}%
\special{pa 2400 1400}%
\special{pa 2000 1400}%
\special{pa 2000 1800}%
\special{pa 2400 1800}%
\special{pa 2400 1400}%
\special{fp}%
%
\special{pn 8}%
\special{pa 2000 1600}%
\special{pa 1600 1600}%
\special{fp}%
\special{sh 1}%
\special{pa 1600 1600}%
\special{pa 1667 1620}%
\special{pa 1653 1600}%
\special{pa 1667 1580}%
\special{pa 1600 1600}%
\special{fp}%
%
\special{pn 8}%
\special{pa 1200 400}%
\special{pa 1800 400}%
\special{fp}%
%
\special{pn 8}%
\special{pa 1500 400}%
\special{pa 1500 800}%
\special{fp}%
\special{sh 1}%
\special{pa 1500 800}%
\special{pa 1520 733}%
\special{pa 1500 747}%
\special{pa 1480 733}%
\special{pa 1500 800}%
\special{fp}%
%
\special{pn 8}%
\special{pa 1200 800}%
\special{pa 1800 800}%
\special{pa 1800 1200}%
\special{pa 1200 1200}%
\special{pa 1200 800}%
\special{fp}%
%
\special{pn 8}%
\special{pa 1500 1200}%
\special{pa 1500 1500}%
\special{fp}%
\special{sh 1}%
\special{pa 1500 1500}%
\special{pa 1520 1433}%
\special{pa 1500 1447}%
\special{pa 1480 1433}%
\special{pa 1500 1500}%
\special{fp}%
%
\special{pn 8}%
\special{ar 1500 1600 100 100  0.0000000 6.2831853}%
%
\special{pn 8}%
\special{pa 1400 1600}%
\special{pa 400 1600}%
\special{fp}%
\special{sh 1}%
\special{pa 400 1600}%
\special{pa 467 1620}%
\special{pa 453 1600}%
\special{pa 467 1580}%
\special{pa 400 1600}%
\special{fp}%
\put(10.0000,-4.0000){\makebox(0,0){$F_1(s)$}}%
\put(31.0000,-4.0000){\makebox(0,0){$\samp{h/M_2}$}}%
\put(31.0000,-16.0000){\makebox(0,0){$\hold{h/M_1}$}}%
\put(15.0000,-10.0000){\makebox(0,0){$e^{-mhs}$}}%
\put(40.0000,-4.0000){\makebox(0,0){$\us{M_1}$}}%
\put(46.0000,-10.0000){\makebox(0,0){$L(z)$}}%
\put(40.0000,-16.0000){\makebox(0,0){$\ds{M_2}$}}%
\put(22.0000,-16.0000){\makebox(0,0){$P(s)$}}%
\put(15.7000,-14.9000){\makebox(0,0)[lb]{$+$}}%
\put(16.8000,-16.8000){\makebox(0,0)[lt]{$-$}}%
\put(4.2000,-3.2000){\makebox(0,0)[lb]{$w_c$}}%
\put(4.2000,-15.2000){\makebox(0,0)[lb]{$e_c$}}%
\end{picture}%
    \end{center}
    \caption{Error system of sampling rate converter}
    \label{fig:multirate_example_src_error}
\end{figure}
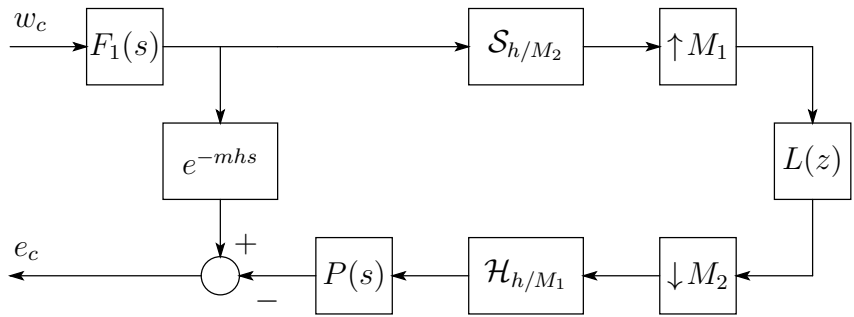
\begin{figure}[t]
    \begin{center}
	\includegraphics[width=0.7\linewidth]{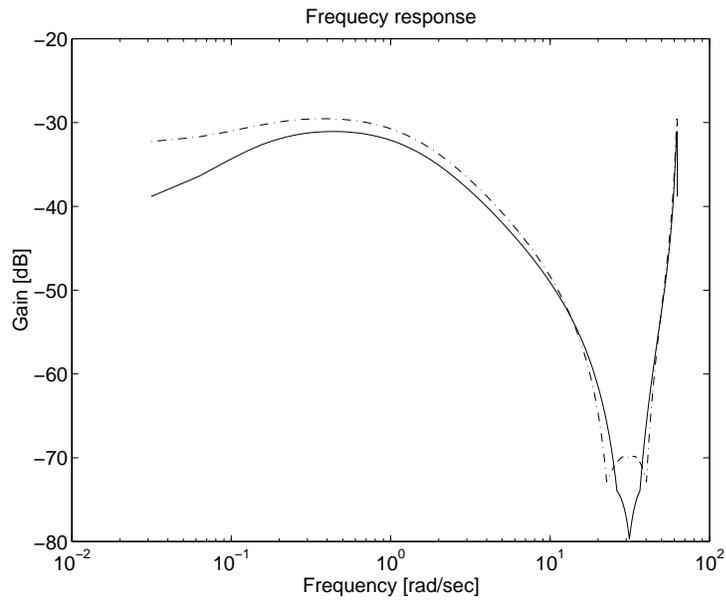}
    \end{center}
    \caption{Frequency responses of error system: sampled-data design (solid) and  equiripple design (dash)}
    \label{fig:multirate_examples_src_Tew}
\end{figure}

It is interesting to observe that the slow decay need not yield
an inferior design.  
To see this, let us see the time responses against a rectangular wave in 
Figure \ref{fig:multirate_examples_src_t_y} and Figure \ref{fig:multirate_examples_src_t_e}.
\begin{figure}[t]
    \begin{center}
	\includegraphics[width=0.7\linewidth]{multirate_examples_src_time_resp_y.eps}
    \end{center}
    \caption{Time response (sampled-data design)}
    \label{fig:multirate_examples_src_t_y}
\end{figure}
\begin{figure}[t]
    \begin{center}
	\includegraphics[width=0.7\linewidth]{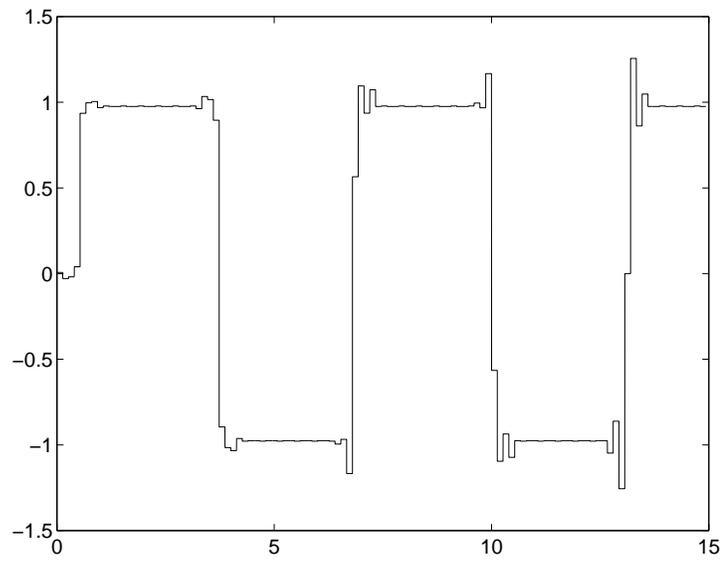}
    \end{center}
    \caption{Time response (equiripple design)}
    \label{fig:multirate_examples_src_t_e}
\end{figure}
The response with the equiripple filter shows a large amount of ringing, whereas
that with the filter by the sampled-data design has much less peak around the edge.  
Note also that $L(z)$ is nearly linear phase up to a certain frequency
as shown in Figure~\ref{fig:multirate_examples_src_phase_sd}.
\begin{figure}[t]
    \begin{center}
	\includegraphics[width=0.7\linewidth]{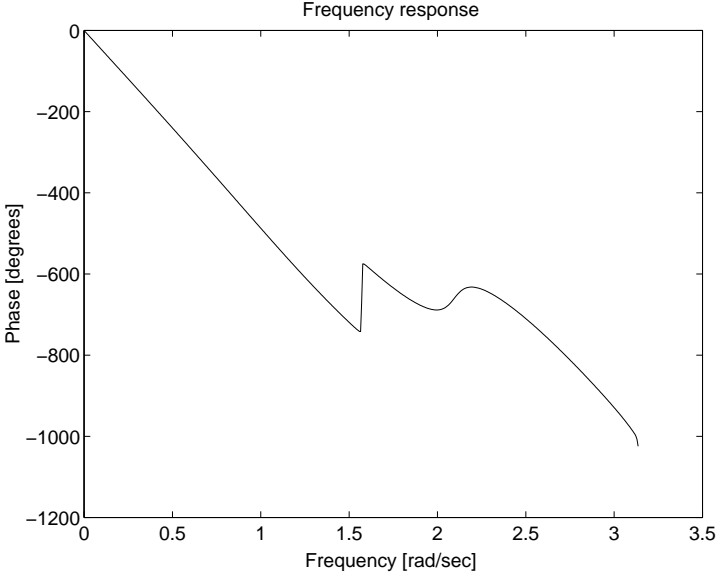}
    \end{center}
    \caption{Phase plot of $L(z)$}
    \label{fig:multirate_examples_src_phase_sd}
\end{figure}

\section{Conclusion}
Conventional theories of digital signal processing assert
that the ideal filter is the best for interpolation or decimation.
However, as we have shown above,
a sharp filter characteristic 
approximating the ideal filter does not necessarily behave well.
In particular, such a filter often exhibits a large amount of ringing
as illustrated in the previous section.
The ringing is due to the Gibbs phenomenon,
which is caused by the sharp characteristic of the filter.

On the other hand, our filter shows a slow decay.
The reason is that 
due to the underlying analog
characteristic (i.e., $F(s)$), there is an important information content
beyond the Nyquist frequency, and such a slow decay is necessary to
recover such information.  

Moreover, conventional design requires us to give a filter order in advance.
The higher the order is, the closer to the ideal characteristic the filter is,
and hence filters of a very high order are often used.

In contrast to the conventional design, our method is free from 
the choice of filter order.
Namely, the order of designed filter depends on the order of
filter $F(s)$ and $P(s)$, delay step $m$ and upsampling (or downsampling) ratio $M$.
As indicated in the previous section, in spite of the fact that the order of 
the obtained filter is not very high,
the response is better than high-order conventional filters.
This fact cannot be recognized without considering the analog performance.
\chapter{Application to Communication Systems}
\label{ch:comm}
\section{Introduction}
The importance of digital communication is ever increasing owing to
the rapid growth of the Internet, cellular phones, and so on \cite{Pro}.
In digital communication, especially in pulse amplitude modulation (PAM) or 
in pulse code modulation (PCM),
the analog signal to be transmitted is sampled and becomes a discrete-time signal.
In the conventional method, the analog characteristics of the signals are not considered,
and hence the total system is regarded as a discrete-time system.
Namely, one usually assumes that the original analog signal is band-limited
up to the Nyquist frequency. 


In \cite{ErdHasKai2000}, a discrete-time $H^\infty$ design of 
receiving filters or equalizers is introduced.
This design is based on the assumption of full band-limitation,
but in reality no signals are fully band-limited.
Moreover, it is difficult to attenuate both the signal reconstruction error 
and the distortion caused by the channel only by equalizing after receiving.
Therefore an enhancer (or transmitting filter) that amplifies the signal before transmission
is often attached in order to increase signal-to-noise ratio \cite{Pro}.

In this chapter,  we propose a new design of receiving/transmitting filters
by using the sampled-data control theory.
Moreover, we introduce the $H^\infty$ method that takes account of a
tradeoff between the quality of signal reconstruction and the cost (i.e., 
the amount of energy of transmitting signals) with an appropriate weighting function.
Design examples are presented to illustrate the effectiveness of the proposed method.

\section{Digital communication systems}
Figure~\ref{fig:comm_digital_com} illustrates
a typical digital communication system.
\begin{figure}[t]
\begin{center}
\unitlength 0.1in
\begin{picture}(48.51,13.86)(4.35,-17.86)
%
\special{pn 8}%
\special{pa 435 400}%
\special{pa 1128 400}%
\special{pa 1128 746}%
\special{pa 435 746}%
\special{pa 435 400}%
\special{fp}%
%
\special{pn 8}%
\special{pa 1475 746}%
\special{pa 2167 746}%
\special{pa 2167 400}%
\special{pa 1475 400}%
\special{pa 1475 746}%
\special{fp}%
%
\special{pn 8}%
\special{pa 2514 400}%
\special{pa 3207 400}%
\special{pa 3207 746}%
\special{pa 2514 746}%
\special{pa 2514 400}%
\special{fp}%
%
\special{pn 8}%
\special{pa 3553 746}%
\special{pa 4247 746}%
\special{pa 4247 400}%
\special{pa 3553 400}%
\special{pa 3553 746}%
\special{fp}%
%
\special{pn 8}%
\special{pa 4593 920}%
\special{pa 5286 920}%
\special{pa 5286 1266}%
\special{pa 4593 1266}%
\special{pa 4593 920}%
\special{fp}%
%
\special{pn 8}%
\special{pa 4247 1440}%
\special{pa 3553 1440}%
\special{pa 3553 1786}%
\special{pa 4247 1786}%
\special{pa 4247 1440}%
\special{fp}%
%
\special{pn 8}%
\special{pa 3207 1440}%
\special{pa 2514 1440}%
\special{pa 2514 1786}%
\special{pa 3207 1786}%
\special{pa 3207 1440}%
\special{fp}%
%
\special{pn 8}%
\special{pa 2167 1440}%
\special{pa 1475 1440}%
\special{pa 1475 1786}%
\special{pa 2167 1786}%
\special{pa 2167 1440}%
\special{fp}%
%
\special{pn 8}%
\special{pa 1128 1440}%
\special{pa 435 1440}%
\special{pa 435 1786}%
\special{pa 1128 1786}%
\special{pa 1128 1440}%
\special{fp}%
%
\special{pn 8}%
\special{pa 1128 573}%
\special{pa 1475 573}%
\special{fp}%
\special{sh 1}%
\special{pa 1475 573}%
\special{pa 1408 553}%
\special{pa 1422 573}%
\special{pa 1408 593}%
\special{pa 1475 573}%
\special{fp}%
%
\special{pn 8}%
\special{pa 2167 573}%
\special{pa 2514 573}%
\special{fp}%
\special{sh 1}%
\special{pa 2514 573}%
\special{pa 2447 553}%
\special{pa 2461 573}%
\special{pa 2447 593}%
\special{pa 2514 573}%
\special{fp}%
%
\special{pn 8}%
\special{pa 3207 573}%
\special{pa 3553 573}%
\special{fp}%
\special{sh 1}%
\special{pa 3553 573}%
\special{pa 3486 553}%
\special{pa 3500 573}%
\special{pa 3486 593}%
\special{pa 3553 573}%
\special{fp}%
%
\special{pn 8}%
\special{pa 4247 573}%
\special{pa 4939 573}%
\special{fp}%
%
\special{pn 8}%
\special{pa 4939 573}%
\special{pa 4939 920}%
\special{fp}%
\special{sh 1}%
\special{pa 4939 920}%
\special{pa 4959 853}%
\special{pa 4939 867}%
\special{pa 4919 853}%
\special{pa 4939 920}%
\special{fp}%
%
\special{pn 8}%
\special{pa 4939 1266}%
\special{pa 4939 1613}%
\special{fp}%
%
\special{pn 8}%
\special{pa 4939 1613}%
\special{pa 4247 1613}%
\special{fp}%
\special{sh 1}%
\special{pa 4247 1613}%
\special{pa 4314 1633}%
\special{pa 4300 1613}%
\special{pa 4314 1593}%
\special{pa 4247 1613}%
\special{fp}%
%
\special{pn 8}%
\special{pa 3553 1613}%
\special{pa 3207 1613}%
\special{fp}%
\special{sh 1}%
\special{pa 3207 1613}%
\special{pa 3274 1633}%
\special{pa 3260 1613}%
\special{pa 3274 1593}%
\special{pa 3207 1613}%
\special{fp}%
%
\special{pn 8}%
\special{pa 2514 1613}%
\special{pa 2167 1613}%
\special{fp}%
\special{sh 1}%
\special{pa 2167 1613}%
\special{pa 2234 1633}%
\special{pa 2220 1613}%
\special{pa 2234 1593}%
\special{pa 2167 1613}%
\special{fp}%
%
\special{pn 8}%
\special{pa 1475 1613}%
\special{pa 1128 1613}%
\special{fp}%
\special{sh 1}%
\special{pa 1128 1613}%
\special{pa 1195 1633}%
\special{pa 1181 1613}%
\special{pa 1195 1593}%
\special{pa 1128 1613}%
\special{fp}%
\put(7.8100,-5.7300){\makebox(0,0){Source}}%
\put(18.2100,-5.7300){\makebox(0,0){Sampler}}%
\put(28.6100,-5.7300){\makebox(0,0){Quantizer}}%
\put(39.0000,-5.7300){\makebox(0,0){Encoder}}%
\put(39.0000,-16.1300){\makebox(0,0){Decoder}}%
\put(28.6100,-16.1300){\makebox(0,0){Filter}}%
\put(18.2100,-16.1300){\makebox(0,0){Hold}}%
\put(7.8100,-16.1300){\makebox(0,0){Output}}%
\put(49.3900,-10.9300){\makebox(0,0){Channel}}%
\end{picture}%
\end{center}
\caption{Digital communication system}
\label{fig:comm_digital_com}
\end{figure}
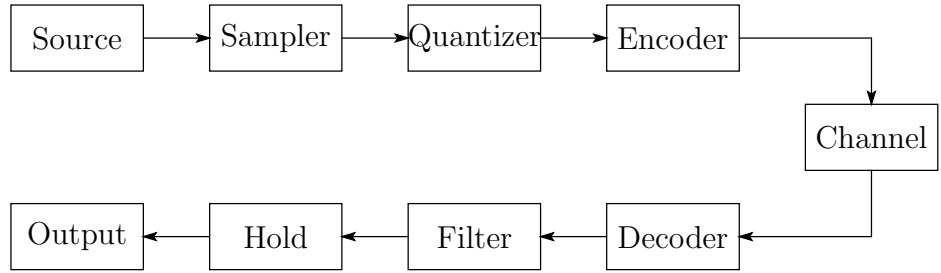
In this figure, the source is assumed to be an analog signal (e.g., audio, speech or image).
The analog signal will be discretized with an A/D converter,
which contains {\em sampler}, {\em quantizer} and {\em encoder} (or {\em coder}).

Sampling is a discretization in time, while quantization is that in amplitude.

Encoder converts the sampled and quantized signal to a binary valued signal.
It often contains {\em compressing} and {\em filtering} for
efficiency of signal transmission.
For example, {\em subband coding} is often used for efficient
communication.
Figure \ref{fig:comm_subband} illustrates a simple subband coding system.
\begin{figure}[t]
\begin{center}
\unitlength 0.1in
\begin{picture}(32.00,12.30)(6.00,-14.00)
%
\special{pn 8}%
\special{pa 600 800}%
\special{pa 1000 800}%
\special{fp}%
%
\special{pn 8}%
\special{pa 1000 400}%
\special{pa 1000 1200}%
\special{fp}%
%
\special{pn 8}%
\special{pa 1000 400}%
\special{pa 1400 400}%
\special{fp}%
\special{sh 1}%
\special{pa 1400 400}%
\special{pa 1333 380}%
\special{pa 1347 400}%
\special{pa 1333 420}%
\special{pa 1400 400}%
\special{fp}%
%
\special{pn 8}%
\special{pa 1400 200}%
\special{pa 1800 200}%
\special{pa 1800 600}%
\special{pa 1400 600}%
\special{pa 1400 200}%
\special{fp}%
%
\special{pn 8}%
\special{pa 1800 400}%
\special{pa 2200 400}%
\special{fp}%
\special{sh 1}%
\special{pa 2200 400}%
\special{pa 2133 380}%
\special{pa 2147 400}%
\special{pa 2133 420}%
\special{pa 2200 400}%
\special{fp}%
%
\special{pn 8}%
\special{pa 2200 200}%
\special{pa 2600 200}%
\special{pa 2600 600}%
\special{pa 2200 600}%
\special{pa 2200 200}%
\special{fp}%
%
\special{pn 8}%
\special{pa 1000 1200}%
\special{pa 1400 1200}%
\special{fp}%
\special{sh 1}%
\special{pa 1400 1200}%
\special{pa 1333 1180}%
\special{pa 1347 1200}%
\special{pa 1333 1220}%
\special{pa 1400 1200}%
\special{fp}%
%
\special{pn 8}%
\special{pa 1400 1000}%
\special{pa 1800 1000}%
\special{pa 1800 1400}%
\special{pa 1400 1400}%
\special{pa 1400 1000}%
\special{fp}%
%
\special{pn 8}%
\special{pa 1800 1200}%
\special{pa 2200 1200}%
\special{fp}%
\special{sh 1}%
\special{pa 2200 1200}%
\special{pa 2133 1180}%
\special{pa 2147 1200}%
\special{pa 2133 1220}%
\special{pa 2200 1200}%
\special{fp}%
%
\special{pn 8}%
\special{pa 2200 1000}%
\special{pa 2600 1000}%
\special{pa 2600 1400}%
\special{pa 2200 1400}%
\special{pa 2200 1000}%
\special{fp}%
%
\special{pn 8}%
\special{pa 2600 400}%
\special{pa 3000 400}%
\special{fp}%
\special{sh 1}%
\special{pa 3000 400}%
\special{pa 2933 380}%
\special{pa 2947 400}%
\special{pa 2933 420}%
\special{pa 3000 400}%
\special{fp}%
%
\special{pn 8}%
\special{pa 2600 1200}%
\special{pa 3000 1200}%
\special{fp}%
\special{sh 1}%
\special{pa 3000 1200}%
\special{pa 2933 1180}%
\special{pa 2947 1200}%
\special{pa 2933 1220}%
\special{pa 3000 1200}%
\special{fp}%
%
\special{pn 8}%
\special{pa 3000 200}%
\special{pa 3400 200}%
\special{pa 3400 1400}%
\special{pa 3000 1400}%
\special{pa 3000 200}%
\special{fp}%
%
\special{pn 8}%
\special{pa 3400 400}%
\special{pa 3800 400}%
\special{fp}%
\special{sh 1}%
\special{pa 3800 400}%
\special{pa 3733 380}%
\special{pa 3747 400}%
\special{pa 3733 420}%
\special{pa 3800 400}%
\special{fp}%
\put(32.0000,-8.0000){\makebox(0,0){$Q$}}%
\put(24.0000,-12.0000){\makebox(0,0){$\ds{2}$}}%
\put(24.0000,-4.0000){\makebox(0,0){$\ds{2}$}}%
\put(16.0000,-4.0000){\makebox(0,0){$H_1(z)$}}%
\put(16.0000,-12.0000){\makebox(0,0){$H_2(z)$}}%
\put(36.0000,-3.4000){\makebox(0,0)[lb]{$v_1$}}%
\put(6.6000,-7.4000){\makebox(0,0)[lb]{$y$}}%
\put(26.6000,-3.4000){\makebox(0,0)[lb]{$y_1$}}%
\put(26.6000,-11.4000){\makebox(0,0)[lb]{$y_2$}}%
%
\special{pn 8}%
\special{pa 3400 1200}%
\special{pa 3800 1200}%
\special{fp}%
\special{sh 1}%
\special{pa 3800 1200}%
\special{pa 3733 1180}%
\special{pa 3747 1200}%
\special{pa 3733 1220}%
\special{pa 3800 1200}%
\special{fp}%
\put(36.0000,-11.4000){\makebox(0,0)[lb]{$v_2$}}%
\end{picture}%
\end{center}
\caption{Subband coding}
\label{fig:comm_subband}
\end{figure}
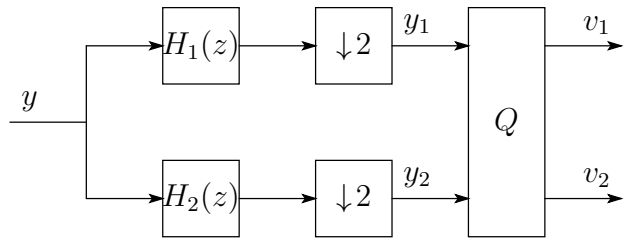
The input signal $y$ is divided into two subband signals\footnote{In real applications, the signal will be subdivided
into 16 or 32 signals.}.
By taking the filters $H_1(z)$ and $H_2(z)$ appropriately,
we can divide the frequency into two subbands,
for example, $y_1$ is a signal with low frequency and $y_2$
with high frequency.
If we need not reconstruct the information in the high frequency range precisely,
we can compress the signal by applying fewer bits to the high-frequency signal 
(i.e. the signal $y_2$ in Figure \ref{fig:comm_subband}).
For simplicity, we apply infinitely many bits to $y_1$ and 
no bit to $y_2$, that is, we assume $Q$ in Figure \ref{fig:comm_subband} as
\begin{equation}
\label{eq:subband_encode}
Q=\left[\begin{array}{cc}1&0\\0&0\end{array}\right].
\end{equation}
This means that we transmit only the signal $v_1$.

The signal then goes through the communication channel.
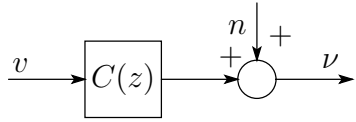
\begin{figure}[t]
\begin{center}
\unitlength 0.1in
\begin{picture}(18.00,6.10)(4.00,-6.00)
%
\special{pn 8}%
\special{pa 400 400}%
\special{pa 800 400}%
\special{fp}%
\special{sh 1}%
\special{pa 800 400}%
\special{pa 733 380}%
\special{pa 747 400}%
\special{pa 733 420}%
\special{pa 800 400}%
\special{fp}%
%
\special{pn 8}%
\special{pa 800 200}%
\special{pa 1200 200}%
\special{pa 1200 600}%
\special{pa 800 600}%
\special{pa 800 200}%
\special{fp}%
%
\special{pn 8}%
\special{pa 1200 400}%
\special{pa 1600 400}%
\special{fp}%
\special{sh 1}%
\special{pa 1600 400}%
\special{pa 1533 380}%
\special{pa 1547 400}%
\special{pa 1533 420}%
\special{pa 1600 400}%
\special{fp}%
%
\special{pn 8}%
\special{ar 1700 400 100 100  0.0000000 6.2831853}%
%
\special{pn 8}%
\special{pa 1800 400}%
\special{pa 2200 400}%
\special{fp}%
\special{sh 1}%
\special{pa 2200 400}%
\special{pa 2133 380}%
\special{pa 2147 400}%
\special{pa 2133 420}%
\special{pa 2200 400}%
\special{fp}%
%
\special{pn 8}%
\special{pa 1700 0}%
\special{pa 1700 300}%
\special{fp}%
\special{sh 1}%
\special{pa 1700 300}%
\special{pa 1720 233}%
\special{pa 1700 247}%
\special{pa 1680 233}%
\special{pa 1700 300}%
\special{fp}%
\put(10.0000,-4.0000){\makebox(0,0){$C(z)$}}%
\put(15.0000,-3.4000){\makebox(0,0)[lb]{$+$}}%
\put(17.6000,-2.3000){\makebox(0,0)[lb]{$+$}}%
\put(15.5000,-1.6000){\makebox(0,0)[lb]{$n$}}%
\put(4.2000,-3.6000){\makebox(0,0)[lb]{$v$}}%
\put(20.4000,-3.4000){\makebox(0,0)[lb]{$\nu$}}%
\end{picture}%
\end{center}
\caption{Communication channel model}
\label{fig:comm_channel}
\end{figure}
Since modeling for the channel is an important and difficult issue,
we assume a simple model: a linear time-invariant system and additive noise
as shown in Figure \ref{fig:comm_channel}.
Note that communication channels are generally time-varying and nonlinear,
in particular, in wireless communication, channel modeling is very complicated
\cite{Pro, ProSal}.

Then, the signal distorted by the channel enters the receiver.
The receiver decodes the binary signal, and then filters the decoded signal.
The filter reduces distortion of the received signal and
expands the compressed signal.
The process is shown in Figure \ref{fig:comm_subband_decode}.
Since we take the encoding (\ref{eq:subband_encode}),
we similarly assume $Q'$ to be the same as $Q$, that is, 
\begin{equation*}
Q'=\left[\begin{array}{cc}1&0\\0&0\end{array}\right].
\end{equation*}
\begin{figure}[t]
\begin{center}
\unitlength 0.1in
\begin{picture}(33.00,12.30)(2.00,-14.00)
%
\special{pn 8}%
\special{pa 1000 400}%
\special{pa 1400 400}%
\special{fp}%
\special{sh 1}%
\special{pa 1400 400}%
\special{pa 1333 380}%
\special{pa 1347 400}%
\special{pa 1333 420}%
\special{pa 1400 400}%
\special{fp}%
%
\special{pn 8}%
\special{pa 1400 200}%
\special{pa 1800 200}%
\special{pa 1800 600}%
\special{pa 1400 600}%
\special{pa 1400 200}%
\special{fp}%
%
\special{pn 8}%
\special{pa 1800 400}%
\special{pa 2200 400}%
\special{fp}%
\special{sh 1}%
\special{pa 2200 400}%
\special{pa 2133 380}%
\special{pa 2147 400}%
\special{pa 2133 420}%
\special{pa 2200 400}%
\special{fp}%
%
\special{pn 8}%
\special{pa 2200 200}%
\special{pa 2600 200}%
\special{pa 2600 600}%
\special{pa 2200 600}%
\special{pa 2200 200}%
\special{fp}%
%
\special{pn 8}%
\special{pa 1000 1200}%
\special{pa 1400 1200}%
\special{fp}%
\special{sh 1}%
\special{pa 1400 1200}%
\special{pa 1333 1180}%
\special{pa 1347 1200}%
\special{pa 1333 1220}%
\special{pa 1400 1200}%
\special{fp}%
%
\special{pn 8}%
\special{pa 1400 1000}%
\special{pa 1800 1000}%
\special{pa 1800 1400}%
\special{pa 1400 1400}%
\special{pa 1400 1000}%
\special{fp}%
%
\special{pn 8}%
\special{pa 1800 1200}%
\special{pa 2200 1200}%
\special{fp}%
\special{sh 1}%
\special{pa 2200 1200}%
\special{pa 2133 1180}%
\special{pa 2147 1200}%
\special{pa 2133 1220}%
\special{pa 2200 1200}%
\special{fp}%
%
\special{pn 8}%
\special{pa 2200 1000}%
\special{pa 2600 1000}%
\special{pa 2600 1400}%
\special{pa 2200 1400}%
\special{pa 2200 1000}%
\special{fp}%
\put(24.0000,-12.0000){\makebox(0,0){$F_2(z)$}}%
\put(24.0000,-4.0000){\makebox(0,0){$F_1(z)$}}%
\put(16.0000,-4.0000){\makebox(0,0){$\us{2}$}}%
\put(16.0000,-12.0000){\makebox(0,0){$\us{2}$}}%
\put(26.6000,-3.4000){\makebox(0,0)[lb]{$u_1$}}%
\put(26.6000,-11.4000){\makebox(0,0)[lb]{$u_2$}}%
%
\special{pn 8}%
\special{pa 2600 400}%
\special{pa 3000 400}%
\special{fp}%
%
\special{pn 8}%
\special{ar 3000 800 100 100  0.0000000 6.2831853}%
%
\special{pn 8}%
\special{pa 3000 400}%
\special{pa 3000 700}%
\special{fp}%
\special{sh 1}%
\special{pa 3000 700}%
\special{pa 3020 633}%
\special{pa 3000 647}%
\special{pa 2980 633}%
\special{pa 3000 700}%
\special{fp}%
%
\special{pn 8}%
\special{pa 3000 1200}%
\special{pa 3000 900}%
\special{fp}%
\special{sh 1}%
\special{pa 3000 900}%
\special{pa 2980 967}%
\special{pa 3000 953}%
\special{pa 3020 967}%
\special{pa 3000 900}%
\special{fp}%
%
\special{pn 8}%
\special{pa 2600 1200}%
\special{pa 3000 1200}%
\special{fp}%
%
\special{pn 8}%
\special{pa 3100 800}%
\special{pa 3500 800}%
\special{fp}%
\special{sh 1}%
\special{pa 3500 800}%
\special{pa 3433 780}%
\special{pa 3447 800}%
\special{pa 3433 820}%
\special{pa 3500 800}%
\special{fp}%
%
\special{pn 8}%
\special{pa 1000 200}%
\special{pa 600 200}%
\special{pa 600 1400}%
\special{pa 1000 1400}%
\special{pa 1000 200}%
\special{fp}%
%
\special{pn 8}%
\special{pa 200 400}%
\special{pa 600 400}%
\special{fp}%
\special{sh 1}%
\special{pa 600 400}%
\special{pa 533 380}%
\special{pa 547 400}%
\special{pa 533 420}%
\special{pa 600 400}%
\special{fp}%
%
\special{pn 8}%
\special{pa 200 1200}%
\special{pa 600 1200}%
\special{fp}%
\special{sh 1}%
\special{pa 600 1200}%
\special{pa 533 1180}%
\special{pa 547 1200}%
\special{pa 533 1220}%
\special{pa 600 1200}%
\special{fp}%
\put(8.0000,-8.0000){\makebox(0,0){$Q'$}}%
\put(32.6000,-7.4000){\makebox(0,0)[lb]{$u$}}%
\put(2.4000,-3.4000){\makebox(0,0)[lb]{$\nu_1$}}%
\put(2.4000,-11.4000){\makebox(0,0)[lb]{$\nu_2$}}%
\put(30.4000,-6.6000){\makebox(0,0)[lb]{$+$}}%
\put(30.4000,-9.4000){\makebox(0,0)[lt]{$+$}}%
\end{picture}%
\end{center}
\caption{Decoding}
\label{fig:comm_subband_decode}
\end{figure}

Then the decoded and filtered signal is converted to an analog signal
by a hold device, and finally we obtain a received signal.

The conventional design is executed in the discrete-time domain
by assuming that the original analog signal contains no frequency beyond the Nyquist frequency.
To satisfy this assumption, we often use a low-pass filter,
which deteriorates the quality of communication.

In contrast to this, we introduce the sampled-data $H^\infty$ control theory
to take the analog performance into account.
We will show that our design is superior to the conventional design
in the following sections.

\section{Design problem formulation}
We consider a digital communication system as shown in Figure \ref{fig:comm_sys}.
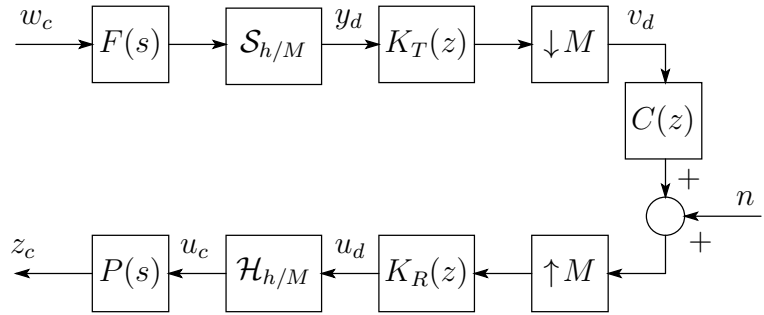
\begin{figure}[t]
    \begin{center}
\unitlength 0.1in
\begin{picture}(39.30,16.50)(0.70,-18.00)
%
\special{pn 8}%
\special{pa 100 400}%
\special{pa 500 400}%
\special{fp}%
\special{sh 1}%
\special{pa 500 400}%
\special{pa 433 380}%
\special{pa 447 400}%
\special{pa 433 420}%
\special{pa 500 400}%
\special{fp}%
%
\special{pn 8}%
\special{pa 500 200}%
\special{pa 900 200}%
\special{pa 900 600}%
\special{pa 500 600}%
\special{pa 500 200}%
\special{fp}%
%
\special{pn 8}%
\special{pa 1200 210}%
\special{pa 1700 210}%
\special{pa 1700 610}%
\special{pa 1200 610}%
\special{pa 1200 210}%
\special{fp}%
%
\special{pn 8}%
\special{pa 2800 200}%
\special{pa 3200 200}%
\special{pa 3200 600}%
\special{pa 2800 600}%
\special{pa 2800 200}%
\special{fp}%
%
\special{pn 8}%
\special{pa 3290 600}%
\special{pa 3690 600}%
\special{pa 3690 1000}%
\special{pa 3290 1000}%
\special{pa 3290 600}%
\special{fp}%
%
\special{pn 8}%
\special{pa 3200 1400}%
\special{pa 2800 1400}%
\special{pa 2800 1800}%
\special{pa 3200 1800}%
\special{pa 3200 1400}%
\special{fp}%
%
\special{pn 8}%
\special{pa 1700 1400}%
\special{pa 1200 1400}%
\special{pa 1200 1800}%
\special{pa 1700 1800}%
\special{pa 1700 1400}%
\special{fp}%
%
\special{pn 8}%
\special{pa 900 1400}%
\special{pa 500 1400}%
\special{pa 500 1800}%
\special{pa 900 1800}%
\special{pa 900 1400}%
\special{fp}%
\put(7.0000,-4.0000){\makebox(0,0){$F(s)$}}%
\put(14.5000,-4.1000){\makebox(0,0){$\samp{h/M}$}}%
\put(14.5000,-16.0000){\makebox(0,0){$\hold{h/M}$}}%
\put(30.0000,-4.0000){\makebox(0,0){$\ds{M}$}}%
\put(34.9000,-8.0000){\makebox(0,0){$C(z)$}}%
\put(30.0000,-16.0000){\makebox(0,0){$\us{M}$}}%
\put(7.0000,-16.0000){\makebox(0,0){$P(s)$}}%
\put(1.2000,-3.2000){\makebox(0,0)[lb]{$w_c$}}%
%
\special{pn 8}%
\special{pa 1700 400}%
\special{pa 2000 400}%
\special{fp}%
\special{sh 1}%
\special{pa 2000 400}%
\special{pa 1933 380}%
\special{pa 1947 400}%
\special{pa 1933 420}%
\special{pa 2000 400}%
\special{fp}%
%
\special{pn 8}%
\special{pa 2000 200}%
\special{pa 2500 200}%
\special{pa 2500 600}%
\special{pa 2000 600}%
\special{pa 2000 200}%
\special{fp}%
%
\special{pn 8}%
\special{pa 2500 400}%
\special{pa 2800 400}%
\special{fp}%
\special{sh 1}%
\special{pa 2800 400}%
\special{pa 2733 380}%
\special{pa 2747 400}%
\special{pa 2733 420}%
\special{pa 2800 400}%
\special{fp}%
%
\special{pn 8}%
\special{pa 1200 1600}%
\special{pa 900 1600}%
\special{fp}%
\special{sh 1}%
\special{pa 900 1600}%
\special{pa 967 1620}%
\special{pa 953 1600}%
\special{pa 967 1580}%
\special{pa 900 1600}%
\special{fp}%
%
\special{pn 8}%
\special{pa 2000 1600}%
\special{pa 1700 1600}%
\special{fp}%
\special{sh 1}%
\special{pa 1700 1600}%
\special{pa 1767 1620}%
\special{pa 1753 1600}%
\special{pa 1767 1580}%
\special{pa 1700 1600}%
\special{fp}%
%
\special{pn 8}%
\special{pa 2000 1400}%
\special{pa 2500 1400}%
\special{pa 2500 1800}%
\special{pa 2000 1800}%
\special{pa 2000 1400}%
\special{fp}%
%
\special{pn 8}%
\special{pa 2800 1600}%
\special{pa 2500 1600}%
\special{fp}%
\special{sh 1}%
\special{pa 2500 1600}%
\special{pa 2567 1620}%
\special{pa 2553 1600}%
\special{pa 2567 1580}%
\special{pa 2500 1600}%
\special{fp}%
%
\special{pn 8}%
\special{pa 3500 400}%
\special{pa 3500 600}%
\special{fp}%
\special{sh 1}%
\special{pa 3500 600}%
\special{pa 3520 533}%
\special{pa 3500 547}%
\special{pa 3480 533}%
\special{pa 3500 600}%
\special{fp}%
%
\special{pn 8}%
\special{pa 3500 1000}%
\special{pa 3500 1200}%
\special{fp}%
\special{sh 1}%
\special{pa 3500 1200}%
\special{pa 3520 1133}%
\special{pa 3500 1147}%
\special{pa 3480 1133}%
\special{pa 3500 1200}%
\special{fp}%
%
\special{pn 8}%
\special{ar 3500 1300 100 100  0.0000000 6.2831853}%
%
\special{pn 8}%
\special{pa 3500 1400}%
\special{pa 3500 1600}%
\special{fp}%
%
\special{pn 8}%
\special{pa 4000 1300}%
\special{pa 3600 1300}%
\special{fp}%
\special{sh 1}%
\special{pa 3600 1300}%
\special{pa 3667 1320}%
\special{pa 3653 1300}%
\special{pa 3667 1280}%
\special{pa 3600 1300}%
\special{fp}%
%
\special{pn 8}%
\special{pa 3200 400}%
\special{pa 3500 400}%
\special{fp}%
%
\special{pn 8}%
\special{pa 3500 1600}%
\special{pa 3200 1600}%
\special{fp}%
\special{sh 1}%
\special{pa 3200 1600}%
\special{pa 3267 1620}%
\special{pa 3253 1600}%
\special{pa 3267 1580}%
\special{pa 3200 1600}%
\special{fp}%
\put(22.5000,-4.0000){\makebox(0,0){$K_T(z)$}}%
\put(22.5000,-16.0000){\makebox(0,0){$K_R(z)$}}%
\put(35.6000,-11.6000){\makebox(0,0)[lb]{$+$}}%
\put(36.2000,-13.8000){\makebox(0,0)[lt]{$+$}}%
\put(38.7000,-12.4000){\makebox(0,0)[lb]{$n$}}%
\put(17.7000,-3.2000){\makebox(0,0)[lb]{$y_d$}}%
\put(33.0000,-3.2000){\makebox(0,0)[lb]{$v_d$}}%
\put(17.7000,-15.2000){\makebox(0,0)[lb]{$u_d$}}%
\put(9.6000,-15.2000){\makebox(0,0)[lb]{$u_c$}}%
\put(0.7000,-15.2000){\makebox(0,0)[lb]{$z_c$}}%
%
\special{pn 8}%
\special{pa 500 1600}%
\special{pa 100 1600}%
\special{fp}%
\special{sh 1}%
\special{pa 100 1600}%
\special{pa 167 1620}%
\special{pa 153 1600}%
\special{pa 167 1580}%
\special{pa 100 1600}%
\special{fp}%
%
\special{pn 8}%
\special{pa 900 400}%
\special{pa 1200 400}%
\special{fp}%
\special{sh 1}%
\special{pa 1200 400}%
\special{pa 1133 380}%
\special{pa 1147 400}%
\special{pa 1133 420}%
\special{pa 1200 400}%
\special{fp}%
\end{picture}%
    \end{center}
    \caption{Digital communication system}
    \label{fig:comm_sys}
\end{figure}
The incoming signal $w_c\in L^2[0,\infty)$ goes through an
analog low-pass filter $F_c$ and becomes $y_c$ that is nearly (but not
entirely) band limited.
The filter $F_c$ governs the frequency-domain characteristic of the
analog signal $y_c$\footnote{In the conventional design, $F_c$ is considered
to be an ideal filter that has the cut-off frequency up to the Nyquist frequency.}.
The signal $y_c$ is then sampled by the sampler $\samp{h/M}$ to become a discrete-time
signal (i.e., PAM signal) $y_d$ with sampling period $h/M$.
Then the signal is compressed by the downsampler with the factor $M$,
and becomes a discrete-time signal with sampling period $h$.
The downsampled signal is then shaped or enhanced by the transmitting digital filter $K_T$ to the signal
$v_d$ to be transmitted to a communication channel.

The transmitted signal $v_d$ is corrupted by the communication channel
$C$ and the additive noise $n_d$. In PCM communication, $n_d$ is also
considered as the noise generated by quantization and coding errors.
The received signal goes through the upsampler and
receiving digital filter $K_R$ that attempts to attenuate the imaging components caused by the upsampler
and the distortion by the channel.
Then the signal becomes an analog signal $u_c$ by the hold device $\hold{h/M}$ with sampling period
$h/M$ and this analog signal is smoothed by an analog low-pass filter $P_c$.
Finally we have the output signal $z_c$.

Our objective is to reconstruct the original analog signal $y_c$
by the transmitting filter $K_T$ and the receiving filter $K_R$ against 
data compression effect caused by the downsampler and
the distortion caused by the channel.

We thus consider the block diagram shown in Figure \ref{fig:comm_design_error} that is the
signal reconstruction error system for the design.
In the diagram the following points are taken into account:
\begin{itemize}
    \item The time delay $e^{-Ls}$ is introduced
	  because we allow a certain amount of time delay for signal
	  reconstruction.
    \item The transmitted signal $v_d$ is estimated with a weighting function
	  $W_z$ because the energy or the amplitude of the transmitted
	  signal $v_d$ is usually limited.
    \item The noise obeys a frequency characteristic $W_n$.
\end{itemize}
     
Our design problem is defined as follows:
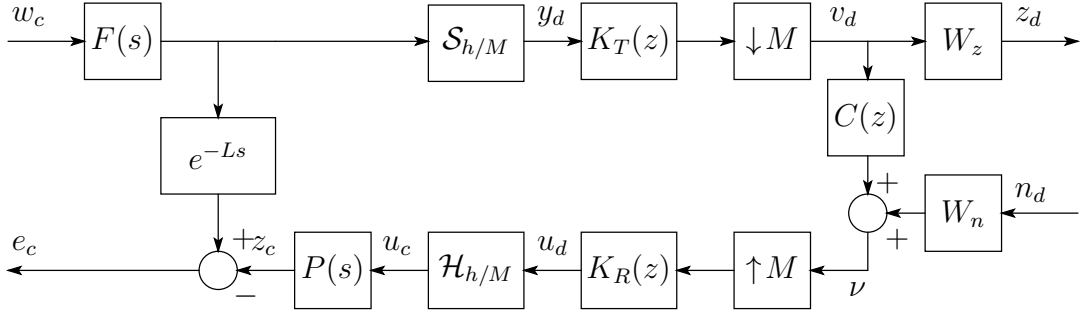
\begin{figure}[t]
    \begin{center}
\unitlength 0.1in
\begin{picture}(56.00,16.50)(4.00,-18.00)
%
\special{pn 8}%
\special{pa 400 400}%
\special{pa 800 400}%
\special{fp}%
\special{sh 1}%
\special{pa 800 400}%
\special{pa 733 380}%
\special{pa 747 400}%
\special{pa 733 420}%
\special{pa 800 400}%
\special{fp}%
%
\special{pn 8}%
\special{pa 800 200}%
\special{pa 1200 200}%
\special{pa 1200 600}%
\special{pa 800 600}%
\special{pa 800 200}%
\special{fp}%
%
\special{pn 8}%
\special{pa 2600 210}%
\special{pa 3100 210}%
\special{pa 3100 610}%
\special{pa 2600 610}%
\special{pa 2600 210}%
\special{fp}%
%
\special{pn 8}%
\special{pa 4200 200}%
\special{pa 4600 200}%
\special{pa 4600 600}%
\special{pa 4200 600}%
\special{pa 4200 200}%
\special{fp}%
%
\special{pn 8}%
\special{pa 4690 600}%
\special{pa 5090 600}%
\special{pa 5090 1000}%
\special{pa 4690 1000}%
\special{pa 4690 600}%
\special{fp}%
%
\special{pn 8}%
\special{pa 4600 1400}%
\special{pa 4200 1400}%
\special{pa 4200 1800}%
\special{pa 4600 1800}%
\special{pa 4600 1400}%
\special{fp}%
%
\special{pn 8}%
\special{pa 3100 1400}%
\special{pa 2600 1400}%
\special{pa 2600 1800}%
\special{pa 3100 1800}%
\special{pa 3100 1400}%
\special{fp}%
%
\special{pn 8}%
\special{pa 2300 1400}%
\special{pa 1900 1400}%
\special{pa 1900 1800}%
\special{pa 2300 1800}%
\special{pa 2300 1400}%
\special{fp}%
%
\special{pn 8}%
\special{pa 1200 400}%
\special{pa 1800 400}%
\special{fp}%
%
\special{pn 8}%
\special{pa 1500 400}%
\special{pa 1500 800}%
\special{fp}%
\special{sh 1}%
\special{pa 1500 800}%
\special{pa 1520 733}%
\special{pa 1500 747}%
\special{pa 1480 733}%
\special{pa 1500 800}%
\special{fp}%
%
\special{pn 8}%
\special{pa 1200 800}%
\special{pa 1800 800}%
\special{pa 1800 1200}%
\special{pa 1200 1200}%
\special{pa 1200 800}%
\special{fp}%
%
\special{pn 8}%
\special{pa 1500 1200}%
\special{pa 1500 1500}%
\special{fp}%
\special{sh 1}%
\special{pa 1500 1500}%
\special{pa 1520 1433}%
\special{pa 1500 1447}%
\special{pa 1480 1433}%
\special{pa 1500 1500}%
\special{fp}%
%
\special{pn 8}%
\special{ar 1500 1600 100 100  0.0000000 6.2831853}%
%
\special{pn 8}%
\special{pa 1400 1600}%
\special{pa 400 1600}%
\special{fp}%
\special{sh 1}%
\special{pa 400 1600}%
\special{pa 467 1620}%
\special{pa 453 1600}%
\special{pa 467 1580}%
\special{pa 400 1600}%
\special{fp}%
\put(10.0000,-4.0000){\makebox(0,0){$F(s)$}}%
\put(28.5000,-4.1000){\makebox(0,0){$\samp{h/M}$}}%
\put(28.5000,-16.0000){\makebox(0,0){$\hold{h/M}$}}%
\put(15.0000,-10.0000){\makebox(0,0){$e^{-Ls}$}}%
\put(44.0000,-4.0000){\makebox(0,0){$\ds{M}$}}%
\put(48.9000,-8.0000){\makebox(0,0){$C(z)$}}%
\put(44.0000,-16.0000){\makebox(0,0){$\us{M}$}}%
\put(21.0000,-16.0000){\makebox(0,0){$P(s)$}}%
\put(15.7000,-14.9000){\makebox(0,0)[lb]{$+$}}%
\put(15.8000,-16.8000){\makebox(0,0)[lt]{$-$}}%
\put(4.2000,-3.2000){\makebox(0,0)[lb]{$w_c$}}%
\put(4.2000,-15.2000){\makebox(0,0)[lb]{$e_c$}}%
%
\special{pn 8}%
\special{pa 3100 400}%
\special{pa 3400 400}%
\special{fp}%
\special{sh 1}%
\special{pa 3400 400}%
\special{pa 3333 380}%
\special{pa 3347 400}%
\special{pa 3333 420}%
\special{pa 3400 400}%
\special{fp}%
%
\special{pn 8}%
\special{pa 3400 200}%
\special{pa 3900 200}%
\special{pa 3900 600}%
\special{pa 3400 600}%
\special{pa 3400 200}%
\special{fp}%
%
\special{pn 8}%
\special{pa 3900 400}%
\special{pa 4200 400}%
\special{fp}%
\special{sh 1}%
\special{pa 4200 400}%
\special{pa 4133 380}%
\special{pa 4147 400}%
\special{pa 4133 420}%
\special{pa 4200 400}%
\special{fp}%
%
\special{pn 8}%
\special{pa 2600 1600}%
\special{pa 2300 1600}%
\special{fp}%
\special{sh 1}%
\special{pa 2300 1600}%
\special{pa 2367 1620}%
\special{pa 2353 1600}%
\special{pa 2367 1580}%
\special{pa 2300 1600}%
\special{fp}%
%
\special{pn 8}%
\special{pa 3400 1600}%
\special{pa 3100 1600}%
\special{fp}%
\special{sh 1}%
\special{pa 3100 1600}%
\special{pa 3167 1620}%
\special{pa 3153 1600}%
\special{pa 3167 1580}%
\special{pa 3100 1600}%
\special{fp}%
%
\special{pn 8}%
\special{pa 3400 1400}%
\special{pa 3900 1400}%
\special{pa 3900 1800}%
\special{pa 3400 1800}%
\special{pa 3400 1400}%
\special{fp}%
%
\special{pn 8}%
\special{pa 4200 1600}%
\special{pa 3900 1600}%
\special{fp}%
\special{sh 1}%
\special{pa 3900 1600}%
\special{pa 3967 1620}%
\special{pa 3953 1600}%
\special{pa 3967 1580}%
\special{pa 3900 1600}%
\special{fp}%
%
\special{pn 8}%
\special{pa 4900 400}%
\special{pa 4900 600}%
\special{fp}%
\special{sh 1}%
\special{pa 4900 600}%
\special{pa 4920 533}%
\special{pa 4900 547}%
\special{pa 4880 533}%
\special{pa 4900 600}%
\special{fp}%
%
\special{pn 8}%
\special{pa 4900 1000}%
\special{pa 4900 1200}%
\special{fp}%
\special{sh 1}%
\special{pa 4900 1200}%
\special{pa 4920 1133}%
\special{pa 4900 1147}%
\special{pa 4880 1133}%
\special{pa 4900 1200}%
\special{fp}%
%
\special{pn 8}%
\special{ar 4900 1300 100 100  0.0000000 6.2831853}%
%
\special{pn 8}%
\special{pa 4900 1400}%
\special{pa 4900 1600}%
\special{fp}%
%
\special{pn 8}%
\special{pa 4600 400}%
\special{pa 4900 400}%
\special{fp}%
%
\special{pn 8}%
\special{pa 4900 1600}%
\special{pa 4600 1600}%
\special{fp}%
\special{sh 1}%
\special{pa 4600 1600}%
\special{pa 4667 1620}%
\special{pa 4653 1600}%
\special{pa 4667 1580}%
\special{pa 4600 1600}%
\special{fp}%
%
\special{pn 8}%
\special{pa 1900 1600}%
\special{pa 1600 1600}%
\special{fp}%
\special{sh 1}%
\special{pa 1600 1600}%
\special{pa 1667 1620}%
\special{pa 1653 1600}%
\special{pa 1667 1580}%
\special{pa 1600 1600}%
\special{fp}%
%
\special{pn 8}%
\special{pa 1800 400}%
\special{pa 2600 400}%
\special{fp}%
\special{sh 1}%
\special{pa 2600 400}%
\special{pa 2533 380}%
\special{pa 2547 400}%
\special{pa 2533 420}%
\special{pa 2600 400}%
\special{fp}%
\put(36.5000,-4.0000){\makebox(0,0){$K_T(z)$}}%
\put(36.5000,-16.0000){\makebox(0,0){$K_R(z)$}}%
\put(49.4000,-12.0000){\makebox(0,0)[lb]{$+$}}%
\put(49.9000,-13.8000){\makebox(0,0)[lt]{$+$}}%
\put(56.7000,-12.4000){\makebox(0,0)[lb]{$n_d$}}%
\put(31.7000,-3.2000){\makebox(0,0)[lb]{$y_d$}}%
\put(47.0000,-3.2000){\makebox(0,0)[lb]{$v_d$}}%
\put(31.7000,-15.2000){\makebox(0,0)[lb]{$u_d$}}%
\put(23.6000,-15.2000){\makebox(0,0)[lb]{$u_c$}}%
\put(16.7000,-15.2000){\makebox(0,0)[lb]{$z_c$}}%
\put(48.0000,-16.6000){\makebox(0,0)[lt]{$\nu$}}%
%
\special{pn 8}%
\special{pa 4900 400}%
\special{pa 5200 400}%
\special{fp}%
\special{sh 1}%
\special{pa 5200 400}%
\special{pa 5133 380}%
\special{pa 5147 400}%
\special{pa 5133 420}%
\special{pa 5200 400}%
\special{fp}%
%
\special{pn 8}%
\special{pa 5200 200}%
\special{pa 5600 200}%
\special{pa 5600 600}%
\special{pa 5200 600}%
\special{pa 5200 200}%
\special{fp}%
%
\special{pn 8}%
\special{pa 5600 400}%
\special{pa 6000 400}%
\special{fp}%
\special{sh 1}%
\special{pa 6000 400}%
\special{pa 5933 380}%
\special{pa 5947 400}%
\special{pa 5933 420}%
\special{pa 6000 400}%
\special{fp}%
%
\special{pn 8}%
\special{pa 5200 1100}%
\special{pa 5600 1100}%
\special{pa 5600 1500}%
\special{pa 5200 1500}%
\special{pa 5200 1100}%
\special{fp}%
%
\special{pn 8}%
\special{pa 5200 1300}%
\special{pa 5000 1300}%
\special{fp}%
\special{sh 1}%
\special{pa 5000 1300}%
\special{pa 5067 1320}%
\special{pa 5053 1300}%
\special{pa 5067 1280}%
\special{pa 5000 1300}%
\special{fp}%
%
\special{pn 8}%
\special{pa 6000 1300}%
\special{pa 5600 1300}%
\special{fp}%
\special{sh 1}%
\special{pa 5600 1300}%
\special{pa 5667 1320}%
\special{pa 5653 1300}%
\special{pa 5667 1280}%
\special{pa 5600 1300}%
\special{fp}%
\put(56.7000,-3.2000){\makebox(0,0)[lb]{$z_d$}}%
\put(54.0000,-4.0000){\makebox(0,0){$W_z$}}%
\put(54.0000,-13.0000){\makebox(0,0){$W_n$}}%
\end{picture}%
    \end{center}
    \caption{Signal reconstruction error system}
    \label{fig:comm_design_error}
\end{figure}
\begin{problem}
\label{comm_prob_orig}
	Given a stable, strictly proper, continuous-time $F(s)$, stable, proper, continuous-time $P(s)$, 
	stable proper discrete-time weighting functions $W_n(z)$ and $W_z(z)$,
	stable proper channel model $C(z)$,
	downsampling factor $M$, delay time $L$ and a sampling period $h$,
	find digital filters $K_T(z)$ and $K_R(z)$ that minimizes
	\begin{equation}
		J(K_R,K_T) := \sup_{\substack{w_c\in L^2,n_d\in l^2\\ \|w_c\|_{L^2}+\|n_d\|_{l^2}\neq 0}}
		\frac{\|e_c\|_{L^2}^2+\|z_d\|_{l^2}^2}{\|w_c\|_{L^2}^2+\|n_d\|_{l^2}^2}.
		\label{eq:comm_prob_orig}
	\end{equation}
\end{problem}
\section{Design algorithm}
\subsection{Decomposing design problems}
Problem \ref{comm_prob_orig} is a simultaneous design problem of a
transmitting filter and a receiving filter, and it is difficult to solve this problem directly.
We thus decompose the design problem into two
steps, that is, the design of the receiving filter and that of the
transmitting filter.

Obviously the transmitting filter $K_T$ cannot attenuate the additive
noise $n_d$, hence the receiving filter $K_R$ must play this role.
Moreover $K_R$ must reconstruct the original signal from the corrupted signal
(if $K_R$ did not have to reconstruct, the optimal filter will be clearly $K_R=0$).
Therefore we first design the receiving filter $K_R$ in order to reconstruct
the original signal and to attenuate the noise by the block diagram
shown in Figure \ref{fig:comm_design_error} with $W_z=0$ and with $K_T=1$.
We then design the transmitting filter by the block diagram
shown in Figure \ref{fig:comm_design_error} with $W_n=0$ and with $K_R$ that is obtained 
in the previous design,
that is,
we consider the channel as $K_R(\us{M})C$.

Denote $\T_{R}$ the system from $[w_c,\; n_d]^T$ to $e_c$ (shown in Figure~\ref{fig:comm_err_Kr_syn}),
and $\T_{T}$ the system from $w_c$ to $[e_c,\; z_d]^T$ (shown in Figure~\ref{fig:comm_err_Kt_syn}).
The design procedure is then as follows:\\
{\bf \underline{Step 1} (Design of a receiving filter)}
Find a receiving filter $K_R$ that minimizes 
\begin{equation}
J_1(K_R):=\|\T_{R}\|^2:=
   \sup_{\substack{w_c\in L^2, n_d \in l^2\\ \|w_c\|_{L2}+\|n_d\|_{l^2}\neq 0}}
   \frac{\|e_c\|_{L^2}^2}{\|w_c\|_{L^2}^2+\|n_d\|_{l^2}^2},
\label{eq:prob_Kr}
\end{equation}
with fixed $K_T$.\\
{\bf \underline{Step 2} (Design of a transmitting filter)}
Find a transmitting filter $K_T$ that minimizes 
\begin{equation}
J_2(K_T):=\|\T_{T}\|^2:=
   \sup_{\substack{w_c\in L^2\\ w_c\neq 0}}
   \frac{\|e_c\|^2_{L^2}+\|z_d\|_{l^2}^2}{\|w_c\|_{L^2}^2},
\label{eq:prob_Kt}
\end{equation}
with $K_R$ obtained in the previous step.
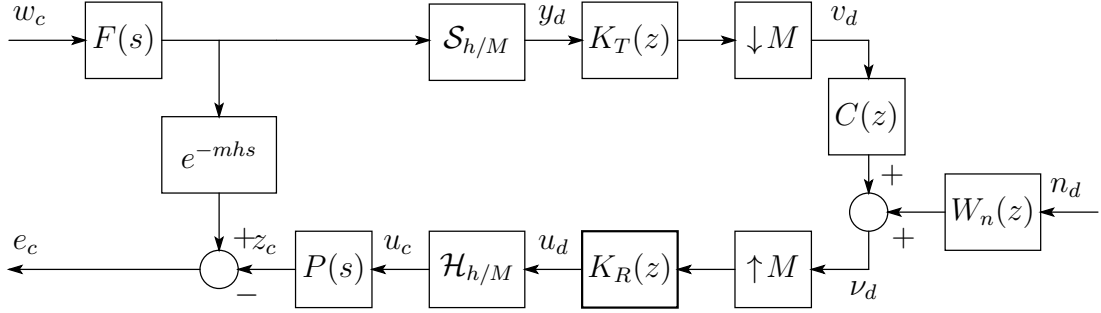
\begin{figure}[t]
    \begin{center}
\unitlength 0.1in
\begin{picture}(57.00,16.50)(4.00,-18.00)
%
\special{pn 8}%
\special{pa 400 400}%
\special{pa 800 400}%
\special{fp}%
\special{sh 1}%
\special{pa 800 400}%
\special{pa 733 380}%
\special{pa 747 400}%
\special{pa 733 420}%
\special{pa 800 400}%
\special{fp}%
%
\special{pn 8}%
\special{pa 800 200}%
\special{pa 1200 200}%
\special{pa 1200 600}%
\special{pa 800 600}%
\special{pa 800 200}%
\special{fp}%
%
\special{pn 8}%
\special{pa 2600 210}%
\special{pa 3100 210}%
\special{pa 3100 610}%
\special{pa 2600 610}%
\special{pa 2600 210}%
\special{fp}%
%
\special{pn 8}%
\special{pa 4200 200}%
\special{pa 4600 200}%
\special{pa 4600 600}%
\special{pa 4200 600}%
\special{pa 4200 200}%
\special{fp}%
%
\special{pn 8}%
\special{pa 4690 600}%
\special{pa 5090 600}%
\special{pa 5090 1000}%
\special{pa 4690 1000}%
\special{pa 4690 600}%
\special{fp}%
%
\special{pn 8}%
\special{pa 4600 1400}%
\special{pa 4200 1400}%
\special{pa 4200 1800}%
\special{pa 4600 1800}%
\special{pa 4600 1400}%
\special{fp}%
%
\special{pn 8}%
\special{pa 3100 1400}%
\special{pa 2600 1400}%
\special{pa 2600 1800}%
\special{pa 3100 1800}%
\special{pa 3100 1400}%
\special{fp}%
%
\special{pn 8}%
\special{pa 2300 1400}%
\special{pa 1900 1400}%
\special{pa 1900 1800}%
\special{pa 2300 1800}%
\special{pa 2300 1400}%
\special{fp}%
%
\special{pn 8}%
\special{pa 1200 400}%
\special{pa 1800 400}%
\special{fp}%
%
\special{pn 8}%
\special{pa 1500 400}%
\special{pa 1500 800}%
\special{fp}%
\special{sh 1}%
\special{pa 1500 800}%
\special{pa 1520 733}%
\special{pa 1500 747}%
\special{pa 1480 733}%
\special{pa 1500 800}%
\special{fp}%
%
\special{pn 8}%
\special{pa 1200 800}%
\special{pa 1800 800}%
\special{pa 1800 1200}%
\special{pa 1200 1200}%
\special{pa 1200 800}%
\special{fp}%
%
\special{pn 8}%
\special{pa 1500 1200}%
\special{pa 1500 1500}%
\special{fp}%
\special{sh 1}%
\special{pa 1500 1500}%
\special{pa 1520 1433}%
\special{pa 1500 1447}%
\special{pa 1480 1433}%
\special{pa 1500 1500}%
\special{fp}%
%
\special{pn 8}%
\special{ar 1500 1600 100 100  0.0000000 6.2831853}%
%
\special{pn 8}%
\special{pa 1400 1600}%
\special{pa 400 1600}%
\special{fp}%
\special{sh 1}%
\special{pa 400 1600}%
\special{pa 467 1620}%
\special{pa 453 1600}%
\special{pa 467 1580}%
\special{pa 400 1600}%
\special{fp}%
\put(10.0000,-4.0000){\makebox(0,0){$F(s)$}}%
\put(28.5000,-4.1000){\makebox(0,0){$\samp{h/M}$}}%
\put(28.5000,-16.0000){\makebox(0,0){$\hold{h/M}$}}%
\put(15.0000,-10.0000){\makebox(0,0){$e^{-mhs}$}}%
\put(44.0000,-4.0000){\makebox(0,0){$\ds{M}$}}%
\put(48.9000,-8.0000){\makebox(0,0){$C(z)$}}%
\put(44.0000,-16.0000){\makebox(0,0){$\us{M}$}}%
\put(21.0000,-16.0000){\makebox(0,0){$P(s)$}}%
\put(15.7000,-14.9000){\makebox(0,0)[lb]{$+$}}%
\put(15.8000,-16.8000){\makebox(0,0)[lt]{$-$}}%
\put(4.2000,-3.2000){\makebox(0,0)[lb]{$w_c$}}%
\put(4.2000,-15.2000){\makebox(0,0)[lb]{$e_c$}}%
%
\special{pn 8}%
\special{pa 3100 400}%
\special{pa 3400 400}%
\special{fp}%
\special{sh 1}%
\special{pa 3400 400}%
\special{pa 3333 380}%
\special{pa 3347 400}%
\special{pa 3333 420}%
\special{pa 3400 400}%
\special{fp}%
%
\special{pn 8}%
\special{pa 3400 200}%
\special{pa 3900 200}%
\special{pa 3900 600}%
\special{pa 3400 600}%
\special{pa 3400 200}%
\special{fp}%
%
\special{pn 8}%
\special{pa 3900 400}%
\special{pa 4200 400}%
\special{fp}%
\special{sh 1}%
\special{pa 4200 400}%
\special{pa 4133 380}%
\special{pa 4147 400}%
\special{pa 4133 420}%
\special{pa 4200 400}%
\special{fp}%
%
\special{pn 8}%
\special{pa 2600 1600}%
\special{pa 2300 1600}%
\special{fp}%
\special{sh 1}%
\special{pa 2300 1600}%
\special{pa 2367 1620}%
\special{pa 2353 1600}%
\special{pa 2367 1580}%
\special{pa 2300 1600}%
\special{fp}%
%
\special{pn 8}%
\special{pa 3400 1600}%
\special{pa 3100 1600}%
\special{fp}%
\special{sh 1}%
\special{pa 3100 1600}%
\special{pa 3167 1620}%
\special{pa 3153 1600}%
\special{pa 3167 1580}%
\special{pa 3100 1600}%
\special{fp}%
%
\special{pn 13}%
\special{pa 3400 1400}%
\special{pa 3900 1400}%
\special{pa 3900 1800}%
\special{pa 3400 1800}%
\special{pa 3400 1400}%
\special{fp}%
%
\special{pn 8}%
\special{pa 4200 1600}%
\special{pa 3900 1600}%
\special{fp}%
\special{sh 1}%
\special{pa 3900 1600}%
\special{pa 3967 1620}%
\special{pa 3953 1600}%
\special{pa 3967 1580}%
\special{pa 3900 1600}%
\special{fp}%
%
\special{pn 8}%
\special{pa 4900 400}%
\special{pa 4900 600}%
\special{fp}%
\special{sh 1}%
\special{pa 4900 600}%
\special{pa 4920 533}%
\special{pa 4900 547}%
\special{pa 4880 533}%
\special{pa 4900 600}%
\special{fp}%
%
\special{pn 8}%
\special{pa 4900 1000}%
\special{pa 4900 1200}%
\special{fp}%
\special{sh 1}%
\special{pa 4900 1200}%
\special{pa 4920 1133}%
\special{pa 4900 1147}%
\special{pa 4880 1133}%
\special{pa 4900 1200}%
\special{fp}%
%
\special{pn 8}%
\special{ar 4900 1300 100 100  0.0000000 6.2831853}%
%
\special{pn 8}%
\special{pa 4900 1400}%
\special{pa 4900 1600}%
\special{fp}%
%
\special{pn 8}%
\special{pa 4600 400}%
\special{pa 4900 400}%
\special{fp}%
%
\special{pn 8}%
\special{pa 4900 1600}%
\special{pa 4600 1600}%
\special{fp}%
\special{sh 1}%
\special{pa 4600 1600}%
\special{pa 4667 1620}%
\special{pa 4653 1600}%
\special{pa 4667 1580}%
\special{pa 4600 1600}%
\special{fp}%
%
\special{pn 8}%
\special{pa 1900 1600}%
\special{pa 1600 1600}%
\special{fp}%
\special{sh 1}%
\special{pa 1600 1600}%
\special{pa 1667 1620}%
\special{pa 1653 1600}%
\special{pa 1667 1580}%
\special{pa 1600 1600}%
\special{fp}%
%
\special{pn 8}%
\special{pa 1800 400}%
\special{pa 2600 400}%
\special{fp}%
\special{sh 1}%
\special{pa 2600 400}%
\special{pa 2533 380}%
\special{pa 2547 400}%
\special{pa 2533 420}%
\special{pa 2600 400}%
\special{fp}%
\put(36.5000,-4.0000){\makebox(0,0){$K_T(z)$}}%
\put(36.5000,-16.0000){\makebox(0,0){$K_R(z)$}}%
\put(49.6000,-11.6000){\makebox(0,0)[lb]{$+$}}%
\put(50.2000,-13.8000){\makebox(0,0)[lt]{$+$}}%
\put(58.5000,-12.2000){\makebox(0,0)[lb]{$n_d$}}%
\put(31.7000,-3.2000){\makebox(0,0)[lb]{$y_d$}}%
\put(47.0000,-3.2000){\makebox(0,0)[lb]{$v_d$}}%
\put(31.7000,-15.2000){\makebox(0,0)[lb]{$u_d$}}%
\put(23.6000,-15.2000){\makebox(0,0)[lb]{$u_c$}}%
\put(16.7000,-15.2000){\makebox(0,0)[lb]{$z_c$}}%
%
\special{pn 8}%
\special{pa 5300 1100}%
\special{pa 5800 1100}%
\special{pa 5800 1500}%
\special{pa 5300 1500}%
\special{pa 5300 1100}%
\special{fp}%
%
\special{pn 8}%
\special{pa 6100 1300}%
\special{pa 5800 1300}%
\special{fp}%
\special{sh 1}%
\special{pa 5800 1300}%
\special{pa 5867 1320}%
\special{pa 5853 1300}%
\special{pa 5867 1280}%
\special{pa 5800 1300}%
\special{fp}%
\put(55.5000,-13.0000){\makebox(0,0){$W_n(z)$}}%
%
\special{pn 8}%
\special{pa 5300 1300}%
\special{pa 5000 1300}%
\special{fp}%
\special{sh 1}%
\special{pa 5000 1300}%
\special{pa 5067 1320}%
\special{pa 5053 1300}%
\special{pa 5067 1280}%
\special{pa 5000 1300}%
\special{fp}%
\put(48.0000,-16.6000){\makebox(0,0)[lt]{$\nu_d$}}%
\end{picture}%
    \end{center}
    \caption{Error system $\T_{R}$ for receiving filter design}
    \label{fig:comm_err_Kr_syn}
\end{figure}
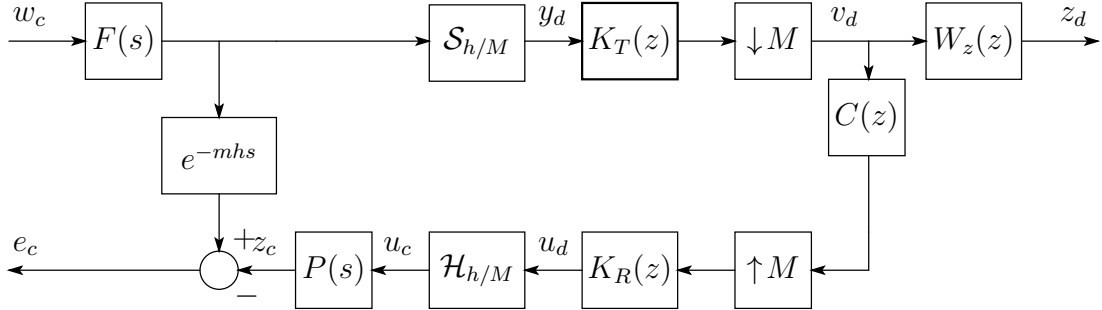
\begin{figure}[t]
    \begin{center}
\unitlength 0.1in
\begin{picture}(57.00,16.50)(4.00,-24.00)
%
\special{pn 8}%
\special{pa 400 1000}%
\special{pa 800 1000}%
\special{fp}%
\special{sh 1}%
\special{pa 800 1000}%
\special{pa 733 980}%
\special{pa 747 1000}%
\special{pa 733 1020}%
\special{pa 800 1000}%
\special{fp}%
%
\special{pn 8}%
\special{pa 800 800}%
\special{pa 1200 800}%
\special{pa 1200 1200}%
\special{pa 800 1200}%
\special{pa 800 800}%
\special{fp}%
%
\special{pn 8}%
\special{pa 2600 810}%
\special{pa 3100 810}%
\special{pa 3100 1210}%
\special{pa 2600 1210}%
\special{pa 2600 810}%
\special{fp}%
%
\special{pn 8}%
\special{pa 4200 800}%
\special{pa 4600 800}%
\special{pa 4600 1200}%
\special{pa 4200 1200}%
\special{pa 4200 800}%
\special{fp}%
%
\special{pn 8}%
\special{pa 4690 1200}%
\special{pa 5090 1200}%
\special{pa 5090 1600}%
\special{pa 4690 1600}%
\special{pa 4690 1200}%
\special{fp}%
%
\special{pn 8}%
\special{pa 4600 2000}%
\special{pa 4200 2000}%
\special{pa 4200 2400}%
\special{pa 4600 2400}%
\special{pa 4600 2000}%
\special{fp}%
%
\special{pn 8}%
\special{pa 3100 2000}%
\special{pa 2600 2000}%
\special{pa 2600 2400}%
\special{pa 3100 2400}%
\special{pa 3100 2000}%
\special{fp}%
%
\special{pn 8}%
\special{pa 2300 2000}%
\special{pa 1900 2000}%
\special{pa 1900 2400}%
\special{pa 2300 2400}%
\special{pa 2300 2000}%
\special{fp}%
%
\special{pn 8}%
\special{pa 1200 1000}%
\special{pa 1800 1000}%
\special{fp}%
%
\special{pn 8}%
\special{pa 1500 1000}%
\special{pa 1500 1400}%
\special{fp}%
\special{sh 1}%
\special{pa 1500 1400}%
\special{pa 1520 1333}%
\special{pa 1500 1347}%
\special{pa 1480 1333}%
\special{pa 1500 1400}%
\special{fp}%
%
\special{pn 8}%
\special{pa 1200 1400}%
\special{pa 1800 1400}%
\special{pa 1800 1800}%
\special{pa 1200 1800}%
\special{pa 1200 1400}%
\special{fp}%
%
\special{pn 8}%
\special{pa 1500 1800}%
\special{pa 1500 2100}%
\special{fp}%
\special{sh 1}%
\special{pa 1500 2100}%
\special{pa 1520 2033}%
\special{pa 1500 2047}%
\special{pa 1480 2033}%
\special{pa 1500 2100}%
\special{fp}%
%
\special{pn 8}%
\special{ar 1500 2200 100 100  0.0000000 6.2831853}%
%
\special{pn 8}%
\special{pa 1400 2200}%
\special{pa 400 2200}%
\special{fp}%
\special{sh 1}%
\special{pa 400 2200}%
\special{pa 467 2220}%
\special{pa 453 2200}%
\special{pa 467 2180}%
\special{pa 400 2200}%
\special{fp}%
\put(10.0000,-10.0000){\makebox(0,0){$F(s)$}}%
\put(28.5000,-10.1000){\makebox(0,0){$\samp{h/M}$}}%
\put(28.5000,-22.0000){\makebox(0,0){$\hold{h/M}$}}%
\put(15.0000,-16.0000){\makebox(0,0){$e^{-mhs}$}}%
\put(44.0000,-10.0000){\makebox(0,0){$\ds{M}$}}%
\put(48.9000,-14.0000){\makebox(0,0){$C(z)$}}%
\put(44.0000,-22.0000){\makebox(0,0){$\us{M}$}}%
\put(21.0000,-22.0000){\makebox(0,0){$P(s)$}}%
\put(15.7000,-20.9000){\makebox(0,0)[lb]{$+$}}%
\put(15.8000,-22.8000){\makebox(0,0)[lt]{$-$}}%
\put(4.2000,-9.2000){\makebox(0,0)[lb]{$w_c$}}%
\put(4.2000,-21.2000){\makebox(0,0)[lb]{$e_c$}}%
%
\special{pn 8}%
\special{pa 3100 1000}%
\special{pa 3400 1000}%
\special{fp}%
\special{sh 1}%
\special{pa 3400 1000}%
\special{pa 3333 980}%
\special{pa 3347 1000}%
\special{pa 3333 1020}%
\special{pa 3400 1000}%
\special{fp}%
%
\special{pn 13}%
\special{pa 3400 800}%
\special{pa 3900 800}%
\special{pa 3900 1200}%
\special{pa 3400 1200}%
\special{pa 3400 800}%
\special{fp}%
%
\special{pn 8}%
\special{pa 3900 1000}%
\special{pa 4200 1000}%
\special{fp}%
\special{sh 1}%
\special{pa 4200 1000}%
\special{pa 4133 980}%
\special{pa 4147 1000}%
\special{pa 4133 1020}%
\special{pa 4200 1000}%
\special{fp}%
%
\special{pn 8}%
\special{pa 2600 2200}%
\special{pa 2300 2200}%
\special{fp}%
\special{sh 1}%
\special{pa 2300 2200}%
\special{pa 2367 2220}%
\special{pa 2353 2200}%
\special{pa 2367 2180}%
\special{pa 2300 2200}%
\special{fp}%
%
\special{pn 8}%
\special{pa 3400 2200}%
\special{pa 3100 2200}%
\special{fp}%
\special{sh 1}%
\special{pa 3100 2200}%
\special{pa 3167 2220}%
\special{pa 3153 2200}%
\special{pa 3167 2180}%
\special{pa 3100 2200}%
\special{fp}%
%
\special{pn 8}%
\special{pa 3400 2000}%
\special{pa 3900 2000}%
\special{pa 3900 2400}%
\special{pa 3400 2400}%
\special{pa 3400 2000}%
\special{fp}%
%
\special{pn 8}%
\special{pa 4200 2200}%
\special{pa 3900 2200}%
\special{fp}%
\special{sh 1}%
\special{pa 3900 2200}%
\special{pa 3967 2220}%
\special{pa 3953 2200}%
\special{pa 3967 2180}%
\special{pa 3900 2200}%
\special{fp}%
%
\special{pn 8}%
\special{pa 4900 1000}%
\special{pa 4900 1200}%
\special{fp}%
\special{sh 1}%
\special{pa 4900 1200}%
\special{pa 4920 1133}%
\special{pa 4900 1147}%
\special{pa 4880 1133}%
\special{pa 4900 1200}%
\special{fp}%
%
\special{pn 8}%
\special{pa 4900 2000}%
\special{pa 4900 2200}%
\special{fp}%
%
\special{pn 8}%
\special{pa 4600 1000}%
\special{pa 4900 1000}%
\special{fp}%
%
\special{pn 8}%
\special{pa 4900 2200}%
\special{pa 4600 2200}%
\special{fp}%
\special{sh 1}%
\special{pa 4600 2200}%
\special{pa 4667 2220}%
\special{pa 4653 2200}%
\special{pa 4667 2180}%
\special{pa 4600 2200}%
\special{fp}%
%
\special{pn 8}%
\special{pa 1900 2200}%
\special{pa 1600 2200}%
\special{fp}%
\special{sh 1}%
\special{pa 1600 2200}%
\special{pa 1667 2220}%
\special{pa 1653 2200}%
\special{pa 1667 2180}%
\special{pa 1600 2200}%
\special{fp}%
%
\special{pn 8}%
\special{pa 1800 1000}%
\special{pa 2600 1000}%
\special{fp}%
\special{sh 1}%
\special{pa 2600 1000}%
\special{pa 2533 980}%
\special{pa 2547 1000}%
\special{pa 2533 1020}%
\special{pa 2600 1000}%
\special{fp}%
\put(36.5000,-10.0000){\makebox(0,0){$K_T(z)$}}%
\put(36.5000,-22.0000){\makebox(0,0){$K_R(z)$}}%
\put(31.7000,-9.2000){\makebox(0,0)[lb]{$y_d$}}%
\put(47.0000,-9.2000){\makebox(0,0)[lb]{$v_d$}}%
\put(31.7000,-21.2000){\makebox(0,0)[lb]{$u_d$}}%
\put(23.6000,-21.2000){\makebox(0,0)[lb]{$u_c$}}%
\put(16.7000,-21.2000){\makebox(0,0)[lb]{$z_c$}}%
%
\special{pn 8}%
\special{pa 5200 800}%
\special{pa 5700 800}%
\special{pa 5700 1200}%
\special{pa 5200 1200}%
\special{pa 5200 800}%
\special{fp}%
\put(54.5000,-10.0000){\makebox(0,0){$W_z(z)$}}%
%
\special{pn 8}%
\special{pa 4900 1600}%
\special{pa 4900 2000}%
\special{fp}%
\put(59.0000,-9.2000){\makebox(0,0)[lb]{$z_d$}}%
%
\special{pn 8}%
\special{pa 4900 1000}%
\special{pa 5200 1000}%
\special{fp}%
\special{sh 1}%
\special{pa 5200 1000}%
\special{pa 5133 980}%
\special{pa 5147 1000}%
\special{pa 5133 1020}%
\special{pa 5200 1000}%
\special{fp}%
%
\special{pn 8}%
\special{pa 5700 1000}%
\special{pa 6100 1000}%
\special{fp}%
\special{sh 1}%
\special{pa 6100 1000}%
\special{pa 6033 980}%
\special{pa 6047 1000}%
\special{pa 6033 1020}%
\special{pa 6100 1000}%
\special{fp}%
\end{picture}%
    \end{center}
    \caption{Error system $\T_{T}$ for transmitting filter design}
    \label{fig:comm_err_Kt_syn}
\end{figure}
We iterate Step 1 and Step 2 alternately with initial condition $K_T=1$.

The filter $K_T$ designed in Step 2 cannot have any influence
on the performance of the system from $n_d$ to $e_c$.
Therefore, the objective function $J(K_R,K_T)$ in (\ref{eq:comm_prob_orig})
monotonically decreases by each step.
In fact, we have the following proposition:
\begin{prop}
For any integer $n\geq 1$, the following inequality holds:
\begin{equation*}
J(K_R^{(n-1)}, K_T^{(n-1)})\geq J(K_R^{(n)}, K_T^{(n)})\geq J_{\rm{opt}},
\end{equation*}
where $K_R^{(n)}$ and $K_T^{(n)}$ are filters obtained
at the $n$-th design step, and define
\begin{equation}
J_{\rm{opt}} := \min_{K_R, K_T} J(K_R, K_T).
\label{eq:jopt}
\end{equation}
\end{prop}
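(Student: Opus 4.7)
The lower bound $J(K_R^{(n)},K_T^{(n)})\geq J_{\mathrm{opt}}$ is immediate from the definition (\ref{eq:jopt}): since $J_{\mathrm{opt}}$ is the infimum of $J$ over all admissible filter pairs, every attained value is an upper bound for it. For the upper bound, my plan is to split the $n$-th iteration into its two constituent updates, $K_R^{(n-1)}\mapsto K_R^{(n)}$ (Step~1) and $K_T^{(n-1)}\mapsto K_T^{(n)}$ (Step~2), and to show that $J$ is non-increasing across each; chaining the two inequalities will yield
\[
J(K_R^{(n-1)},K_T^{(n-1)})\;\geq\;J(K_R^{(n)},K_T^{(n-1)})\;\geq\;J(K_R^{(n)},K_T^{(n)}).
\]

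The structural starting point is the block decomposition of the operator $\T:(w_c,n_d)\mapsto(e_c,z_d)$ of Figure~\ref{fig:comm_design_error},
\[
\T=\begin{bmatrix} A(K_R,K_T) & B(K_R)\\ C(K_T) & 0 \end{bmatrix}.
\]
The zero entry reflects that $z_d$ is read off before the noise is injected; $B$ is independent of $K_T$ since the path $n_d\to e_c$ bypasses the transmitter (the very observation about the $n_d\to e_c$ channel made just before the statement); and $C$ is independent of $K_R$ since $z_d$ is measured before the receiver. With these identifications, $J_1=\|[A\ B]\|^2$ (the error system $\T_R$ of Figure~\ref{fig:comm_err_Kr_syn}), $J_2=\left\|\left[\begin{smallmatrix}A\\ C\end{smallmatrix}\right]\right\|^2$ (the error system $\T_T$ of Figure~\ref{fig:comm_err_Kt_syn}), and $J=\|\T\|^2$.

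To exploit the block form on the first half-step, $C$ is frozen and $K_R^{(n)}$ is chosen to minimize $J_1$. The identity
\[
\T^*\T \;=\; [A\ B]^{\!*}[A\ B] \,+\, \diag(C^*C,\,0)
\]
exhibits $\T^*\T$ as the sum of a $K_R$-dependent positive part and a $K_R$-independent offset, and Weyl's monotonicity gives $\|\T\|^2\leq J_1+\|C\|^2$ with the second term held fixed. A dual decomposition
\[
\T\T^* \;=\; \begin{bmatrix} A\\ C\end{bmatrix}\!\begin{bmatrix} A\\ C\end{bmatrix}^{\!*} \,+\, \diag(BB^*,\,0)
\]
plays the same role on the second half-step, with $\|B\|^2$ now the frozen contribution. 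On each half-step the respective minimization of $J_1$ or $J_2$ therefore drives down the $K_R$- or $K_T$-dependent summand of $\T^*\T$ (respectively $\T\T^*$).

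The main technical obstacle is that the two estimates above are only one-sided: reducing $J_1$ shrinks an \emph{upper bound} on $\|\T\|^2$, but in general does not by itself force $\|\T\|^2$ to decrease. I plan to close this gap by showing that the extremal (worst-case) input realizing $\|\T\|^2$ has its $w_c$-component aligned with the top eigenspace of $[A\ B]^{\!*}[A\ B]$---which is possible because the added block $\diag(C^*C,0)$ acts only on the $w_c$-coordinate---so that the top eigenvalue of $\T^*\T$ inherits the monotonicity of $J_1$. The symmetric spectral-alignment argument, applied to $\T\T^*$ with the roles of $K_R$ and $K_T$ swapped, handles Step~2. Combining the two monotonicity statements yields the stated chain of inequalities.
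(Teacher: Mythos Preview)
Your block decomposition of $\T$ and the identities $\T^*\T=[A\ B]^{*}[A\ B]+\diag(C^*C,0)$ and $\T\T^*=\bigl[\begin{smallmatrix}A\\C\end{smallmatrix}\bigr]\bigl[\begin{smallmatrix}A\\C\end{smallmatrix}\bigr]^{*}+\diag(BB^*,0)$ are correct, and you have put your finger on the real difficulty: Step~1 only minimizes the top eigenvalue of $[A\ B]^{*}[A\ B]$, and shrinking a top eigenvalue does not in general shrink the top eigenvalue of its sum with a fixed positive block. The proposed ``spectral-alignment'' argument does not close this gap. Minimizing $J_1$ leaves the rest of the spectrum of $[A\ B]^{*}[A\ B]$ uncontrolled, so mass can migrate into directions where $\diag(C^*C,0)$ is large. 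A two-by-two instance already exhibits the failure: with scalar blocks and $c=1$ fixed, moving from $(a,b)=(0,1)$ to $(a,b)=(0.9,0)$ drops $J_1=a^2+b^2$ from $1$ to $0.81$, yet $\|\T\|^2=\lambda_{\max}\!\left(\begin{smallmatrix}a^2+c^2&ab\\ab&b^2\end{smallmatrix}\right)$ rises from $1$ to $1.81$. No alignment of eigenspaces survives such spectral redistribution, so the plan as written does not establish either half-step inequality.

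For comparison, the paper's own proof does not resolve this point either: for Step~1 it simply writes $\arg\min_{K_R}J_1=\arg\min_{K_R}J(\,\cdot\,,K_T^{(n-1)})$ without argument, and for Step~2 it asserts that $\|\T_1\|\ge\|\T_1'\|$ with $\T_2$ fixed implies $\|[\T_1,\T_2]\|\ge\|[\T_1',\T_2]\|$. These are precisely the monotonicity statements you flagged as nontrivial, and the same scalar example refutes them in general. So your diagnosis is sharper than the paper's treatment, but your remedy is not one; absent additional structure---for instance, redefining the half-steps so that each minimizes $J$ itself (with the other filter frozen) rather than the surrogates $J_1,J_2$---the chain $J(K_R^{(n-1)},K_T^{(n-1)})\ge J(K_R^{(n)},K_T^{(n-1)})\ge J(K_R^{(n)},K_T^{(n)})$ does not follow from what has been established.
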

\begin{proof}
Let $\T_{1}(K_R,K_T)$ and $\T_{2}(K_R)$ be
the system from $w_c$ to $[e_c, z_d]^T$ and the system 
from $n_d$ to $[e_c,z_d]^T$, respectively,
in Figure \ref{fig:comm_design_error}.
At the $n$-th step of $K_R$ design (Step 1), we have
\begin{equation*}
K_R^{(n)} = \arg\min_{K_R} J_1(K_R)
          = \arg\min_{K_R} J(K_R,\; K_T^{(n-1)}),
\end{equation*}
and hence
\begin{equation}
J(K_R^{(n-1)},\; K_T^{(n-1)}) \geq J(K_R^{(n)},\; K_T^{(n-1)}),
\label{eq:proof_stepA}
\end{equation}
holds. Then at the $n$-th step of $K_T$ design (Step 2), we have
\begin{equation*}
K_T^{(n)} = \arg\min_{K_T} J_2(K_T)
          = \arg\min_{K_T} \|\T_{1}(K_R^{(n)},\; K_T)\|.
\end{equation*}
By using this equation, we have
\begin{equation*}
\|\T_{1}(K_R^{(n)},\; K_T^{(n-1)})\|\geq\|\T_{1}(K_R^{(n)},\; K_T^{(n)})\|.
\end{equation*}
It follows that
\begin{equation*}
\left\|\left[\begin{array}{cc}\T_{1}(K_R^{(n)},\; K_T^{(n-1)}),&\T_{2}(K_R^{(n)})\end{array}\right]\right\|
\geq
\left\|\left[\begin{array}{cc}\T_{1}(K_R^{(n)},\; K_T^{(n)}),&\T_{2}(K_R^{(n)})\end{array}\right]\right\|,
\end{equation*}
holds. This implies
\begin{equation}
J(K_R^{(n)},\; K_T^{(n-1)}) \geq J(K_R^{(n)},\; K_T^{(n)}).
\label{eq:proof_stepB}
\end{equation}
By (\ref{eq:proof_stepA}) and (\ref{eq:proof_stepB}), we have
\begin{equation*}
J(K_R^{(n-1)},\; K_T^{(n-1)})\geq J(K_R^{(n)},\; K_T^{(n)}).
\end{equation*}
Then, by the definition (\ref{eq:jopt}), 
\begin{equation*}
J(K_R^{(n)},\; K_T^{(n)})\geq J_{\rm{opt}},\nonumber
\end{equation*}
is obvious.
\end{proof}

\subsection{Fast-sampling/fast-hold approximation}
The design problems (\ref{eq:prob_Kr}) and (\ref{eq:prob_Kt}) involve a continuous-time delay
component $e^{-Ls}$, and hence they are infinite-dimensional sampled-data problems.
To avoid this difficulty, we employ the fast-sampling/fast-hold approximation method
\cite{KelAnd92,YamMadAnd99}.
By the method, our design problems (\ref{eq:prob_Kr}) and (\ref{eq:prob_Kt})
are approximated by finite-dimensional discrete-time problems
assuming that the delay time $L$ is $mh$ ($m\in\Nset$).
\begin{theorem}\label{th:FSFH}
Assume that $L=mh$, $m\in\Nset$. Then, 
\begin{enumerate}
\item
for the error system $\T_R$ in Step 1, 
there exist finite-dimensional discrete-time systems $\{T_{R,N}: N=M,2M,\ldots\}$
such that 
     \begin{equation*}
	\lim_{N\to\infty}\|T_{R,N}\| = \|\T_{R}\|,
     \end{equation*}
\item
for the error system $\T_T$ in Step 2,
there exist finite-dimensional discrete-time systems $\{T_{T,N}: N=M,2M,\ldots\}$ 
such that 
\begin{equation*}
	\lim_{N\to\infty}\|T_{T,N}\| = \|\T_{T}\|.
\end{equation*}
\end{enumerate}
\end{theorem}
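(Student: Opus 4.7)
The plan is to mimic the strategy already used for the interpolator and decimator theorems in Chapter~\ref{ch:multirate}: first convert the multirate sampled-data loops into equivalent single-rate sampled-data systems by means of the discrete-time lifting $\dlift{M}$ and $\idlift{M}$, and then apply the fast-sampling/fast-hold discretization of Section~\ref{sec:SDC_fsfh}, whose $H^\infty$ convergence is guaranteed by \cite{YamMadAnd99}.

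Concretely, for part~(1), I first absorb the downsampler/upsampler pair inside the loop using the identities from Section~\ref{sec:design_interp} and Section~\ref{sec:design_decim}: writing
\begin{equation*}
    K_T(z)(\ds{M}) = \widetilde{K}_T(z)\,\dlift{M}, \qquad
    (\us{M})K_R(z) = \idlift{M}\,\widetilde{K}_R(z),
\end{equation*}
with $\widetilde{K}_T$ an $M$-input/$1$-output LTI system and $\widetilde{K}_R$ a $1$-input/$M$-output LTI system, and using $\dlift{M}\samp{h/M}=\gsamp{h}$ and $\hold{h/M}\idlift{M}=\ghold{h}$, the rate conversions collapse into a pair of generalized sampler/hold operators running at the slow rate $h$. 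This reduces the loop of Figure~\ref{fig:comm_err_Kr_syn} to an equivalent single-rate sampled-data system with continuous-time plant made up of $F(s)$, $P(s)$ and the delay $e^{-mhs}$, together with discrete components $C(z)$, $W_n(z)$, $\widetilde{K}_T$, $\widetilde{K}_R$. Since the delay is commensurable with the sampling period ($L=mh$), it can be realized as $m$ unit delays on the discrete side, keeping the continuous-time part strictly proper and finite-dimensional.

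Next, I attach the fast sampler $\samp{h/N}$ (with $N$ a multiple of $M$) and fast hold $\hold{h/N}$ around the continuous-time part, and lift by $\dlift{N}$, $\idlift{N}$ as in Section~\ref{sec:SDC_fsfh}. This produces a finite-dimensional discrete-time generalized plant $\dliftsys{N}{F}$, $\dliftsys{N}{P}$, connected through constant selection matrices $S$ and $H$ (constructed exactly as in Proposition~\ref{prop:dlift}, adapted to match the factors $M$ and $N$) to $C(z)$, $W_n(z)$, $\widetilde{K}_T$, $\widetilde{K}_R$. Call the resulting closed-loop discrete-time map $T_{R,N}$. The convergence $\lim_{N\to\infty}\|T_{R,N}\|=\|\T_R\|$ then follows verbatim from the FSFH convergence result in \cite{YamMadAnd99}, since the norm-preservation of the discrete-time liftings ensures that passing to the reduced single-rate formulation does not change the induced norm.

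Part~(2) is entirely analogous: the only difference is that the regulated output is $[e_c,\;z_d]^T$ with $z_d=W_z(z)v_d$, but this extra channel is already discrete-time, so it simply enlarges the $\widetilde{z}_d$ component of the lifted generalized plant without touching the continuous-time operators to which FSFH is applied. Thus the same construction yields $T_{T,N}$ with $\lim_{N\to\infty}\|T_{T,N}\|=\|\T_T\|$. The main technical obstacle, in both parts, is the bookkeeping of the three interleaved rates ($h$, $h/M$, $h/N$) so that Proposition~\ref{prop:dlift} applies cleanly; once the rate-conversion operators are absorbed into generalized samplers/holds and constant $S$, $H$ matrices, the convergence is immediate from \cite{YamMadAnd99}.
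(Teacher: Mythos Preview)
Your proposal is correct and follows essentially the same route as the paper: reduce the multirate loop to a single-rate sampled-data system via the discrete-time lifting identities, apply fast-sampling/fast-hold at period $h/N$ with $N$ a multiple of $M$, lift by $\dlift{N}$, and invoke the convergence result of \cite{YamMadAnd99}. The paper's own proof of this theorem is in fact terser than yours---it only carries out the FSFH step and the $\dlift{N}$ lifting on the continuous channels, defines the approximants $T_{R,N}$, $T_{T,N}$ (with an explicit $N/h$ scaling on the purely discrete channels $n_d$, $z_d$), and cites \cite{YamMadAnd99}; the single-rate reduction you sketch is deferred to the next theorem, where the explicit generalized plants $G_{R,N}$, $G_{T,N}$ are written out.

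One small slip: in the block diagrams of Figures~\ref{fig:comm_err_Kr_syn}--\ref{fig:comm_err_Kt_syn} the order is $(\ds{M})K_T$ and $K_R(\us{M})$, not $K_T(\ds{M})$ and $(\us{M})K_R$ as you wrote, so the correct lifting identities are $(\ds{M})K_T\idlift{M}=\widetilde{K}_T$ and $\dlift{M}K_R(\us{M})=\widetilde{K}_R$ (cf.\ the proof of the theorem immediately following). This does not affect your argument, only the bookkeeping.
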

\begin{proof}
By the fast-sampling/fast-hold method, we approximate continuous-time inputs
and outputs to discrete-time ones
via the ideal  sampler and the zero-order hold that operate in the period $h/N$
(Figure \ref{fig:comm_fsfh_fig}). Assume $N=Ml, l\in\Nset$.
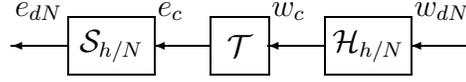
\begin{figure}[t]
\begin{center}
\setlength{\unitlength}{0.45mm}
\begin{picture}( 143,  25)
\put(  16.9,   0.0){\framebox(  25.3,  16.9){$\samp{h/N}$}}
\put( 134.9,   8.4){\vector(-1, 0){  16.9}}
\put(  92.7,   8.4){\vector(-1, 0){  16.9}}
\put(  59.0,   8.4){\vector(-1, 0){  16.9}}
\put(  16.9,   8.4){\vector(-1, 0){  16.9}}
\put(  92.7,   0.0){\framebox(  25.3,  16.9){$\hold{h/N}$}}
\put(  59.0,   0.0){\framebox(  16.9,  16.9){$\T$}}
\put(  43.2,  17.4){$e_c$}
\put(   1.0,  17.4){$e_{dN}$}
\put(  76.9,  17.4){$w_c$}
\put( 119.0,  17.4){$w_{dN}$}
\end{picture}
\end{center}
\caption{Fast-sampling/fast-hold discretization}
\label{fig:comm_fsfh_fig}
\end{figure}
Then apply the discrete-time lifting $\dlift{N}$ (see Section \ref{sec:SDC_fsfh})
to the discretized input/output signal $e_{dN}$ and $w_{dN}$,
we can obtain the lifted signals
\begin{equation*}
	\widetilde{e}_{dN} := \dlift{N}(e_{dN}) = \dlift{N}\samp{h/N}e_c,\quad 
	\widetilde{w}_{dN} := \dlift{N}(w_{dN}) = \dlift{N}\samp{h/N}w_c.
\end{equation*}
We next denote $T_{R,N}$ the system from $[\widetilde{w}_{dN}, n_d]^T$ to $\widetilde{e}_{dN}$
and the system $T_{T,N}$ from $\widetilde{w}_{dN}$ to $[\widetilde{e}_{dN}, z_d]^T$,
and define their norm
\begin{equation*}
\begin{split}
	\norm{T_{R,N}}^2&:= \sup_{\substack{\widetilde{w}_{dN},n_d\in l^2\\ \|\widetilde{w}_{dN}\|_{l^2}+\|n_d\|_{l^2}\neq 0}}
	\frac{\|\widetilde{e}_{dN}\|_{l^2}^2}
	     {\|\widetilde{w}_{dN}\|_{l^2}^2 + \frac{N}{h}\|n_d\|_{l^2}^2},\\
	\norm{T_{T,N}}^2&:= \sup_{\substack{\widetilde{w}_{dN}\in l^2\\ \widetilde{w}_{dN}\neq 0}}
	\frac{\|\widetilde{e}_{dN}\|_{l^2}^2 + \frac{N}{h}\|z_d\|_{l^2}^2}{\|\widetilde{w}_{dN}\|_{l^2}^2}.
\end{split}
\end{equation*}
Then we can show
$\|T_{R,N}\| \rightarrow \|\T_{R}\|$,
$\|T_{T,N}\| \rightarrow \|\T_{T}\|$,
as $N\rightarrow\infty$ by using the method as shown in \cite{YamMadAnd99}
under the assumption
$L=mh$.
\end{proof}

Once the problems have been reduced to discrete-time problems, they can be
solved by standard softwares such as MATLAB.
The resulting discrete-time approximant is given by the following:
\begin{theorem}
    The approximated discrete-time systems $T_{R,N}$ and $T_{T,N}$ are given as follows:
\begin{gather*}
\begin{split}
T_{R,N} &:= \lft\left(G_{R,N},\; \widetilde{K}_R\right),\quad
T_{T,N} := \lft\left(G_{T,N},\; \widetilde{K}_T\right),\\
G_{R,N}
&:= \left[\begin{array}{cc}
	\left[\begin{array}{cc}z^{-m}F_{dN},&0\end{array}\right],&-P_{dN}H\\
	\left[\begin{array}{cc}C \widetilde{K}_TSF_{dN},&W_n\end{array}\right],&0
\end{array}\right],\\
G_{T,N}
&:= \left[\begin{array}{cc}
	\left[\begin{array}{c}z^{-m}F_{dN}\\0\end{array}\right],
	&\left[\begin{array}{c}-P_{dN}H\widetilde{K}_RC\\W_z\end{array}\right]\\
	SF_{dN},&0
\end{array}\right],\\
S &:= \left.\left[\begin{array}{ccc}p& & \\ &\ddots & \\ & &p
	\end{array}\right]\right\}M,\quad 
p := [1,\underbrace{0,\ldots,0}_{k-1}],\\
H &:= \underbrace{\left[\begin{array}{ccc}q& & \\ &\ddots & \\ & &q
	\end{array}\right]}_{M},\quad
q := [\underbrace{1,\ldots,1}_{k}]^T,\\
F_{dN} &:= \dliftsys{N}{F},\quad P_{dN}(z):=\dliftsys{N}{P}.
\end{split}
\end{gather*}    
\end{theorem}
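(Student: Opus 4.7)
The plan is to take each error system in Figures 4.6 and 4.7, insert fast sampler/fast hold pairs at the continuous input $w_c$ and continuous output $e_c$, then apply the discrete-time lifting $\dlift{N}$ as in Section 2.4 so that the whole diagram reduces to a purely discrete-time LTI block diagram. Once in that form, reading off $G_{R,N}$ and $G_{T,N}$ is a matter of identifying each path from exogenous inputs $[\widetilde{w}_{dN}, n_d]^T$ (respectively $\widetilde{w}_{dN}$) to the regulated outputs and to the filter port. The identification with $\dliftsys{N}{F}$ and $\dliftsys{N}{P}$ follows from the definition $\dliftsys{N}{G} := \dlift{N}\samp{h/N}G\hold{h/N}\idlift{N}$.

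First, I would handle the top path $w_c \mapsto e^{-mhs}F(s) \mapsto$ subtraction node. Since $L=mh$, the delay commutes with the fast sampling and becomes the discrete shift $z^{-m}$, yielding the block $z^{-m}\dliftsys{N}{F} = z^{-m}F_{dN}$ acting on $\widetilde{w}_{dN}$. Next, for the lower path I would rewrite the anti-aliasing sampler $\samp{h/M}$ and hold $\hold{h/M}$ using Proposition 2.1 (with the roles of $M_1$ and $M_2$ set to $M$): since $N = Mk$ for some $k\in\Nset$, we have $\dlift{M}\samp{h/M} = S\dlift{N}\samp{h/N}$ and $\hold{h/M}\idlift{M} = \hold{h/N}\idlift{N}H$, with $S,H$ exactly the block matrices in the theorem. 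Sandwiching $G=F$ or $G=P$ between the fast sampler/hold pair produces $F_{dN}$ and $P_{dN}$, while the leftover $S$ and $H$ matrices handle the rate interfacing. Upsampler $\us{M}$ and downsampler $\ds{M}$ disappear because they are absorbed into $S$ and $H$ through the relations in (3.4)--(3.5).

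Third, I would assemble the two LFTs. For Step 1 (designing $K_R$), the filter $\widetilde{K}_T$ is fixed and sits in the forward path between $SF_{dN}$ and the channel $C$, so the "augmented channel" seen by $K_R$ is $C\widetilde{K}_TSF_{dN}$; the noise $n_d$ enters additively through the weight $W_n$, giving the second row $[C\widetilde{K}_TSF_{dN},\ W_n]$ of $G_{R,N}$. The feedback/LFT port closes through the block $-P_{dN}H$ that delivers $K_R$'s output back to the error summing junction (the minus sign comes from $e_c = z_c - u_c(\cdot - mh)$). For Step 2 (designing $K_T$), the fixed $\widetilde{K}_R$ is now absorbed into the return path $-P_{dN}H\widetilde{K}_RC$, and the regulated output picks up a second row $W_z \widetilde{w}_{dN}'$ corresponding to $z_d = W_z v_d$, which gives the stacked $(z^{-m}F_{dN},0)^T$ and $(-P_{dN}H\widetilde{K}_R C, W_z)^T$ columns. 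The measurement port for $K_T$ is $SF_{dN}\widetilde{w}_{dN}$.

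The main obstacle will be bookkeeping rather than anything deep: making sure the rate-conversion identities from Proposition 2.1 are applied on the correct side (left of $\hold{h/M}$ versus right of $\samp{h/M}$), and that the combinatorial matrices $S$ (a $M \times N$ "pick every $k$-th sample" matrix) and $H$ (its $N \times M$ dual "hold each sample $k$ times") appear in the correct position with the correct orientation so that sign conventions and dimensions match. The convergence $\|T_{R,N}\|\to\|\T_R\|$ and $\|T_{T,N}\|\to\|\T_T\|$ is already furnished by Theorem 4.1 via the argument of Yamamoto--Madievski--Anderson, so no additional estimate is required beyond verifying that the finite-dimensional $T_{R,N}$ and $T_{T,N}$ defined by the LFTs above coincide with the fast-sampled/fast-held systems to which that convergence theorem applies.
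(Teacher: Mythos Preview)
Your proposal is correct and follows essentially the same route as the paper: write out the input/output relations of the error diagram, insert the fast sampler/hold pair at $w_c$ and $e_c$, lift by $\dlift{N}$, invoke Proposition~2.1 to replace $\samp{h/M}$ and $\hold{h/M}$ by $S\dlift{N}\samp{h/N}$ and $\hold{h/N}\idlift{N}H$, and then read off the LFT. One small slip in your description: the upsampler and downsampler are not absorbed into $S$ and $H$ but into the \emph{lifted filters} $\widetilde{K}_R$ and $\widetilde{K}_T$ via the identities $(\us{M})=\idlift{M}[1,0,\ldots,0]^T$ and $(\ds{M})=[1,0,\ldots,0]\dlift{M}$; that is exactly how the paper obtains $\dlift{M}K_R(\us{M})=\widetilde{K}_R$ and $C(\ds{M})K_T\idlift{M}=C\widetilde{K}_T$.
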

\begin{proof}
First, consider the receiving filter design in Figure \ref{fig:comm_err_Kr_syn}.
The relation between the input and the output is as follows:
\begin{align}
e_c &= e^{-mhs}Fw_c - P\hold{h/M}u_d,\label{eq:comm_thrm_r_e}\\
\nu_d &= C(\ds{M})K_T\samp{h/M}Fw_c + W_n n_d,\label{eq:comm_thrm_r_nu}\\
u_d &= K_R(\us{M})\nu_d.
\label{eq:comm_thrm_r_u}
\end{align}
Then applying the fast-sampling and the discrete-time lifting $\dlift{N}$
to the input $w_c$ and the output $w_c$, we have from (\ref{eq:comm_thrm_r_e})
\begin{equation*}
\begin{split}
\dlift{N}\samp{h/N}e_c &= \dlift{N}\samp{h/N}e^{-mhs}F\hold{h/N}\idlift{N}\widetilde{w}_{Nd}
	-\dlift{N}\samp{h/N}P\hold{h/M}u_d\\
&= z^{-m}\dliftsys{N}{F}\widetilde{w}_{Nd} - \dliftsys{N}{P}H\dlift{M}u_d,
\end{split}
\end{equation*}
and from (\ref{eq:comm_thrm_r_nu})
\begin{equation*}
\begin{split}
\nu_d &=C(\ds{M})K_T\samp{h/M}F\hold{h/N}\idlift{N}\widetilde{w}_{Nd} + W_n n_d\\
&= C(\ds{M})K_T\idlift{M}S\dliftsys{N}{F}\widetilde{w} + W_n n_d.
\end{split}
\end{equation*}
In these equations we have used Proposition \ref{prop:dlift}.
Then by using the relation (\ref{eq:dlift_prop2}),
we convert the multirate systems to single rate ones:
\begin{equation*}
\begin{split}
\dlift{M}u_d &= \dlift{M}K_R(\us{M})\nu_d = \dlift{M}K_R\idlift{M}[1, 0, \ldots, 0]^T\nu_d = \widetilde{K}_R\nu_d,\\
C(\ds{M})K_T\idlift{M} &= C[1, 0, \ldots, 0]\dlift{M}K_T\idlift{M} = C\widetilde{K}_T.
\end{split}
\end{equation*}
Consequently, we have
\begin{equation*}
\begin{split}
\widetilde{e}_{dN} &= 
\left[\begin{array}{cc}z^{-m}F_{dN}&0\end{array}\right]
\left[\begin{array}{c}\widetilde{w}_{Nd}\\n_d\end{array}\right] - P_{dN}H\widetilde{u}_d,\\
\nu_d &= 
\left[\begin{array}{cc}C \widetilde{K}_TSF_{dN}&Wn\end{array}\right]
\left[\begin{array}{c}\widetilde{w}_{Nd}\\n_d\end{array}\right],\\
\widetilde{u}_d &= \widetilde{K}_R\nu_d,
\end{split}
\end{equation*}
where $\widetilde{u}_d:=\dlift{M}u_d$.
It implies that $\widetilde{e}_{dN} = \lft(G_{R,N},\widetilde{K}_R)\widetilde{w}_{dN}$.

The proof for $G_{T,N}$ is almost similar to that for $G_{R,N}$ and hence 
we omit the proof.

\end{proof}
Then our design problems (\ref{eq:prob_Kr}) and (\ref{eq:prob_Kt}) are reduced to 
finite-dimensional discrete-time $H^\infty$ design problems, which are
shown in Figure \ref{fig:comm_GRN} and Figure \ref{fig:comm_GTN}.
\begin{figure}[t]
\begin{center}
\unitlength 0.1in
\begin{picture}(16.00,11.20)(2.00,-13.00)
%
\special{pn 8}%
\special{pa 600 400}%
\special{pa 200 400}%
\special{fp}%
\special{sh 1}%
\special{pa 200 400}%
\special{pa 267 420}%
\special{pa 253 400}%
\special{pa 267 380}%
\special{pa 200 400}%
\special{fp}%
%
\special{pn 8}%
\special{pa 1200 1200}%
\special{pa 1600 1200}%
\special{fp}%
\special{pa 1600 1200}%
\special{pa 1600 800}%
\special{fp}%
\special{pa 1600 800}%
\special{pa 1600 800}%
\special{fp}%
%
\special{pn 8}%
\special{pa 1600 800}%
\special{pa 1400 800}%
\special{fp}%
\special{sh 1}%
\special{pa 1400 800}%
\special{pa 1467 820}%
\special{pa 1453 800}%
\special{pa 1467 780}%
\special{pa 1400 800}%
\special{fp}%
%
\special{pn 8}%
\special{pa 400 1200}%
\special{pa 400 1200}%
\special{fp}%
%
\special{pn 8}%
\special{pa 400 1200}%
\special{pa 800 1200}%
\special{fp}%
\special{sh 1}%
\special{pa 800 1200}%
\special{pa 733 1180}%
\special{pa 747 1200}%
\special{pa 733 1220}%
\special{pa 800 1200}%
\special{fp}%
%
\special{pn 8}%
\special{pa 600 800}%
\special{pa 400 800}%
\special{fp}%
\special{pa 400 800}%
\special{pa 400 1200}%
\special{fp}%
%
\special{pn 8}%
\special{pa 1800 400}%
\special{pa 1400 400}%
\special{fp}%
\special{sh 1}%
\special{pa 1400 400}%
\special{pa 1467 420}%
\special{pa 1453 400}%
\special{pa 1467 380}%
\special{pa 1400 400}%
\special{fp}%
%
\special{pn 8}%
\special{pa 1400 300}%
\special{pa 600 300}%
\special{pa 600 900}%
\special{pa 1400 900}%
\special{pa 1400 300}%
\special{fp}%
%
\special{pn 8}%
\special{pa 800 1100}%
\special{pa 1200 1100}%
\special{pa 1200 1300}%
\special{pa 800 1300}%
\special{pa 800 1100}%
\special{fp}%
\put(10.0000,-6.0000){\makebox(0,0){$G_{R,N}$}}%
\put(10.0000,-12.0000){\makebox(0,0){$K_R$}}%
\put(2.6000,-3.5000){\makebox(0,0)[lb]{$\widetilde{e}_{dN}$}}%
\put(16.6000,-3.5000){\makebox(0,0)[lb]{$\widetilde{w}_{dN}$}}%
\put(14.6000,-5.5000){\makebox(0,0)[lb]{$n_d$}}%
%
\special{pn 8}%
\special{pa 1800 600}%
\special{pa 1400 600}%
\special{fp}%
\special{sh 1}%
\special{pa 1400 600}%
\special{pa 1467 620}%
\special{pa 1453 600}%
\special{pa 1467 580}%
\special{pa 1400 600}%
\special{fp}%
\end{picture}%
\end{center}
\caption{Discrete-time system for designing receiving filter $K_R$}
\label{fig:comm_GRN}
\end{figure}
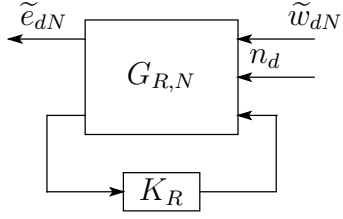
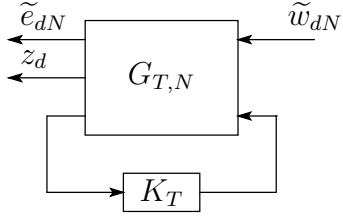
\begin{figure}[t]
\begin{center}
\unitlength 0.1in
\begin{picture}(16.00,11.20)(2.00,-13.00)
%
\special{pn 8}%
\special{pa 600 400}%
\special{pa 200 400}%
\special{fp}%
\special{sh 1}%
\special{pa 200 400}%
\special{pa 267 420}%
\special{pa 253 400}%
\special{pa 267 380}%
\special{pa 200 400}%
\special{fp}%
%
\special{pn 8}%
\special{pa 1200 1200}%
\special{pa 1600 1200}%
\special{fp}%
\special{pa 1600 1200}%
\special{pa 1600 800}%
\special{fp}%
\special{pa 1600 800}%
\special{pa 1600 800}%
\special{fp}%
%
\special{pn 8}%
\special{pa 1600 800}%
\special{pa 1400 800}%
\special{fp}%
\special{sh 1}%
\special{pa 1400 800}%
\special{pa 1467 820}%
\special{pa 1453 800}%
\special{pa 1467 780}%
\special{pa 1400 800}%
\special{fp}%
%
\special{pn 8}%
\special{pa 400 1200}%
\special{pa 400 1200}%
\special{fp}%
%
\special{pn 8}%
\special{pa 400 1200}%
\special{pa 800 1200}%
\special{fp}%
\special{sh 1}%
\special{pa 800 1200}%
\special{pa 733 1180}%
\special{pa 747 1200}%
\special{pa 733 1220}%
\special{pa 800 1200}%
\special{fp}%
%
\special{pn 8}%
\special{pa 600 800}%
\special{pa 400 800}%
\special{fp}%
\special{pa 400 800}%
\special{pa 400 1200}%
\special{fp}%
%
\special{pn 8}%
\special{pa 1800 400}%
\special{pa 1400 400}%
\special{fp}%
\special{sh 1}%
\special{pa 1400 400}%
\special{pa 1467 420}%
\special{pa 1453 400}%
\special{pa 1467 380}%
\special{pa 1400 400}%
\special{fp}%
%
\special{pn 8}%
\special{pa 1400 300}%
\special{pa 600 300}%
\special{pa 600 900}%
\special{pa 1400 900}%
\special{pa 1400 300}%
\special{fp}%
%
\special{pn 8}%
\special{pa 800 1100}%
\special{pa 1200 1100}%
\special{pa 1200 1300}%
\special{pa 800 1300}%
\special{pa 800 1100}%
\special{fp}%
\put(10.0000,-6.0000){\makebox(0,0){$G_{T,N}$}}%
\put(10.0000,-12.0000){\makebox(0,0){$K_T$}}%
\put(2.6000,-3.5000){\makebox(0,0)[lb]{$\widetilde{e}_{dN}$}}%
\put(16.6000,-3.5000){\makebox(0,0)[lb]{$\widetilde{w}_{dN}$}}%
\put(2.6000,-5.5000){\makebox(0,0)[lb]{$z_d$}}%
%
\special{pn 8}%
\special{pa 600 600}%
\special{pa 200 600}%
\special{fp}%
\special{sh 1}%
\special{pa 200 600}%
\special{pa 267 620}%
\special{pa 253 600}%
\special{pa 267 580}%
\special{pa 200 600}%
\special{fp}%
\end{picture}%
\end{center}
\caption{Discrete-time system for designing transmitting filter $K_T$}
\label{fig:comm_GTN}
\end{figure}

\section{Design examples}
\subsection{The case of no compression ($M=1$)}
\subsubsection{Design for $W_z=0$}
In this section, we present a design example for 
\begin{gather*}
    F(s) := \frac{1}{10s+1},\quad P(s) := 1,\quad W_n(z):=1,\\
    C(z):=1+0.65z^{-1}-0.52z^{-2}-0.2975z^{-3},\\ 
\end{gather*}
with sampling period $h=1$ and time delay $L=mh=2$.
We here consider the case of no compression (i.e., $M=1$) and
no limitation on transmission (i.e., $W_z=0$).
An approximate design is executed here for $N=8$.
For comparison, the discrete-time $H^\infty$ design \cite{ErdHasKai2000}
is also executed.

\begin{figure}[t]
    \begin{center}
	\includegraphics[width=0.7\linewidth]{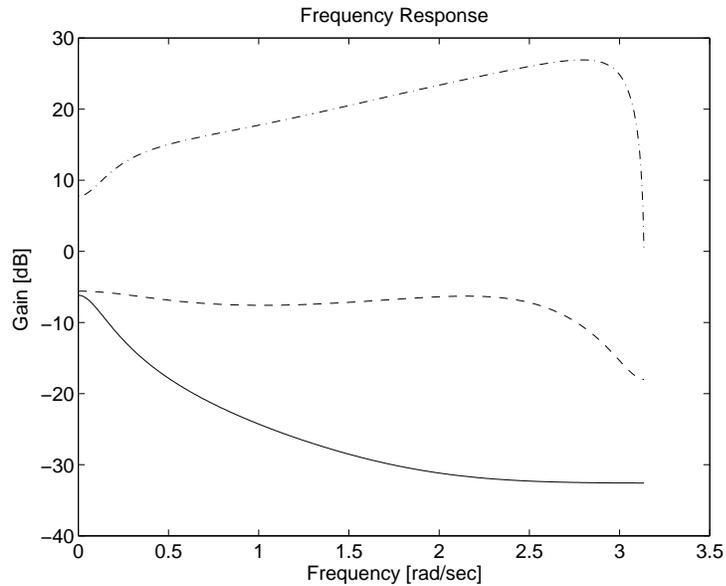}
    \end{center}
    \caption{Gain responses of filters: sampled-data $H^\infty$
    design (transmitting filter: solid, receiving filter: dots) and discrete-time $H^\infty$ design (dash)}
    \label{fig:equalizers}
\end{figure}
\begin{figure}[t]
    \begin{center}
	\includegraphics[width=0.7\linewidth]{Tew1.eps}
    \end{center}
    \caption{Frequency responses of $\T_{ew}$: sampled-data $H^\infty$
    design (solid) and discrete-time $H^\infty$ design (dash)}
    \label{fig:Tew}
\end{figure}
\begin{figure}[t]
    \begin{center}
	\includegraphics[width=0.7\linewidth]{Ten.eps}
    \end{center}
    \caption{Frequency responses of $\T_{zn}$: sampled-data $H^\infty$
    design (solid) and discrete-time $H^\infty$ design (dash)}
    \label{fig:Ten}
\end{figure}

Figure \ref{fig:equalizers} shows the gain responses of the filters, and
Figure \ref{fig:Tew} shows the frequency responses of $\T_{ew}$ 
(the system from the input $w_c$ to the error $e_c$), and Figure~\ref{fig:Ten} 
shows those of $\T_{zn}$ (the system from the additive noise $n_d$ to the output $z_c$).
Compared with the discrete-time design, the sampled-data one shows better frequency response
both in $\T_{ew}$ and in $\T_{zn}$.
Moreover, we can say that an equalizer alone cannot attenuate the
corruption caused by the channel and the additive noise,
that is,
we need an appropriate transmitter for transmission.

To explain this fact, we show a simulation of these communication systems.
The input signal $y_c$ is a rectangular wave with amplitude 1,
and the noise disturbance $n_d$ is a discrete-time sinusoid: $n_d[k] = \sin(2k)$.
Figure~\ref{fig:simsd}
shows the output $z_c$ with the receiving filter and the
transmitting filter designed via sampled-data method, and
Figure \ref{fig:sim_dt} shows that with the receiving filter
designed in discrete-time (and without any transmitting filter).
We see that the former shows much better
reconstruction against the noise than the latter.
\begin{figure}[t]
    \begin{center}
	\includegraphics[width=0.7\linewidth]{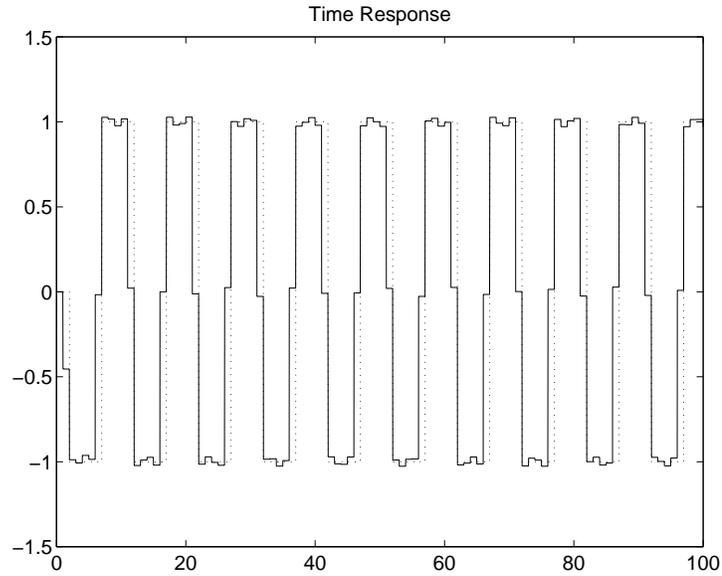}
    \end{center}
    \caption{Time response with sampled-data design}
    \label{fig:simsd}
\end{figure}
\begin{figure}[t]
    \begin{center}
    \includegraphics[width=0.7\linewidth]{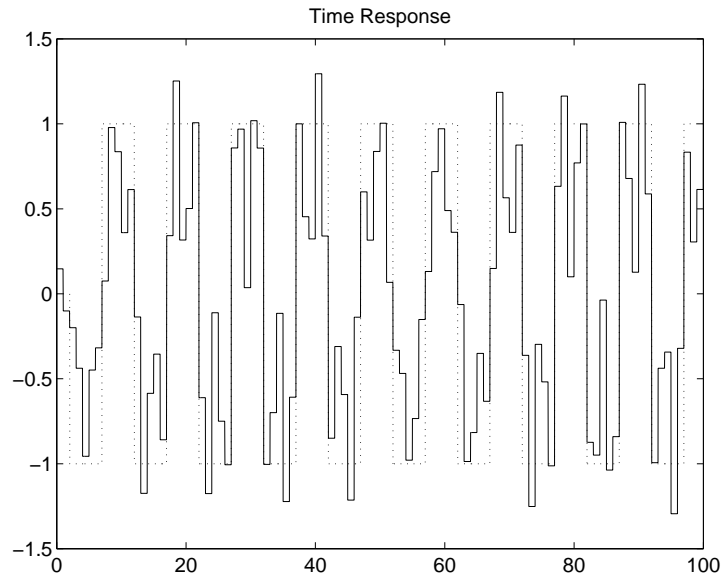}
    \end{center}
    \caption{Time response with discrete-time design}
    \label{fig:sim_dt}
\end{figure}

\subsubsection{Design for $W_z(z)\neq 0$}
We now consider the design with the estimation of the
transmitting signal $v_d$, that is $W_z(z)\neq 0$.

We observe from Figure \ref{fig:equalizers} that the transmitting filter
shows high gains around the Nyquist frequency (i.e. $\omega\approx\pi$), and hence
we take 
\begin{equation*}
W_z(z) = r\cdot\frac{z-1}{z+0.5},
\end{equation*}
as the weighting function of the transmitting signal,
where the parameter $r=0.21$.
The other design parameters are the same as the example above.

Figure \ref{fig:rz_inf_norm} shows the $H^\infty$-norm
of $\T_{ew}$ and $\T_{vw}$ (the system from $w_c$ to
$v_d$ in Figure \ref{fig:comm_design_error}) that varies with 
$r \in [0,5]$.
We can take account of a trade-off between the error attenuation level and 
the amount of the transmitting signal with Figure \ref{fig:rz_inf_norm}.
\begin{figure}[t]
    \begin{center}
	\includegraphics[width=0.7\linewidth]{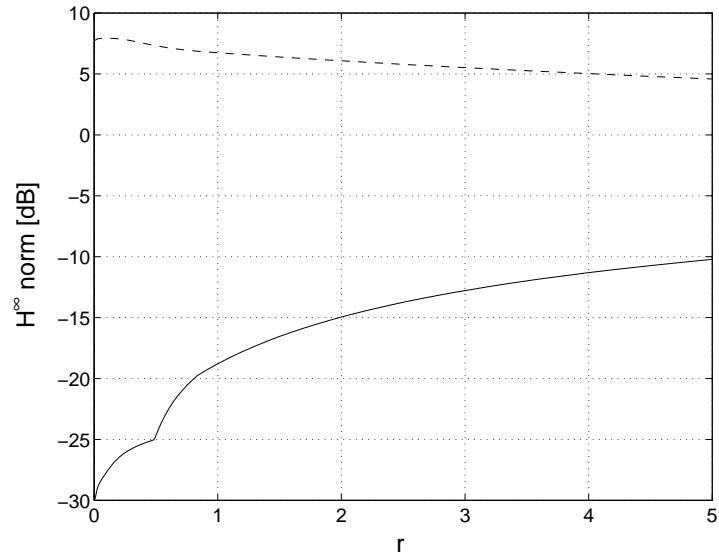}
    \end{center}
    \caption{Relation between $r$ and $\|T_{ew}\|$ (solid),
    $\|T_{vw}\|$ (dash)}
    \label{fig:rz_inf_norm}
\end{figure}
For example, we choose $r=0.21$ in order to attenuate the error less than $-26$dB.
\begin{figure}[t]
    \begin{center}
	\includegraphics[width=0.7\linewidth]{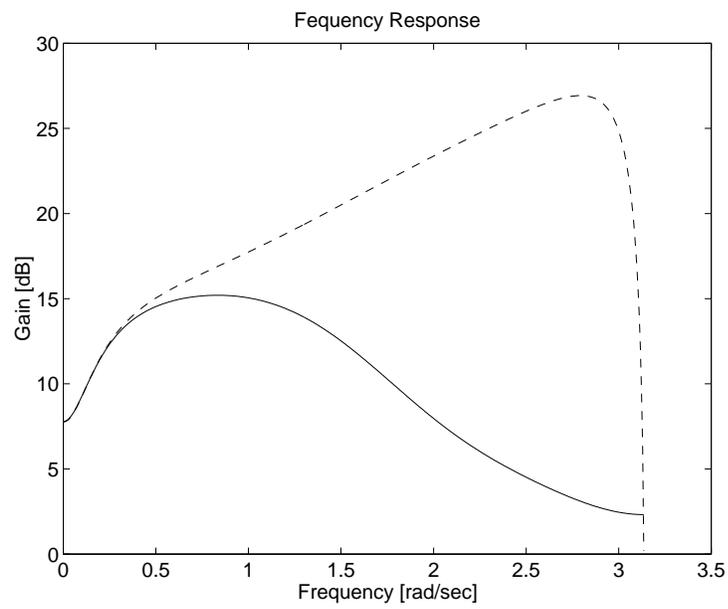}
    \end{center}
    \caption{Gain responses of transmitting filters designed for $r=0.21$ (solid) 
      and $r=0$ (dash)}
    \label{fig:equalizers2}
\end{figure}

Figure \ref{fig:equalizers2} shows the gain responses of transmitting filters designed for 
$r=0$ and $r=0.21$. We can see that the new filter shows better attenuation 
than the filter designed for $r=0$ at high frequency.

Figure \ref{fig:Tew2} shows the frequency responses of the error system $\T_{ew}$.
We see that the attenuation level of $\T_{ew}$ designed for $r=0.21$ is less than $-26$dB.
Figure \ref{fig:Tvw} shows the frequency responses of $\T_{vw}$.
We can see that the amount of the transmitting signal is attenuated at high frequency.

\begin{figure}[t]
    \begin{center}
	\includegraphics[width=0.7\linewidth]{Tew2.eps}
    \end{center}
    \caption{Frequency responses of $\T_{ew}$ designed for $r=0.21$
    (solid) and $r=0$ (dash)}
    \label{fig:Tew2}
\end{figure}
\begin{figure}[t]
    \begin{center}
	\includegraphics[width=0.7\linewidth]{Tuw2.eps}
    \end{center}
    \caption{Frequency responses of $\T_{vw}$ designed for $r=0.21$
    (solid) and $r=0$ (dash)}
    \label{fig:Tvw}
\end{figure}

\subsection{Compression effects}
In this section, we consider compression effects.
First, we take compression with the downsampling factor  $M=1,2,4,8$.
It means that the size of data transmitted is compressed
to $1, 1/2, 1/4, 1/8$, respectively.
For simplicity, we put $W_z=0$.
The other parameters are the same as the previous example.
Denote $\T_{ew}$ the system from $w_c$ to $e_c$ and
$\T_{zn}$ the system from $n_d$ to $e_c$.

We show the frequency responses of $\T_{ew}$ in Figure \ref{fig:Tew_M} and
those of $\T_{zn}$ in Figure \ref{fig:Ten_M} for $M=1,2,4,8$.
\begin{figure}[t]
     \begin{center}
    @\includegraphics[width=.7\linewidth]{Tew_M2.eps}
     \end{center}
     \caption{Frequency responses of $T_{ew}$: compression ratio $M=1$ (solid), $M=2$ (dash), $M=4$ (dot) and $M=8$ (dash-dot)}
     \label{fig:Tew_M} 
\end{figure} 
\begin{figure}[t]
     \begin{center}
    \includegraphics[width=.7\linewidth]{Ten_M2.eps}
     \end{center}
     \caption{Frequency responses of $T_{zn}$: compression ratio  $M=1$ (solid), $M=2$ (dash), $M=4$ (dot) and $M=8$ (dash-dot)}
     \label{fig:Ten_M}
\end{figure} 
From Figure \ref{fig:Tew_M}, we can see that when we compress the signal to $1/M$,
the gain of $\|\T_{ew}\|$ (i.e., the maximum gain) increases about $6\times M$ dB.
In particular, the compression of $M=2$ is almost comparative 
with the case of no compression in the low frequency range; the difference is about 0.5dB.
Figure \ref{fig:Ten_M} indicates that the larger the compression ratio $M$, the smaller
$\|T_{zn}\|$.

Next we simulate a transmission with compression ratio $M=4$.
For simplicity, we put $C_d=1$.
The source is a speech signal with the sampling frequency 22.05kHz.
The frequency of the downsampled signal transmitted is 5.5125kHz.

Figure \ref{fig:fft_orig} shows the FFT of the source speech signal.
\begin{figure}[t]
     \begin{center}
    @\includegraphics[width=.7\linewidth]{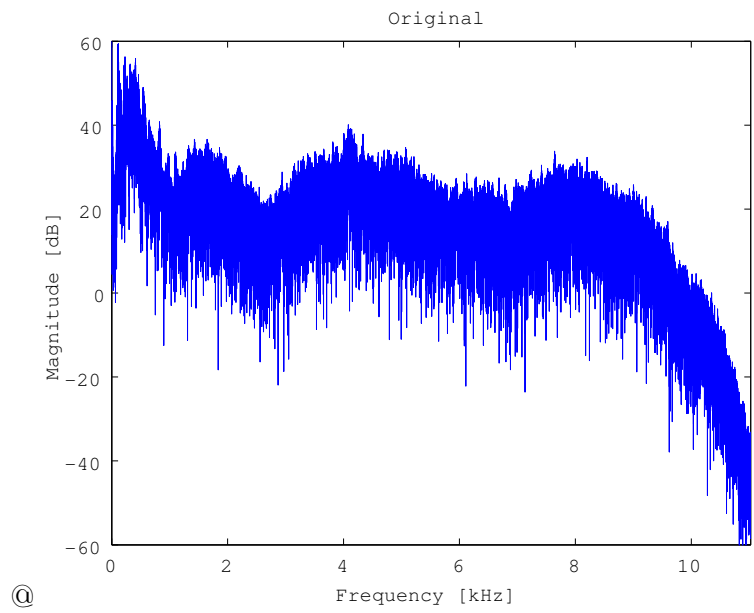}
     \end{center}
     \caption{FFT of the source}
     \label{fig:fft_orig}
\end{figure} 
Then we show the FFT of the reconstructed signal $u_d$ 
(see Figure~\ref{fig:comm_sys}) that is designed by the sampled-data $H^\infty$ optimization.
\begin{figure}[t]
     \begin{center}
    \includegraphics[width=.7\linewidth]{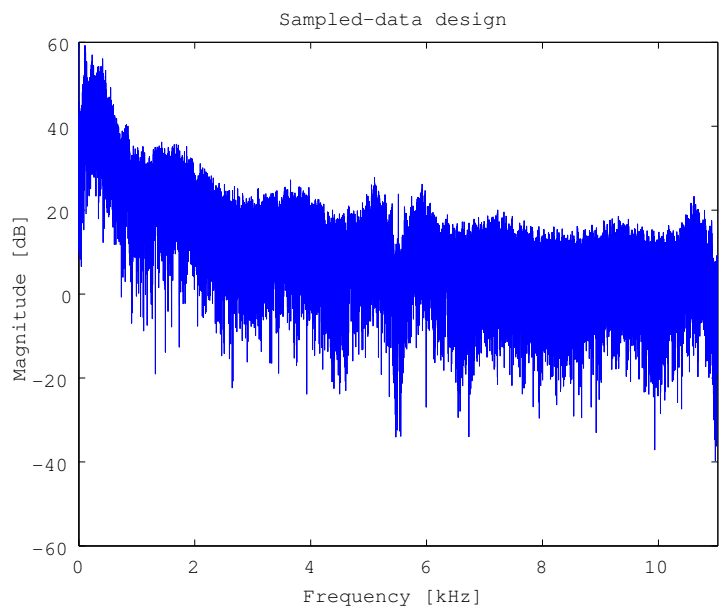}
     \end{center}
     \caption{FFT of the reconstructed signal (sampled-data design)}
     \label{fig:fft_sd}
\end{figure} 
For comparison,  we take the equiripple filter \cite{Fli,Vai,Zel} with
the cut-off frequency $\omega_c=22.05/8\approx 2.76$kHz.
This filter is often used in multirate signal processing.
The FFT of the reconstructed signal is shown in Figure \ref{fig:fft_d}.
\begin{figure}[t]
     \begin{center}
    \includegraphics[width=.7\linewidth]{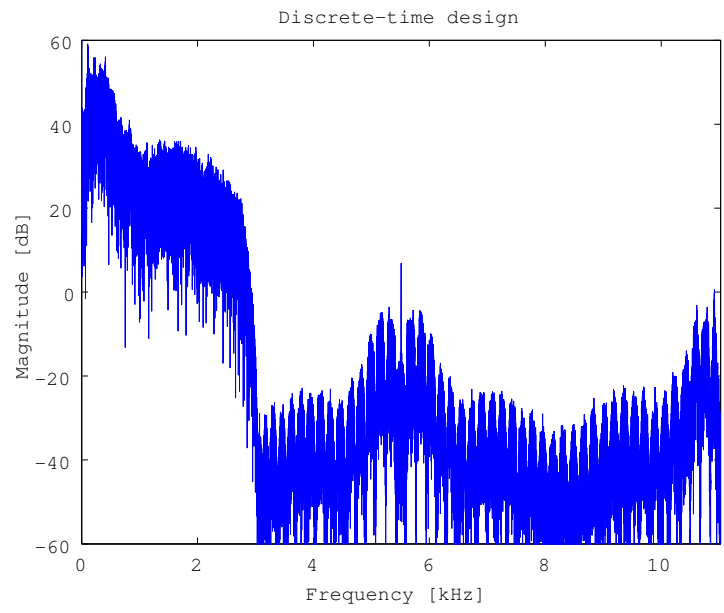}
     \end{center}
     \caption{FFT of the reconstructed signal (equiripple  design)}
     \label{fig:fft_d}
\end{figure} 
We can see from Figure \ref{fig:fft_d} that the FFT of the signal 
processed by the equiripple filter
is sharply cut at the frequency about 2.76kHz,
while that of sampled-data design shows slow decay.
The FFT plot of the source signal shown in Figure \ref{fig:fft_orig}
indicates that the frequency of the original signal does not very decay up to 8kHz,
and we can say that the response by the sampled-data design is better
than that by the conventional design.
In fact, the reconstructed voice by equiripple filter sounds blur because of lack
of high frequency, 
while that by the sampled-data design sounds clearer.
However, the sound by the sampled-data design has high frequency noise.
To reduce the noise, we have to introduce a low-pass filter,
whose design needs further discussion.

\section{Conclusion}
In this chapter, we have treated communication systems
which contains signal compression.
Under distortion by a channel, we have presented a design method
of transmitting/receiving filters
by using sampled-data $H^\infty$ optimization.
By iterating a transmitting filter design and a receiving filter design,
we can obtain sub optimal filters.
We have shown that the objective function monotonically decreases
by the iteration.

In this design, the channel is assumed to be time-invariant.
However, the real channel often contains time-varying systems, in particular, in the case of wireless communication.
Moreover, the real channel is very complicated and we should notice that
the model of the channel always contains a modeling error.
To overcome this, we have to choose an adaptive filter.
Design for adaptive filters by using sampled-data theory is an important subject for the future.
\chapter{Minimization of Quantization Errors}
\label{ch:qqq}
\section{Introduction}
In digital signal processing, digital communications and digital control,
analog signals have to be discretized by an A/D converters to become digital signals.
In discretization, we have two operations; sampling and quantization.
Sampling is discretization in time, and the model is described as
a linear one, which is relatively easy to analyze mathematically
(e.g., via lifting discussed in Chapter \ref{ch:SDC}).
On the other hand, quantization, discretization in amplitude, 
is a nonlinear operation, and its analysis is much more difficult.

For analyzing such a nonlinearity, an additive noise model has been widely used
\cite{DCDS,Del90}. The model is then linear and hence we can easily
analyze or design a system with quantization,
but there has not been much established knowledge on the characteristic of 
the nonlinear system designed by the linearization method.

In this chapter, we first discuss the stability of quantized sampled-data control systems.
Then we investigate how much quantization influences  the performance of a sampled-data system.
We will show that
\begin{itemize}
\item if the linear model is stable, the states of the quantized system are bounded, 
\item if the linear model has a small $L^2$ gain (i.e., the $H^\infty$-norm of the system), 
the quantized system has a small power gain.
\end{itemize}
It follows that the linearization method is valid for analyzing and designing 
systems with quantizations.

Next we apply the above results to the design of a quantizer 
that is called differential pulse code modulation (DPCM) \cite{Pro}.
Although a large number of studies have been made on DPCM systems,
little is known on the stability and the performance,
which we will discuss by means of linearization.

In this chapter, we measure the performance by
the power norm defined as follows:
\begin{equation*}
\begin{split}
\pow(z)^2&:=\lim_{T\rightarrow\infty}\frac{1}{2T}\int_{-T}^{T}|z(t)|^2dt \quad \text{(continuous-time)},\\
\pow[z_d]^2&:=\lim_{N\rightarrow\infty}\frac{1}{2N}\sum_{k=-N}^{N}|z_d[k]|^2 \quad \text{(discrete-time)},
\end{split}
\end{equation*}
and the supremum norm
\begin{equation*}
\|z_d\|_\infty := \sup_{k\in\Zp} |z_d[k]|.
\end{equation*}
\section{Sampled-data control systems with quantization}
\subsection{Additive noise model for quantizer}
We will begin with a sampled-data system with quantization shown in Figure \ref{fig:qsd}.
\begin{figure}[t]
\begin{center}
\unitlength 0.1in
\begin{picture}(21.00,18.55)(2.17,-21.33)
%
\special{pn 8}%
\special{pa 817 333}%
\special{pa 1717 333}%
\special{pa 1717 1233}%
\special{pa 817 1233}%
\special{pa 817 333}%
\special{fp}%
%
\special{pn 8}%
\special{pa 817 483}%
\special{pa 217 483}%
\special{fp}%
\special{sh 1}%
\special{pa 217 483}%
\special{pa 284 503}%
\special{pa 270 483}%
\special{pa 284 463}%
\special{pa 217 483}%
\special{fp}%
%
\special{pn 8}%
\special{pa 367 1383}%
\special{pa 667 1383}%
\special{pa 667 1683}%
\special{pa 367 1683}%
\special{pa 367 1383}%
\special{fp}%
%
\special{pn 8}%
\special{pa 817 1833}%
\special{pa 1117 1833}%
\special{pa 1117 2133}%
\special{pa 817 2133}%
\special{pa 817 1833}%
\special{fp}%
%
\special{pn 8}%
\special{pa 1117 1983}%
\special{pa 1417 1983}%
\special{fp}%
\special{sh 1}%
\special{pa 1417 1983}%
\special{pa 1350 1963}%
\special{pa 1364 1983}%
\special{pa 1350 2003}%
\special{pa 1417 1983}%
\special{fp}%
%
\special{pn 8}%
\special{pa 517 1683}%
\special{pa 517 1983}%
\special{fp}%
%
\special{pn 8}%
\special{pa 517 1983}%
\special{pa 817 1983}%
\special{fp}%
\special{sh 1}%
\special{pa 817 1983}%
\special{pa 750 1963}%
\special{pa 764 1983}%
\special{pa 750 2003}%
\special{pa 817 1983}%
\special{fp}%
%
\special{pn 8}%
\special{pa 1717 1983}%
\special{pa 2017 1983}%
\special{fp}%
\special{pa 2017 1983}%
\special{pa 2017 1983}%
\special{fp}%
%
\special{pn 8}%
\special{pa 2017 1983}%
\special{pa 2017 1683}%
\special{fp}%
\special{sh 1}%
\special{pa 2017 1683}%
\special{pa 1997 1750}%
\special{pa 2017 1736}%
\special{pa 2037 1750}%
\special{pa 2017 1683}%
\special{fp}%
%
\special{pn 8}%
\special{pa 2017 1383}%
\special{pa 2017 1083}%
\special{fp}%
%
\special{pn 8}%
\special{pa 2017 1083}%
\special{pa 1717 1083}%
\special{fp}%
\special{sh 1}%
\special{pa 1717 1083}%
\special{pa 1784 1103}%
\special{pa 1770 1083}%
\special{pa 1784 1063}%
\special{pa 1717 1083}%
\special{fp}%
%
\special{pn 8}%
\special{pa 817 1083}%
\special{pa 517 1083}%
\special{fp}%
%
\special{pn 8}%
\special{pa 517 1083}%
\special{pa 517 1383}%
\special{fp}%
\special{sh 1}%
\special{pa 517 1383}%
\special{pa 537 1316}%
\special{pa 517 1330}%
\special{pa 497 1316}%
\special{pa 517 1383}%
\special{fp}%
%
\special{pn 8}%
\special{pa 2317 483}%
\special{pa 1717 483}%
\special{fp}%
\special{sh 1}%
\special{pa 1717 483}%
\special{pa 1784 503}%
\special{pa 1770 483}%
\special{pa 1784 463}%
\special{pa 1717 483}%
\special{fp}%
%
\special{pn 8}%
\special{pa 1417 1833}%
\special{pa 1717 1833}%
\special{pa 1717 2133}%
\special{pa 1417 2133}%
\special{pa 1417 1833}%
\special{fp}%
%
\special{pn 8}%
\special{pa 1867 1383}%
\special{pa 2167 1383}%
\special{pa 2167 1683}%
\special{pa 1867 1683}%
\special{pa 1867 1383}%
\special{fp}%
\put(9.6700,-19.8300){\makebox(0,0){$Q$}}%
\put(15.6700,-19.8300){\makebox(0,0){$K$}}%
\put(20.1700,-15.3300){\makebox(0,0){$\hold{h}$}}%
\put(5.1700,-15.3300){\makebox(0,0){$\samp{h}$}}%
\put(12.6700,-7.8300){\makebox(0,0){$G$}}%
\put(21.6700,-3.6300){\makebox(0,0){$w$}}%
\put(3.6700,-3.6300){\makebox(0,0){$z$}}%
\put(3.6700,-18.6300){\makebox(0,0){$y$}}%
\put(21.6700,-18.6300){\makebox(0,0){$u$}}%
\end{picture}%
 \caption{Sampled-data control system with quantization}
 \label{fig:qsd}
\end{center}
\end{figure}
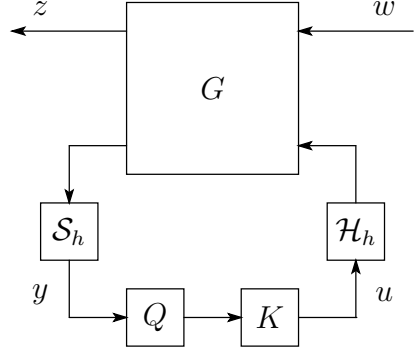
In the figure, $Q$ is a uniform quantizer with a quantization level $\Delta$, and
$K(z)$ is a discrete-time controller. 
Let $\{A_K, B_K, C_K, D_K\}$ be a realization of the controller $K(z)$, and
set
\begin{equation*}
G(s):=\begin{bmatrix}G_{11}(s)&G_{12}(s)\\G_{21}(s)&G_{22}(s)\end{bmatrix}
=:\gss{A}{B_1}{B_2}{C_1}{C_2}{0}{D_{12}}{0}{0}(s).
\end{equation*}

We use the additive noise model for the uniform quantizer;
the quantizer $Q$ is modeled by
\begin{equation*}
Qy = y + d,\quad \|d\|_\infty\leq \Delta/2.
\end{equation*}
Since we have $\|d\|_\infty=\|y-Qy\|_\infty\leq \Delta/2$,
the additive noise model covers the input/output relation of the uniform quantization
with a belt-like region as shown in Figure \ref{fig:quantization}.
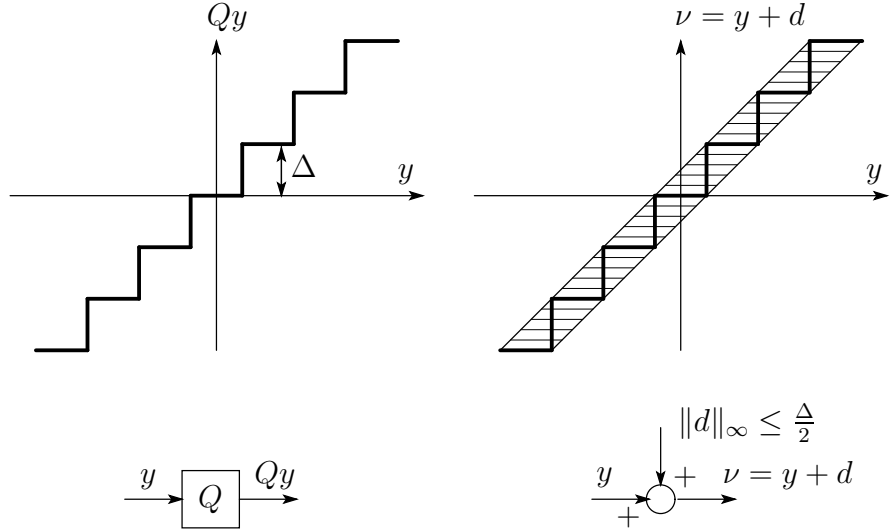
\begin{figure}[t]
\begin{center}
\unitlength 0.1in
\begin{picture}(45.90,27.70)(4.00,-27.50)
%
\special{pn 8}%
\special{pa 400 1010}%
\special{pa 2560 1010}%
\special{fp}%
\special{sh 1}%
\special{pa 2560 1010}%
\special{pa 2493 990}%
\special{pa 2507 1010}%
\special{pa 2493 1030}%
\special{pa 2560 1010}%
\special{fp}%
%
\special{pn 8}%
\special{pa 1480 1820}%
\special{pa 1480 200}%
\special{fp}%
\special{sh 1}%
\special{pa 1480 200}%
\special{pa 1460 267}%
\special{pa 1480 253}%
\special{pa 1500 267}%
\special{pa 1480 200}%
\special{fp}%
%
\special{pn 20}%
\special{pa 535 1820}%
\special{pa 805 1820}%
\special{fp}%
\special{pa 805 1820}%
\special{pa 805 1550}%
\special{fp}%
\special{pa 805 1550}%
\special{pa 1075 1550}%
\special{fp}%
\special{pa 1075 1550}%
\special{pa 1075 1280}%
\special{fp}%
\special{pa 1075 1280}%
\special{pa 1345 1280}%
\special{fp}%
\special{pa 1345 1280}%
\special{pa 1345 1010}%
\special{fp}%
\special{pa 1345 1010}%
\special{pa 1615 1010}%
\special{fp}%
\special{pa 1615 1010}%
\special{pa 1615 740}%
\special{fp}%
\special{pa 1615 740}%
\special{pa 1885 740}%
\special{fp}%
\special{pa 1885 740}%
\special{pa 1885 470}%
\special{fp}%
\special{pa 1885 470}%
\special{pa 2155 470}%
\special{fp}%
\special{pa 2155 470}%
\special{pa 2155 200}%
\special{fp}%
\special{pa 2155 200}%
\special{pa 2425 200}%
\special{fp}%
\special{pa 2425 200}%
\special{pa 2425 200}%
\special{fp}%
%
\special{pn 8}%
\special{pa 2830 1010}%
\special{pa 4990 1010}%
\special{fp}%
\special{sh 1}%
\special{pa 4990 1010}%
\special{pa 4923 990}%
\special{pa 4937 1010}%
\special{pa 4923 1030}%
\special{pa 4990 1010}%
\special{fp}%
%
\special{pn 8}%
\special{pa 3910 1820}%
\special{pa 3910 200}%
\special{fp}%
\special{sh 1}%
\special{pa 3910 200}%
\special{pa 3890 267}%
\special{pa 3910 253}%
\special{pa 3930 267}%
\special{pa 3910 200}%
\special{fp}%
%
\special{pn 20}%
\special{pa 2965 1820}%
\special{pa 3235 1820}%
\special{fp}%
\special{pa 3235 1820}%
\special{pa 3235 1550}%
\special{fp}%
\special{pa 3235 1550}%
\special{pa 3505 1550}%
\special{fp}%
\special{pa 3505 1550}%
\special{pa 3505 1280}%
\special{fp}%
\special{pa 3505 1280}%
\special{pa 3775 1280}%
\special{fp}%
\special{pa 3775 1280}%
\special{pa 3775 1010}%
\special{fp}%
\special{pa 3775 1010}%
\special{pa 4045 1010}%
\special{fp}%
\special{pa 4045 1010}%
\special{pa 4045 740}%
\special{fp}%
\special{pa 4045 740}%
\special{pa 4315 740}%
\special{fp}%
\special{pa 4315 740}%
\special{pa 4315 470}%
\special{fp}%
\special{pa 4315 470}%
\special{pa 4585 470}%
\special{fp}%
\special{pa 4585 470}%
\special{pa 4585 200}%
\special{fp}%
\special{pa 4585 200}%
\special{pa 4855 200}%
\special{fp}%
\special{pa 4855 200}%
\special{pa 4855 200}%
\special{fp}%
%
\special{pn 8}%
\special{pa 2965 1820}%
\special{pa 4585 200}%
\special{pa 4855 200}%
\special{pa 3235 1820}%
\special{pa 3235 1820}%
\special{pa 2965 1820}%
\special{fp}%
%
\special{pn 4}%
\special{pa 4585 200}%
\special{pa 4855 200}%
\special{fp}%
\special{pa 4531 254}%
\special{pa 4801 254}%
\special{fp}%
\special{pa 4477 308}%
\special{pa 4747 308}%
\special{fp}%
\special{pa 4423 362}%
\special{pa 4693 362}%
\special{fp}%
\special{pa 4369 416}%
\special{pa 4639 416}%
\special{fp}%
\special{pa 4315 470}%
\special{pa 4585 470}%
\special{fp}%
\special{pa 4261 524}%
\special{pa 4531 524}%
\special{fp}%
\special{pa 4207 578}%
\special{pa 4477 578}%
\special{fp}%
\special{pa 4153 632}%
\special{pa 4423 632}%
\special{fp}%
\special{pa 4099 686}%
\special{pa 4369 686}%
\special{fp}%
\special{pa 4045 740}%
\special{pa 4315 740}%
\special{fp}%
\special{pa 3991 794}%
\special{pa 4261 794}%
\special{fp}%
\special{pa 3937 848}%
\special{pa 4207 848}%
\special{fp}%
\special{pa 3883 902}%
\special{pa 4153 902}%
\special{fp}%
\special{pa 3829 956}%
\special{pa 4099 956}%
\special{fp}%
\special{pa 3775 1010}%
\special{pa 4045 1010}%
\special{fp}%
\special{pa 3721 1064}%
\special{pa 3991 1064}%
\special{fp}%
\special{pa 3667 1118}%
\special{pa 3937 1118}%
\special{fp}%
\special{pa 3613 1172}%
\special{pa 3883 1172}%
\special{fp}%
\special{pa 3559 1226}%
\special{pa 3829 1226}%
\special{fp}%
\special{pa 3505 1280}%
\special{pa 3775 1280}%
\special{fp}%
\special{pa 3451 1334}%
\special{pa 3721 1334}%
\special{fp}%
\special{pa 3397 1388}%
\special{pa 3667 1388}%
\special{fp}%
\special{pa 3343 1442}%
\special{pa 3613 1442}%
\special{fp}%
\special{pa 3289 1496}%
\special{pa 3559 1496}%
\special{fp}%
\special{pa 3235 1550}%
\special{pa 3505 1550}%
\special{fp}%
\special{pa 3181 1604}%
\special{pa 3451 1604}%
\special{fp}%
\special{pa 3127 1658}%
\special{pa 3397 1658}%
\special{fp}%
\special{pa 3073 1712}%
\special{pa 3343 1712}%
\special{fp}%
\special{pa 3019 1766}%
\special{pa 3289 1766}%
\special{fp}%
%
\special{pn 8}%
\special{pa 1000 2600}%
\special{pa 1300 2600}%
\special{fp}%
\special{sh 1}%
\special{pa 1300 2600}%
\special{pa 1233 2580}%
\special{pa 1247 2600}%
\special{pa 1233 2620}%
\special{pa 1300 2600}%
\special{fp}%
%
\special{pn 8}%
\special{pa 1300 2450}%
\special{pa 1600 2450}%
\special{pa 1600 2750}%
\special{pa 1300 2750}%
\special{pa 1300 2450}%
\special{fp}%
%
\special{pn 8}%
\special{pa 1600 2600}%
\special{pa 1900 2600}%
\special{fp}%
\special{sh 1}%
\special{pa 1900 2600}%
\special{pa 1833 2580}%
\special{pa 1847 2600}%
\special{pa 1833 2620}%
\special{pa 1900 2600}%
\special{fp}%
%
\special{pn 8}%
\special{pa 3445 2600}%
\special{pa 3745 2600}%
\special{fp}%
\special{sh 1}%
\special{pa 3745 2600}%
\special{pa 3678 2580}%
\special{pa 3692 2600}%
\special{pa 3678 2620}%
\special{pa 3745 2600}%
\special{fp}%
%
\special{pn 8}%
\special{pa 3895 2600}%
\special{pa 4195 2600}%
\special{fp}%
\special{sh 1}%
\special{pa 4195 2600}%
\special{pa 4128 2580}%
\special{pa 4142 2600}%
\special{pa 4128 2620}%
\special{pa 4195 2600}%
\special{fp}%
%
\special{pn 8}%
\special{ar 3805 2600 75 75  0.0000000 6.2831853}%
%
\special{pn 8}%
\special{pa 3805 2225}%
\special{pa 3805 2525}%
\special{fp}%
\special{sh 1}%
\special{pa 3805 2525}%
\special{pa 3825 2458}%
\special{pa 3805 2472}%
\special{pa 3785 2458}%
\special{pa 3805 2525}%
\special{fp}%
\put(38.9500,-23.0000){\makebox(0,0)[lb]{$\dd$}}%
\put(14.5000,-26.0000){\makebox(0,0){$Q$}}%
\put(10.8000,-25.5000){\makebox(0,0)[lb]{$y$}}%
\put(16.8000,-25.5000){\makebox(0,0)[lb]{$Qy$}}%
\put(24.3000,-9.5000){\makebox(0,0)[lb]{$y$}}%
\put(14.3000,-1.5000){\makebox(0,0)[lb]{$Qy$}}%
%
\special{pn 8}%
\special{pa 1820 750}%
\special{pa 1820 1010}%
\special{fp}%
\special{sh 1}%
\special{pa 1820 1010}%
\special{pa 1840 943}%
\special{pa 1820 957}%
\special{pa 1800 943}%
\special{pa 1820 1010}%
\special{fp}%
%
\special{pn 8}%
\special{pa 1820 1010}%
\special{pa 1820 760}%
\special{fp}%
\special{sh 1}%
\special{pa 1820 760}%
\special{pa 1800 827}%
\special{pa 1820 813}%
\special{pa 1840 827}%
\special{pa 1820 760}%
\special{fp}%
\put(18.7000,-9.1000){\makebox(0,0)[lb]{$\Delta$}}%
\put(34.7000,-25.4000){\makebox(0,0)[lb]{$y$}}%
\put(38.7000,-25.4000){\makebox(0,0)[lb]{$+$}}%
\put(37.0000,-26.4000){\makebox(0,0)[rt]{$+$}}%
\put(41.3000,-25.5000){\makebox(0,0)[lb]{$\nu=y+d$}}%
\put(48.8000,-9.5000){\makebox(0,0)[lb]{$y$}}%
\put(38.8000,-1.5000){\makebox(0,0)[lb]{$\nu=y+d$}}%
\end{picture}%
 \caption{Uniform quantization (left) and its additive noise model (right)}
 \label{fig:quantization}
\end{center}
\end{figure}
\subsection{Stability of sampled-data systems with quantization}
Let us consider the block diagram shown in Figure \ref{fig:qsd_st_block} 
to check the stability of the closed loop system.
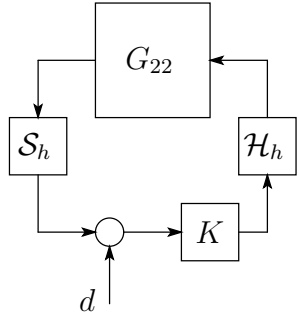
\begin{figure}[t]
\begin{center}
\unitlength 0.1in
\begin{picture}(15.01,15.75)(5.22,-17.75)
%
\special{pn 8}%
\special{pa 973 200}%
\special{pa 1573 200}%
\special{pa 1573 800}%
\special{pa 973 800}%
\special{pa 973 200}%
\special{fp}%
%
\special{pn 8}%
\special{pa 973 500}%
\special{pa 673 500}%
\special{fp}%
%
\special{pn 8}%
\special{pa 673 500}%
\special{pa 673 800}%
\special{fp}%
\special{sh 1}%
\special{pa 673 800}%
\special{pa 693 733}%
\special{pa 673 747}%
\special{pa 653 733}%
\special{pa 673 800}%
\special{fp}%
%
\special{pn 8}%
\special{pa 523 800}%
\special{pa 823 800}%
\special{pa 823 1100}%
\special{pa 523 1100}%
\special{pa 523 800}%
\special{fp}%
%
\special{pn 8}%
\special{pa 673 1100}%
\special{pa 673 1400}%
\special{fp}%
%
\special{pn 8}%
\special{pa 673 1400}%
\special{pa 973 1400}%
\special{fp}%
\special{sh 1}%
\special{pa 973 1400}%
\special{pa 906 1380}%
\special{pa 920 1400}%
\special{pa 906 1420}%
\special{pa 973 1400}%
\special{fp}%
%
\special{pn 8}%
\special{ar 1047 1400 74 74  0.0000000 6.2831853}%
%
\special{pn 8}%
\special{pa 1123 1400}%
\special{pa 1423 1400}%
\special{fp}%
\special{sh 1}%
\special{pa 1423 1400}%
\special{pa 1356 1380}%
\special{pa 1370 1400}%
\special{pa 1356 1420}%
\special{pa 1423 1400}%
\special{fp}%
%
\special{pn 8}%
\special{pa 1423 1250}%
\special{pa 1723 1250}%
\special{pa 1723 1550}%
\special{pa 1423 1550}%
\special{pa 1423 1250}%
\special{fp}%
%
\special{pn 8}%
\special{pa 1723 1400}%
\special{pa 1873 1400}%
\special{fp}%
%
\special{pn 8}%
\special{pa 1873 1400}%
\special{pa 1873 1100}%
\special{fp}%
\special{sh 1}%
\special{pa 1873 1100}%
\special{pa 1853 1167}%
\special{pa 1873 1153}%
\special{pa 1893 1167}%
\special{pa 1873 1100}%
\special{fp}%
%
\special{pn 8}%
\special{pa 1873 800}%
\special{pa 1873 500}%
\special{fp}%
%
\special{pn 8}%
\special{pa 1873 500}%
\special{pa 1573 500}%
\special{fp}%
\special{sh 1}%
\special{pa 1573 500}%
\special{pa 1640 520}%
\special{pa 1626 500}%
\special{pa 1640 480}%
\special{pa 1573 500}%
\special{fp}%
%
\special{pn 8}%
\special{pa 1723 800}%
\special{pa 2023 800}%
\special{pa 2023 1100}%
\special{pa 1723 1100}%
\special{pa 1723 800}%
\special{fp}%
%
\special{pn 8}%
\special{pa 1047 1775}%
\special{pa 1047 1475}%
\special{fp}%
\special{sh 1}%
\special{pa 1047 1475}%
\special{pa 1027 1542}%
\special{pa 1047 1528}%
\special{pa 1067 1542}%
\special{pa 1047 1475}%
\special{fp}%
\put(12.5700,-5.0000){\makebox(0,0){$G_{22}$}}%
\put(6.5700,-9.5000){\makebox(0,0){$\samp{h}$}}%
\put(18.5700,-9.5000){\makebox(0,0){$\hold{h}$}}%
\put(15.7300,-14.0000){\makebox(0,0){$K$}}%
\put(9.7300,-17.0000){\makebox(0,0)[rt]{$d$}}%
\end{picture}%
 \caption{Closed loop system}
 \label{fig:qsd_st_block}
\end{center}
\end{figure}
In the figure,  $\samp{h}G_{22}\hold{h}K$ is a discrete-time system, 
of which let $\{\widetilde{A},\widetilde{B},\widetilde{C},\widetilde{D}\}$ be a realization.
Note that $\widetilde{A},\widetilde{B},\widetilde{C},\widetilde{D}$ are given by
\begin{gather*}
\widetilde{A}:=\begin{bmatrix}A_K&0\\ B_{2d}C_K&A_d\end{bmatrix},\quad
\widetilde{B}:=\begin{bmatrix}B_K\\ B_{2d}D_K\end{bmatrix},\quad
\widetilde{C}:=\begin{bmatrix}0&C_2\end{bmatrix},\quad
\widetilde{D}:=0,
\end{gather*}
where
\begin{equation*}
A_d:=e^{Ah},\quad B_{2d}:=\int_0^h e^{At}B_{2}dt.
\end{equation*}

Then, we have the state-space representation of the system with the additive noise model 
shown in Figure \ref{fig:qsd_st_block} as follows:
\begin{equation}
x[n+1] = (\widetilde{A}+\widetilde{B}\widetilde{C})x[n] + \widetilde{B}d[n],\quad \|d\|_\infty\leq\Delta/2.
\label{eq:st_ss}
\end{equation}
We should notice that without quantization (i.e., $d=0$) the stability of the feedback system 
is equal to the stability of the matrix $\widetilde{A}+\widetilde{B}\widetilde{C}$.
We have the following theorem of the stability of the system in Figure \ref{fig:qsd}.
\begin{theorem}
Assume that $\widetilde{A}+\widetilde{B}\widetilde{C}=:F\in\Real^{n\times n}$ is stable,
and let $\gamma>0$ be a real number such that $r(F)<\gamma<1$, where $r(F)$ denotes the maximum absolute value of
the eigenvalues of $F$.
Then there exists $c>0$ such that
\begin{equation}
\label{eq:theorem_st1}
|x[k] - F^kx_0| \leq c \frac{1-\gamma^k}{1-\gamma}\|\widetilde{B}\|\frac{\Delta}{2} =: r_k,
\end{equation}
for all $x_0:=x[0]\in \Real^n$ and $k \geq 0$,
and 
\begin{equation}
\label{eq:theorem_st2}
\lim_{k\rightarrow\infty}|x[k]| \leq c \frac{1}{1-\gamma}\|\widetilde{B}\|\frac{\Delta}{2}=:r_\infty,
\end{equation}
for all $x_0\in\Real^n$.
\end{theorem}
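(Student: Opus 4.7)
The plan is straightforward: unroll the state recursion, use a spectral radius bound to control the transition matrix, and then estimate the resulting convolution sum by a geometric series.

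First I would iterate the difference equation (\ref{eq:st_ss}) to obtain the explicit solution
\begin{equation*}
x[k] = F^{k}x_{0} + \sum_{j=0}^{k-1} F^{k-1-j}\widetilde{B}\,d[j],
\end{equation*}
so that the quantity to be bounded is
\begin{equation*}
x[k] - F^{k}x_{0} = \sum_{j=0}^{k-1} F^{k-1-j}\widetilde{B}\,d[j].
\end{equation*}
This already isolates the contribution of the quantization noise $d$ from the free response $F^k x_0$ and makes (\ref{eq:theorem_st1}) a purely norm-estimation question.

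Next I would invoke a standard consequence of $r(F)<\gamma<1$: since the spectral radius is strictly less than $\gamma$, Gelfand's formula (or, equivalently, a Jordan-form argument bounding $\|J^n\|$ by a polynomial in $n$ times $r(F)^n$) yields a constant $c>0$ with $\|F^{n}\|\leq c\gamma^{n}$ for every $n\geq 0$. Combining this with $\|d[j]\|_\infty\leq\Delta/2$ and the submultiplicativity of the operator norm gives
\begin{equation*}
|x[k]-F^{k}x_{0}| \leq \sum_{j=0}^{k-1} c\gamma^{k-1-j}\,\|\widetilde{B}\|\,\tfrac{\Delta}{2}
= c\,\|\widetilde{B}\|\,\tfrac{\Delta}{2}\,\frac{1-\gamma^{k}}{1-\gamma},
\end{equation*}
which is exactly (\ref{eq:theorem_st1}).

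Finally, for the asymptotic estimate (\ref{eq:theorem_st2}), I would observe that $\|F^{k}x_{0}\|\leq c\gamma^{k}|x_{0}|\to 0$ since $\gamma<1$, so $F^{k}x_{0}$ vanishes in the limit. Passing $k\to\infty$ in the inequality just derived and using $\gamma^{k}\to 0$ then gives $\limsup_{k\to\infty}|x[k]|\leq c\|\widetilde{B}\|\Delta/(2(1-\gamma))=r_\infty$, as required. The only step that requires any care is the spectral radius bound producing $c$; everything else is unrolling and a geometric sum, so I do not anticipate a genuine obstacle, but I would make the choice of $c$ explicit (e.g.\ via the existence of a $\gamma$-contractive norm on $\Real^n$, equivalent to a Jordan estimate) so that the statement is self-contained.
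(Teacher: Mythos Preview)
Your proof is correct and follows essentially the same route as the paper: unroll the recursion, bound $\|F^n\|\le c\gamma^n$, and sum the geometric series. The paper simply carries out explicitly the step you flag at the end---it constructs the $\gamma$-contractive norm by scaling the Jordan form of $F$ via a diagonal matrix $D=\mathrm{diag}(1,\delta^{-1},\dots,\delta^{-(n-1)})$ with $\delta=\gamma-r(F)$, and then takes $c=\|T\|\|T^{-1}\|$ for the resulting similarity $T$.
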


\begin{proof}
First of all, let $J$ be the Jordan canonical form of $F$. That is, for a nonsingular matrix $P$,
we have
\begin{equation*}
J=P^{-1}FP=\Lambda + U,
\end{equation*}
where
\begin{equation*}
\Lambda := \left[\begin{array}{cccc}
		\lambda_1 & 0        & \ldots & 0\\
		0         &\ddots    & \ddots & \vdots\\
		\vdots    &\ddots    & \ddots & 0\\
		0         &\ldots    & 0      & \lambda_{n}
	\end{array}\right],\quad
U := \left[\begin{array}{ccccc}
		0         & *        & 0      & \ldots  &0\\
		\vdots    &\ddots    & \ddots & \ddots  &\vdots\\
		\vdots    &          & \ddots & \ddots  &0\\
		\vdots    &          &        & \ddots  &*\\
		0         &\ldots    & \ldots & \ldots  &0
	\end{array}\right],\quad *=0 \text{ or } 1,
\end{equation*}
and $\lambda_i$ ($i=1,\ldots,n$) denotes the $i$-th eigenvalue of $F$.
Take $\delta>0$ such that $r(F) + \delta =\gamma < 1$, and define
\begin{equation*}
D := \left[\begin{array}{cccc}
		1         & 0          & \ldots & 0\\
		0         &\delta^{-1} & \ddots & \vdots\\
		\vdots    &\ddots      & \ddots & 0\\
		0         &\ldots      & 0      & \delta^{-1+n}
	\end{array}\right].
\end{equation*}
Then we have
\begin{equation*}
D^{-1}JD = \Lambda + \delta U.
\end{equation*}
Put $T:=DP$. Then for any $x\in\Real^n$,
\begin{equation*}
|T^{-1}FTx| = |(\Lambda + \delta U)x|
\leq (r(F)+\delta)|x| = \gamma|x|.
\end{equation*}
It follows that
\begin{equation*}
\begin{split}
|x[k]-F^kx_0| &\leq \sum_{l=1}^{k} |F^{k-l}\widetilde{B}d[l]|\\
&=\sum_{l=1}^{k} |T(T^{-1}FT)^{k-l}T^{-1}\widetilde{B}d[l]|\\
&\leq \sum_{l=1}^{k} \|T\|\cdot\gamma^{k-l}\cdot\|T^{-1}\|\|\widetilde{B}\|\frac{\Delta}{2}\\
&=\|T\|\|T^{-1}\|\frac{1-\gamma^{k}}{1-\gamma}\|\widetilde{B}\|\frac{\Delta}{2}.
\end{split}
\end{equation*}
Then put $c:=\|T\|\|T^{-1}\|$, we have (\ref{eq:theorem_st1}).
Since $F$ is stable, for any $x_0\in\Real^n$, we have
\begin{equation*}
\lim_{k\rightarrow\infty} F^kx_0=0.
\end{equation*}
Hence by taking $n\rightarrow\infty$ for the inequality (\ref{eq:theorem_st1}),
we obtain the second inequality (\ref{eq:theorem_st2}).
\end{proof}
From this theorem, we conclude that if we stabilize the linearized model, the state of the 
sampled-data system with quantization shown in Figure \ref{fig:qsd} are bounded, that is,
the system is bounded-input bounded-output (BIBO) stable.

Moreover, the set
\begin{equation*}
\mathcal{D}:=\left\{ x\in\Real^n: |x|\leq r_\infty\right\}
\end{equation*}
is a positive invariant set of the system (\ref{eq:st_ss}).
In fact, if $x[0]\in \mathcal{D}\in\Real^n$, that is, $|x[0]|\leq r_\infty$, we have
for any $k\geq 1$
\begin{equation*}
\begin{split}
|x[k]|&\leq c\gamma^k |x[0]| + r_k\\
&\leq c\gamma^k r_\infty + r_k\\
&=c \left\{ \gamma^k \frac{1}{1-\gamma} + \frac{1-\gamma^k}{1-\gamma}\right\}\|\widetilde{B}\|\frac{\Delta}{2}\\
&=c \frac{1}{1-\gamma} \|\widetilde{B}\|\frac{\Delta}{2}\\
&=r_\infty,
\end{split}
\end{equation*}
that is, $x[k]\in \mathcal{D}$.

\subsection{Performance analysis of sampled-data systems with quantization}
Generally, quantization deteriorates the performance of control systems, and
it is important to know how much quantization affects on the systems.
The aim of this section is to analyze the quantization effect by using the
additive noise model discussed above.
Consider the block diagram shown in Figure~\ref{fig:qsd_block_ad}.
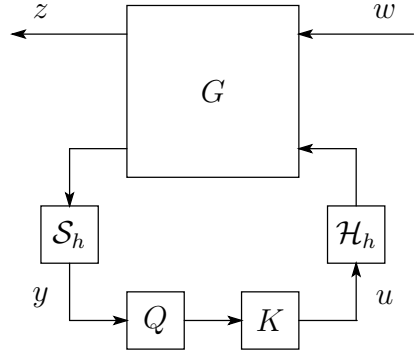
\begin{figure}[t]
\begin{center}
\unitlength 0.1in
\begin{picture}(22.50,18.55)(0.67,-21.33)
%
\special{pn 8}%
\special{pa 817 333}%
\special{pa 1717 333}%
\special{pa 1717 1233}%
\special{pa 817 1233}%
\special{pa 817 333}%
\special{fp}%
%
\special{pn 8}%
\special{pa 817 483}%
\special{pa 217 483}%
\special{fp}%
\special{sh 1}%
\special{pa 217 483}%
\special{pa 284 503}%
\special{pa 270 483}%
\special{pa 284 463}%
\special{pa 217 483}%
\special{fp}%
%
\special{pn 8}%
\special{pa 367 1383}%
\special{pa 667 1383}%
\special{pa 667 1683}%
\special{pa 367 1683}%
\special{pa 367 1383}%
\special{fp}%
%
\special{pn 8}%
\special{pa 817 1833}%
\special{pa 1117 1833}%
\special{pa 1117 2133}%
\special{pa 817 2133}%
\special{pa 817 1833}%
\special{fp}%
%
\special{pn 8}%
\special{pa 1117 1983}%
\special{pa 1417 1983}%
\special{fp}%
\special{sh 1}%
\special{pa 1417 1983}%
\special{pa 1350 1963}%
\special{pa 1364 1983}%
\special{pa 1350 2003}%
\special{pa 1417 1983}%
\special{fp}%
%
\special{pn 8}%
\special{pa 517 1683}%
\special{pa 517 1983}%
\special{fp}%
%
\special{pn 8}%
\special{pa 517 1983}%
\special{pa 817 1983}%
\special{fp}%
\special{sh 1}%
\special{pa 817 1983}%
\special{pa 750 1963}%
\special{pa 764 1983}%
\special{pa 750 2003}%
\special{pa 817 1983}%
\special{fp}%
%
\special{pn 8}%
\special{pa 1717 1983}%
\special{pa 2017 1983}%
\special{fp}%
\special{pa 2017 1983}%
\special{pa 2017 1983}%
\special{fp}%
%
\special{pn 8}%
\special{pa 2017 1983}%
\special{pa 2017 1683}%
\special{fp}%
\special{sh 1}%
\special{pa 2017 1683}%
\special{pa 1997 1750}%
\special{pa 2017 1736}%
\special{pa 2037 1750}%
\special{pa 2017 1683}%
\special{fp}%
%
\special{pn 8}%
\special{pa 2017 1383}%
\special{pa 2017 1083}%
\special{fp}%
%
\special{pn 8}%
\special{pa 2017 1083}%
\special{pa 1717 1083}%
\special{fp}%
\special{sh 1}%
\special{pa 1717 1083}%
\special{pa 1784 1103}%
\special{pa 1770 1083}%
\special{pa 1784 1063}%
\special{pa 1717 1083}%
\special{fp}%
%
\special{pn 8}%
\special{pa 817 1083}%
\special{pa 517 1083}%
\special{fp}%
%
\special{pn 8}%
\special{pa 517 1083}%
\special{pa 517 1383}%
\special{fp}%
\special{sh 1}%
\special{pa 517 1383}%
\special{pa 537 1316}%
\special{pa 517 1330}%
\special{pa 497 1316}%
\special{pa 517 1383}%
\special{fp}%
%
\special{pn 8}%
\special{pa 2317 483}%
\special{pa 1717 483}%
\special{fp}%
\special{sh 1}%
\special{pa 1717 483}%
\special{pa 1784 503}%
\special{pa 1770 483}%
\special{pa 1784 463}%
\special{pa 1717 483}%
\special{fp}%
%
\special{pn 8}%
\special{pa 1417 1833}%
\special{pa 1717 1833}%
\special{pa 1717 2133}%
\special{pa 1417 2133}%
\special{pa 1417 1833}%
\special{fp}%
%
\special{pn 8}%
\special{pa 1867 1383}%
\special{pa 2167 1383}%
\special{pa 2167 1683}%
\special{pa 1867 1683}%
\special{pa 1867 1383}%
\special{fp}%
\put(9.6700,-19.8300){\makebox(0,0){$Q$}}%
\put(15.6700,-19.8300){\makebox(0,0){$K$}}%
\put(20.1700,-15.3300){\makebox(0,0){$\hold{h}$}}%
\put(5.1700,-15.3300){\makebox(0,0){$\samp{h}$}}%
\put(12.6700,-7.8300){\makebox(0,0){$G$}}%
\put(21.6700,-3.6300){\makebox(0,0){$w$}}%
\put(3.6700,-3.6300){\makebox(0,0){$z$}}%
\put(3.6700,-18.6300){\makebox(0,0){$y$}}%
\put(21.6700,-18.6300){\makebox(0,0){$u$}}%
\end{picture}%
 \caption{Sampled-data control system with additive noise}
 \label{fig:qsd_block_ad}
\end{center}
\end{figure}
Although the quantization noise $d$ is usually taken as the white noise, 
we assume the noise $d$ in $l^2$.
The reason is as follows:
\begin{itemize}
\item generally the quantization noise is not white; the noise will depend on the input signal of the quantizer,
\item we can use the $H^\infty$-norm that has a connection with the worst case analysis.
\end{itemize}
We denote by $\T_{zw}$ and $\T_{zd}$ the system shown in Figure \ref{fig:qsd_block_ad}
from $w$ to $z$ and from $d$ to $z$ respectively.
Assume $\T_{zw}$ and $\T_{zd}$ are stable and
\begin{equation*}
\|\T_{zw}\| := \sup_{\substack{w\in L^2\\w\neq 0}}\frac{\|\T_{zw}w\|_{L^2}}{\|w\|_{L^2}}=\gamma_1,\quad 
\|\T_{zd}\| = \sup_{\substack{d\in l^2\\d\neq 0}}\frac{\|\T_{zd}d\|_{L^2}}{\|d\|_{l^2}}=\gamma_2.
\end{equation*}
We will now discuss the performance of the system in Figure~\ref{fig:qsd}.
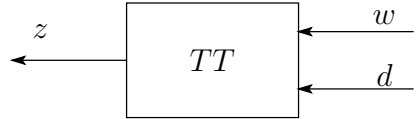
\begin{figure}[t]
\begin{center}
\unitlength 0.1in
\begin{picture}(21.00,6.09)(2.17,-10.17)
%
\special{pn 8}%
\special{pa 817 717}%
\special{pa 217 717}%
\special{fp}%
\special{sh 1}%
\special{pa 217 717}%
\special{pa 284 737}%
\special{pa 270 717}%
\special{pa 284 697}%
\special{pa 217 717}%
\special{fp}%
%
\special{pn 8}%
\special{pa 817 417}%
\special{pa 1717 417}%
\special{pa 1717 1017}%
\special{pa 817 1017}%
\special{pa 817 417}%
\special{fp}%
%
\special{pn 8}%
\special{pa 2317 567}%
\special{pa 1717 567}%
\special{fp}%
\special{sh 1}%
\special{pa 1717 567}%
\special{pa 1784 587}%
\special{pa 1770 567}%
\special{pa 1784 547}%
\special{pa 1717 567}%
\special{fp}%
%
\special{pn 8}%
\special{pa 2317 867}%
\special{pa 1717 867}%
\special{fp}%
\special{sh 1}%
\special{pa 1717 867}%
\special{pa 1784 887}%
\special{pa 1770 867}%
\special{pa 1784 847}%
\special{pa 1717 867}%
\special{fp}%
\put(3.6700,-5.6700){\makebox(0,0){$z$}}%
\put(21.6700,-4.9300){\makebox(0,0){$w$}}%
\put(21.6700,-7.9300){\makebox(0,0){$d$}}%
\put(12.6700,-7.1700){\makebox(0,0){$TT$}}%
\end{picture}%
 \caption{Additive noise model for sampled-data system with quantization}
 \label{fig:qsd_per}
\end{center}
\end{figure}
\begin{lemma}
\label{lemma:qsd}
Let $\T$ be a stable sampled-data system with discrete-time inputs and continuous-time outputs.
We have the following inequality for any discrete-time, power signal $u$:
\begin{equation*}
\pow(\T u) \leq \|\T\| \pow[u].
\end{equation*}
\end{lemma}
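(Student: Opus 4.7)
The plan is to reduce the power-norm bound to the given $l^2\to L^2$ induced-norm bound via a standard truncation argument. Fix $u$ with $\pow[u]<\infty$, and for each $N\in\Nset$ define the truncated sequence $u^{(N)}[k]:=u[k]$ for $|k|\le N$ and $u^{(N)}[k]:=0$ otherwise. Since $u$ has finite power, $u^{(N)}\in l^2$ and $\|u^{(N)}\|_{l^2}^2=\sum_{|k|\le N}|u[k]|^2\sim 2N\,\pow[u]^2$ as $N\to\infty$. The hypothesis $\|\T\|<\infty$ immediately gives
\begin{equation*}
\|\T u^{(N)}\|_{L^2}\le \|\T\|\,\|u^{(N)}\|_{l^2}.
\end{equation*}

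First I would decompose $\T u = \T u^{(N)} + \T(u-u^{(N)})$ and estimate each piece on the window $[-T,T]$ with $T=Nh$. Applying Minkowski's inequality,
\begin{equation*}
\|\T u\|_{L^2[-T,T]} \le \|\T u^{(N)}\|_{L^2} + \|\T(u-u^{(N)})\|_{L^2[-T,T]}.
\end{equation*}
The first summand is already bounded by $\|\T\|\,\|u^{(N)}\|_{l^2}$. For the second, the key point is that $u-u^{(N)}$ vanishes on the index window $|k|\le N$, so its contribution to the continuous-time output at $t\in[-T,T]$ can only come through the tail of $\T$'s impulse response. Exponential stability of $\T$, inherited from the power-stability of the $A$-matrix in its lifted realization discussed in Section~\ref{sec:SDC_fsfh}, ensures this tail decays geometrically. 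Coupling this decay with a Cauchy--Schwarz estimate and the at-most-linear growth of the partial $l^2$-sums of $u$ yields a bound on $\|\T(u-u^{(N)})\|_{L^2[-T,T]}$ that is negligible compared with $\sqrt{T}$ as $N\to\infty$.

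Finally, dividing the Minkowski estimate by $\sqrt{2T}$, squaring, and passing to $N\to\infty$, the identity $\lim_{N\to\infty}\|u^{(N)}\|_{l^2}^2/(2N)=\pow[u]^2$ together with the vanishing of the normalized residual delivers $\pow(\T u)\le\|\T\|\,\pow[u]$. The main obstacle is controlling the tail term $\T(u-u^{(N)})$ on $[-T,T]$: because a general power signal need not be bounded in $l^\infty$, one must split the ``far'' inputs into dyadic blocks according to their distance from the truncation boundary and sum the geometrically decaying per-block bounds against the linearly growing local $l^2$-mass of $u$. Once that estimate is in hand, the remainder is a routine limit computation.
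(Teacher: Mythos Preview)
Your truncation argument is a genuinely different route from the paper's proof, which is entirely spectral: the paper lifts the output, introduces autocorrelations and power spectral densities for both $u$ and the lifted output $\widetilde{y}$, invokes the transfer relation $S_{\widetilde{y}}(e^{j\omega h})=\|\widetilde{\T}(e^{j\omega h})\|^2 S_u(e^{j\omega h})$ for the lifted frequency response, and then integrates over $\omega$, bounding the gain pointwise by its supremum $\|\T\|$. No truncation and no impulse-response estimates appear.

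Your time-domain scheme, however, has a real gap at the residual step. With the truncation index $N$ tied exactly to the output window $T=Nh$, the block of inputs just beyond the boundary, say $k\in\{-2N,\dots,-N-1\}$, feeds the output near $t=-T$ with lag $t-kh=O(h)$, so the impulse-response tail supplies \emph{no} decay there. The only available bound on that block's contribution is $\|\T\|\bigl(\sum_{-2N\le k<-N}|u[k]|^2\bigr)^{1/2}$, and for a generic power signal that sum is of exact order $N$, not $o(N)$. Hence your dyadic sum begins with a term of size $\Theta(\sqrt{T})$, and the normalized residual does not tend to zero; for instance, with $\T=\hold{h}\circ a(z-a)^{-1}$, $|a|<1$, a direct calculation shows $\|\T(u-u^{(N)})\|_{L^2[-T,T]}^2/(2T)$ stays bounded away from zero whenever $\sum_{-2N}^{-N-1}|u[k]|^2\sim cN$. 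The standard repair is to decouple the two scales: truncate at $|k|\le(1+\epsilon)N$ while keeping the output window $[-Nh,Nh]$, so every surviving input sits at distance at least $\epsilon N$ from the window and the exponential tail really does its job; then let $\epsilon\downarrow0$ at the end to remove the spurious $\sqrt{1+\epsilon}$ in the main term. Without that two-parameter device the argument does not close.
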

\begin{proof}
First of all, define the power norm of lifted signal $\widetilde{y}=\lift y$ as follows:
\begin{equation*}
\pow\{\widetilde{y}\}^2:=\lim_{N\rightarrow\infty}\frac{1}{2N}\sum_{k=-N}^{N}\|\widetilde{y}[k]\|_{L^2[0,h)}^2,\quad
\widetilde{y}[k]\in L^2[0,h).
\end{equation*}
Note that for $y$ and $\widetilde{y}=\lift y$, we have
\begin{equation}
\pow(y) = \pow\{\widetilde{y}\}.
\label{eq:pow_eq}
\end{equation}
For a lifted power signal $\widetilde{y}$ (i.e., $\pow\{\widetilde{y}\}<\infty$), 
define the autocorrelation function
\begin{equation*}
R_{\widetilde{y}}[k] := \lim_{N\rightarrow\infty}
	\frac{1}{2N}\sum_{n=-N}^N \langle \widetilde{y}[n], \widetilde{y}[n+k]\rangle,
\end{equation*}
where $\langle \cdot, \cdot \rangle$  is the inner product on $L^2_{[0,h)}$,
that is,
$\langle x, y \rangle:=\int_0^h y(t)^Tx(t)dt$.
Let $S_{\widetilde{y}}$ denote the (discrete) Fourier transform of $R_{\widetilde{y}}$:
\begin{equation*}
S_{\widetilde{y}}(e^{j\omega h}) := \sum_{k=-\infty}^{\infty} R_{\widetilde{y}}[k]e^{-j\omega kh}.
\end{equation*}
Similarly, for a discrete-time $u$, define $R_u$ and $S_u$ as
\begin{equation*}
\begin{split}
R_u[k] &:= \lim_{N\rightarrow\infty}\frac{1}{2N}\sum_{n=-N}^{N} u[n+k]^Tu[n],\\
S_u(e^{j\omega h}) &:= \sum_{k=-\infty}^{\infty} R_u[k]e^{-j\omega kh}.
\end{split}
\end{equation*}
Then we have
\begin{equation}
\begin{split}
\pow\{\widetilde{y}\}^2 &= R_{\widetilde{y}}[0] = \frac{h}{2\pi}\int_{0}^{\frac{2\pi}{h}}S_{\widetilde{y}}(e^{j\omega h})d\omega,\\
\pow[u]^2 &= R_u[0] = \frac{h}{2\pi}\int_{0}^{\frac{2\pi}{h}}S_u(e^{j\omega h})d\omega.
\end{split}
\label{eq:app1}
\end{equation}
For the sampled-data system $\T$, we denote by $\widetilde{\T}(e^{j\omega h})$ the frequency response \cite{YamKha96}
of the system $\T$. Let $u$ and $y$ be the input and output of $\T$ respectively, that is, $y=\T u$.
Then we obtain
\begin{equation}
S_{\widetilde{y}}(e^{j\omega h}) = \widetilde{\T}(e^{j\omega h})\widetilde{\T^*}(e^{j\omega h})S_u(e^{j\omega h})
	= \|\widetilde{\T}(e^{j\omega h})\|^2S_u(e^{j\omega h}),
\label{eq:app2}
\end{equation}
where $\T^*$ is the dual system \cite{YamKha96} of $\T$.
The equation (\ref{eq:app2}) can be proven by the same method as the continuous-time version \cite{DFT}.
By using (\ref{eq:pow_eq}), (\ref{eq:app1}) and (\ref{eq:app2}), we have
\begin{equation*}
\begin{split}
\pow(y)^2 = \pow\{\widetilde{y}\}^2
  &=\frac{h}{2\pi}\int_{0}^{\frac{2\pi}{h}}S_{\widetilde{y}}(e^{j\omega h})d\omega\\
  &=\frac{h}{2\pi}\int_{0}^{\frac{2\pi}{h}}\|\widetilde{T}(e^{j\omega h})\|^2S_u(e^{j\omega h})d\omega\\
  &\leq \sup_{\omega\in (0,2\pi/h)} \|\widetilde{T}(e^{j\omega h})\|^2 
	\frac{h}{2\pi}\int_{0}^{\frac{2\pi}{h}}S_u(e^{j\omega h})d\omega\\
  &= \|\widetilde{\T}\|^2_\infty \pow[u]^2
  = \|\T\|^2\pow[u]^2.
\end{split}
\end{equation*}
\end{proof}

Using Lemma \ref{lemma:qsd}, we have the following theorem.
\begin{theorem}
For any input $w\in L^2$ and $d\in l^\infty$ 
of the sampled-data system $[\T_{zw}, \T_{zd}]$ defined above, 
the output $z$ satisfies
\begin{equation*}
\pow(z)\leq\frac{\gamma_2\Delta}{2}.
\end{equation*}
\end{theorem}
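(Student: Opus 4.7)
The plan is to exploit linearity of the sampled-data system via the decomposition $z=\T_{zw}w+\T_{zd}d$, and then treat the two source terms separately. The key observation is that the exogenous input $w$ lies in $L^2$ and therefore carries zero power, while the quantization noise $d$ is only uniformly bounded (not square-summable) and its power content is the sole contributor to $\pow(z)$.

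First I would note that stability of $\T_{zw}$ (with finite $L^2$-induced gain $\gamma_1$) implies $\T_{zw}w\in L^2[0,\infty)$ whenever $w\in L^2[0,\infty)$, since $\|\T_{zw}w\|_{L^2}\le\gamma_1\|w\|_{L^2}<\infty$. Any finite-energy signal has zero time-averaged power, because $\pow(\T_{zw}w)^2=\lim_{T\to\infty}\frac{1}{2T}\int_{-T}^{T}|\T_{zw}w(t)|^2\,dt$, and the numerator is bounded by the fixed finite quantity $\|\T_{zw}w\|_{L^2}^2$, so the limit vanishes. Thus $\pow(\T_{zw}w)=0$ and the $w$-contribution drops out.

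Next I would bound the power of the quantization noise. The additive-noise model imposes $\|d\|_\infty\le\Delta/2$, and hence directly from the definition $\pow[d]^2=\lim_{N\to\infty}\frac{1}{2N}\sum_{k=-N}^{N}|d[k]|^2\le(\Delta/2)^2$. Applying Lemma \ref{lemma:qsd} to the stable sampled-data system $\T_{zd}$ (which has discrete-time input and continuous-time output, exactly as required by the lemma) then yields $\pow(\T_{zd}d)\le\|\T_{zd}\|\,\pow[d]\le\gamma_2\cdot\Delta/2$.

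Finally, combining via the triangle inequality for the power semi-norm (which follows from Minkowski's inequality applied to the $L^2$-norm on $[-T,T]$ and then passing to the limit), I would conclude $\pow(z)\le\pow(\T_{zw}w)+\pow(\T_{zd}d)\le 0+\gamma_2\Delta/2$. The only subtleties are the two routine observations that finite-energy signals carry zero power and that $\pow$ inherits Minkowski's inequality; neither is deep, and Lemma \ref{lemma:qsd} has already absorbed the genuine sampled-data content (the frequency-response computation for the discrete-to-continuous gain on power signals), so no real obstacle remains.
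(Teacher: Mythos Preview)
Your proposal is correct and follows essentially the same route as the paper: decompose $z=\T_{zw}w+\T_{zd}d$, kill the first term because $L^2$ signals have zero power, bound the second via Lemma~\ref{lemma:qsd} together with $\pow[d]\le\|d\|_\infty\le\Delta/2$, and combine using the triangle inequality for $\pow$. You actually spell out two justifications (finite energy $\Rightarrow$ zero power, and Minkowski for the power semi-norm) that the paper leaves implicit.
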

\begin{proof}
Since $\T_{zw}w \in L^2$, we have $\pow(\T_{zw}w)=0$.
Generally, if $\|d\|_\infty\leq 1$ then $\pow[d]\leq 1$, and hence
\begin{equation*}
\sup_{\|d\|_\infty\leq 1}\pow(\T_{zd}d) \leq \sup_{\pow[d]\leq 1}\pow(\T_{zd}d).
\end{equation*}
From Lemma \ref{lemma:qsd} we have
\begin{equation*}
\sup_{\pow[d]\leq 1}\pow(\T_{zd}d) \leq \|\T_{zd}\|.
\end{equation*}
Therefore
\begin{equation*}
\begin{split}
\pow(z) &= \pow(\T_{zw}w+\T_{zd}d)\\
&\leq\pow(\T_{zw}w)+\pow(\T_{zd}d)\\
&=\pow(T_{zd}d)\leq\|T_{zd}\|\|d\|_\infty\leq\frac{\gamma_2\Delta}{2}.
\end{split}
\end{equation*}
\end{proof}

The theorem leads us to the conclusion that if we take the $H^\infty$ design to
attenuate $\|T_{zd}\|_\infty$, the quantization has a small effect on the output $z$ with respect
to the power, and hence the $H^\infty$ design will be valid.

\section{Differential pulse code modulation}
\subsection{Differential pulse code modulation}
Differential pulse code modulation (DPCM) is a quantization system,
which is used, for example, in the telephone communication systems.
Figure \ref{fig:blockofdpcm} shows a DPCM system.
The encoder quantizes the error $e=r-u$, 
where $u$ is a prediction of the input signal $r$.
If the filter $K_1(z)$ well predicts $r$ from the quantizer output $\hat{e}:=Qe$,
the error $e$ will be smaller than the input $r$, 
and hence fewer bits are required to quantize the signal.
Assume that the quantizer $Q$ is a uniform quantizer with a quantization level $\Delta$.
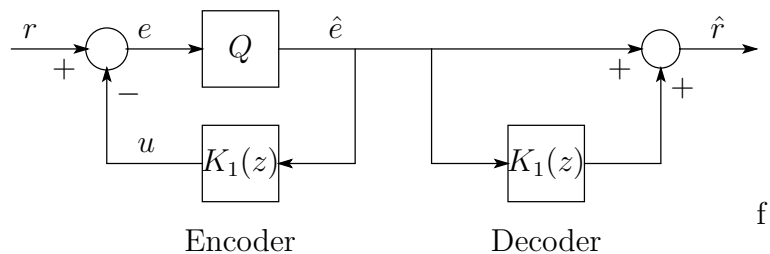
\begin{figure}[t]
\begin{center}
\unitlength 0.1in
\begin{picture}(39.00,11.45)(4.00,-13.15)
%
\special{pn 8}%
\special{pa 400 400}%
\special{pa 800 400}%
\special{fp}%
\special{sh 1}%
\special{pa 800 400}%
\special{pa 733 380}%
\special{pa 747 400}%
\special{pa 733 420}%
\special{pa 800 400}%
\special{fp}%
%
\special{pn 8}%
\special{ar 900 400 110 110  0.0000000 6.2831853}%
%
\special{pn 8}%
\special{pa 1000 400}%
\special{pa 1400 400}%
\special{fp}%
\special{sh 1}%
\special{pa 1400 400}%
\special{pa 1333 380}%
\special{pa 1347 400}%
\special{pa 1333 420}%
\special{pa 1400 400}%
\special{fp}%
%
\special{pn 8}%
\special{pa 1400 200}%
\special{pa 1800 200}%
\special{pa 1800 600}%
\special{pa 1400 600}%
\special{pa 1400 200}%
\special{fp}%
%
\special{pn 8}%
\special{pa 2200 400}%
\special{pa 2200 1000}%
\special{fp}%
%
\special{pn 8}%
\special{pa 2200 1000}%
\special{pa 1800 1000}%
\special{fp}%
\special{sh 1}%
\special{pa 1800 1000}%
\special{pa 1867 1020}%
\special{pa 1853 1000}%
\special{pa 1867 980}%
\special{pa 1800 1000}%
\special{fp}%
%
\special{pn 8}%
\special{pa 1800 800}%
\special{pa 1400 800}%
\special{pa 1400 1200}%
\special{pa 1800 1200}%
\special{pa 1800 800}%
\special{fp}%
%
\special{pn 8}%
\special{pa 1400 1000}%
\special{pa 900 1000}%
\special{fp}%
%
\special{pn 8}%
\special{pa 900 1000}%
\special{pa 900 500}%
\special{fp}%
\special{sh 1}%
\special{pa 900 500}%
\special{pa 880 567}%
\special{pa 900 553}%
\special{pa 920 567}%
\special{pa 900 500}%
\special{fp}%
%
\special{pn 8}%
\special{pa 1800 400}%
\special{pa 2600 400}%
\special{fp}%
%
\special{pn 8}%
\special{pa 2600 400}%
\special{pa 2600 1000}%
\special{fp}%
%
\special{pn 8}%
\special{pa 2600 1000}%
\special{pa 3000 1000}%
\special{fp}%
\special{sh 1}%
\special{pa 3000 1000}%
\special{pa 2933 980}%
\special{pa 2947 1000}%
\special{pa 2933 1020}%
\special{pa 3000 1000}%
\special{fp}%
%
\special{pn 8}%
\special{pa 3000 800}%
\special{pa 3400 800}%
\special{pa 3400 1200}%
\special{pa 3000 1200}%
\special{pa 3000 800}%
\special{fp}%
%
\special{pn 8}%
\special{pa 3400 1000}%
\special{pa 3800 1000}%
\special{fp}%
%
\special{pn 8}%
\special{pa 3800 1000}%
\special{pa 3800 500}%
\special{fp}%
\special{sh 1}%
\special{pa 3800 500}%
\special{pa 3780 567}%
\special{pa 3800 553}%
\special{pa 3820 567}%
\special{pa 3800 500}%
\special{fp}%
%
\special{pn 8}%
\special{ar 3800 400 100 100  0.0000000 6.2831853}%
%
\special{pn 8}%
\special{pa 2600 400}%
\special{pa 3700 400}%
\special{fp}%
\special{sh 1}%
\special{pa 3700 400}%
\special{pa 3633 380}%
\special{pa 3647 400}%
\special{pa 3633 420}%
\special{pa 3700 400}%
\special{fp}%
%
\special{pn 8}%
\special{pa 3900 400}%
\special{pa 4300 400}%
\special{fp}%
\special{sh 1}%
\special{pa 4300 400}%
\special{pa 4233 380}%
\special{pa 4247 400}%
\special{pa 4233 420}%
\special{pa 4300 400}%
\special{fp}%
\put(16.0000,-14.0000){\makebox(0,0){Encoder}}%
\put(32.0000,-14.0000){\makebox(0,0){Decoder}}%
\put(16.0000,-10.0000){\makebox(0,0){$K_1(z)$}}%
\put(16.0000,-4.0000){\makebox(0,0){$Q$}}%
\put(32.0000,-10.0000){\makebox(0,0){$K_1(z)$}}%
\put(4.6000,-3.4000){\makebox(0,0)[lb]{$r$}}%
\put(10.6000,-3.4000){\makebox(0,0)[lb]{$e$}}%
\put(20.6000,-3.4000){\makebox(0,0)[lb]{$\hat{e}$}}%
\put(40.6000,-3.4000){\makebox(0,0)[lb]{$\hat{r}$}}%
\put(10.6000,-9.4000){\makebox(0,0)[lb]{$u$}}%
\put(9.5000,-5.8000){\makebox(0,0)[lt]{$-$}}%
\put(7.4000,-4.5000){\makebox(0,0)[rt]{$+$}}%
\put(36.5000,-4.5000){\makebox(0,0)[rt]{$+$}}%
\put(38.5000,-5.5000){\makebox(0,0)[lt]{$+$}}%
\end{picture}%
f\end{center}
 \caption{DPCM system}
 \label{fig:blockofdpcm}
\end{figure}

The signal $\hat{e}$ is transmitted into a communication channel or stored in digital media, 
and then the decoder reconstructs the original signal $r$ and makes the output $\hat{r}$.
Note that the filter in the decoder is the same as that in the encoder,
and we have the following error estimate
\begin{equation*}
\| \hat{r} - r \|_\infty = \| \hat{e} - e \|_\infty \leq \frac{\Delta}{2}.
\end{equation*}
That is to say, the reconstruction error is less than $\Delta/2$.
Therefore, DPCM can transmit data with fewer bits.

However, the model does not take account of the channel noise that adds the transmitted signal $\hat{e}$,
and hence the estimate is not valid.
For example, in the $\Delta$ modulation, the filter $K_1$ is the adder
$K_1(z) = 1/(z-1)$.
Since this filter is unstable, the channel noise will be amplified in the decoder.

Therefore, we use a model taking the channel noise into account
for designing the encoder and the decoder.

\subsection{Problem formulation}
The block diagram of our DPCM system to be designed is illustrated in Figure \ref{fig:blockofdpcm2}.
In the figure, $n$ is the channel noise.
Our purpose is to attenuate the prediction error $e$ to be quantized and 
the reconstruction error.
For this purpose, we design the filter $K_1(z)$ and $K_2(z)$ in Figure \ref{fig:blockofdpcm2}
by using the sampled-data $H^\infty$ optimization method.
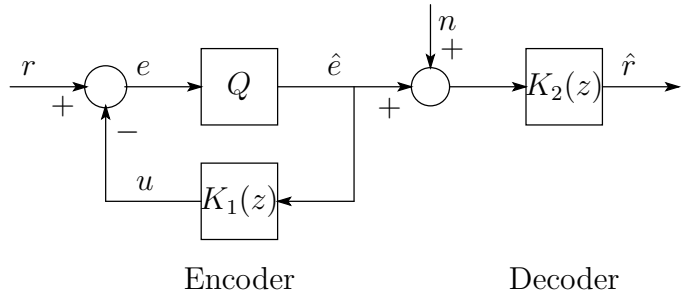
\begin{figure}[t]
\begin{center}
\unitlength 0.1in
\begin{picture}(35.00,13.85)(4.00,-15.15)
%
\special{pn 8}%
\special{pa 400 600}%
\special{pa 800 600}%
\special{fp}%
\special{sh 1}%
\special{pa 800 600}%
\special{pa 733 580}%
\special{pa 747 600}%
\special{pa 733 620}%
\special{pa 800 600}%
\special{fp}%
%
\special{pn 8}%
\special{ar 900 600 110 110  0.0000000 6.2831853}%
%
\special{pn 8}%
\special{pa 1000 600}%
\special{pa 1400 600}%
\special{fp}%
\special{sh 1}%
\special{pa 1400 600}%
\special{pa 1333 580}%
\special{pa 1347 600}%
\special{pa 1333 620}%
\special{pa 1400 600}%
\special{fp}%
%
\special{pn 8}%
\special{pa 1400 400}%
\special{pa 1800 400}%
\special{pa 1800 800}%
\special{pa 1400 800}%
\special{pa 1400 400}%
\special{fp}%
%
\special{pn 8}%
\special{pa 2200 600}%
\special{pa 2200 1200}%
\special{fp}%
%
\special{pn 8}%
\special{pa 2200 1200}%
\special{pa 1800 1200}%
\special{fp}%
\special{sh 1}%
\special{pa 1800 1200}%
\special{pa 1867 1220}%
\special{pa 1853 1200}%
\special{pa 1867 1180}%
\special{pa 1800 1200}%
\special{fp}%
%
\special{pn 8}%
\special{pa 1800 1000}%
\special{pa 1400 1000}%
\special{pa 1400 1400}%
\special{pa 1800 1400}%
\special{pa 1800 1000}%
\special{fp}%
%
\special{pn 8}%
\special{pa 1400 1200}%
\special{pa 900 1200}%
\special{fp}%
%
\special{pn 8}%
\special{pa 900 1200}%
\special{pa 900 700}%
\special{fp}%
\special{sh 1}%
\special{pa 900 700}%
\special{pa 880 767}%
\special{pa 900 753}%
\special{pa 920 767}%
\special{pa 900 700}%
\special{fp}%
%
\special{pn 8}%
\special{pa 3100 400}%
\special{pa 3500 400}%
\special{pa 3500 800}%
\special{pa 3100 800}%
\special{pa 3100 400}%
\special{fp}%
\put(16.0000,-16.0000){\makebox(0,0){Encoder}}%
\put(33.0000,-16.0000){\makebox(0,0){Decoder}}%
\put(16.0000,-12.0000){\makebox(0,0){$K_1(z)$}}%
\put(16.0000,-6.0000){\makebox(0,0){$Q$}}%
\put(33.0000,-6.0000){\makebox(0,0){$K_2(z)$}}%
\put(4.6000,-5.4000){\makebox(0,0)[lb]{$r$}}%
\put(10.6000,-5.4000){\makebox(0,0)[lb]{$e$}}%
\put(20.6000,-5.4000){\makebox(0,0)[lb]{$\hat{e}$}}%
\put(36.0000,-5.4000){\makebox(0,0)[lb]{$\hat{r}$}}%
\put(10.6000,-11.4000){\makebox(0,0)[lb]{$u$}}%
\put(9.5000,-7.8000){\makebox(0,0)[lt]{$-$}}%
\put(7.4000,-6.5000){\makebox(0,0)[rt]{$+$}}%
%
\special{pn 8}%
\special{pa 1800 600}%
\special{pa 2500 600}%
\special{fp}%
\special{sh 1}%
\special{pa 2500 600}%
\special{pa 2433 580}%
\special{pa 2447 600}%
\special{pa 2433 620}%
\special{pa 2500 600}%
\special{fp}%
%
\special{pn 8}%
\special{ar 2600 600 100 100  0.0000000 6.2831853}%
%
\special{pn 8}%
\special{pa 2700 600}%
\special{pa 3100 600}%
\special{fp}%
\special{sh 1}%
\special{pa 3100 600}%
\special{pa 3033 580}%
\special{pa 3047 600}%
\special{pa 3033 620}%
\special{pa 3100 600}%
\special{fp}%
%
\special{pn 8}%
\special{pa 2600 170}%
\special{pa 2600 500}%
\special{fp}%
\special{sh 1}%
\special{pa 2600 500}%
\special{pa 2620 433}%
\special{pa 2600 447}%
\special{pa 2580 433}%
\special{pa 2600 500}%
\special{fp}%
\put(26.4000,-3.0000){\makebox(0,0)[lb]{$n$}}%
\put(24.5000,-6.6000){\makebox(0,0)[rt]{$+$}}%
\put(26.5000,-4.6000){\makebox(0,0)[lb]{$+$}}%
%
\special{pn 8}%
\special{pa 3500 600}%
\special{pa 3900 600}%
\special{fp}%
\special{sh 1}%
\special{pa 3900 600}%
\special{pa 3833 580}%
\special{pa 3847 600}%
\special{pa 3833 620}%
\special{pa 3900 600}%
\special{fp}%
\end{picture}%
\end{center}
 \caption{DPCM system with channel noise}
 \label{fig:blockofdpcm2}
\end{figure}

The block diagram of the error system for designing these filters is shown is Figure \ref{fig:qsd_errorsystem}.
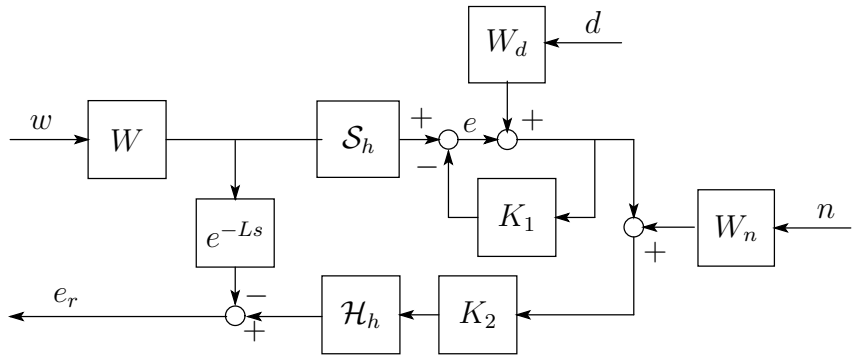
\begin{figure}[t]
\begin{center}
\unitlength 0.1in
\begin{picture}(44.00,18.24)(2.00,-19.29)
%
\special{pn 8}%
\special{pa 608 603}%
\special{pa 1016 603}%
\special{pa 1016 1011}%
\special{pa 608 1011}%
\special{pa 608 603}%
\special{fp}%
%
\special{pn 8}%
\special{pa 2240 807}%
\special{pa 2444 807}%
\special{fp}%
\special{sh 1}%
\special{pa 2444 807}%
\special{pa 2377 787}%
\special{pa 2391 807}%
\special{pa 2377 827}%
\special{pa 2444 807}%
\special{fp}%
%
\special{pn 8}%
\special{ar 2494 807 50 50  0.0000000 6.2831853}%
%
\special{pn 8}%
\special{pa 2546 807}%
\special{pa 2750 807}%
\special{fp}%
\special{sh 1}%
\special{pa 2750 807}%
\special{pa 2683 787}%
\special{pa 2697 807}%
\special{pa 2683 827}%
\special{pa 2750 807}%
\special{fp}%
%
\special{pn 8}%
\special{ar 2802 807 52 52  0.0000000 6.2831853}%
%
\special{pn 8}%
\special{pa 2852 807}%
\special{pa 3464 807}%
\special{fp}%
%
\special{pn 8}%
\special{pa 3464 807}%
\special{pa 3464 1215}%
\special{fp}%
\special{sh 1}%
\special{pa 3464 1215}%
\special{pa 3484 1148}%
\special{pa 3464 1162}%
\special{pa 3444 1148}%
\special{pa 3464 1215}%
\special{fp}%
%
\special{pn 8}%
\special{ar 3464 1266 51 51  0.0000000 6.2831853}%
%
\special{pn 8}%
\special{pa 3464 1317}%
\special{pa 3464 1725}%
\special{fp}%
%
\special{pn 8}%
\special{pa 3464 1725}%
\special{pa 2852 1725}%
\special{fp}%
\special{sh 1}%
\special{pa 2852 1725}%
\special{pa 2919 1745}%
\special{pa 2905 1725}%
\special{pa 2919 1705}%
\special{pa 2852 1725}%
\special{fp}%
%
\special{pn 8}%
\special{pa 2852 1521}%
\special{pa 2444 1521}%
\special{pa 2444 1929}%
\special{pa 2852 1929}%
\special{pa 2852 1521}%
\special{fp}%
%
\special{pn 8}%
\special{pa 2444 1725}%
\special{pa 2240 1725}%
\special{fp}%
\special{sh 1}%
\special{pa 2240 1725}%
\special{pa 2307 1745}%
\special{pa 2293 1725}%
\special{pa 2307 1705}%
\special{pa 2240 1725}%
\special{fp}%
%
\special{pn 8}%
\special{pa 2240 1521}%
\special{pa 1832 1521}%
\special{pa 1832 1929}%
\special{pa 2240 1929}%
\special{pa 2240 1521}%
\special{fp}%
%
\special{pn 8}%
\special{ar 1378 1730 51 51  0.0000000 6.2831853}%
%
\special{pn 8}%
\special{pa 1378 812}%
\special{pa 1378 1118}%
\special{fp}%
\special{sh 1}%
\special{pa 1378 1118}%
\special{pa 1398 1051}%
\special{pa 1378 1065}%
\special{pa 1358 1051}%
\special{pa 1378 1118}%
\special{fp}%
\special{pa 1378 1118}%
\special{pa 1378 1118}%
\special{fp}%
%
\special{pn 8}%
\special{pa 1174 1118}%
\special{pa 1582 1118}%
\special{pa 1582 1475}%
\special{pa 1174 1475}%
\special{pa 1174 1118}%
\special{fp}%
%
\special{pn 8}%
\special{pa 1378 1475}%
\special{pa 1378 1668}%
\special{fp}%
\special{sh 1}%
\special{pa 1378 1668}%
\special{pa 1398 1601}%
\special{pa 1378 1615}%
\special{pa 1358 1601}%
\special{pa 1378 1668}%
\special{fp}%
%
\special{pn 8}%
\special{pa 2802 501}%
\special{pa 2802 766}%
\special{fp}%
\special{sh 1}%
\special{pa 2802 766}%
\special{pa 2822 699}%
\special{pa 2802 713}%
\special{pa 2782 699}%
\special{pa 2802 766}%
\special{fp}%
%
\special{pn 8}%
\special{pa 3260 807}%
\special{pa 3260 1215}%
\special{fp}%
%
\special{pn 8}%
\special{pa 3260 1215}%
\special{pa 3056 1215}%
\special{fp}%
\special{sh 1}%
\special{pa 3056 1215}%
\special{pa 3123 1235}%
\special{pa 3109 1215}%
\special{pa 3123 1195}%
\special{pa 3056 1215}%
\special{fp}%
%
\special{pn 8}%
\special{pa 3056 1011}%
\special{pa 2648 1011}%
\special{pa 2648 1419}%
\special{pa 3056 1419}%
\special{pa 3056 1011}%
\special{fp}%
%
\special{pn 8}%
\special{pa 2648 1215}%
\special{pa 2494 1215}%
\special{fp}%
%
\special{pn 8}%
\special{pa 2494 1215}%
\special{pa 2494 868}%
\special{fp}%
\special{sh 1}%
\special{pa 2494 868}%
\special{pa 2474 935}%
\special{pa 2494 921}%
\special{pa 2514 935}%
\special{pa 2494 868}%
\special{fp}%
%
\special{pn 8}%
\special{pa 3781 1266}%
\special{pa 3514 1266}%
\special{fp}%
\special{sh 1}%
\special{pa 3514 1266}%
\special{pa 3581 1286}%
\special{pa 3567 1266}%
\special{pa 3581 1246}%
\special{pa 3514 1266}%
\special{fp}%
\put(8.1200,-8.0700){\makebox(0,0){$W$}}%
\put(32.5000,-1.9000){\makebox(0,0){$d$}}%
\put(28.5200,-12.1500){\makebox(0,0){$K_1$}}%
\put(44.7000,-11.8000){\makebox(0,0){$n$}}%
\put(26.4800,-17.2500){\makebox(0,0){$K_2$}}%
\put(20.3600,-17.2500){\makebox(0,0){$\hold{h}$}}%
\put(13.7800,-13.0100){\makebox(0,0){$e^{-Ls}$}}%
\put(3.6300,-7.1400){\makebox(0,0){$w$}}%
\put(4.9500,-16.3400){\makebox(0,0){$e_r$}}%
\put(14.9000,-16.2800){\makebox(0,0){$-$}}%
\put(14.8000,-18.1100){\makebox(0,0){$+$}}%
\put(23.5300,-6.9500){\makebox(0,0){$+$}}%
\put(23.8300,-9.5000){\makebox(0,0){$-$}}%
\put(29.3400,-7.0500){\makebox(0,0){$+$}}%
\put(35.7700,-13.9800){\makebox(0,0){$+$}}%
%
\special{pn 8}%
\special{pa 1016 807}%
\special{pa 1832 807}%
\special{fp}%
%
\special{pn 8}%
\special{pa 200 807}%
\special{pa 608 807}%
\special{fp}%
\special{sh 1}%
\special{pa 608 807}%
\special{pa 541 787}%
\special{pa 555 807}%
\special{pa 541 827}%
\special{pa 608 807}%
\special{fp}%
%
\special{pn 8}%
\special{pa 1832 1734}%
\special{pa 1444 1734}%
\special{fp}%
\special{sh 1}%
\special{pa 1444 1734}%
\special{pa 1511 1754}%
\special{pa 1497 1734}%
\special{pa 1511 1714}%
\special{pa 1444 1734}%
\special{fp}%
%
\special{pn 8}%
\special{pa 1333 1734}%
\special{pa 200 1734}%
\special{fp}%
\special{sh 1}%
\special{pa 200 1734}%
\special{pa 267 1754}%
\special{pa 253 1734}%
\special{pa 267 1714}%
\special{pa 200 1734}%
\special{fp}%
%
\special{pn 8}%
\special{pa 2600 110}%
\special{pa 3000 110}%
\special{pa 3000 510}%
\special{pa 2600 510}%
\special{pa 2600 110}%
\special{fp}%
%
\special{pn 8}%
\special{pa 3800 1070}%
\special{pa 4200 1070}%
\special{pa 4200 1470}%
\special{pa 3800 1470}%
\special{pa 3800 1070}%
\special{fp}%
%
\special{pn 8}%
\special{pa 3400 300}%
\special{pa 3000 300}%
\special{fp}%
\special{sh 1}%
\special{pa 3000 300}%
\special{pa 3067 320}%
\special{pa 3053 300}%
\special{pa 3067 280}%
\special{pa 3000 300}%
\special{fp}%
%
\special{pn 8}%
\special{pa 4600 1270}%
\special{pa 4200 1270}%
\special{fp}%
\special{sh 1}%
\special{pa 4200 1270}%
\special{pa 4267 1290}%
\special{pa 4253 1270}%
\special{pa 4267 1250}%
\special{pa 4200 1270}%
\special{fp}%
\put(40.0000,-12.7000){\makebox(0,0){$W_n$}}%
\put(28.0000,-3.0000){\makebox(0,0){$W_d$}}%
%
\special{pn 8}%
\special{pa 1806 605}%
\special{pa 2236 605}%
\special{pa 2236 1005}%
\special{pa 1806 1005}%
\special{pa 1806 605}%
\special{fp}%
\put(20.2600,-8.0500){\makebox(0,0){$\samp{h}$}}%
\put(25.7000,-7.7000){\makebox(0,0)[lb]{$e$}}%
\end{picture}%
 \caption{Error system for designing filters}
 \label{fig:qsd_errorsystem}
\end{center}
\end{figure}
We first take the quantization error caused by $Q$ for the additive noise $d$.
Then, assume that the analog signal to be quantized has frequency characteristic $W(s)$, 
and introduce a time delay $e^{-Ls}$.
This delay time $L$ is the time that $K_1(z)$ and $K_2(z)$ will take to process signals.
In the error system, let $\T_1$ be the system from $[w,d]^T$ to $e$,
and $\T_2$ be the system from $[w,d,n]^T$ to $e_r$.
Then the design problem is formulated as follows:
\begin{problem}
Given an analog filter $W(s)$, time delay $L$ and a sampling period $h$, 
find filters $K_1(z)$ and $K_2(z)$ that minimize 
\begin{equation*}
\begin{split}
\|\T_1\|^2&:=\sup_{\substack{w\in L^2, d\in l^2\\ \|w\|_{L^2}+\|d\|_{l^2}\neq 0}} \frac{\|e\|^2_{l^2}}{\|w\|^2_{L^2}+\|d\|^2_{l^2}},\\
\|\T_2\|^2&:=\sup_{\substack{w\in L^2, d,n\in l^2\\ \|w\|_{L^2}+\|d\|_{l^2}+\|n\|_{l^2}\neq 0}} 
	\frac{\|e_r\|^2_{l^2}}{\|w\|^2_{L^2}+\|d\|^2_{l^2}+\|n\|^2_{l^2}},
\end{split}
\end{equation*}
respectively.
\end{problem}

This is a sampled-data $H^\infty$ optimization problem and assuming $L=mh$ ( $m\in\Nset$ ),
the solution can be obtained by using the fast-sampling/fast-hold method discussed in 
Chapter \ref{ch:SDC}.
\begin{theorem}\label{th:qqq_FSFH}
Assume that $L=mh$, $m\in\Nset$. Then, 
for the sampled-data systems $\T_1$ and $\T_2$,
there exist finite-dimensional discrete-time systems $\{T_{1,N}: N=1,2,\ldots\}$
and $\{T_{2,N}: N=1,2,\ldots\}$ such that
\begin{equation*}
\begin{split}
\lim_{N\to\infty}\|T_{1,N}\|&= \|\T_{1}\|,\\
\lim_{N\to\infty}\|T_{2,N}\|&= \|\T_{2}\|.
\end{split}
\end{equation*}
\end{theorem}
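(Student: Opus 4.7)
The plan is to recognize that both $\T_1$ and $\T_2$ fit the multirate sampled-data template treated in Section \ref{sec:SDC_fsfh}, specialized to the single-rate case $M_1=M_2=1$, so that the convergence theorem stated at the end of that section can be invoked almost directly. Two preparatory manipulations are needed before the FSFH machinery applies cleanly.

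First, I would absorb the delay into the discrete side. Since $L=mh$ with $m\in\Nset$, and the only branch carrying $e^{-Ls}$ in each error system feeds a summing junction whose continuous output is compared with the output of $\hold{h}$, the composition $\samp{h}e^{-mhs}$ coincides with $z^{-m}\samp{h}$; no intersample information on that branch is discarded because the difference forming $e_r$ (or $e$) is itself sampled at period $h$ by virtue of the $\hold{h}$ and $\samp{h}$ in the loop. Thus the delay becomes the finite-dimensional discrete block $z^{-m}$ that can be packaged with the controller.

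Second, I would rearrange each error system into the standard form $\lft(G,\hold{h}\mathcal{K}\samp{h})$ where $G$ is a continuous-time generalized plant built from $W(s)$ and identity channels with state-space data $(A_W,B_W,C_W,0)$, and $\mathcal{K}$ is a time-invariant discrete-time controller that bundles $z^{-m}$, $K_1$, $K_2$, and the straight-through paths for the discrete exogenous signals $d$ (and, for $\T_2$, $n$). The exogenous input then consists of the continuous-time signal $w$ together with the discrete signals, the latter entering only the discrete side and thus left untouched by the FSFH step. Applying $\samp{h/N}$ and $\hold{h/N}$ around $G$, using $\dliftsys{N}{W}$ as in Section \ref{sec:SDC_fsfh}, and defining
\begin{equation*}
T_{i,N}:=\lft(\widetilde{G}_{i,N},\mathcal{K}_i), \qquad i=1,2,
\end{equation*}
produces the required finite-dimensional discrete-time approximants. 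The convergence $\|T_{i,N}\|\to\|\T_i\|$ as $N\to\infty$ then follows by the theorem at the end of Section \ref{sec:SDC_fsfh} (equivalently \cite{YamMadAnd99}), applied once for each $i$.

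The chief obstacle will be bookkeeping: one has to organize the signal-flow graph so that the mixed continuous/discrete exogenous inputs enter a single generalized plant in a consistent way, and so that $K_1$, which sits inside an internal feedback loop closed through $\samp{h}$, is correctly placed on the controller side of the lift rather than inside $G$. Once this partition is fixed and the $L^2$-to-$l^2$ lifted norms are matched to the norms defining $\|\T_1\|$ and $\|\T_2\|$, the approximation estimate is an immediate corollary and no further continuous-time analysis is required.
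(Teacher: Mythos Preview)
Your approach is correct and coincides with the paper's, which simply remarks that the argument is the same as for Theorem~\ref{th:FSFH}: cast each $\T_i$ as a single-rate sampled-data system, apply fast-sampling/fast-hold, lift, and invoke \cite{YamMadAnd99} for convergence. One small slip: the signal $e_r$ is genuinely continuous (it is the difference of the delayed analog branch and the $\hold{h}$ output), so the delay is not absorbed via $\samp{h}e^{-mhs}=z^{-m}\samp{h}$ at the slow rate; rather, it is the fast sampler in the FSFH step that gives $\samp{h/N}e^{-mhs}=z^{-mN}\samp{h/N}$, and after lifting by $N$ this becomes $z^{-m}$, exactly as in the proofs of Theorems~3.1--3.2 and~\ref{th:FSFH}. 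With that correction your bookkeeping goes through unchanged.
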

The proof of this theorem is almost the same as the proof of Theorem 5 in Chapter 4.
The approximated discrete-time systems $T_{1,N}$ and $T_{2,N}$ are as follows:
\begin{equation*}
\begin{split}
T_{1,N}&=\lft(G_{1,N}, K_1),\quad
G_{1,N}:=\left[\begin{array}{cc}
		\left[\widetilde{W}_{N}, 0\right] & -1\\
		\left[\widetilde{W}_{N}, W_d\right] & -1
	\end{array}\right],\\
T_{2,N}&=\lft(G_{2,N}, K_2),\quad
G_{2,N}:=\left[\begin{array}{cc}
		\left[z^{-m}, 0, 0\right] & -1\\
		\left[S_d\widetilde{W}_{N}, S_dW_d, W_n\right] & 0
	\end{array}\right],\\
\widetilde{W}_{N}&:=[1,\underbrace{0,\ldots,0}_{N-1}]\dliftsys{N}{W},\quad
S_d:=(1+K_1)^{-1}.
\end{split}
\end{equation*}

\section{Design example}
In this section, we present a design example of DPCM.
The design parameters are as follows:
the sampling period $h=1$, the reconstruction delay $L=2$, the weighting functions are
\begin{equation*}
W_d = 0.5, \quad W_n(z) = 0.1\times\frac{0.01753z^2-0.03506z+0.01753}{z^2+0.572z+0.3147},
\end{equation*}
where $W_n(z)$ is a Chebyshev type I high-pass filter \cite{Fli,Vai,Zel},
and the analog filter $W(s)$ is
\begin{equation*}
W(s) = \frac{1}{(10s+1)^2}.
\end{equation*}
For comparison, we take the $\Delta$ modulation, that is, $K_1(z)=1/(z-1)$ and $K_2(z)=1+K_1(z)=z/(z-1)$.

Figure \ref{fig:qsd_freqT1} shows the frequency responses of the system $\T_1$. 
We can see that the system $\T_1$ designed by $H^\infty$ optimization has lower
gain than that of the $\Delta$ modulation over the whole frequency,
in particular, in the high frequency range, our system shows better attenuation.

Figure \ref{fig:qsd_freqT2} shows the frequency responses of the system $\T_2$.
Since the $\Delta$ modulation is not stable (i.e., the system has a pole at $z=1$), 
the frequency response of $\T_2$ is indefinite.
Therefore in Figure \ref{fig:qsd_freqT2},
we take a decoder with the decoding filter $K_2 = 1+K_1$ where $K_1$ is the $H^\infty$ (sub) 
optimal filter
of the encoder, and compare it with the optimal decoding filter designed by sampled-data
$H^\infty$ optimization.
We can see that the proposed system attenuates the gain over the whole frequency.
It follows that the channel noise on the received signal will be attenuated more than
the conventional system.
\begin{figure}[t]
\begin{center}
 \includegraphics[width=0.7\linewidth]{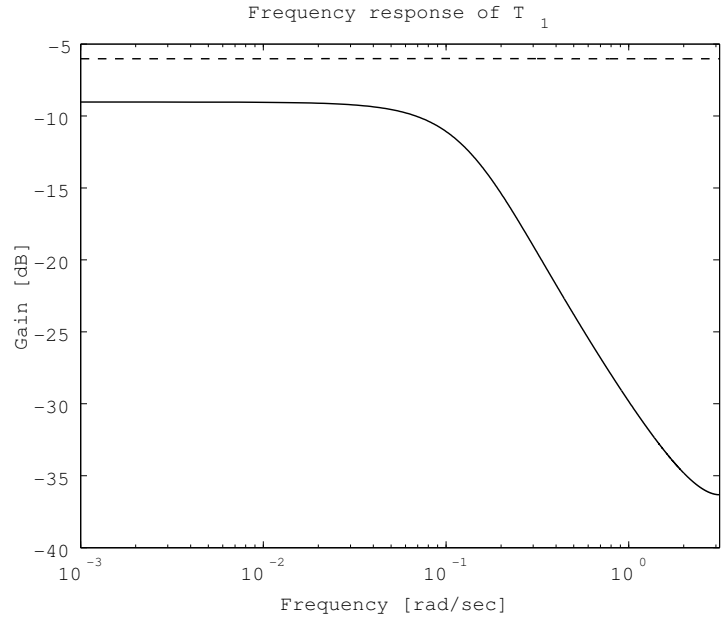}
 \caption{Frequency responses of $\T_1$: proposed (solid) and conventional (dash)}
 \label{fig:qsd_freqT1}
\end{center}
\end{figure}
\begin{figure}[t]
\begin{center}
 \includegraphics[width=0.7\linewidth]{qsd_T2.eps}
 \caption{Frequency responses of $\T_2$: proposed (solid) and conventional (dash)}
 \label{fig:qsd_freqT2}
\end{center}
\end{figure}

Then we show a simulation for the obtained DPCM systems.
The parameters are as follows:
the quantization level $\Delta = 0.125$, the input $r = \sin(\frac{\pi}{10}t)$, the channel noise $n = 0.1\sin(2t)$
and the sampling period $h=1$.
Figure \ref{fig:qsd_time_sd} shows the time response of the sampled-data designed DPCM system,
and Figure \ref{fig:qsd_time_dt} shows that of the conventional $\Delta$ modulation system.
We can see that the proposed system attenuates the channel noise 
considerably better than the conventional system.
\begin{figure}[t]
\begin{center}
 \includegraphics[width=0.7\linewidth]{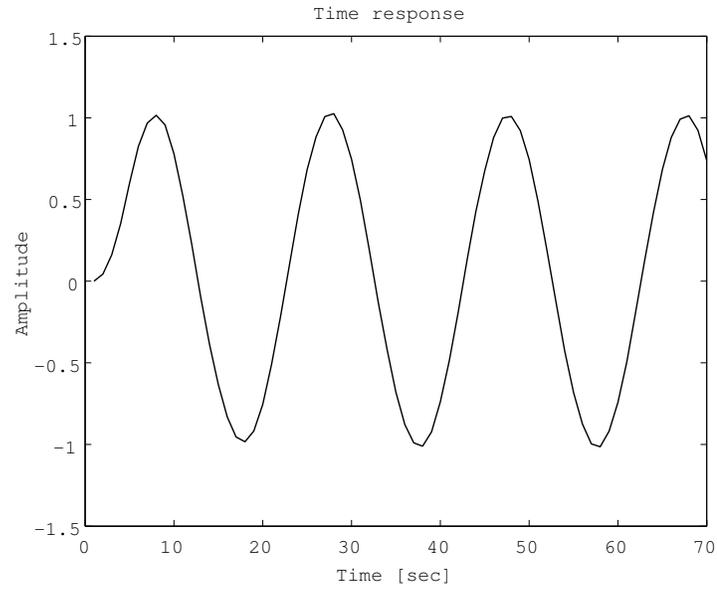}
 \caption{Time response of sampled-data designed system}
 \label{fig:qsd_time_sd}
\end{center}
\end{figure}
\begin{figure}[t]
\begin{center}
 \includegraphics[width=0.7\linewidth]{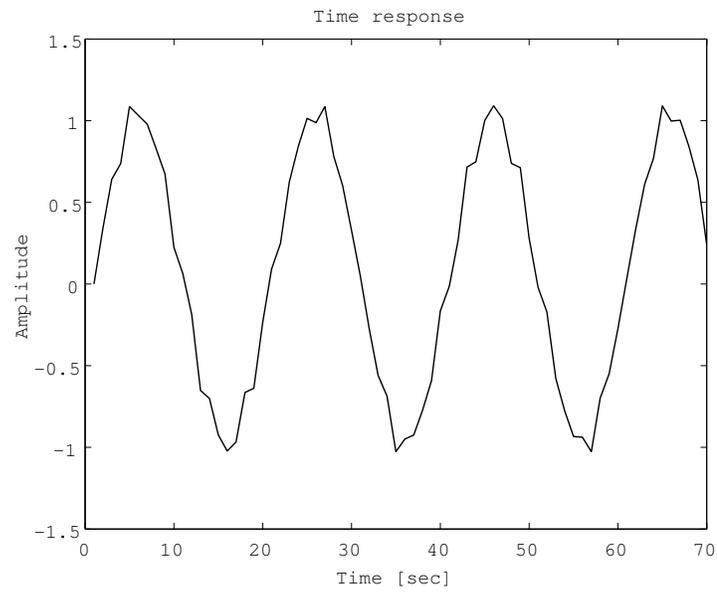}
 \caption{Time response of conventional $\Delta$ modulation system}
 \label{fig:qsd_time_dt}
\end{center}
\end{figure}

\section{Conclusion}
In this chapter, we have discussed the stability and the performance of quantized sampled-data control systems.
We have adopted the additive noise model, and by using it we have shown
the BIBO stability and the performance of quantized systems.
Moreover, we have proposed a new method for designing DPCM systems.
Since the conventional $\Delta$ modulation system is not stable,
a channel noise can be amplified at the encoder,
while our system will attenuate the channel noise.

However, there remains an issue. We have not dealt with
saturation in a quantizer.
The additive noise model can be applied effectively in the case that
the noise is small, but with saturation, the noise can be enormous.
In this case, we have to treat the quantizer as a nonlinear system.
We consider that hybrid system theory,
in particular, switching system theory may be applicable to that case.

\chapter{Optimal FIR Approximation}
\label{ch:firapp}

\section{Introduction}
According to  the method we have discussed in the previous chapters,
the filter we obtain is an IIR (Infinite Impulse Response) filter:
\begin{equation}
  F(z) = 
   \frac{\sum^{M}_{k=0} a_{k} z^{-k}}{1+\sum^{N}_{k=1}b_{k}z^{-k}}.
\label{eq:fir_IIR}
\end{equation}
The structure of an IIR filter is shown in Figure~\ref{fig:fir_IIR}.
\begin{figure}[t]
\begin{center}
\includegraphics[width=.4\linewidth]{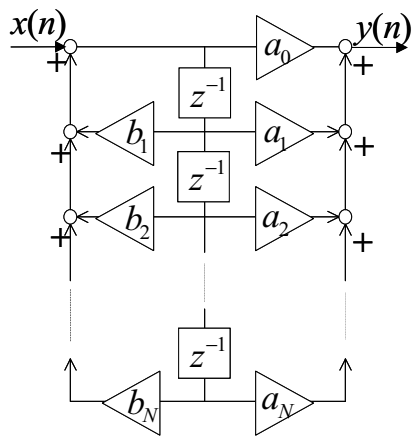}
\end{center}
\caption{IIR filter}
\label{fig:fir_IIR}
\end{figure}

In practice, FIR (Finite Impulse Response) filters are often preferred to IIR ones.
They have finitely many nonzero Markov parameters:
\begin{equation} \label{eq:fir_FIR}
  F(z)=\sum^{M}_{k=0} a_{k} z^{-k}.
\end{equation}
The structure of an FIR filter is shown in Figure~\ref{fig:fir_FIR}.
\begin{figure}[t]
\begin{center}
\includegraphics[width=.4\linewidth]{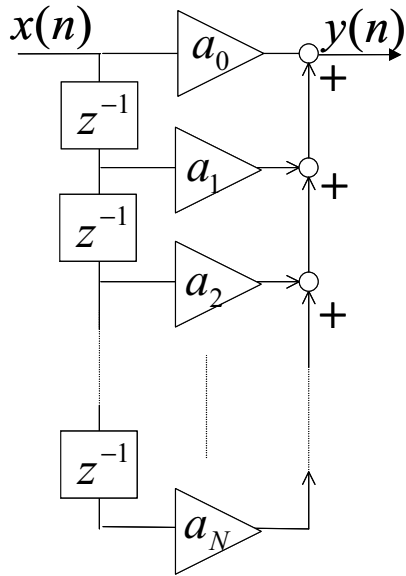}
\end{center}
\caption{FIR filter}
\label{fig:fir_FIR}
\end{figure}

The reasons why FIR filters are often preferred to 
IIR filters are as follows \cite{Zel}:

\begin{itemize}
\item FIR filters are intrinsically stable; the stability issue
is a non-issue.

\item They can easily realize various features that are not possible or are
difficult to achieve with IIR filters, e.g., linear phase property.

\item They can be free from certain problems in implementation,
for example, limit cycles, attributed to quantization and 
the existence of a feedback loop in IIR filters.
\end{itemize}

On the other hand, 
a design process may have to start with an IIR filter
for variety of reasons.  For example, we have a large number
of continuous-time filters available, and a digital filter
may be obtained by discretizing one of them.  It is then desired that
such an IIR filter be approximated by an FIR filter.
An easy way of doing this is to just truncate the Markov parameters
of the IIR filter at a desired number of steps.  This may however lead
to either a very high-dimensional filter with good approximation, or a lower
dimensional filter with an unsatisfactory approximation, depending on the
truncation point.

The following problem is thus very natural and of importance:

\begin{problem}
Given an IIR filter $K(z)$ and a positive integer $N$, find
an optimal FIR approximant $K_{f}(z)$ that has order $N$
and approximates $K(z)$ with respect to a certain performance measure.
\end{problem}

There is a very elegant method called the {\em Nehari shuffle},
proposed by Kootsookos et al.\ \cite{Kootsookos92,Kootsookosbkchap}.
Its basic idea may be described as follows: 
For a given IIR filter $G$ and a desired degree $r-1$ that
an approximating FIR filter should assume, one first
truncate the impulse response of $G$ to the first $r$ steps.
This is a mere truncation, and it may induce a large error.
One then takes its residual $G_{1}$, and suitably shifting and 
taking the mirror image, one can reduce this to the
situation of the Nehari extension (approximation) problem.
This will induce a truncation in the second step.  By taking
the residual further, this process can be continued, and
the approximation can be improved in each step.  (Details
may be found in \cite{ObinataAndersonbk}.  
An advantage here is that this procedure 
gives rise to certain a priori and a posteriori error bounds.
On the other hand, it does not necessarily give an 
optimal approximation 
with respect to the $H^{\infty}$-norm.

In contrast to the Nehari shuffle, we here propose a method
that directly deals with (sub)optimal approximants with respect to
the $H^{\infty}$ error norm.  It is shown that
\begin{itemize}
\item the design problem is reducible to a Linear Matrix Inequality(LMI) \cite{Boydetal}; and

\item the obtained filter can be made close to be optimal by
an iterative procedure.
\end{itemize}
A comparison with the Nehari shuffle is made for the Chebyshev
filter of order 8, which has been studied in detail in
\cite{Kootsookosbkchap}.

\section{FIR approximation problem}
Consider the block diagram Figure~\ref{fig:fir_syn}.
\begin{figure}[t]
\begin{center}
\unitlength 0.1in
\begin{picture}(29.00,12.00)(4.00,-14.00)
%
\special{pn 8}%
\special{pa 400 800}%
\special{pa 800 800}%
\special{fp}%
\special{sh 1}%
\special{pa 800 800}%
\special{pa 733 780}%
\special{pa 747 800}%
\special{pa 733 820}%
\special{pa 800 800}%
\special{fp}%
\special{pa 800 600}%
\special{pa 800 600}%
\special{fp}%
%
\special{pn 8}%
\special{pa 800 600}%
\special{pa 1200 600}%
\special{pa 1200 1000}%
\special{pa 800 1000}%
\special{pa 800 600}%
\special{fp}%
%
\special{pn 8}%
\special{pa 1200 800}%
\special{pa 1600 800}%
\special{fp}%
\special{pa 1200 800}%
\special{pa 1600 800}%
\special{fp}%
%
\special{pn 8}%
\special{pa 1600 1200}%
\special{pa 1600 400}%
\special{fp}%
%
\special{pn 8}%
\special{pa 1600 400}%
\special{pa 2000 400}%
\special{fp}%
\special{sh 1}%
\special{pa 2000 400}%
\special{pa 1933 380}%
\special{pa 1947 400}%
\special{pa 1933 420}%
\special{pa 2000 400}%
\special{fp}%
%
\special{pn 8}%
\special{pa 2000 200}%
\special{pa 2400 200}%
\special{pa 2400 600}%
\special{pa 2000 600}%
\special{pa 2000 200}%
\special{fp}%
%
\special{pn 8}%
\special{pa 1600 1200}%
\special{pa 2000 1200}%
\special{fp}%
\special{sh 1}%
\special{pa 2000 1200}%
\special{pa 1933 1180}%
\special{pa 1947 1200}%
\special{pa 1933 1220}%
\special{pa 2000 1200}%
\special{fp}%
%
\special{pn 8}%
\special{pa 2000 1000}%
\special{pa 2400 1000}%
\special{pa 2400 1400}%
\special{pa 2000 1400}%
\special{pa 2000 1000}%
\special{fp}%
%
\special{pn 8}%
\special{pa 2400 400}%
\special{pa 2800 400}%
\special{fp}%
%
\special{pn 8}%
\special{pa 2800 400}%
\special{pa 2800 700}%
\special{fp}%
\special{sh 1}%
\special{pa 2800 700}%
\special{pa 2820 633}%
\special{pa 2800 647}%
\special{pa 2780 633}%
\special{pa 2800 700}%
\special{fp}%
%
\special{pn 8}%
\special{pa 2800 1200}%
\special{pa 2800 900}%
\special{fp}%
\special{sh 1}%
\special{pa 2800 900}%
\special{pa 2780 967}%
\special{pa 2800 953}%
\special{pa 2820 967}%
\special{pa 2800 900}%
\special{fp}%
%
\special{pn 8}%
\special{pa 2800 1200}%
\special{pa 2400 1200}%
\special{fp}%
%
\special{pn 8}%
\special{ar 2800 800 100 100  0.0000000 6.2831853}%
%
\special{pn 8}%
\special{pa 2900 800}%
\special{pa 3300 800}%
\special{fp}%
\special{sh 1}%
\special{pa 3300 800}%
\special{pa 3233 780}%
\special{pa 3247 800}%
\special{pa 3233 820}%
\special{pa 3300 800}%
\special{fp}%
\put(29.0000,-6.0000){\makebox(0,0)[lb]{$-$}}%
\put(29.0000,-10.0000){\makebox(0,0)[lt]{$+$}}%
\put(10.0000,-8.0000){\makebox(0,0){$W(z)$}}%
\put(22.0000,-4.0000){\makebox(0,0){$K(z)$}}%
\put(22.0000,-12.0000){\makebox(0,0){$K_f(z)$}}%
\put(4.5000,-7.5000){\makebox(0,0)[lb]{$w$}}%
\put(32.5000,-7.5000){\makebox(0,0)[lb]{$e$}}%
\end{picture}%
\end{center}
\caption{Error system}
\label{fig:fir_syn}
\end{figure}
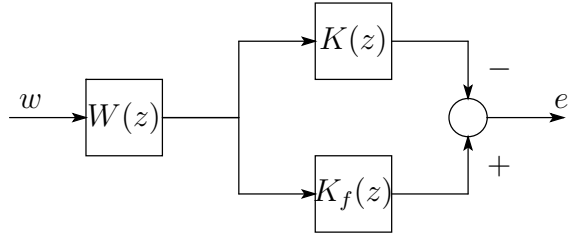
$K(z)$ is a given (rational and stable) IIR filter,
$W(z)$ is a proper and rational weighting function, and
$K_{f}(z)$ is an FIR filter of order $N$.  
Denote by $T_{ew}(z)$ the transfer function from $w$ to $e$ in
Figure~\ref{fig:fir_syn}.  
The objective here is
to find $K_{f}(z)$ that makes the $H^{\infty}$ error norm less than a
prespecified bound $\gamma>0$, that is,
\[
\|T_{ew}\|_\infty := \sup_{\substack{w\in l^2\\w\neq 0}}\frac{\|T_{ew}w\|_2}{\|w\|_2} < \gamma.
\]

Introduce state space realizations
\begin{equation*}
	\begin{split}
	W(z) &:= C_W(zI-A_W)^{-1}B_W+D_W,\\
	K(z) &:= C_K(zI-A_K)^{-1}B_K+D_K,
	\end{split}
\end{equation*}
and
\begin{equation*}
	\begin{split}
	K_f(z)&=\sum_{k=0}^{N} a_kz^{-k}=C_f(\alpha)(zI-A_f)^{-1}B_f+D_f(\alpha),\\
	C_f(\alpha)&=\left[
                       \begin{array}{cccc}
                        a_N & a_{N-1} & \ldots & a_1
                       \end{array}
                      \right],\quad
        D_f(\alpha)=a_0,\\
        \alpha&=\left[
                \begin{array}{cccc}
                 a_N & a_{N-1} & \ldots & a_0
                \end{array}
               \right],
\label{eq:fir_fir}
\end{split}
\end{equation*}
where $\alpha=\left[ 
                \begin{array}{cccc}
                 a_N & a_{N-1} & \ldots & a_0
                 \end{array} 
                \right]$ denote
the Markov parameters of the filter $K_{f}(z)$ to be designed. 
The matrices $A_f$ and $B_f$ are defined as follows:
\begin{equation*}
A_{f}= \left[
          \begin{array}{ccccc}
           0 & 1 & 0 & \cdots & 0 \\
           \vdots & \ddots & \ddots & \ddots & \vdots \\
           \vdots &  & \ddots & \ddots & 0 \\
           \vdots &  & & \ddots & 1 \\
           0 & \cdots &\cdots & \cdots & 0
          \end{array}
         \right],\quad
B_{f} = \left[
          \begin{array}{c}
          0 \\ \vdots \\ 0 \\1
          \end{array}
          \right],
\end{equation*}
and they contain just zeros and ones.

A realization of $T_{ew}$ is given as follows:
\begin{equation*}
\begin{split}
T_{ew}(z) &=: C(\alpha)(zI-A)^{-1}B+D(\alpha),\\
A&=\left[
              \begin{array}{ccc}
               A_W & 0 & 0 \\
               B_{K}C_W & A_{K} & 0   \\
               B_{f}C_W & 0 & A_{f}  \\ 
              \end{array}
        \right], \quad
B=\left[
       \begin{array}{c}
       B_W\\
       B_{K}D_W\\
       B_{f}D_W
       \end{array}
       \right], \\
C(\alpha)&=\left[
       \begin{array}{ccc}
        (D_{f}(\alpha) - D_{K})C_W & -C_{K} & C_{f}(\alpha) 
       \end{array}
      \right],\\
D(\alpha)&= \left[
          \begin{array}{c}
               (D_{K}+D_{f}(\alpha))D_W 
              \end{array}
             \right].
\end{split}        	
\end{equation*}

The important asset here is that the design parameter $\alpha$ appears
only in the $C$ and $D$ matrices linearly, and the underlying structure is of
the one-block type.  Hence the overall transfer operator is linear in
$\alpha$, and  the design problem of choosing $\alpha$ to minimize the
$H_{\infty}$-norm can be expected to become a linear matrix 
inequality.  In fact, the bounded real lemma
\cite{Boydetal} readily yields the following:
\begin{theorem} \label{thm:fir_main}
$\|T_{ew}\|_\infty<\gamma$ if and only if there exists
$P>0$ such that
\begin{eqnarray}
	\left[\begin{array}{ccc}
		A^TPA-P&A^TPB&C(\alpha)^T\\
		B^TPA&-\gamma I+B^TPB&D(\alpha)^T\\
		C(\alpha)&D(\alpha)&-\gamma I
	\end{array}\right]<0.
\label{eq:fir_TH1}
\end{eqnarray}
\end{theorem}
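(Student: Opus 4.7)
The plan is to derive the theorem as a direct application of the discrete-time bounded real lemma (KYP lemma) to the realization of $T_{ew}$ exhibited just above the statement. The crucial structural fact already established is that when $K_f$ is FIR of order $N$ with Markov parameters $\alpha$, the design parameter $\alpha$ enters only through the output matrices $C(\alpha)$ and $D(\alpha)$, while $A$ and $B$ are fixed. Hence the synthesis problem is of a one-block form with a \emph{linear} dependence of the realization on $\alpha$, which is exactly the structure for which the bounded real lemma yields an LMI.

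First I would check the standing stability hypothesis: $A$ is block lower triangular with diagonal blocks $A_W$, $A_K$, $A_f$. Since $W$ and $K$ are assumed stable, $A_W$ and $A_K$ are Schur; $A_f$ is nilpotent (pure shift structure with zero last row) and therefore trivially Schur. Thus $A$ is Schur, regardless of $\alpha$, and the bounded real lemma applies.

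Next, I would invoke the discrete-time bounded real lemma in its strict LMI form: for a Schur $A$, the realization $(A,B,C,D)$ satisfies $\|D+C(zI-A)^{-1}B\|_\infty < \gamma$ if and only if there exists $P>0$ with
\[
\begin{bmatrix}
A^TPA-P & A^TPB & C^T\\
B^TPA & -\gamma I+B^TPB & D^T\\
C & D & -\gamma I
\end{bmatrix}<0.
\]
Substituting $C=C(\alpha)$ and $D=D(\alpha)$ from the realization of $T_{ew}$ gives exactly (\ref{eq:fir_TH1}). The ``if'' direction is immediate from the lemma; the ``only if'' direction uses Schur stability of $A$ (already verified) together with the strict inequality hypothesis on $\|T_{ew}\|_\infty$.

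There is essentially no hard step here beyond citing the bounded real lemma; the only point requiring care is to emphasize that because $(A,B)$ is \emph{independent} of $\alpha$, the inequality (\ref{eq:fir_TH1}) is jointly affine in the decision variables $(P,\alpha)$ for fixed $\gamma$, which is what makes the subsequent iterative $\gamma$-bisection in the rest of the chapter meaningful. I would close by remarking on this affine dependence so that the LMI interpretation of the synthesis problem is manifest.
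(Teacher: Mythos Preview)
Your proposal is correct and follows essentially the same route as the paper: both proofs invoke the discrete-time bounded real lemma for the realization $(A,B,C(\alpha),D(\alpha))$. The only cosmetic difference is that the paper starts from the quadratic form $Q^T\mathrm{diag}(\tilde P,I)Q<\mathrm{diag}(\tilde P,\gamma^2 I)$ of the lemma and then explicitly converts it to the affine LMI via the substitution $P=\gamma^{-1}\tilde P$ and a Schur complement, whereas you cite the LMI form directly; your added verification that $A$ is Schur (via the block-triangular structure and nilpotency of $A_f$) is a useful observation the paper leaves implicit.
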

\begin{proof}
By the bounded real lemma \cite{Boydetal}, $\|T_{ew}\|_\infty<\gamma$ is equivalent to the condition 
that there exists a matrix $\tilde{P}>0$ such that
\begin{eqnarray}
        Q^{T}
         \left[
          \begin{array}{cc}  
           \tilde{P} & 0 \\
           0 & I
          \end{array}
         \right]
         Q 
        < 
        \left[\begin{array}{cc}
	\tilde{P}&0\\0&\gamma^2I
	\end{array}\right], 
\label{eq:fir_LMI}
\end{eqnarray}
where
\begin{eqnarray*}
        Q :=
        \left[
         \begin{array}{cc}
          A & B \\
          C(\alpha) & D(\alpha)
         \end{array}
        \right].  
\end{eqnarray*}
Although the inequality (\ref{eq:fir_LMI}) is not affine in $\alpha$,
it can be converted to an affine one by the Schur complement \cite{Boydetal}:
\begin{equation}
	\left[
	  \begin{array}{cc}
	    \Phi_{11} & \Phi_{12}\\
	    \Phi_{12}^T & \Phi_{22}
	  \end{array}
	\right]
	<0,    \nonumber
\end{equation}
is equivalent to $\Phi_{22}<0$ and $\Phi_{11}<\Phi_{12}\Phi_{22}^{-1}\Phi_{12}^T$.
By dividing the inequality (\ref{eq:fir_LMI}) by $\gamma>0$, we get
\begin{equation*}
	\left[
	  \begin{array}{cc}
	    A^T\gamma^{-1}\tilde{P}A-\gamma^{-1}\tilde{P} & A^T\gamma^{-1}\tilde{P}B\\
	    B^T\gamma^{-1}\tilde{P}A & B^T\gamma^{-1}\tilde{P}B-\gamma I
	  \end{array}
	\right]
	< 
	\left[
	  \begin{array}{c}
	    C^T\\ 
	    D^T
	  \end{array}
	\right]
	(-\gamma^{-1}I)
	\left[
	  \begin{array}{cc}
	    C & D
	  \end{array}
	\right].
\end{equation*}
Then by using the Sure complement for
\begin{eqnarray*}
	\Phi_{11}&:=&
	\left[
	  \begin{array}{cc}
	    A^TPA-P & A^TPB\\
	    B^TPA & B^TPB-\gamma I
	  \end{array}
	\right]^T,\\
	\Phi_{22}&:=&-\gamma I,\\
	\Phi_{12}&:=&
	\left[
	  \begin{array}{cc}
	    C & D
	  \end{array}
	\right]^T,
\end{eqnarray*}
where $P:=\gamma^{-1}\tilde{P}>0$,
we get the inequality (\ref{eq:fir_TH1}).
\end{proof}	

The obtained condition is an LMI in $\alpha$, and can be
effectively solved by standard MATLAB routines \cite{MATLAB}.

\section{Numerical example}
\subsection{Comparison of $H^\infty$ design via LMI and the Nehari shuffle}
Take the following Chebyshev filter of order $8$
{\footnotesize
\begin{equation*}
\begin{split}
K(z) &= 10^{-3}\\
&\times\frac{0.04705z^{8}+0.3764z^{7}+1.317z^{6}+2.635z^{5}+3.294z^{4}+2.635z^{3}+1.317z^{2}+0.3764z+0.04705}
{z^{8}-4.953z^{7}+11.71z^{6}-16.95z^{5}+16.29z^{4}-10.58z^{3}+4.552z^{2}-1.161z+0.1369}
\end{split}
\end{equation*}
}as a target filter to be approximated.  This has been studied thoroughly
by Kootsookos and Bitmead \cite{Kootsookosbkchap} for the
Nehari shuffle, and is suitable for comparison with the
present method.  For simplicity, 
we confine ourselves to approximations by FIR filters with $32$ tap
coefficients (of order $31$).

\begin{figure}[t]
\begin{center}
\includegraphics[width=.7\linewidth]{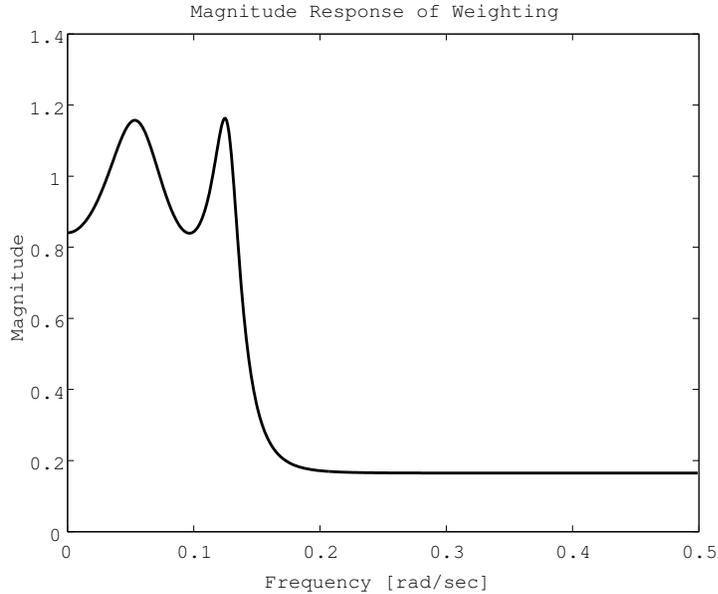}
\end{center}
\caption{Inverse of the weighting function}
\label{fig:fir_weight}
\end{figure}

The design depends crucially on the choice of the weight $W(z)$.
One natural choice (\cite{ObinataAndersonbk})
would be to take $W(z)$ to be equal to $K^{-1}(z)$
(or some variant of it having the same gain on the imaginary axis,
since $K$ is not minimum phase).
This is relative error approximation, where (approximately) dB and phase
errors are weighted uniformly with frequency. Since the optimal 
overall error in
Figure~\ref{fig:fir_syn} will become all-pass, this will have the effect of
attenuating the stop-band error with the weight of
$K^{-1}(z)$ (which is very large) while maintaining reasonable pass-band
characteristic.  Unfortunately, however, due to the very small gain 
of $K(z)$ in
the stop-band, this will make the solution of the approximation problem
Figure~\ref{fig:fir_syn}  numerically hard.  Neither the Nehari shuffle nor the
LMI method gave a satisfactory result in this case.  Hence one should sacrifice
the stop-band attenuation to obtain a reasonable $W(z)$.  There is also a
trade-off, empirically observed, between the stop-band attenuation
and the pass-band ripples.

Kootsookos and Bitmead \cite{Kootsookosbkchap} thus employed the
weight as depicted in Figure~\ref{fig:fir_weight}.
To be precise, the frequency response shown here is the
inverse of the para-Hermitian conjugate of the weight
function.  The reason for taking the para-Hermitian conjugate
is that the Nehari shuffle makes use of causal approximation
for anti-causal transfer function, so that we must 
reciprocate the poles and zeros.  Then by taking the inverse,
the weight attenuates the stop-band by the inverse of
its gain and approximately shapes the pass-band as
it is in the pass band.  On the other hand, for the
FIR approximation as in Figure~\ref{fig:fir_syn}, we 
simply take the inverse of this weight, since we do
not need to make the weight anti-stable.

The gain responses of obtained FIR filters
based on the Nehari shuffle and 
Theorem \ref{thm:fir_main} are given in Figure~\ref{fig:fir_filter_w}.  
Figure \ref{fig:fir_filter_w_p} shows their phase plots.

\begin{figure}[t]
\begin{center}
\includegraphics[width=.7\linewidth]{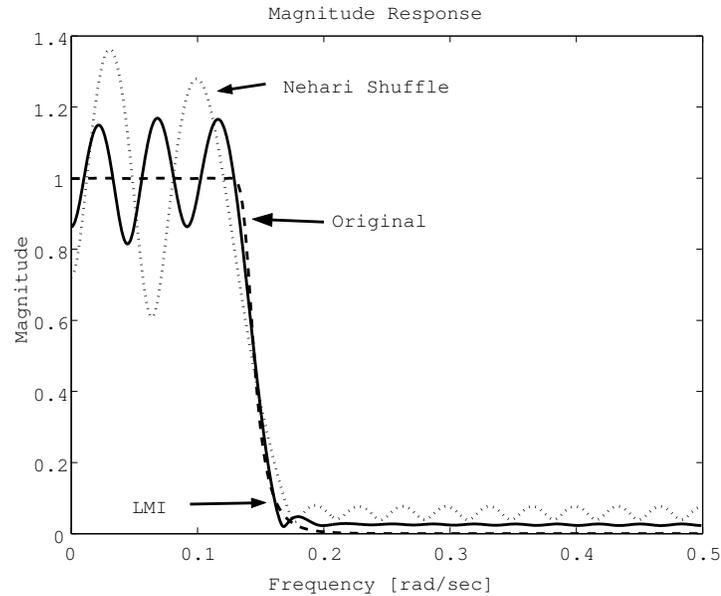}
\end{center}
\caption{Gain responses of FIR approximants
with weight function in Figure~\ref{fig:fir_weight}: $H^{\infty}$ via
LMI (solid), Nehari shuffle (dots) and
original IIR Chebyshev filter (dash)}
\label{fig:fir_filter_w}
\end{figure}

\begin{figure}[t]
\begin{center}
\includegraphics[width=.7\linewidth]{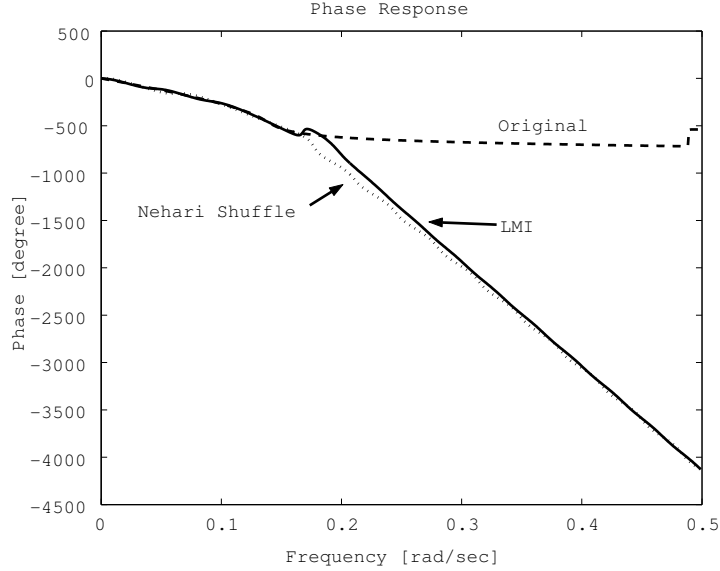}
\end{center}
\caption{Phase plots of FIR approximants
with weight function in Figure~\ref{fig:fir_weight}: $H^{\infty}$ via
LMI (solid), Nehari shuffle (dots) and
Chebyshev (dash)}
\label{fig:fir_filter_w_p}
\end{figure}

We see that the gain of the $H^{\infty}$ 
approximant shows smaller pass-band ripples and 
better stop-band attenuation than those 
by the Nehari shuffle.  
The phase characteristics of these are about the same up to the
edge of the transition band.  

Figure~\ref{fig:fir_error_w}
shows the error magnitude responses.
The design by the LMI method has the advantage of 5--7 dB
over the one by the Nehari shuffle.

\begin{figure}[t]
\begin{center}
\includegraphics[width=.7\linewidth]{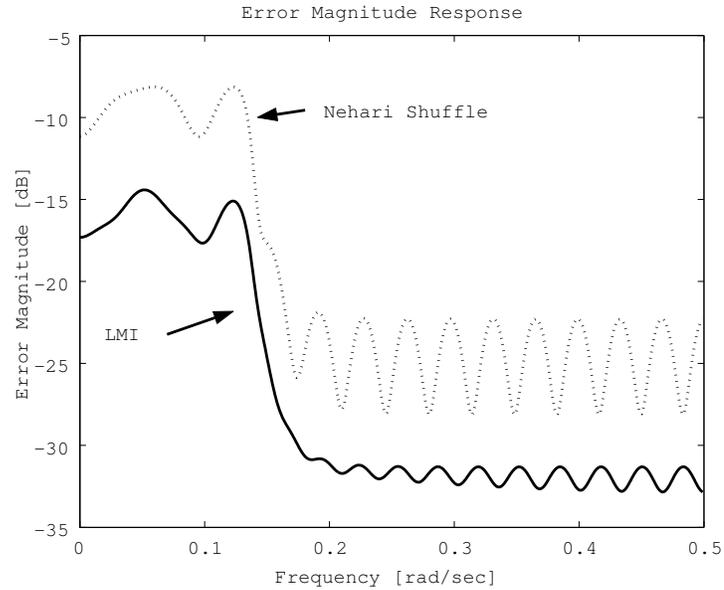}
\end{center}
\caption{Gain of the error $K-K_f$: $H^\infty$ design
via LMI (solid); Nehari shuffle (dots)}
\label{fig:fir_error_w}
\end{figure}


\subsection{Trade-off between pass-band and stop-band characteristics}
The design in the previous subsection depends crucially
on the weighting function.  It is desirable to 
obtain smaller pass-band ripples while maintaining 
reasonable stop-band attenuation.  
In this section we attempt to see how the choice of a
weighting function affects the overall approximation.

We consider the following three weighting functions:
{\small
\begin{equation*}
\begin{split}
W_{1} &= \frac{0.7661 z^{2} - 1.305 z + 0.675}{z^{2} - 1.735 z + 0.9289},\\
\\
W_{2} &= \frac{0.2831 z^{4} - 0.5515 z^{3} + 0.5416 z^{2}- 0.2708 z +0.05882}
	{z^{4} - 2.865 z^{3} + 3.6 z^{2}- 2.268 z + 0.6056},\\
\\
W_{3} &= 10^{-3}\times\frac{14.44 z^{7} - 7.838 z^{6} + 19.02 z^{5}-4.448+ 6.697 z^{3} - 0.1857 z^{2}+ 0.5287 z +0.01134}
{z^{7} - 4.229 z^{6} + 8.561 z^{5}-10.43z^{4}+ 8.172 z^{3} - 4.089z^{2}+ 1.206 z - 0.1613}.
\end{split}
\end{equation*}
}
These functions $W_{1}$, $W_{2}$, $W_{3}$ are,
respectively, obtained as
the 2nd, 4th, 7th-order Hankel norm approximations
\cite{ObinataAndersonbk} 
of the IIR Chebyshev filter to be approximated.
The weight $W_{2}$ is the same as that used in the previous section.
Their magnitude frequency responses are shown in
Figure~\ref{fig:fir_pooh}.  

\begin{figure}[t]
  \begin{center}
    \includegraphics[width=.7\linewidth]{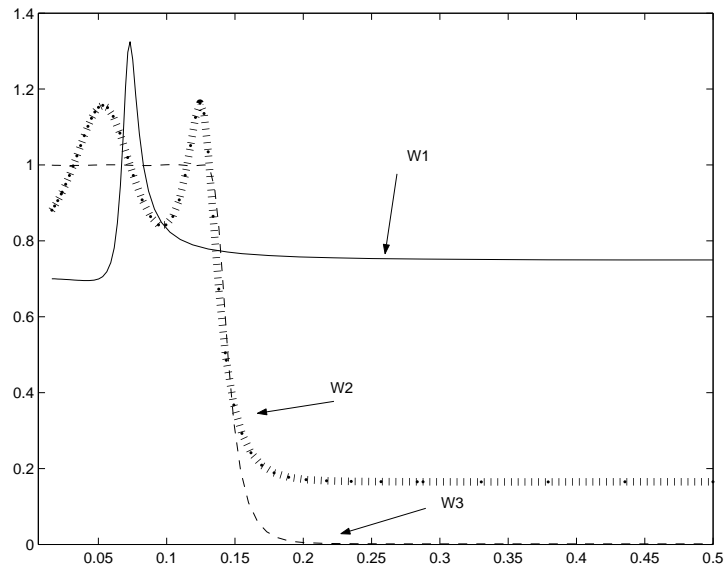}
  \end{center}
  \caption{Gain of inverse of weighting functions}
  \label{fig:fir_pooh}
\end{figure}


Figure~\ref{fir_chip} and Figure~\ref{fir_chip2} show the resulting
gain and phase responses of the FIR filters
designed with respective weighting functions.
Figure~\ref{fir_chip} in particular shows that there is a
clear trade-off between the magnitude of the pass-band ripples and 
the stop-band attenuation.  That is, if we attempt to
decrease the stop-band error, we must sacrifice the
pass-band characteristic (i.e., larger ripples), and
vice versa.  

Figure~\ref{fir_dale} shows the error magnitude responses.  
Table 1 also shows the $H^{\infty}$ and $H^{2}$ error norms.
Interestingly, the design via $W_{1}$ exhibits the
best overall approximation in both performance measures, although
its stop-band attenuation is not as good as those
by $W_{2}$ and $W_{3}$.  Note also that the one by
$W_{3}$ approximates the phase characteristic of the
original filter up to the edge of the stop-band as
Figure~\ref{fir_chip2} shows.

\begin{figure}[t]
  \begin{center}
    \includegraphics[width=.7\linewidth]{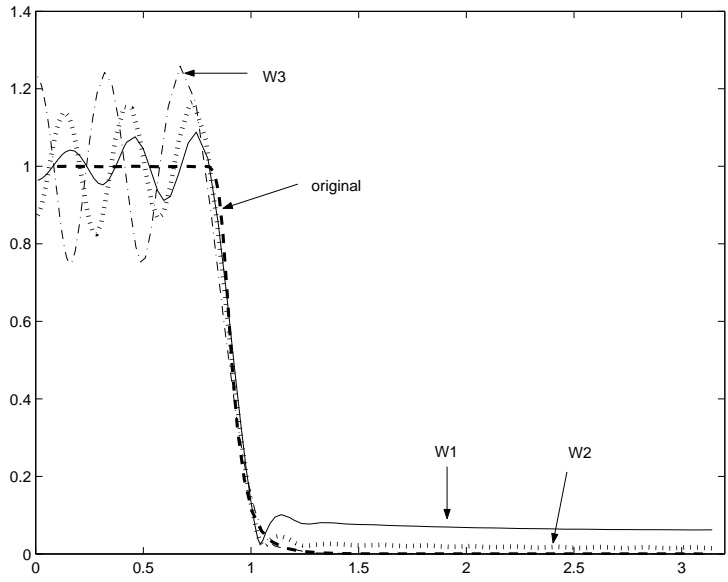}
  \end{center}
  \caption{Gain responses of FIR filters via LMI}
  \label{fir_chip}
\end{figure}

\begin{figure}[t]
  \begin{center}
    \includegraphics[width=.7\linewidth]{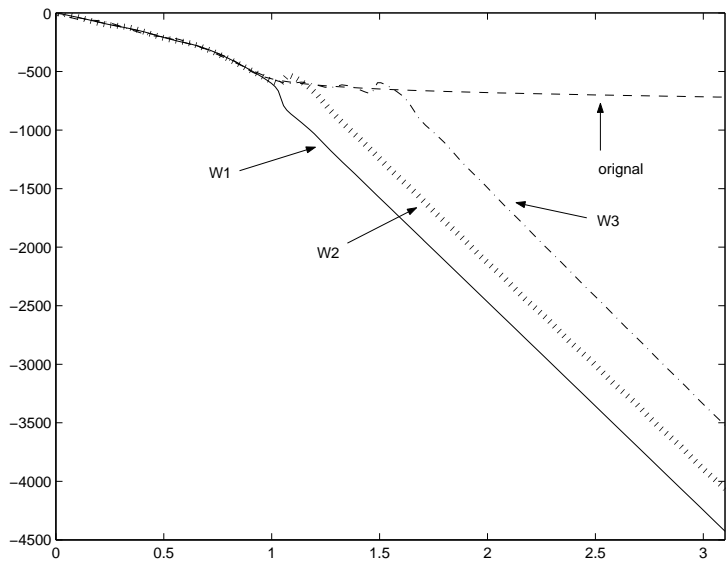}
  \end{center}
  \caption{Phase responses of FIR filters via LMI}
  \label{fir_chip2}
\end{figure}

\begin{figure}[t]
  \begin{center}
    \includegraphics[width=.7\linewidth]{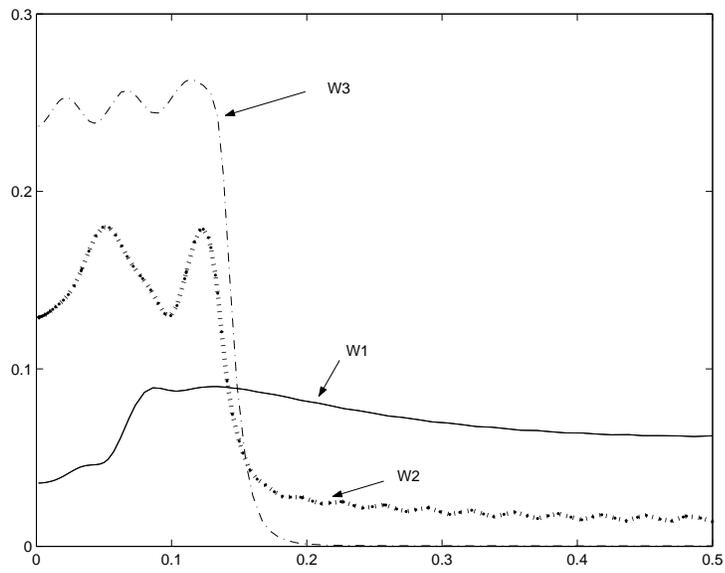}
  \end{center}
  \caption{Gain responses of the error $K_{f}-K$}
  \label{fir_dale}
\end{figure}


\begin{table}
\caption{$H^{\infty}$ and $H^{2}$ error norm}
\vspace*{0.5cm}
\begin{center}
\begin{tabular}{|c|c|c|c|} \hline
{}& $W_{1}$ & $W_{2}$ & $W_{3}$ \\ \hline
$H^{\infty}$ error norm & 0.0954 & 0.1838 & 0.2627 \\ \hline
$H^{2}$ error norm &{0.0713} &{0.0840} &{0.1333} \\ \hline
\end{tabular}
\end{center}
\label{fir_hyou}
\end{table} 

\section{Conclusion}
We have given an LMI solution to the optimal $H^{\infty}$ 
approximation of IIR filters via FIR filters.  
A comparison with the Nehari shuffle is made with
a numerical example, and it is observed that 
the LMI solution generally performs better.  
Another numerical study also indicates that 
there is a trade-off between the pass-band and
stop-band approximation characteristics.

\chapter{Conclusion}
\label{ch:conclusion}
In this thesis, we proposed a new design method for multirate
digital signal processing and digital communication systems.
Conventionally, they are designed under the assumption that the original analog signal 
is fully band-limited,
while our method takes the analog characteristic into account
that goes often beyond the Nyquist frequency,
and optimizes the analog performance via
the sampled-data $H^\infty$ optimization.

In Chapter \ref{ch:multirate}, we have presented a sampled-data design
for multirate signal processors, in particular, interpolation, decimation and
sampling rate conversion.
Conventionally, a filter used in these systems is designed 
to have a sharp characteristic which approximates the ideal filter.
However, as shown by design examples, such a sharp filter is not 
necessarily optimal for reconstruction.
This fact will not be recognized without taking the analog signal into account.

In Chapter \ref{ch:comm}, we have treated communication systems
which contains signal compression.
Under distortions by a channel, we have presented a design method
of a transmitting filter and a receiving filter 
by using sampled-data $H^\infty$ optimization.
By iterating a transmitting filter design and a receiving filter design,
we can obtain sub-optimal filters.
We have shown that the objective function monotonically decreases
by the iteration.

In Chapter \ref{ch:qqq}, we have investigated the stability and the performance
of quantized sampled-data control systems.
By using an additive noise model (linearized model) for the quantization, 
we have shown that if the linearized model is stable, the states of the quantized system 
are bounded, and that if the linearized model has small $L^2$ gain, the quantized system
has small power gain.
Then we have applied the results to DPCM design.
We have pointed out that the conventional $\Delta$ modulation is not stable and 
a channel noise will be amplified at the decoder.
Therefore we have proposed a design of the decoder and the encoder;
the decoder reduces the quantization noise, while the encoder reduces 
the channel noise.

In Chapter \ref{ch:firapp}, 
We have given an LMI solution to the optimal $H^{\infty}$ 
approximation of IIR filters via FIR filters.  
A comparison with the Nehari shuffle is made with
a numerical example, and it is observed that 
the LMI solution generally performs better.  
Another numerical study also indicates that 
there is a trade-off between the pass-band and
stop-band approximation characteristics.

We conclude by giving future directions of research as follows:

\begin{itemize}
\item We have treated one-dimensional signals (e.g., audio/speech)
and we believe that the present method can be effectively extended to image processing,
which is a future direction of our research.
In particular, the popular format JPEG or MPEG is a multirate filter bank system,
which can be designed by using the method discussed in Chapter \ref{ch:multirate}.

\item We have discussed a design of communication systems, where the channel is time-invariant.
However, the real channel often contains time-varying systems, in particular, in the case of wireless communication.
Moreover, the real channel is very complicated and we should notice that
the model of the channel always contains a modeling error.
To overcome this, we have to choose an adaptive filter.
Design for adaptive filters by using sampled-data theory is an important subject for the future.
\end{itemize}


\end{document}